\title{\Large \bf Intersection Types for the Computational $\lambda$-Calculus}
\author{Ugo de'Liguoro\\[1mm] \small Dipartimento di Informatica\\ 
\small Universit\`a di Torino\\ 
\small {\tt ugo.deliguoro@unito.it}  \and 
Riccardo Treglia \\[1mm] 
\small Dipartimento di Informatica\\ 
\small Universit\`a di Torino\\ 
\small {\tt riccardo.treglia@unito.it}}
\newtheorem{definition}{Definition}[section]
\newtheorem{theorem}[definition]{Theorem}
\newtheorem{corollary}[definition]{Corollary}
\newtheorem{lemma}[definition]{Lemma}
\newtheorem{proposition}[definition]{Proposition}
\newtheorem{remark}[definition]{Remark}
\newtheorem{example}[definition]{Example}
\newdimen\proofrulebreadth \proofrulebreadth=.05em
\newdimen\proofdotseparation \proofdotseparation=1.25ex
\newdimen\proofrulebaseline \proofrulebaseline=2ex
\let\then\relax
\def\hfi{\hskip0pt plus.0001fil}
\mathchardef\squigto="3A3B
\newif\ifinsideprooftree\insideprooftreefalse
\newif\ifonleftofproofrule\onleftofproofrulefalse
\newif\ifproofdots\proofdotsfalse
\newif\ifdoubleproof\doubleprooffalse
\let\wereinproofbit\relax
\newdimen\shortenproofleft
\newdimen\shortenproofright
\newdimen\proofbelowshift
\newbox\proofabove
\newbox\proofbelow
\newbox\proofrulename
\def\shiftproofbelow{\let\next\relax\afterassignment\setshiftproofbelow\dimen0 }
\def\shiftproofbelowneg{\def\next{\multiply\dimen0 by-1 }%
\afterassignment\setshiftproofbelow\dimen0 }
\def\setshiftproofbelow{\next\proofbelowshift=\dimen0 }
\def\setproofrulebreadth{\proofrulebreadth}
\def\prooftree{
%
\ifnum  \lastpenalty=1
\then   \unpenalty
\else   \onleftofproofrulefalse
\fi
%
\ifonleftofproofrule
\else   \ifinsideprooftree
        \then   \hskip.5em plus1fil
        \fi
\fi
%
\bgroup
\setbox\proofbelow=\hbox{}\setbox\proofrulename=\hbox{}%
\let\justifies\proofover\let\leadsto\proofoverdots\let\Justifies\proofoverdbl
\let\using\proofusing\let\[\prooftree
\ifinsideprooftree\let\]\endprooftree\fi
\proofdotsfalse\doubleprooffalse
\let\thickness\setproofrulebreadth
\let\shiftright\shiftproofbelow \let\shift\shiftproofbelow
\let\shiftleft\shiftproofbelowneg
\let\ifwasinsideprooftree\ifinsideprooftree
\insideprooftreetrue
%
\setbox\proofabove=\hbox\bgroup$\displaystyle 
\let\wereinproofbit\prooftree
%
\shortenproofleft=0pt \shortenproofright=0pt \proofbelowshift=0pt
%
\onleftofproofruletrue\penalty1
}
\def\eproofbit{
%
\ifx    \wereinproofbit\prooftree
\then   \ifcase \lastpenalty
        \then   \shortenproofright=0pt  
        \or     \unpenalty\hfil         
        \or     \unpenalty\unskip       
        \else   \shortenproofright=0pt  
        \fi
\fi
%
\global\dimen0=\shortenproofleft
\global\dimen1=\shortenproofright
\global\dimen2=\proofrulebreadth
\global\dimen3=\proofbelowshift
\global\dimen4=\proofdotseparation
\global\count255=\proofdotnumber
%
$\egroup  
%
\shortenproofleft=\dimen0
\shortenproofright=\dimen1
\proofrulebreadth=\dimen2
\proofbelowshift=\dimen3
\proofdotseparation=\dimen4
\proofdotnumber=\count255
}
\def\proofover{
\eproofbit 
\setbox\proofbelow=\hbox\bgroup 
\let\wereinproofbit\proofover
$\displaystyle
}%
\def\proofoverdbl{
\eproofbit 
\doubleprooftrue
\setbox\proofbelow=\hbox\bgroup 
\let\wereinproofbit\proofoverdbl
$\displaystyle
}%
\def\proofoverdots{
\eproofbit 
\proofdotstrue
\setbox\proofbelow=\hbox\bgroup 
\let\wereinproofbit\proofoverdots
$\displaystyle
}%
\def\proofusing{
\eproofbit 
\setbox\proofrulename=\hbox\bgroup 
\let\wereinproofbit\proofusing
\kern0.3em$
}
\def\endprooftree{
\eproofbit 
  \dimen5 =0pt
%
\dimen0=\wd\proofabove \advance\dimen0-\shortenproofleft
\advance\dimen0-\shortenproofright
%
\dimen1=.5\dimen0 \advance\dimen1-.5\wd\proofbelow
\dimen4=\dimen1
\advance\dimen1\proofbelowshift \advance\dimen4-\proofbelowshift
%
\ifdim  \dimen1<0pt
\then   \advance\shortenproofleft\dimen1
        \advance\dimen0-\dimen1
        \dimen1=0pt
        \ifdim  \shortenproofleft<0pt
        \then   \setbox\proofabove=\hbox{%
                        \kern-\shortenproofleft\unhbox\proofabove}%
                \shortenproofleft=0pt
        \fi
\fi
%
\ifdim  \dimen4<0pt
\then   \advance\shortenproofright\dimen4
        \advance\dimen0-\dimen4
        \dimen4=0pt
\fi
%
\ifdim  \shortenproofright<\wd\proofrulename
\then   \shortenproofright=\wd\proofrulename
\fi
%
\dimen2=\shortenproofleft \advance\dimen2 by\dimen1
\dimen3=\shortenproofright\advance\dimen3 by\dimen4
%
\ifproofdots
\then
        \dimen6=\shortenproofleft \advance\dimen6 .5\dimen0
        \setbox1=\vbox to\proofdotseparation{\vss\hbox{$\cdot$}\vss}%
        \setbox0=\hbox{%
                \advance\dimen6-.5\wd1
                \kern\dimen6
                $\vcenter to\proofdotnumber\proofdotseparation
                        {\leaders\box1\vfill}$%
                \unhbox\proofrulename}%
\else   \dimen6=\fontdimen22\the\textfont2 
        \dimen7=\dimen6
        \advance\dimen6by.5\proofrulebreadth
        \advance\dimen7by-.5\proofrulebreadth
        \setbox0=\hbox{%
                \kern\shortenproofleft
                \ifdoubleproof
                \then   \hbox to\dimen0{%
                        $\mathsurround0pt\mathord=\mkern-6mu%
                        \cleaders\hbox{$\mkern-2mu=\mkern-2mu$}\hfill
                        \mkern-6mu\mathord=$}%
                \else   \vrule height\dimen6 depth-\dimen7 width\dimen0
                \fi
                \unhbox\proofrulename}%
        \ht0=\dimen6 \dp0=-\dimen7
\fi
%
\let\doll\relax
\ifwasinsideprooftree
\then   \let\VBOX\vbox
\else   \ifmmode\else$\let\doll=$\fi
        \let\VBOX\vcenter
\fi
\VBOX   {\baselineskip\proofrulebaseline \lineskip.2ex
        \expandafter\lineskiplimit\ifproofdots0ex\else-0.6ex\fi
        \hbox   spread\dimen5   {\hfi\unhbox\proofabove\hfi}%
        \hbox{\box0}%
        \hbox   {\kern\dimen2 \box\proofbelow}}\doll%
%
\global\dimen2=\dimen2
\global\dimen3=\dimen3
\egroup 
\ifonleftofproofrule
\then   \shortenproofleft=\dimen2
\fi
\shortenproofright=\dimen3
%
\onleftofproofrulefalse
\ifinsideprooftree
\then   \hskip.5em plus 1fil \penalty2
\fi
}
\newcommand{\comp}{\circ}  
\newcommand{\Id}{\mbox{\rm id}}
\newcommand{\Dom}[1]{\mbox{\rm dom} \, #1}
\newcommand{\Dcat}{{\cal D}}
\newcommand{\Sub}[2]{\mbox{\rm Sub}_{#1}(#2)}
\newcommand{\Set}[1]{\{#1\}}
\newcommand{\Pair}[2]{\langle #1, #2 \rangle}
\newcommand{\Power}{\mathscr{P}} 
\newcommand{\Iff}{\Leftrightarrow}
\newcommand{\Then}{\Rightarrow}
\newcommand{\Proof}{\noindent {\bf Proof. }}
\def \QED {~\hfill\hbox{\rule{5.5pt}{5.5pt}\hspace*{.5\leftmarginii}}}
\newcommand{\der}{\vdash}
\newcommand{\UnitSub}[1]{\mbox{\it unit}_{#1}}
\newcommand{\Unit}{\mbox{\it unit}\;}
\newcommand{\Bind}{\star}
\newcommand{\Sem}[1]{[\hspace{-0.6mm}[ #1 ]\hspace{-0.6mm}]}
\newcommand{\Let}[3]{\mbox{\it let $#1 = #2$ in $#3$}}
\newcommand{\Var}{\textit{Var}}
\newcommand{\Val}{\textsf{V}}
\newcommand{\Comp}{\textsf{C}}
\newcommand{\TFlambdaComp}{\lambda_c^u}
\newcommand{\Subst}[2]{[ #1 / #2]}
\newcommand{\Env}{\mbox{\it Term-Env}}
\newcommand{\FV}{\textit{FV}}
\newcommand{\metalambda}{%
	\mathop{%
		\rlap{$\lambda$}%
		\mkern2mu
		\raisebox{.275ex}{$\lambda$}%
	}%
}
\newcommand{\circlearrow}{}
\DeclareRobustCommand{\circlearrow}{%
	\mathrel{\vphantom{\rightarrow}\mathpalette\circle@arrow\relax}%
}
\newcommand{\circle@arrow}[2]{%
	\m@th
	\ooalign{%
		\hidewidth$#1\circ\mkern1mu$\hidewidth\cr
		$#1\longrightarrow$\cr}%
}
\newcommand{\ValTerm}{\textit{Val}\,}
\newcommand{\ComTerm}{\textit{Com}}
\newcommand{\Term}{\textit{Term}}
\newcommand{\Terms}{\textit{Term}}
\newcommand{\Red}{\longrightarrow}
\newcommand{\RedStar}{\stackrel{*}{\Red}}
\newcommand{\Betac}{\beta_c}
\newcommand{\BindLeft}{\textit{ass}}
\newcommand{\BindRight}{\textit{id}}
\newcommand{\MonRed}{\lambda\textbf{C}}
\newcommand{\BetaIdRed}{\Red_{\beta_c, \textit{id}}\,}
\newcommand{\BetaIdRedStar}{\RedStar_{\beta_c, \textit{id}}\,}
\newcommand{\AssRed}{\Red_{\textit{ass}}\,}
\newcommand{\CR}{\textit{CR}}
\newcommand{\WCR}{\textit{WCR}}
\newcommand{\SN}{\textit{SN}}
\newcommand{\TypeVar}{\textit{TypeVar}}
\newcommand{\ValType}{\textit{ValType}}
\newcommand{\ComType}{\textit{ComType}}
\newcommand{\Types}{{\cal T}}
\newcommand{\Th}{\textit{Th}}
\newcommand{\Inter}{\wedge}
\renewcommand{\QED}{ \begin{flushright}$\square$\end{flushright} }
\newcommand{\Filt}{{\cal F}}
\newcommand{\Compact}{{\cal K}}
\newcommand{\CompactOp}{{\cal K}^{\textit{\tiny op}}}
\newcommand{\Order}{\sqsubseteq}
\newcommand{\OrderOp}{\Order^{\textit{\tiny op}}}
\newcommand{\TypeTh}{{\cal T}}
\newcommand{\TT}{\textit{\bf T}}
\newcommand{\Up}{~\uparrow\!}
\newcommand{\TypeEnv}{\textit{TypeEnv}}
\newcommand{\invLim}{\lim_\leftarrow}
\newcommand{\Nat}{\mathds{N}}
\newcommand{\IntrArr}{(\to \mbox{I})}
\newcommand{\ElimArr}{(\to \mbox{E})}
\newcommand{\IntrUnit}{(\Unit \mbox{I})}
\newcommand{\IntrInter}{(\Inter I)}
\newcommand{\Adm}{\textit{Adm}}
\newcommand{\I}{{\cal I}}
\newcommand{\Iinterp}[1]{| #1 |}
\newcommand{\TP}{\textit{TP}}
\newcommand{\DP}{\textit{DP}}
\newcommand{\ValCtx}{\mathcal{V}}
\newcommand{\CompCtx}{ \mathcal{C}}
\newcommand{\Hole}[2]{\langle #1_{#2} \rangle}
\begin{document}
\maketitle
\begin{abstract} 
We study polymorphic type assignment systems for untyped $\lambda$-calculi with effects, based on Moggi's monadic approach.
Moving from the abstract definition of monads, we introduce a version of the call-by-value computational $\lambda$-calculus
based on Wadler's variant with unit and bind combinators, and without let.
We define a notion of reduction for the calculus and prove it confluent, and also we relate our calculus to the original work by Moggi showing that his untyped metalanguage can be interpreted and simulated in our calculus.
We then introduce an intersection type system inspired to Barendregt, Coppo and Dezani system for ordinary untyped $\lambda$-calculus, 
establishing type invariance under conversion, and provide models of the calculus via inverse limit and filter model constructions and relate them. We prove soundness and completeness of the type system, together with subject reduction and expansion properties. 
Finally, we introduce a notion of convergence, which is precisely related to reduction, and characterize convergent terms via their types.

\end{abstract}


\newcommand{\retract}{\triangleleft}

\section{Introduction}\label{sec:introduction}

The computational $\lambda$-calculus was introduced by Moggi \cite{Moggi'89,Moggi'91} as a meta-language to describe non-functional effects in programming languages via an incremental approach. The basic idea is to distinguish among values of some type $D$ and computations over such values, the latter having type $TD$. Semantically $T$ is a monad, endowing $D$ with a richer structure such that operations over computations can be seen as algebras of $T$.

The monadic approach is not just useful when building compilers modularly with respect to various kinds of effects \cite{Moggi'91}, to interpret languages with effects like control operators via a CPS translation \cite{Filinski:popl1994}, or to write effectful programs in a purely functional language such as Haskell \cite{Wadler-Monads}, but also to reason about such programs. In this respect, typed computational lambda-calculus has been related to static program analysis and type and effect systems \cite{BentonHM00}, 
PER based relational semantics \cite{BentonKHB06},
and more recently co-inductive methods for reasoning about effectful programs have been investigated, leading to principles that hold for arbitrary monads and their algebras \cite{LagoGL17}. 

Aim of our work is to investigate the monadic approach to effectful functional languages in the untyped case. This is motivated by the fact that similar, if not more elusive questions arise for effectful untyped languages as well as for typed ones; but also because the untyped setting is the natural one where studying program analysis via type assignment systems in Curry style, like in the case of intersection types, which we advocate. Indeed, in working out the approach in the untyped case lays the foundation for doing the same also for typed languages, either by seeing intersection types as refinement types, or by looking at them as to the formulas of the endogenous logic of domain theoretic interpretations of types \cite{Abramsky'91}.

It might appear nonsense to speak of monads w.r.t. an untyped calculus, as the monad $T$  interprets a type constructor both in Moggi's and in Wadler's 
formulation of the computational $\lambda$-calculus  \cite{Moggi'91,Wadler-Monads}. However, 
much as the untyped $\lambda$-calculus can be seen as a calculus with a single type, which is interpreted by 
a retract of its own function space in a suitable category as formerly observed by Scott \cite{Scott80},
the untyped computational $\lambda$-calculus $\TFlambdaComp$ has two types: the type of values $D$ and the type of computations $TD$. 
The type $D$ is a retract of $D \to TD$, written $D \retract D \to TD$, that is an appropriate space of functions from values to computations \cite{Moggi'89}.
Consequently, we have two sorts of terms, called {\em values} and {\em computations} denoting points in $D$ and $TD$ respectively, and a reduction
relation among computations that turns out to be Church-Rosser and such that, if $D \retract D \to TD$ then
$TD$ is a model of the conversion relation induced by the reduction, and we call it a $T$-model.

Intersection types are naturally interpreted as predicates over a $\lambda$-model, and indeed intersection type systems have been originally conceived to characterize
strongly normalizing, weakly normalizing and solvable terms namely having head normal form. 
Toward developing analogous systems for the  
computational $\lambda$-calculus, we introduce an intersection type assignment system
with two sorts of intersection types, namely {\em value types} ranged over by $\delta$, and {\em computation types} ranged over by $\tau$, whose intended meanings are subsets of $D$ and $TD$, respectively. We then define the minimal type theories $\Th_\Val$ and $\Th_\Comp$ axiomatizing  
the preorders over value and computation types respectively, and construct a type assignment system which is a generalization of the BCD type 
system for the ordinary $\lambda$-calculus in \cite{BCD'83}. Then, the subject reduction property smoothly follows, and can be established along the lines of the analogous property of system BCD. 

We are looking at BCD type system because it defines a logical semantics of $\lambda$-terms, whose meaning are just the sets of types that can be assigned to them, which turn out to be filters of types. 
Such a model, named {\em filter model}, has the structure of an algebraic lattice with countable basis. 
This fact is at the hearth of the proof
of completeness of the system, namely  that the denotation of a term belongs to the interpretation of a type in any model if and only if
the type can be assigned to the term in the type system.

However, the type interpretation over a $T$-model is much more problematic than in case of intersection types and $\lambda$-models. The issue is ensuring that computation types
are closed under the two basic operations of the monad $T$, that is unit and bind, which we dub {\em monadic type interpretations}.
As we shall see in the technical development, the natural clauses
lead to a not inductive definition, hence not inductive. To solve the problem we cannot resort to the correspondence of intersection types to compact
points in $D$ and $TD$, because there is no information about the compacts of $TD$, 
since the monad $T$ is a parameter.

The solution we propose is to restrict type interpretation to the case of $T$-models that are (pre-)fixed points of the functor $\textbf{F}(X) = (X \to TX)$, existing as inverse limit constructions if $T$ and therefore $\textbf{F}$ are $\omega$-continuous functors. What one obtains in this way is an instance of Scott's $D_\infty$
model, which is the co-limit of a chain of approximant domains $D_n$. By interpreting types as admissible subsets of the $D_n$ by induction over $n$, we obtain admissible subsets of $D_\infty$ and $TD_\infty$
by the very same co-limit construction.

Coming to the filter model construction, we build over the fact that such models can be seen as inverse limit domains, whose structure is determined by the type preorder, that is the type theory one considers: see in particular \cite{Dezani-CiancagliniHA03} and \cite{BarendregtDS2013} 16.3. To avoid the rather 
inelegant shape of domain equations arising 
from non-extensional filter models, we show here how an extensional $T$-model can be constructed
as a filter model, that is itself a limit model satisfying the domain equation $D = D \to TD$. This eventually leads to the completeness theorem, of which subject expansion is a corollary. We then define a natural convergence predicate, and characterize convergent terms via their non trivial typings, and conclude that the filter model is
computationally adequate.

We assume familiarity with $\lambda$-calculus, intersection types and domain theory, for which we refer to
textbooks such as \cite{Amadio-Curien'98} and \cite{BarendregtDS2013} part III.
Due to space restrictions most proofs are either sketched or omitted, or moved to the Appendix. 

\section{The untyped computational $\lambda$-calculus: $\TFlambdaComp$}\label{sec:calculus}
The syntax of the untyped computational $\lambda$-calculus differs from Moggi's original definition
of monadic metalanguages, and it is closer to Wadler's in \cite{Wadler-Monads}.
As said in the Introduction the untyped computational calculus $\TFlambdaComp$ has two kinds of terms.

\begin{definition}[Values and computations]\label{def:terms}	
The {\em untyped computational $\lambda$-calculus}, shortly $\TFlambdaComp$, is a calculus of two sorts of expressions:
\[\begin{array}{r@{\hspace{0.7cm}}rll@{\hspace{1cm}}l}
\ValTerm: & V, W & ::= & x \mid \lambda x.M & \mbox{(values)} \\ [1mm]
\ComTerm: & M,N & ::= & \Unit V \mid M \Bind V & \mbox{(computations)}
\end{array}\]
where $x$ ranges over a denumerable set $\Var$ of variables. We set $\Term = \ValTerm\, \cup \ComTerm$; 
$\FV(V)$ and $\FV(M)$ are the sets of free variables occurring in $V$ and $M$ respectively, and are defined in the obvious way.

\end{definition}

\begin{remark}\label{rem:terms}
In the above definition the sorts correspond to the two types $D$ and $TD$. 
By considering $D \retract D \to TD$ and setting $\Unit: D \to TD$ and $\Bind: TD \times D \to TD$, 
we see that all $V\in \ValTerm$ are of type $D$, and
all $M \in \ComTerm$ are of type $TD$.

The reduction rules in the next subsection are inspired to \cite{Wadler-Monads} and clearly reflect
the equations in Definition \ref{def:monad}. However, Wadler
defines an equational theory instead of a reduction relation, and his definition has a much richer type system, since 
the calculus is simply typed and there are types like $TTA$ or $T(A\to B)$ etc. 
\end{remark}

\subsection{Reduction}\label{subsec:reduction}


Following \cite{Barendregt'84} \S 3.1, we say that a binary relation $R \subseteq \Term \times \Term$ is a 
{\em notion of reduction}. If $R_1,R_2$ are notions of reductions we abbreviate
$R_1 R_2 = R_1 \cup R_2$; then
we denote by $\Red_R$ the {\em compatible closure} of $R$, namely the least relation including $R$ which is
closed under arbitrary contexts.

\begin{definition}[Reduction]\label{def:reduction}
	The  relation $\MonRed = \Betac \cup \BindRight \cup \BindLeft$ is the union of the following binary relations over $\ComTerm$:
	\[
	\begin{array}{rcl}
	\Betac & = &\Set{\Pair{\Unit V \Bind (\lambda x.M)}{ M\Subst{V}{x}} \mid V \in \ValTerm, M\in \ComTerm} \\ [1mm]
 	\BindRight & = &  \Set{ \Pair{M \Bind \lambda x. \Unit x}  {M} \mid M \in \ComTerm} \\ [1mm]
 	\BindLeft &  = & \Set{\Pair{(L \Bind \lambda x.M) \Bind \lambda y.N}{L  \Bind \lambda x. (M \Bind \lambda y.N)} \mid L,M,N \in \ComTerm, x\not\in FV(N)}
	\end{array}
	\]
	where $M\Subst{V}{x}$ denotes the capture avoiding substitution of $V$ for $x$ in $M$. 
	
	\medskip \noindent
	Finally $\Red\; =\; \Red_{\MonRed}$ is the compatible closure of $\MonRed$.
\end{definition}
A more readable writing of the definition of $\Red$ in Def. \ref{def:reduction} is:
\[
	\begin{array}{r@{\hspace{0.4cm}}rll@{\hspace{0.4cm}}l}
	\Betac) & \Unit V \Bind (\lambda x.M) & \Red & M\Subst{V}{x} \\ [1mm]
	\BindRight \,) &  M \Bind \lambda x. \Unit x & \Red & M \\ [1mm]
	\BindLeft\,) &  (L \Bind \lambda x.M) \Bind \lambda y.N & \Red & L  \Bind \lambda x. (M \Bind \lambda y.N)  & \mbox{for $x\not\in FV(N)$}
	\end{array}
\]

Rule $\Betac$ is reminiscent of the left unit law in \cite{Wadler-Monads}; we call it $\Betac$ because it performs call-by-value 
$\beta$-contraction in $\TFlambdaComp$. 
In fact, by reading $\Bind$ as postfix functional application and merging $V$ into its trivial computation
$\Unit V$, $\Betac$ is the same as $\beta_v$ in \cite{Plotkin'75}:

\begin{equation}\label{eq:betav}
(\lambda x.M)V \equiv \Unit V \Bind (\lambda x.M) \Red M\Subst{V}{x} 
\end{equation}

The compatible closure of the relation $\Betac \cup \BindRight \cup \BindLeft$ 
is explicitly defined by means of the typed contexts:
\[\begin{array}{l@{\hspace{0.5cm}}lcl}
\textit{Value contexts:} & \ValCtx & ::= & \Hole{\cdot}{D}  \mid \lambda x.\, \CompCtx \\[1mm]
\textit{Computation contexts:} & \CompCtx & ::= &  \Hole{\cdot}{TD} \mid \Unit \ValCtx \mid \CompCtx \Bind V \mid M\Bind \ValCtx
\end{array}\]
Contexts have just one hole, which is either $\Hole{\cdot}{D}$ or $\Hole{\cdot}{TD}$. These are typed in the sense that
they can be replaced only by value and computation terms respectively. Denoting
by $\ValCtx\Hole{P}{}$ and $\CompCtx\Hole{P}{}$ the replacements 
of hole $\Hole{\cdot}{D}$ or $\Hole{\cdot}{TD}$ in $\ValCtx$ and $\CompCtx$ by $P\equiv V$ or $P\equiv M$ 
respectively (possibly catching free variables in $P$) we
get terms in $\ValTerm$ and $\ComTerm$ respectively. Then compatible closure is expressed by the rule:

\begin{equation}\label{eq:compatibleClosure}
\prooftree
M \Red M'
\justifies
\CompCtx\Hole{M}{} \Red \CompCtx\Hole{M'}{}
\endprooftree
\end{equation}

\medskip
Using rule (\ref{eq:compatibleClosure})
the correspondence of rule $\Betac$ to $\beta_v$, can now be illustrated more precisely. First observe that the reduction relation 
$\Red$ is only among computations, therefore
no computation $N$ will ever reduce to some value $V$; however this is represented by a reduction $N \RedStar \Unit V$, where $ \Unit V$ is
the coercion of the value $V$ into a computation. Moreover, let us assume that $M \RedStar \Unit (\lambda x.M')$; by setting
\begin{equation}\label{eq:monadicApp}
MN \equiv M \Bind( \lambda z. N \Bind z) \qquad \mbox{for $\quad z \not\in FV(N)$}
\end{equation}
we have:
\[\begin{array}{lll@{\hspace{0.5 cm}}l}
MN & \RedStar & \Unit (\lambda x.M') \Bind( \lambda z. \Unit V \Bind z) & \mbox{by rule  (\ref{eq:compatibleClosure})} \\ [1mm]
& \Red & \Unit V \Bind  (\lambda x.M') & \mbox{by $\Betac$} \\ [1mm]
& \Red & M'\Subst{V}{x} & \mbox{by $\Betac$}
\end{array}\]
where if $z \not\in FV(N)$ then $z \not\in FV(V)$ as it can be shown by a routine argument.

\medskip
We end this section by considering the issue of weak and full extensionality, that have not been treated in Def. \ref{def:reduction}.
Weak extensionality, also called $\xi$-rule of the ordinary $\lambda$-calculus, is reduction under abstraction. This is guaranteed by
rule (\ref{eq:compatibleClosure}), but only in the context of computation terms. 

Concerning extensionality an analogous of $\eta$-rule is:
\begin{equation}
\eta_c) \quad \lambda x. \, (\Unit \, x \Bind \, V) \Red_{\eta_c} V, \qquad x \not \in FV(V)
\end{equation}
This involves extending reduction from $\ComTerm$ to the whole $\Term$. However the reduction obtained by
adding $\eta_c$ to $\MonRed$ is not confluent:
\begin{diagram}[width=4em]
		(M \Bind \lambda x.\,(\Unit x \Bind y)) \Bind z &	&  \rTo^{\BindLeft}	&  &	M \Bind \lambda x\,((\Unit x \Bind y) \Bind z)  \\
		\dTo^{\eta_c} & & & &  \dDashto \\
		(M \Bind y) \Bind z & & \rDashto  & & ? 
\end{diagram}

\subsection{Confluence}\label{subsec:confluence}

A fundamental property of reduction in ordinary $\lambda$-calculus is confluence, established in the Church-Rosser theorem.
In this section we prove confluence of $\Red$ for the $\TFlambdaComp$-calculus. This is a harder task since
reduction in $\TFlambdaComp$ has three axioms instead of just the $\beta$-rule of the $\lambda$-calculus, 
whose left-hand sides generate a number of critical pairs. Before embarking into the proof let us see a few examples.

\begin{example}\label{ex:confluence-a}
	In this example we see how outer reduction by $\BindLeft$ may overlap by an inner reduction by $\Betac$. Representing the given reductions by solid arrows,
	we see how to recover confluence by a reduction and a relation represented by a dashed arrow and a dashed line, respectively:
	\begin{diagram}[width=4em]
		(\Unit V \Bind \lambda x\,M) \Bind \lambda y.\, N  &	&  \rTo^{\BindLeft}	&  &	\Unit V \Bind \lambda x\,(M \Bind \lambda y.\, N)  \\
		\dTo^{\Betac} & & & & \dDashto_{\Betac} \\
		M \Subst{V}{x} \Bind \lambda y.\, N & & \rDashes_{\equiv} & & (M \Bind \lambda y.\, N) \Subst{V}{x} 
	\end{diagram}
	\noindent where $x\not \in \FV(N)$, which is the side condition to rule $\BindLeft$; therefore  the two terms in the lower line of the diagram
	are syntactically identical. 
\end{example}

\begin{example}\label{ex:confluence-b}
	In this example we see how outer reduction by $\BindLeft$, overlapping with outer $\BindRight$, can be recovered by an inner reduction by $\BindRight$:
	\begin{diagram}[width=4em]
		(M \Bind \lambda y\,N) \Bind\lambda x.\, \Unit x   &	&  \rTo^{\BindLeft}	&  &	M \Bind \lambda y.\, (N \Bind  \lambda x.\, \Unit x)  \\
		& \rdTo_{\BindRight}		&   									&     \ldDashto_{\BindRight} 	&	\\
		&& M \Bind \lambda y.\, N	&&  \\
	\end{diagram}
	
\end{example}

\begin{example}\label{ex:confluence-c}
	Here the outer  reduction by $\BindLeft$ overlaps with an inner reduction by $\BindRight$. This is recovered by means of an inner reduction by $\Betac$:
	\begin{diagram}[width=4em]
		(M \Bind \lambda x.\, \Unit x) \Bind \lambda y.\, N & &	 \rTo^{\BindLeft}	& & 	M \Bind \lambda x.\, (\Unit x \Bind \lambda y\, N)  \\
		\dTo^{\BindRight}		&   	&							&	&     \dDashto_{\Betac} 		\\
		M \Bind \lambda y.\, N	&	&        \rDashes_{\alpha}		&	& M \Bind \lambda x.\, N\Subst{x}{y}	  \\
	\end{diagram}
	
	\noindent where $x\not \in \FV(N)$ as observed in Example \ref{ex:confluence-a}, and therefore $ \lambda x.\, N\Subst{x}{y}$ is the renaming by $x$ of the bound variable
	$y$ in $\lambda y.\, N$: then the dashed line represents $\alpha$-congruence.
	
\end{example}

After having inspected the above examples, one might by tempted to conclude that the reduction in Definition \ref{def:reduction} enjoys the diamond property, namely it is
confluent within (at most) two single steps, one per side (for the diamond property see (\ref{eq:DP}) below: we say here `at most' because $\Red$ is not reflexive). Unfortunately, this is not the case because of rule $\Betac$, that can multiplicate redexes in the reduced term exactly as  the $\beta$-rule in ordinary $\lambda$-calculus.
Even worse, rule $\BindLeft$ generates critical pairs with all other rules and with itself, preventing the simple extension of standard proof methods to succeed. 
For these reasons, we to split the proof in several steps, proving confluence of $\Betac \cup \BindRight$ and $\BindLeft$ separetedly, and then combining
these results by means of the commutativity of these relations and Hindley-Rosen lemma\footnote{We are grateful to ... for suggesting this proof strategy.}.

\medskip

In the first step we adapt a method originally due to Tait and Martin L\"of, 
and further developed by Takahashi \cite{takahashi95:_paral}. See e.g. the book \cite{Terese03} ch. 10. 
Let's define the following relation $\circlearrow$:

\begin{definition}\label{def:appSimReddef}
	The relation $\circlearrow\; \subseteq \Term\times\Term$ is inductively defined by:
	\begin{enumerate}[label=\roman*)]
		\item  \label{def:appSimReddef-i} $x \circlearrow x$
		\item  \label{def:appSimReddef-ii} $M\circlearrow N\Then \lambda x.M\circlearrow \lambda x.N$
		\item  \label{def:appSimReddef-iii} $V\circlearrow V' \Then \Unit V\circlearrow \Unit V'$
		\item  \label{def:appSimReddef-iv} $M\circlearrow M' \mbox{ and } V\circlearrow V' \Then M\Bind V\circlearrow M'\Bind V'$
		\item  \label{def:appSimReddef-v} $M\circlearrow M' \mbox{ and } V\circlearrow V' \Then \Unit V\Bind \lambda x.M \circlearrow M'\Subst{V'}{x}$
		\item \label{def:appSimReddef-vi}  $M\circlearrow M' \Then M\Bind \lambda x.\Unit x \circlearrow M'$
	\end{enumerate}	
\end{definition}	

By (\ref{def:appSimReddef-i} - (\ref{def:appSimReddef-iv} above, relation $\circlearrow$ is reflexive and coincides with its compatible closure.
Also $\BetaIdRed \subseteq\; \circlearrow$; intentionally, this is not the case w.r.t. the whole $\Red$.

\begin{lemma}\label{lem:SubLemmaSimRed}
	For $M, M'\in \ComTerm$ and $V, V'\in\ValTerm$ and every variable $x$, if $M\circlearrow M'$ and $V\circlearrow V'$, 
	then $M\Subst{V}{x} \circlearrow M'\Subst{V'}{x}$.
\end{lemma}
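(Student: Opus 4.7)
The plan is to prove the statement by structural induction on the derivation of $M \circlearrow M'$, treating the inductive clauses of Definition \ref{def:appSimReddef} uniformly for values and computations (i.e., strengthening the statement to read: for all $P, P'\in\Term$, if $P\circlearrow P'$ and $V\circlearrow V'$, then $P\Subst{V}{x}\circlearrow P'\Subst{V'}{x}$). Throughout, I would silently use the Barendregt variable convention, $\alpha$-renaming bound variables so that whenever a clause introduces a binder $\lambda y$, we can assume $y\neq x$ and $y\notin\FV(V)\cup\FV(V')$; this is what makes the substitution on both sides match.

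For the base case (\ref{def:appSimReddef-i}), $P=P'=y$: if $y=x$ both sides reduce to $V$ and $V'$ respectively, and the hypothesis $V\circlearrow V'$ closes the case; if $y\neq x$, both sides are $y$ and we invoke reflexivity via clause (\ref{def:appSimReddef-i}). The congruence clauses (\ref{def:appSimReddef-ii}), (\ref{def:appSimReddef-iii}), (\ref{def:appSimReddef-iv}), (\ref{def:appSimReddef-vi}) are routine: substitution commutes with the term constructors ($\lambda y$, $\Unit$, $\Bind$) under the variable convention, so the induction hypothesis applied to the immediate subterms lets us reapply the same clause to conclude.

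The only nontrivial case is clause (\ref{def:appSimReddef-v}), the $\Betac$-like clause. Here $P\equiv \Unit W \Bind \lambda y.M$ and $P'\equiv M'\Subst{W'}{y}$, with $W\circlearrow W'$ and $M\circlearrow M'$. After substitution, the left-hand side becomes $\Unit(W\Subst{V}{x})\Bind\lambda y.(M\Subst{V}{x})$. By IH, $W\Subst{V}{x}\circlearrow W'\Subst{V'}{x}$ and $M\Subst{V}{x}\circlearrow M'\Subst{V'}{x}$, so clause (\ref{def:appSimReddef-v}) yields
\[
\Unit(W\Subst{V}{x})\Bind\lambda y.(M\Subst{V}{x})\;\circlearrow\; M'\Subst{V'}{x}\Subst{W'\Subst{V'}{x}}{y}.
\]
It remains to check that this right-hand side equals $P'\Subst{V'}{x} = M'\Subst{W'}{y}\Subst{V'}{x}$; this is exactly the classical substitution lemma of the $\lambda$-calculus (cf.\ \cite{Barendregt'84}, Lemma 2.1.16), applicable because $y\neq x$ and $y\notin\FV(V')$ by the variable convention.

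The main obstacle, as usual for parallel-reduction substitution lemmas, is clause (\ref{def:appSimReddef-v}): one must keep the bound variable of the $\lambda y$-abstraction disjoint from $x$ and from the free variables of $V'$ in order to apply the classical substitution identity. All other clauses are purely structural, and the induction goes through without further subtlety.
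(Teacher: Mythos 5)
Your proof is correct and follows the same route as the paper, which disposes of this lemma with a one-line ``easy induction on the definition of $M\circlearrow M'$ and $V\circlearrow V'$''. You have simply spelled out that induction, correctly isolating clause (\ref{def:appSimReddef-v}) as the one place where the classical substitution identity and the variable convention are actually needed.
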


\begin{proof}
	By an easy induction on the definition of $M\circlearrow M'$ and $V\circlearrow V'$.
\end{proof}

\noindent Now, by means of Lemma \ref{lem:SubLemmaSimRed} one easily proves that $\circlearrow \; \subseteq \; \BetaIdRedStar$.

\medskip
The next step in the proof is to show that the relation $\circlearrow$ satisfies the {\em triangle property} \TP:
\begin{equation}\label{eq:TP}  
\forall P \, \exists P^* \, \forall Q. ~ P \circlearrow Q \; \Then \; Q \circlearrow P^* 
\end{equation}
where $P, P^*, Q \in \Terms$. \TP\, implies the {\em diamond property} \DP, which for $\circlearrow$ is:
\begin{equation}\label{eq:DP}  
\forall P,Q,R. \; P \circlearrow Q \And P \circlearrow R \; \Then \; \exists P'.\; Q \circlearrow P' \And R \circlearrow P'
\end{equation}
In fact, if \TP \, holds then we can take $P' \equiv P^*$ in \DP, since the latter only depends on $P$. We then define $P^*$ in terms of $P$ as follows:
\begin{enumerate}[label=\roman*)]
	\item $x^*\equiv x$
	\item $(\lambda x. M)^* \equiv \lambda x.M^*$
	\item $(\Unit V)^* \equiv \Unit V^*$ 
	\item $(\Unit V \Bind \lambda x.M)^* \equiv M^*\Subst{V^*}{x}$	
	\item $(M \Bind \lambda x.\Unit x)^* \equiv M^*$, if $M \not\equiv \Unit V$ for $V\in \ValTerm$
	\item $(M \Bind V)^* \equiv M^* \Bind V^*$, $M \not\equiv \Unit W$ for $W\in \ValTerm$ and $V \not\equiv \lambda x.\Unit x$
\end{enumerate}

\begin{lemma}\label{lem:TPSimRed}
	For all $P,Q\in \Term$, if $P\circlearrow Q$ then $Q\circlearrow P^*$, namely $\circlearrow$ satisfies \TP.
\end{lemma}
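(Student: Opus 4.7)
The plan is induction on the derivation of $P \circlearrow Q$, case-analysing on the last rule applied. Clauses (i), (ii), (iii) are immediate: they match the corresponding clauses in the definition of $P^*$, combined with the IH. Clause (v), where $P \equiv \Unit V \Bind \lambda x.M$ and $Q \equiv M'\Subst{V'}{x}$, reduces to Lemma~\ref{lem:SubLemmaSimRed} applied to the two inductive hypotheses, since $P^* \equiv M^*\Subst{V^*}{x}$ by clause (iv) of the definition of $P^*$.

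The substantive work is in clauses (iv) and (vi), where the shape of $P$ dictates which of the three subclauses of the definition of $P^*$ fires, so one must align the case analysis on the derivation with the case split in the definition. For (iv), with $P \equiv M \Bind V \circlearrow M' \Bind V' \equiv Q$, I distinguish three subcases according to whether $M \equiv \Unit W$ and whether $V \equiv \lambda x.\Unit x$. In the first, inversion on the two premise derivations forces $M' \equiv \Unit W'$ and $V' \equiv \lambda x. N'$ with $W' \circlearrow W^*$ and $N' \circlearrow N^*$ by IH, so clause (v) of $\circlearrow$ closes the triangle with $P^* \equiv N^*\Subst{W^*}{x}$. In the second, inversion forces $V' \equiv \lambda x.\Unit x$, and clause (vi) of $\circlearrow$ applied to the IH for $M$ gives $M' \Bind \lambda x.\Unit x \circlearrow M^* \equiv P^*$. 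In the remaining subcase $P^* \equiv M^* \Bind V^*$ by clause (vi) of the definition, and clause (iv) of $\circlearrow$ applied to the two IH's concludes. Clause (vi), where $P \equiv M \Bind \lambda x.\Unit x \circlearrow M' \equiv Q$, splits analogously on whether $M \equiv \Unit V$: if so, $P^* \equiv \Unit V^*$ (because clause (iv) of the definition takes priority), and inverting $\Unit V \circlearrow M'$ gives $M' \equiv \Unit V'$ with $V \circlearrow V'$, so the IH on $V$ and clause (iii) of $\circlearrow$ close the triangle; otherwise $P^* \equiv M^*$ and the conclusion is just the IH.

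The main obstacle is ensuring that the exclusion side-conditions attached to clauses (v) and (vi) of the definition of $P^*$, which make the definition unambiguous in the presence of overlapping patterns, are respected by the derivation-level case split. This is streamlined by two tiny inversion observations: $\Unit V \circlearrow N$ forces $N \equiv \Unit V'$ with $V \circlearrow V'$, and $\lambda x.\Unit x \circlearrow W$ forces $W \equiv \lambda x.\Unit x$. Both follow at once from the definition of $\circlearrow$ (only clause (iii), respectively clauses (ii), (iii), (i), can apply, leaving no slack), and they are exactly what makes each overlap case resolvable by a single application of the appropriate clause of $\circlearrow$.
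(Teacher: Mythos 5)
Your proof is correct and takes the same route as the paper: induction on the derivation of $P\circlearrow Q$, with the $\beta_c$-case closed by Lemma~\ref{lem:SubLemmaSimRed}. The paper dismisses the remaining cases as immediate from the induction hypotheses, whereas you spell out the inversion arguments that resolve the overlaps between clauses (iv)--(vi) of the definition of $P^*$; that is exactly the detail the paper leaves implicit, and your treatment of it is sound.
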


\begin{proof} By induction on $P\circlearrow Q$. The base case $x \circlearrow x$ follows by $x^* \equiv x$. All remaining cases
follow by the induction hypotheses; in particular if $P \equiv \Unit V\Bind \lambda x.M \circlearrow M'\Subst{V'}{x} \equiv Q$
because $M\circlearrow M'$ and $V\circlearrow V'$, then by induction $M' \circlearrow M^*$ and $V' \circlearrow V^*$, so that
$M'\Subst{V'}{x} \circlearrow M^*\Subst{V^*}{x} \equiv P^*$ by Lem. \ref{lem:SubLemmaSimRed}.
\end{proof}

A notion of reduction $R$ is said to be {\em confluent} or {\em Church-Rosser}, shortly $\CR$, if $\RedStar_R$ satisfies $\DP$; more explicitly
for all $M, N, L \in \ComTerm$:
\[M \RedStar_R N \And M \RedStar_R L \Then \exists M' \in \ComTerm.\, N \RedStar_R M' \And L \RedStar_R M'\]

\begin{corollary}\label{cor:confluence-bcid}
	The notion of redution $\Betac \cup \BindRight$ is $\CR$.
\end{corollary}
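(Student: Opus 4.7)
The plan is to exploit the Tait--Martin-L\"of parallel-reduction relation $\circlearrow$ together with the triangle property already established in Lemma \ref{lem:TPSimRed}, and combine it with the sandwich
\[ \BetaIdRed \;\subseteq\; \circlearrow \;\subseteq\; \BetaIdRedStar \]
recorded immediately before and after the definition of $\circlearrow$. Since this sandwich, together with the reflexivity of $\circlearrow$ noted after Definition \ref{def:appSimReddef}, yields that the reflexive--transitive closures coincide, namely $\circlearrow^* \; = \; \BetaIdRedStar$, it suffices to prove that $\circlearrow^*$ enjoys the diamond property $\DP$.

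First I would observe that the triangle property $\TP$ of Lemma \ref{lem:TPSimRed} immediately entails the diamond property $\DP$ for $\circlearrow$ itself: if $P \circlearrow Q$ and $P \circlearrow R$, then by $\TP$ both $Q \circlearrow P^*$ and $R \circlearrow P^*$, so $P^*$ serves as the common reduct. Next I would lift $\DP$ from $\circlearrow$ to its reflexive--transitive closure $\circlearrow^*$ by the standard strip-lemma/tiling argument: fixing $P \circlearrow^* Q$ (a finite sequence of $\circlearrow$-steps) and $P \circlearrow R$, induction on the length of the sequence builds a common reduct by pasting diamonds one at a time; a second induction on the length of a parallel sequence $P \circlearrow^* R$ then gives full confluence of $\circlearrow^*$.

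Finally, invoking $\circlearrow^* = \BetaIdRedStar$, confluence of $\circlearrow^*$ transfers directly to $\BetaIdRedStar$, which is exactly the statement that $\Betac \cup \BindRight$ is $\CR$. I do not expect any real obstacle: the only slightly delicate point is the double inclusion giving $\circlearrow^* = \BetaIdRedStar$, and in particular the direction $\circlearrow\,\subseteq\,\BetaIdRedStar$, which is proved by induction on the definition of $\circlearrow$, using Lemma \ref{lem:SubLemmaSimRed} in the $\Betac$ clause \ref{def:appSimReddef-v} to rewrite a parallel step in place as a finite sequence of ordinary $\Betac$-contractions inside the appropriate context, and a single $\BindRight$-contraction in clause \ref{def:appSimReddef-vi}; the congruence clauses follow from the compatibility of $\BetaIdRedStar$.
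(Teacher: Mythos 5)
Your proposal is correct and follows essentially the same route as the paper: both arguments pass from $\BetaIdRedStar$ to the (reflexive-)transitive closure of $\circlearrow$ via the sandwich $\BetaIdRed \subseteq\; \circlearrow\; \subseteq\; \BetaIdRedStar$, derive the diamond property of $\circlearrow$ from the triangle property of Lemma \ref{lem:TPSimRed}, and lift it to the closure by the standard tiling argument (the paper delegates this step to Lemma 3.2.2 of \cite{Barendregt'84}, which is exactly your strip-lemma induction). The only cosmetic difference is that the paper works with the transitive closure $\circlearrow^+$ rather than $\circlearrow^*$, which is immaterial since $\circlearrow$ is reflexive.
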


\begin{proof}
	As observed above $\BetaIdRed \subseteq\; \circlearrow$, hence $M \BetaIdRedStar N$ implies
	$M \circlearrow^+ N$, where $\circlearrow^+$ is the transitive closure of $\circlearrow$, and similarly
	$M \circlearrow^+ L$. By Lemma \ref{lem:TPSimRed} $\circlearrow$ satisfies \TP, hence it
	satisfies $\DP$. By an easy argument (see e.g. \cite{Barendregt'84} Lemma 3.2.2) we conclude that 
	$N \circlearrow^+ M'$ and $L \circlearrow^+  M'$ for some $M'$, from which the thesis follows by the fact that  
	$\circlearrow \; \subseteq \;  \BetaIdRedStar$.
\end{proof}

A notion of reduction $R$ is  {\em weakly Church-Rosser}, shortly $\WCR$, if 
for all $M, N, L \in \ComTerm$:
\[M \Red_R N \And M \Red_R L \Then \exists M' \in \ComTerm.\, N \RedStar_R M' \And L \RedStar_R M'\]

\begin{lemma}\label{lem:ass-wcr}
The notion of reduction $\BindLeft$ is $\WCR$.
\end{lemma}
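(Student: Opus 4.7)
The plan is to verify WCR by analysing the relative positions of the two $\BindLeft$-redexes that issue from a common term $P$, showing that every local peak closes in at most a handful of further $\BindLeft$-steps (modulo $\alpha$). Since $\BindLeft$ is compatibly closed, I can pinpoint each one-step reduction by a position (context) in $P$; call the two redex positions $p_1$ and $p_2$.

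\textbf{Case 1 (disjoint positions).} If neither $p_1$ nor $p_2$ is a prefix of the other, the redexes occur in side-by-side subterms and the outer contexts of each leave the other untouched. Hence from the two reducts one may fire the other redex in place, obtaining a common reduct in a single step on each side. This is the usual ``parallel moves'' argument.

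\textbf{Case 2 (nested without head overlap).} If one position is a strict prefix of the other but the inner redex sits properly inside one of the metavariables $L$, $M$, or $N$ of the outer $\BindLeft$ pattern $(L \Bind \lambda x.M) \Bind \lambda y.N$, then reducing the outer redex preserves the inner one at a well-defined position in the reduct, and conversely the inner reduction leaves the outer pattern intact. Closing the peak is again a single step per side; the only subtlety is that when the inner redex lies inside $M$ the outer reduction transports it under an additional $\lambda x$, which is harmless, and when it lies inside $N$ one uses the side condition $x\not\in FV(N)$ to guarantee the transported redex is still well-formed (no capture). Firing the moved redex in each branch yields the same term.

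\textbf{Case 3 (critical overlap).} The only genuine overlap of the $\BindLeft$ pattern with itself occurs when the left child of the outer redex is itself a $\BindLeft$-redex, i.e., when $P$ has the shape
\[ \big((L' \Bind \lambda x'.M') \Bind \lambda x.M\big) \Bind \lambda y.N, \]
with side conditions $x'\notin FV(M)$ (inner) and $x\notin FV(N)$ (outer). Firing the outer redex first yields $(L' \Bind \lambda x'.M') \Bind \lambda x.(M \Bind \lambda y.N)$, while firing the inner one first yields $(L' \Bind \lambda x'.(M' \Bind \lambda x.M)) \Bind \lambda y.N$. A further $\BindLeft$-step on each side (in the former, at the root; in the latter, at the root and then inside $\lambda x'$) produces the common reduct
\[ L' \Bind \lambda x'.\big(M' \Bind \lambda x.(M \Bind \lambda y.N)\big). \]
Each step requires $x'\notin FV(N)$, which is not guaranteed by the original side conditions; this is overcome by $\alpha$-renaming the inner bound variable $x'$ to a fresh name before carrying out the joining steps. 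I take this bookkeeping of side conditions and $\alpha$-equivalence as the only real obstacle in the argument: all other combinatorics reduce to straightforward context reasoning, but the $\BindLeft$ rule is the unique $\TFlambdaComp$-rule carrying a freshness proviso, so each joining step must be justified by producing a suitable $\alpha$-variant of the intermediate term.
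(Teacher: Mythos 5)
Your proposal is correct and follows essentially the same route as the paper: both reduce the problem to the unique self-overlap of the $\BindLeft$ pattern at the root with its left argument, and close that peak with one further step on the outer-first branch versus two on the inner-first branch, reaching the same common reduct. Your explicit treatment of the disjoint/nested cases and of the freshness proviso via $\alpha$-renaming only spells out what the paper leaves implicit under the usual variable convention.
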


\begin{proof} It suffices to show the thesis for the critical pair $M_1 \AssRed M_2$ and $M_1 \AssRed M_3$ where:
\[\begin{array}{rcl}
M_1 & \equiv & ((L \Bind \lambda x.M) \Bind \lambda y,N) \Bind \lambda z.P \\
M_2 & \equiv & (L \Bind \lambda x.M) \Bind \lambda y.(N \Bind \lambda z.P) \\
M_3 & \equiv & (L \Bind \lambda x.(M \Bind \lambda y.N)) \Bind \lambda z.P
\end{array}\]
Then in one step we have:
\[M_2 \AssRed  L \Bind \lambda x.(M \Bind \lambda y.(N \Bind \lambda z.P)) \equiv M_4\]
but 
\[M_3 \AssRed L \Bind \lambda x.((M \Bind \lambda y.N) \Bind \lambda z.P) \AssRed M_4\]
where the two reduction steps are necessary.
\end{proof}

A notion of reduction $R$ is {\em noetherian} or {\em strongly normalizing}, shortly $\SN$, if there exists
no infinite reduction $M \Red_R M_1 \Red_R M_2 \Red_R \cdots$ out of any $M \in \ComTerm$.

\begin{lemma}\label{lem:ass-sn}
The notion of reduction $\BindLeft$ is $\SN$.
\end{lemma}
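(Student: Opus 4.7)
The strategy is to equip computations and values with a positive natural-number weight that strictly decreases with every $\BindLeft$-contraction and is strictly monotone in subterms, so that strong normalisation follows from well-foundedness of $\mathbb{N}$.

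Concretely, I would define two mutually recursive maps $w : \ComTerm \to \mathbb{N}_{>0}$ and $w_V : \ValTerm \to \mathbb{N}_{>0}$ by
\[
\begin{array}{ll}
w_V(x) = 1, & w_V(\lambda x.M) = 1 + w(M), \\[1mm]
w(\Unit V) = 1 + w_V(V), \qquad & w(M \Bind V) = 2\,w(M) + w_V(V).
\end{array}
\]
The key point is the factor $2$ on the left-hand operand of $\Bind$, which penalises left-nesting of binds. A direct calculation on the axiom of $\BindLeft$ gives
\begin{align*}
w\bigl((L \Bind \lambda x.M) \Bind \lambda y.N\bigr) &= 4\,w(L) + 2\,w(M) + w(N) + 3, \\
w\bigl(L \Bind \lambda x.(M \Bind \lambda y.N)\bigr) &= 2\,w(L) + 2\,w(M) + w(N) + 2,
\end{align*}
so the difference is exactly $2\,w(L) + 1 > 0$; thus every instance of $\BindLeft$ strictly decreases $w$.

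Next I would observe that $w$ and $w_V$ are strictly monotone in every subterm: each clause in the definition has the form of a strictly increasing arithmetic expression in the weights of its immediate subterms. A straightforward induction on contexts then yields that for any computation context $\CompCtx$, if $w(M) > w(M')$ then $w(\CompCtx\Hole{M}{}) > w(\CompCtx\Hole{M'}{})$, and similarly for value contexts (which is needed because $\BindLeft$-redexes can appear under $\lambda$-abstractions). Combining this with the strict decrease on the axiom, every single step $M \AssRed M'$ satisfies $w(M) > w(M')$.

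An infinite $\BindLeft$-reduction $M \AssRed M_1 \AssRed M_2 \AssRed \cdots$ would then give an infinite strictly decreasing sequence $w(M) > w(M_1) > w(M_2) > \cdots$ of natural numbers, which is impossible. The only potential pitfall is choosing a weight that is both strictly decreasing under the axiom and strictly monotone in subterms simultaneously; the factor $2$ on the left operand of $\Bind$ handles both requirements at once, so no further subtlety is needed.
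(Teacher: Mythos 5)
Your proof is correct, but it takes a different route from the paper's. The paper marks all occurrences of $\Bind$ in a term and counts the number of pairs $(\Bind_i,\Bind_j)$ such that $\Bind_i$ occurs in the left operand of $\Bind_j$; an $\BindLeft$-step removes the pair given by the redex and creates no new ones, so this count strictly decreases. You instead interpret terms into $(\mathbb{N}_{>0},>)$ by a weight that doubles the contribution of the left operand of $\Bind$, and your arithmetic checks out: the two sides of the $\BindLeft$ axiom weigh $4w(L)+2w(M)+w(N)+3$ and $2w(L)+2w(M)+w(N)+2$ respectively, and since every clause of $w$ and $w_V$ is strictly increasing in the weights of its immediate subterms (all coefficients positive, all weights at least $1$), the strict decrease propagates through arbitrary computation and value contexts, covering redexes under $\lambda$; the absence of substitution in $\BindLeft$ means no further compatibility conditions are needed. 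What your approach buys is a fully self-contained, two-line arithmetic verification of both the decrease on the axiom and the context-monotonicity; the paper's inversion count is arguably closer to the intuition that $\BindLeft$ ``moves binds to the right,'' but it relies on a preservation claim (no new left-of pairs are created, for occurrences in $L$, $N$, $P$ and in the surrounding context) that the paper only sketches by an appeal to induction on terms. Either argument is a standard termination proof for an associativity rule.
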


\begin{proof} Given $M \in \ComTerm$ let's denote by the same $M$ the expression obtained by marking
differently all occurrences of $\Bind$ in $M$, say $\Bind_1, \ldots, \Bind_n$. We say that $\Bind_i$ is to the left to
$\Bind_j$ in $M$ if there exists a subterm $L \Bind_j V$ of $M$ such that $\Bind_i$ occurs in $L$. Finally
let's denote by $\sharp M$ the number of pairs $(\Bind_i,\Bind_j)$ such that $\Bind_i$ is to the left to
$\Bind_j$ in $M$.

If a term includes an $\BindLeft$-redex $(L \Bind_i \lambda x.N) \Bind_j \lambda y.P$, which is contracted 
to $L \Bind_i \lambda x.(N \Bind_j \lambda y.P)$, then $\Bind_i$ is to the left to $\Bind_j$ in the redex,
but not in the contractum. Also it is easily seen by induction on terms that, if $\Bind_i$ is not to the left to
$\Bind_j$ in $M$ and $M \AssRed N$, the same holds in $N$.

It follows that, if $M \AssRed N$ then $\sharp M > \sharp N$, hence  $\BindLeft$ is $\SN$.
\end{proof}

\begin{corollary}\label{cor:ass-cr}
The notion of reduction $\BindLeft$ is $\CR$.
\end{corollary}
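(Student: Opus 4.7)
The plan is to obtain confluence of $\BindLeft$ as an immediate consequence of Newman's Lemma, which states that a notion of reduction that is both weakly Church--Rosser and strongly normalizing is Church--Rosser. Both hypotheses have just been established: Lemma \ref{lem:ass-wcr} gives $\WCR$ for $\BindLeft$, and Lemma \ref{lem:ass-sn} gives $\SN$ for $\BindLeft$. Hence the proof reduces to invoking Newman's Lemma with $M$, $N$, $L$ ranging over $\ComTerm$.

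If a fully explicit argument is preferred over citing Newman, I would carry it out by well-founded induction on the $\BindLeft$-reduction tree starting from $M$, which is legitimate precisely because $\BindLeft$ is $\SN$ by Lemma \ref{lem:ass-sn}. Given $M \AssRedStar N$ and $M \AssRedStar L$, the base case where $M \equiv N$ or $M \equiv L$ is trivial; otherwise write $M \AssRed M_1 \AssRedStar N$ and $M \AssRed M_2 \AssRedStar L$. By $\WCR$ (Lemma \ref{lem:ass-wcr}) there exists $M_3$ with $M_1 \AssRedStar M_3$ and $M_2 \AssRedStar M_3$. Applying the induction hypothesis first at $M_1$ (which is $\SN$, being a reduct of $M$) to $N$ and $M_3$ yields a common reduct $M_4$; then applying it at $M_2$ to $M_4$ and $L$ yields the required common reduct $M'$ of $N$ and $L$.

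I expect no real obstacle here: the whole content of the corollary lies in the two preceding lemmas, and Newman's Lemma is a standard textbook result (see e.g.\ \cite{Barendregt'84}). The only care needed in the explicit variant is the bookkeeping of the tiling diagram, which is routine.
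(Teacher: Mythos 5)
Your proof is correct and follows exactly the paper's own argument: combine Lemma \ref{lem:ass-wcr} ($\WCR$) and Lemma \ref{lem:ass-sn} ($\SN$) and conclude by Newman's Lemma. The additional explicit well-founded induction you sketch is just the standard proof of Newman's Lemma and adds nothing problematic.
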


\begin{proof} By Lem. \ref{lem:ass-wcr}, \ref{lem:ass-sn} and by Newman Lemma (see \cite{Barendregt'84}, Prop. 3.1.24), stating that
a notion of reduction which is $\WCR$ and $\SN$ is $\CR$.
\end{proof}

\newcommand{\LRed}[1]{_{#1}\!\!\longleftarrow}
\newcommand{\LRedStar}[1]{_{#1}\!\!\stackrel{*}{\longleftarrow}}
\newcommand{\RedEq}{\stackrel{=}{\Red}}

The following definitions are from \cite{BaaderN98}, Def. 2.7.9.
Two relations $\Red_1$ and $\Red_2$ over $\ComTerm$ are said to {\em commute} if, for all $M,N,L$:
\[N\, \LRedStar{1} M \RedStar_2 L \Then\exists P \in \ComTerm.\; N \RedStar_2 P\, \LRedStar{1} L \]
Relations $\Red_1$ and $\Red_2$ {\em strongly commute} if, for all $M,N,L$:
\[N\, \LRed{1} M \Red_2 L \Then\exists P \in \ComTerm.\; N \RedEq_2 P\, \LRedStar{1} L \]
where $\RedEq_2 $ is $\Red_2 \cup =$, namely at most one reduction step.

\begin{lemma}\label{lem:commutativity}
Reductions $\BetaIdRed$ and $\AssRed$ commute.
\end{lemma}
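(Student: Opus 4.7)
My plan is to prove \emph{strong commutation} of $\BetaIdRed$ and $\AssRed$ and then invoke the standard result (stated as Lemma~2.7.10 in \cite{BaaderN98}) that strong commutation of two single-step relations implies commutation of their reflexive-transitive closures. Concretely, I would first establish: for all $M,N,L \in \ComTerm$, if $M \AssRed N$ and $M \BetaIdRed L$, then there exists $P \in \ComTerm$ with $N \RedEq_{\beta_c,\textit{id}} P$ and $L \AssRedStar P$. Since $\Red_1 = {\AssRed}$ appears on the side that must be closed by $\RedStar_1$, strong commutation is the right strength: a $\beta_c$-step may duplicate a value that already contains several $\BindLeft$-redexes, so the closure on the $\AssRed$ side must be the transitive one, whereas on the $\BetaIdRed$ side a single step (or none) always suffices.

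The proof of strong commutation is by case analysis on the relative position of the two contracted redexes inside $M$. The disjoint case is immediate: we do each contraction in the other order and meet in one step each. The case where the $\BetaIdRed$-redex lies strictly inside one of the three computational holes $L,M,N$ of an $\AssRed$-redex $(L\Bind \lambda x.M)\Bind \lambda y.N$ is also routine, since $\BindLeft$ only rebrackets these components without duplicating or erasing them, so the completion is again one step each. Symmetrically, if the $\AssRed$-redex lies inside the body $M$ of a $\beta_c$-redex $\Unit V \Bind \lambda x.M$ or inside the argument of an $\textit{id}$-redex, again one step each way suffices, since $x$ occurs only in value position in $M$ and substitution preserves the outer $\BindLeft$-pattern.

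The two genuinely interesting clusters of cases are (i) overlaps at the critical-pair positions exhibited in Examples \ref{ex:confluence-a}, \ref{ex:confluence-b}, \ref{ex:confluence-c}, which I would simply reproduce as closing diagrams (noting as in Ex.\ \ref{ex:confluence-c} that we need to work up to $\alpha$-equivalence, available thanks to the side condition $x\notin\FV(N)$ on $\BindLeft$); and (ii) the case where the $\AssRed$-redex sits inside the value $V$ of a $\beta_c$-redex, i.e.\ $V = \lambda z.W$ with $W$ containing the redex $R$. Here $L = M_0\Subst{V}{x}$ contains as many copies of $V$, hence of $R$, as $x$ has free occurrences in $M_0$, while $N = \Unit V'\Bind\lambda x.M_0 \Red_{\beta_c} M_0\Subst{V'}{x}$ in one $\beta_c$-step, and $L \AssRedStar M_0\Subst{V'}{x}$ by contracting the copies of $R$ one at a time. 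This is precisely the situation that forces the use of $\AssRedStar$ on the lower edge, and it is the main obstacle in the proof.

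Once strong commutation is in place, Hindley's lemma yields directly that $\BetaIdRedStar$ and $\AssRedStar$ commute, which is the statement of the lemma. Together with Corollary \ref{cor:confluence-bcid} and Corollary \ref{cor:ass-cr} the Hindley--Rosen lemma will then deliver confluence of $\Red = \Red_{\MonRed}$, which is the real target of the section.
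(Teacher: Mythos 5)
Your proposal follows the same overall strategy as the paper's proof: reduce commutation of $\BetaIdRedStar$ and $\AssRedStar$ to \emph{strong} commutation of the single-step relations via the result of Baader and Nipkow (which the paper cites as Lemma~2.7.11, not 2.7.10 --- a harmless slip on your side), and then verify strong commutation by a case analysis on the relative positions of the two contracted redexes. The one substantive difference is the orientation of the strong-commutation statement, and your choice is the right one. The paper proves
\[ N\, \LRed{\beta_c,\textit{id}}\, M \AssRed L \;\Then\; \exists P \in \ComTerm.\; N \RedEq_{\textit{ass}} P \mbox{ and } L \BetaIdRedStar P, \]
that is, it allows at most one $\textit{ass}$-step on the side of the $\beta_c,\textit{id}$-reduct, and then asserts that the verification can be limited to the critical pairs of Examples~\ref{ex:confluence-a}--\ref{ex:confluence-c}. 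That orientation cannot be closed in the duplication case you isolate: if $M \equiv \Unit (\lambda z.W) \Bind \lambda x.M_0$ with an $\textit{ass}$-redex inside $W$ and with $x$ occurring $k\geq 2$ times in $M_0$ (reduction under $\lambda$ inside values is part of the compatible closure, so this situation does arise), then $L \equiv M_0\Subst{\lambda z.W}{x}$ contains $k$ residuals of that redex, and no single $\textit{ass}$-step from $L$ can meet any $\beta_c,\textit{id}$-reduct of $\Unit(\lambda z.W')\Bind\lambda x.M_0$. Your orientation --- at most one $\BetaIdRed$-step from the $\AssRed$-reduct, arbitrarily many $\AssRed$-steps from the $\BetaIdRed$-reduct --- is exactly what this case demands, and it is adequate everywhere else because $\BindLeft$ is linear and non-erasing, so an $\textit{ass}$-contraction never multiplies or destroys a $\beta_c,\textit{id}$-redex, while the three genuine overlaps close in one step each (up to $\alpha$-conversion in the case of Example~\ref{ex:confluence-c}). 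So your proof is not merely an alternative route: it supplies the non-critical-pair duplication case that the paper's proof omits, and it corrects the orientation under which that case would be unprovable. The remainder of your case analysis matches what a complete argument requires, and your concluding appeal to the commutative-union (Hindley--Rosen) lemma together with Corollaries~\ref{cor:confluence-bcid} and~\ref{cor:ass-cr} is exactly how the paper assembles Theorem~\ref{thm:confluence}.
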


\begin{proof} By Lemma 2.7.11 in \cite{BaaderN98}, two strongly commuting relations commute, and
commutativity is clearly symmetric; hence it suffices to show that
\[N\, \LRed{\beta_c,\textit{id}} M \AssRed L \Then\exists P \in \ComTerm.\; N \RedEq_{\textit{ass}} P\, \LRedStar{1} L. \]
We can limit the cases to the critical pairs, that are exactly those in examples \ref{ex:confluence-a}, \ref{ex:confluence-b} and
\ref{ex:confluence-c}, which commute.
\end{proof}

\begin{theorem}[Confluence]\label{thm:confluence}
The notion of reduction $\MonRed = \Betac \cup \BindRight\, \cup \BindLeft$ is $\CR$.
\end{theorem}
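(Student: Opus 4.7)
The plan is to apply the Hindley--Rosen lemma: if two notions of reduction $\Red_1$ and $\Red_2$ are each Church--Rosser and their reflexive-transitive closures commute, then $\Red_1 \cup \Red_2$ is Church--Rosser. This is precisely the proof strategy advertised by the footnote, and all the required pieces have already been collected in the preceding lemmas and corollaries.

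Concretely, I would decompose $\MonRed$ as $\Red_1 \cup \Red_2$ with $\Red_1$ equal to $\Betac \cup \BindRight$ and $\Red_2$ equal to $\BindLeft$. Then I would observe: (i) $\Red_1$ is $\CR$ by Corollary \ref{cor:confluence-bcid}; (ii) $\Red_2$ is $\CR$ by Corollary \ref{cor:ass-cr}, which itself followed from weak confluence (Lemma \ref{lem:ass-wcr}), strong normalisation (Lemma \ref{lem:ass-sn}) and Newman's lemma; (iii) the reductions $\RedStar_{\Red_1}$ and $\RedStar_{\Red_2}$ commute by Lemma \ref{lem:commutativity}. Hindley--Rosen then yields at once that $\MonRed$ is $\CR$.

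Since the lemma is standard I would cite it (e.g.\ \cite{BaaderN98}, Lem.~2.7.10, or \cite{Barendregt'84}, Prop.~3.3.5) rather than reprove it. If one wished to unpack the argument, the idea is to tile a peak $N\; \LRedStar{\MonRed} M \RedStar_{\MonRed} L$ by factoring each reduction sequence as an alternation of $\Red_1$- and $\Red_2$-segments, close each elementary square using confluence of the individual reductions together with their mutual commutation, and then glue the squares into a single diamond.

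I do not expect a serious obstacle here: all the technical work has been done in the previous subsection (the critical-pair analysis, the Tait--Martin-L\"of parallel-reduction argument for $\BetaIdRed$, the strong normalisation proof for $\BindLeft$, and the commutation lemma). The only point to double-check is that Lemma \ref{lem:commutativity}, as stated, really gives commutation of the star-closures in the sense required by Hindley--Rosen; this is immediate because the proof of Lemma \ref{lem:commutativity} established the stronger strong-commutation property for the one-step relations, which propagates to the star-closures by Lemma~2.7.11 of \cite{BaaderN98}.
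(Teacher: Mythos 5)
Your proposal is correct and follows exactly the same route as the paper: decompose $\MonRed$ into $\Betac \cup \BindRight$ and $\BindLeft$, invoke Corollaries \ref{cor:confluence-bcid} and \ref{cor:ass-cr} for confluence of each part and Lemma \ref{lem:commutativity} for commutation, and conclude by the Hindley--Rosen (commutative union) lemma. Your closing remark about strong commutation propagating to the star-closures via Lemma~2.7.11 of \cite{BaaderN98} is precisely the justification the paper's Lemma \ref{lem:commutativity} relies on, so there is nothing to add.
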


\begin{proof} By the commutative union lemma (see \cite{BaaderN98}, Lem. 2.7.10 and \cite{Barendregt'84}, Prop. 3.3.5, 
where it is called Hindley-Rosen Lemma),
if $\BetaIdRed$ and $\AssRed$ and are both $\CR$ (Cor. \ref{cor:confluence-bcid} and \ref{cor:ass-cr}), and commute
(Lem. \ref{lem:commutativity}), then $\Red_{\MonRed} \;=\; \BetaIdRed \cup \AssRed$ is $\CR$.
\end{proof}

\section{Models of $\TFlambdaComp$}\label{subsec:models}

Let $\Dcat$ be a category of domains, namely a cartesian closed subcategory of the category of posets 
whose objects have directed sups and morphisms are Scott continuous functions. Below $|\Dcat|$ denotes the set of objects of $\Dcat$.


\begin{definition}[Monad]\label{def:monad}
A {\em monad} over $\Dcat$ is a triple $(T, \Unit\!\!, \,\Bind)$ where $T:|\Dcat| \to |\Dcat|$ is a map over the objects of $\Dcat$,  
$\Unit \!\!= \Set{\UnitSub{D} \mid D \in |\Dcat|}$ and $\Bind = \Set{\Bind_{D,E} \mid D,E \in |\Dcat|}$ are families of morphisms
\[\UnitSub{D}: D \to TD, \qquad \Bind_{D,E}: TD \times (D \to TE) \to TE\]
such that, writing functional application $f(x)$ as $f\,x$, $\Bind$ as infix operator and omitting subscripts:
\[(\Unit\,d) \Bind f = f\,d \qquad a \Bind \Unit = a \qquad (a \Bind f) \Bind g = a \Bind \metalambda d. (f\,d \Bind g) \]
\end{definition}

This definition of a monad, akin to that of a Kleisli triple, is the type theoretic definition by Wadler in \cite{Wadler-Monads}, at the basis of
Haskell implementation of monads. We use this instead of the category theoretic definition, originally used by Moggi in \cite{Moggi'91} as it is more accessible to non categorist readers. 
If $T$ is a monad over $\Dcat$, we say that a {\em $T$-model} of $\TFlambdaComp$ is a call-by-value reflexive object in
$\Dcat$ (see \cite{Moggi'89}).

\begin{definition}[$T$-model]\label{def:T-model}
A {\em $T$-model} in the category $\Dcat$ is a tuple $(D, T, \Phi, \Psi)$ where $D\in |\Dcat|$,  
$(T, \Unit\!\!, \,\Bind)$ is a monad over $\Dcat$, and $\Phi: D \to (D \to TD)$ and $\Psi: (D\to TD) \to D$ are 
morphisms in $\Dcat$ such that $\Phi \circ \Psi = \Id_{D\to TD}$.
A $T$-model  is {\em extensional} if also $\Psi \circ \Phi = \Id_D$, namely $D \simeq D \to TD$ in $\Dcat$.
\end{definition}
In the following we just say that some $D$ is a $T$-model, when the  monad $(T, \Unit\!\!, \,\Bind)$ and the injection-projection
pair $(\Phi,\Psi)$ are understood.

\begin{remark}\label{rem:T-model}
The definition of $T$-model is the call-by-value generalization of that of $\lambda$-model, where $D$ is a retract of $D\to D$. Also a
call-by-name notion of model is possible by considering a retract of $TD \to TD$ instead. We concentrate on call-by-value $\TFlambdaComp$
as it is a more natural model of  effectful functional calculi.
\end{remark}

In case of a $T$-model we are interested to $\UnitSub{D}$ and $\Bind_{D,D}$, which are respectively the intended meanings of $\Unit$ and $\Bind$ operators 
in the computation syntax. We deliberately overload notations and avoid subscripts when unnecessary.

\begin{definition}\label{def:interpretation}
Let $D$ be a $T$-model and $\Env_D = \Var \to  D$ be the set of {\em variable interpretations} into $D$
ranged over by $\rho$, then the maps 
\[\Sem{\cdot}^D : \ValTerm \to \Env_D \to D,  \qquad \Sem{\cdot}^{TD} : \ComTerm \to \Env_D \to TD\] 
are defined by mutual induction:
	\[\begin{array}{rcl@{\hspace{1cm}}rcl}
	\Sem{x}^D_\rho & = & \rho(x)  & \Sem{\Unit V}^{TD}_\rho & = & \Unit \Sem{V}^D_\rho \\ [2mm]
	\Sem{\lambda x. M}^D_\rho & = & \metalambda d \in D. \, \Sem{M}^{TD}_{\rho[x \mapsto d]} 
	& \Sem{M \Bind V}^{TD}_\rho & = & \Sem{M}^{TD}_\rho \,\Bind\, \Sem{V}^D_\rho 
	\end{array}\]
where $\rho[x \mapsto d](y) = \rho(y)$ if $y \not \equiv x$, it is equal to $d$ otherwise.
\end{definition}

\begin{lemma}\label{lem:subtsInterpretation}
In any $T$-model $D$ we have $\Sem{W\Subst{V}{x}}^{D}_\rho = \Sem{W}^{D}_{\rho[x\mapsto \Sem{V}^D_\rho]}$
and 
$\Sem{M\Subst{V}{x}}^{TD}_\rho = \Sem{M}^{TD}_{\rho[x\mapsto \Sem{V}^D_\rho]}$.
\end{lemma}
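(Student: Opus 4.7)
The plan is to prove the two equalities simultaneously by mutual induction on the structure of the value $W$ and the computation $M$, following the standard pattern for a substitution lemma in denotational semantics. The clauses of Definition 4.5 are directly tailored to make each inductive step follow from the induction hypotheses applied to immediate subterms, so no fundamentally new machinery is needed; the real work is just book-keeping with environments and bound-variable conventions.

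First I would handle the value cases. For $W \equiv x$, both sides collapse to $\Sem{V}^D_\rho$ by the definitions of substitution and of $\rho[x \mapsto \Sem{V}^D_\rho]$. For $W \equiv y$ with $y \not\equiv x$, both sides equal $\rho(y)$. For the abstraction $W \equiv \lambda y.N$, I invoke the Barendregt variable convention to assume $y \not\equiv x$ and $y \notin \FV(V)$, so that $(\lambda y.N)\Subst{V}{x} \equiv \lambda y.(N\Subst{V}{x})$. Unfolding the definition on both sides, what needs to be shown is
\[
\metalambda d \in D.\, \Sem{N\Subst{V}{x}}^{TD}_{\rho[y \mapsto d]} \;=\; \metalambda d \in D.\, \Sem{N}^{TD}_{\rho[x\mapsto \Sem{V}^D_\rho][y \mapsto d]},
\]
and since $y \notin \FV(V)$ implies $\Sem{V}^D_{\rho[y\mapsto d]} = \Sem{V}^D_\rho$, the two environments $\rho[y\mapsto d][x \mapsto \Sem{V}^D_{\rho[y\mapsto d]}]$ and $\rho[x\mapsto \Sem{V}^D_\rho][y \mapsto d]$ agree on every variable (since $y \not\equiv x$). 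The induction hypothesis applied to $N$ with environment $\rho[y\mapsto d]$ then closes this case.

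Next I would dispatch the computation cases, each of which is essentially algebraic. For $M \equiv \Unit V'$, by the semantic clause $\Sem{\Unit V'}^{TD}_\rho = \Unit \Sem{V'}^D_\rho$, and since $(\Unit V')\Subst{V}{x} \equiv \Unit(V'\Subst{V}{x})$, the identity reduces to the value-case induction hypothesis applied to $V'$. For $M \equiv N \Bind V'$, substitution distributes as $(N \Bind V')\Subst{V}{x} \equiv N\Subst{V}{x} \Bind V'\Subst{V}{x}$, and applying the semantic clause together with the two induction hypotheses (one on $N$, one on $V'$) yields the result directly.

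The only step that requires genuine care is the lambda case, where the bound-variable convention has to be combined with the fact that the interpretation of $V$ does not depend on variables outside $\FV(V)$; this last property is itself a trivial auxiliary lemma by induction on $V$, which I would state in passing rather than prove separately. Everything else is mechanical unfolding of Definition 4.5 and the monadic operators, so the proof is short enough to compress into an ``easy induction on the structure of $W$ and $M$, using the variable convention'' remark if space is tight.
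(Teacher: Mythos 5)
Your proof is correct and follows exactly the same route as the paper's: a simultaneous induction on the structure of $W$ and $M$, with the abstraction case handled via the variable convention and the observation that $\Sem{V}^D$ depends only on the free variables of $V$. The paper merely states that the result ``easily follows by simultaneous induction on $W$ and $M$''; your write-up supplies the details that the paper leaves implicit.
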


\begin{proposition}\label{prop:soundness-of-interpretation}
If $M \Red N$ then $\Sem{M}^{TD}_\rho = \Sem{N}^{TD}_\rho$ for any $T$-model $D$ and $\rho \in \Env_D$. Therefore,
if $=$ is the convertibility relation of $\Red$, that is the symmetric closure of $\RedStar$, then $M = N$
implies $\Sem{M}^{TD}_\rho = \Sem{N}^{TD}_\rho$.
\end{proposition}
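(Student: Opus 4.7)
The plan is to prove the first statement by induction on the derivation of $M \Red N$, with the three basic reduction axioms verified as the base cases by direct appeal to the three monad laws of Definition \ref{def:monad}, and the compatible closure handled by induction on contexts using the compositionality of $\Sem{\cdot}$ made explicit in Definition \ref{def:interpretation}. The second statement then follows routinely from transitivity and symmetry of equality in $D$ and $TD$.

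For the base cases, each axiom matches one monad law. Unfolding with Definition \ref{def:interpretation}:
\[
\Sem{\Unit V \Bind \lambda x.M}^{TD}_\rho \;=\; \Unit \Sem{V}^D_\rho \,\Bind\, \bigl(\metalambda d.\, \Sem{M}^{TD}_{\rho[x\mapsto d]}\bigr) \;=\; \Sem{M}^{TD}_{\rho[x\mapsto \Sem{V}^D_\rho]} \;=\; \Sem{M\Subst{V}{x}}^{TD}_\rho,
\]
where the middle equality uses the left unit law $(\Unit\,d)\Bind f = f\,d$ and the last equality is Lemma \ref{lem:subtsInterpretation}, handling $\Betac$. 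For $\BindRight$, we get $\Sem{M\Bind \lambda x.\Unit x}^{TD}_\rho = \Sem{M}^{TD}_\rho \Bind (\metalambda d.\,\Unit d) = \Sem{M}^{TD}_\rho \Bind \Unit = \Sem{M}^{TD}_\rho$ by the right unit law. For $\BindLeft$, the associativity law $(a\Bind f)\Bind g = a\Bind \metalambda d.(f\,d\Bind g)$ gives the desired identity, provided that after binding $x$ to a value $d$ the meaning of $N$ is unaffected; this is where the side condition $x\notin FV(N)$ is needed, so that $\Sem{N}^{TD}_{\rho[x\mapsto d][y\mapsto e]} = \Sem{N}^{TD}_{\rho[y\mapsto e]}$. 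This last fact is a standard coincidence lemma and can be proved by a straightforward induction on $N$ together with the mutually recursive statement for values.

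For the inductive step we verify that interpretation is compatible with all term formers. Suppose $M \Red M'$ forces $\Sem{M}^{TD}_\rho = \Sem{M'}^{TD}_\rho$ for every $\rho$; then by the clauses of Definition \ref{def:interpretation}, $\Sem{M \Bind V}^{TD}_\rho$, $\Sem{N \Bind \lambda x.M}^{TD}_\rho$ and $\Sem{\Unit(\lambda x.M)}^{TD}_\rho$ are each continuous functions of $\Sem{M}^{TD}_{\rho'}$ evaluated in suitable extensions $\rho'$ of $\rho$. Formally, one proceeds by a simultaneous induction on computation contexts $\CompCtx$ and value contexts $\ValCtx$ from Section \ref{subsec:reduction}, showing that $\Sem{\CompCtx\Hole{M}{}}^{TD}_\rho$ (resp.\ $\Sem{\ValCtx\Hole{M}{}}^{D}_\rho$) depends on $M$ only through $\Sem{M}^{TD}_{\rho'}$ for the appropriately extended $\rho'$; since the outer operations $\Unit$ and $\Bind$ are functions in $\Dcat$ and $\lambda$ abstraction is interpreted pointwise, preservation of the interpretation lifts through every context.

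Finally, the statement for $=$ is obtained by induction on the derivation of $M = N$: reflexivity is trivial, symmetry and transitivity follow from the corresponding properties of equality in $TD$, and each single reduction step is handled by the first part. The main technical subtlety, and the only place where the argument is not entirely mechanical, is the handling of the side condition in $\BindLeft$; the rest reduces to matching syntax against the monad equations.
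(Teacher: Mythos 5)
Your proposal is correct and follows essentially the same route as the paper's own (sketched) proof: induction on the derivation of $M \Red N$, matching each of $\Betac$, $\BindRight$, $\BindLeft$ against the corresponding monad law of Definition \ref{def:monad}, using Lemma \ref{lem:subtsInterpretation} for the $\Betac$ case, and closing under contexts by the compositionality of the interpretation. Your explicit treatment of the side condition $x \notin FV(N)$ in the $\BindLeft$ case via a coincidence lemma is a detail the paper leaves implicit, but it is exactly the right observation.
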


\section{Intersection type assignment system for $\TFlambdaComp$}\label{sec:types}

\begin{definition}[Intersection types and Type theories]\label{def:theories}
A  {\em language of intersection types} $\Types$ is a set of expressions $\sigma, \sigma', \ldots$ including a constant $\omega$ 
and closed under the intersection operator: $\sigma\Inter\sigma'$.

An {\em intersection type theory} (shortly a {\em type theory}) is a pair $\Th = (\Types, \leq)$ where $\Types$ is a language of intersection types and $\leq$ a pre-order over $\Types$ such that $\omega$ is the top, 
$\Inter$ is monotonic, idempotent and commutative, and
\[\sigma \Inter \sigma' \leq \sigma, \qquad 
\prooftree
	\sigma \leq \sigma' \quad \sigma \leq \sigma''
\justifies
	\sigma \leq \sigma'\Inter\sigma''
\endprooftree
\]
\end{definition}

\begin{definition}[Intersection types for values and computations]\label{def:intersectionTypesValueComp}

Let $\TypeVar$ be a countable set of {\em type variables}, ranged over by $\alpha$:
\[\begin{array}{r@{\hspace{0.7cm}}rll@{\hspace{0.7cm}}l}
\ValType: & \delta & ::= & \alpha \mid \delta  \rightarrow\tau \mid \delta \wedge \delta \mid \omega_\Val  & \mbox{({\em value types})}\\ [1mm]
\ComType: & \tau & ::= & T\delta \mid \tau \wedge\tau \mid \omega_\Comp & \mbox{({\em computation types})}
\end{array}\]
\end{definition}
Intersection types are better understood as predicates of values and computations respectively, or as refinement types of the two types
of $\TFlambdaComp$, that is, using the notation in 
\cite{MelliesZeilberger2015}, $\delta \sqsubset D = D\to TD$ in case of values, and $\tau \sqsubset TD$ in case of computations.

In the definition of language $\ValType$ and consequently $\ComType$ the set of $\TypeVar$ (also called {\em atoms}) is left unspecified and it is a parameter

\begin{definition}[Type theories $\Th_\Val$ and $\Th_\Comp$]\label{def:type-theories-Th_V-Th_C}
The intersection type theories $\Th_\Val = (\ValType, \leq_\Val)$ and $\Th_\Comp = (\ComType, \leq_\Comp)$ are the least type theories such that:
\[ \begin{array}{c@{\hspace{1cm}}c}
\delta \leq_\Val \omega_\Val & \omega_\Val \leq_\Val \omega_\Val \to \omega_\Comp 
\end{array}\]
\[ \begin{array}{c@{\hspace{1cm}}c}
(\delta \to \tau) \Inter (\delta \to \tau') \leq_\Val \delta \to (\tau \Inter \tau') & 
\prooftree
	\delta' \leq_\Val \delta \quad \tau \leq_\Comp \tau'
\justifies
	\delta \to \tau \leq_\Val \delta' \to \tau'
\endprooftree \\ [1mm]
\end{array}\]
\[ 
\begin{array}{c@{\hspace{1cm}}c@{\hspace{1cm}}c}
\tau \leq_\Comp \omega_\Comp & 
T\delta \Inter T\delta' \leq_\Comp T(\delta \Inter \delta') & 
\prooftree
	\delta \leq_\Val \delta'
\justifies
	T\delta \leq_\Comp T\delta'
\endprooftree\\ 
\end{array}
\]
\end{definition}

\begin{remark}\label{rem:type-theories-Th_V-Th_C}
Writing $=_\Val$ and $=_\Comp$ for the antisymmetric closure of $\leq_\Val$ and $\leq_\Comp$ respectively, we see that all the axioms but $\delta \leq_\Val \omega_\Val$ and $\tau \leq_\Comp \omega_\Comp$ are actually equalities. 
\end{remark}

\begin{lemma}\label{lem:tau-neq-omega}
If $\tau \in \ComType$ is such that $\tau \neq_\Comp \omega_\Comp$ then for some $\delta \in \ValType$ we have $\tau =_\Comp T\delta$; hence $\tau \leq_\Comp T\omega_\Val$.
\end{lemma}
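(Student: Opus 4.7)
The plan is to proceed by structural induction on $\tau\in\ComType$ using the three-case grammar $\tau::= T\delta \mid \tau\wedge\tau \mid \omega_\Comp$. Two of the base cases are immediate: if $\tau \equiv T\delta$, simply take this $\delta$; if $\tau \equiv \omega_\Comp$, the hypothesis $\tau\neq_\Comp\omega_\Comp$ is vacuously false, so there is nothing to prove. The entire content of the lemma is in the inductive case $\tau \equiv \tau_1\wedge\tau_2$.

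For the inductive case, I would first observe that $\tau_1$ and $\tau_2$ cannot both be $=_\Comp\omega_\Comp$: otherwise, since $\wedge$ is idempotent and monotone, we would get $\tau_1\wedge\tau_2 =_\Comp \omega_\Comp\wedge\omega_\Comp =_\Comp \omega_\Comp$, contradicting the hypothesis. So I split into two subcases. (i) If exactly one of them, say $\tau_2$, satisfies $\tau_2=_\Comp\omega_\Comp$, then by the induction hypothesis applied to $\tau_1$ we get $\tau_1 =_\Comp T\delta_1$ for some $\delta_1\in\ValType$; using that $\omega_\Comp$ is the top element of $\Th_\Comp$ (axiom $\tau\leq_\Comp\omega_\Comp$) together with the fact that $\wedge$ is the greatest lower bound, $T\delta_1 \wedge \omega_\Comp =_\Comp T\delta_1$, whence $\tau =_\Comp T\delta_1$. (ii) If both $\tau_i\neq_\Comp\omega_\Comp$, the induction hypothesis yields $\tau_i =_\Comp T\delta_i$ for $i=1,2$. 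Then I would combine these via the axiom $T\delta_1\wedge T\delta_2 \leq_\Comp T(\delta_1\wedge\delta_2)$ and the converse inequality, which follows by applying the monotonicity rule for $T$ to $\delta_1\wedge\delta_2 \leq_\Val \delta_i$ (giving $T(\delta_1\wedge\delta_2)\leq_\Comp T\delta_i$ for $i=1,2$, hence $T(\delta_1\wedge\delta_2)\leq_\Comp T\delta_1\wedge T\delta_2$). Thus $\tau =_\Comp T(\delta_1\wedge\delta_2)$.

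For the final clause, once $\tau =_\Comp T\delta$ has been established, I would use $\delta \leq_\Val \omega_\Val$ (top axiom in $\Th_\Val$) together with the monotonicity rule for $T$ to conclude $T\delta \leq_\Comp T\omega_\Val$, hence $\tau \leq_\Comp T\omega_\Val$.

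There is no serious obstacle; the main point to be careful about is the mixed subcase where one conjunct is $=_\Comp \omega_\Comp$ and the other is not, which requires explicitly invoking the $\omega_\Comp$-is-top axiom to absorb the trivial conjunct, rather than treating $\tau_1\wedge\tau_2$ as if both sides mattered. This relies on the observation already made in Remark~\ref{rem:type-theories-Th_V-Th_C} that, apart from the two $\omega$-inequalities, every axiom of $\Th_\Val$ and $\Th_\Comp$ is already an equality.
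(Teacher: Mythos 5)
Your proof is correct and follows essentially the same route as the paper's: induction on the structure of $\tau$, with the only non-trivial case being $\tau\equiv\tau_1\Inter\tau_2$, split according to whether one conjunct is $=_\Comp\omega_\Comp$ (absorbed using that $\omega_\Comp$ is the top and $\Inter$ a greatest lower bound) or both are non-trivial (combined via $T\delta_1\Inter T\delta_2 =_\Comp T(\delta_1\Inter\delta_2)$). You merely spell out the converse inequality $T(\delta_1\Inter\delta_2)\leq_\Comp T\delta_1\Inter T\delta_2$ and the final step $T\delta\leq_\Comp T\omega_\Val$ in more detail than the paper does.
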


\Proof
By induction over $\tau$. The only non trivial case is when $\tau \equiv \tau_1 \Inter \tau_2$. From $\tau_1 \Inter \tau_2  \neq_\Comp \omega_\Comp$ 
it follows that at least one of them is different than $\omega_\Comp$: if say $\tau_1 =_\Comp \omega_\Comp$ then $\tau_1 \Inter \tau_2 = _\Comp \tau_2 \neq_\Comp \omega_\Comp$ os that $\tau_2 =_\Comp T\delta_2$ by induction. Finally if both $\tau_1$ and $\tau_2$ are not equated to $\omega_\Comp$ then by induction
$\tau_1 \Inter \tau_2 = _\Comp T\delta_1 \Inter T\delta_2 =_\Comp T(\delta_1 \Inter \delta_2) \leq_\Comp T\omega_\Val$, for some $\delta_1,\delta_2 \in \ValType$.
\QED

\begin{definition}[Type assignment]\label{def:typeAssignment}
A {\em basis} is a finite set of typings $\Gamma = \Set{x_1:\delta_1, \ldots x_n: \delta_n}$ with pairwise distinct variables $x_i$, whose {\em domain} is the set 
$\Dom(\Gamma) = \Set{x_1, \ldots, x_n}$. A basis determines a function from variables to types such that $\Gamma(x) = \delta$ if $x:\delta \in \Gamma$,
$\Gamma(x) = \omega_\Val$ otherwise.

A {\em judgment} is
an expression of either shapes: $\Gamma \der V:\delta$ or $\Gamma \der M:\tau$. It is {\em derivable} if it is the conclusion of a derivation 
according to the rules:
\[ \arraycolsep=5pt\def\arraystretch{2.8}
\begin{array}{crcr}
\prooftree
x:\delta \in \Gamma
\justifies
\Gamma \der x:\delta
\using  \mbox{(Ax)}
\endprooftree 
& &
\prooftree
\Gamma, x:\delta \der M:\tau
\justifies
\Gamma \der \lambda x.M: \delta \to \tau
\using  \IntrArr
\endprooftree
& \\
\prooftree
\Gamma \der V:\delta
\justifies
\Gamma \der \Unit V: T\delta 
\using \IntrUnit 
\endprooftree
&  &
\prooftree
\Gamma \der M: T\delta \quad \Gamma \der V:\delta\to \tau
\justifies
\Gamma \der M \Bind V: \tau 
\using \ElimArr
\endprooftree & 
\end{array}
\]
where $\Gamma, x:\delta = \Gamma \cup \Set{x:\delta}$ with $x:\delta\not\in\Gamma$, and the rules:
\[\begin{array}{c@{\hspace{0.4cm}}c@{\hspace{0.4cm}}c}
\prooftree
\vspace{0.3cm}
\justifies
\Gamma \der P:\
\endprooftree \quad (\omega)
&
\prooftree
\Gamma \der P:\sigma \quad \Gamma \der P:\sigma'
\justifies
\Gamma \der P:\sigma \Inter \sigma'
\endprooftree \quad \IntrInter
&
\prooftree
\Gamma \der P:\sigma \quad \sigma \leq \sigma'
\justifies
\Gamma \der P:\sigma' 
\endprooftree \quad (\leq)
\end{array}
\]
where either $P\in \ValTerm$, $\omega \equiv \omega_\Val$, $\sigma,\sigma' \in \ValType$ and $\leq \,=\, \leq_\Val$ or
$P\in \ComTerm$, $\omega \equiv \omega_\Comp$, $\sigma,\sigma' \in \ComType$ and $\leq \,=\, \leq_\Comp$.
\end{definition}

In the following we write $\Gamma \der V: \delta$ and $\Gamma \der M: \tau$ to mean that 
these judgments are derivable. The next Lemma is an extension of the analogous property of
BCD type system, also called Inversion Lemma in \cite{BarendregtDS2013} 14.1.

\begin{lemma}[Generation lemma] \label{lem:genLemma}
Assume that $\delta \neq \omega_\Val$ and $\tau \neq \omega_\Comp$, then:
\begin{enumerate}
\item $\Gamma \der x: \delta \Then \Gamma(x) \leq_\Val \delta$
\item $\Gamma \der \lambda x.M: \delta \Then
		\exists I, \delta_i, \tau_i.~  \forall i \in I.\; \Gamma ,x:\delta_i \der M:\tau_i \And  \bigwedge_{i\in I}\delta_i \to \tau_i \leq_\Val \delta$
\item $\Gamma \der \Unit V : \tau \Then \exists \delta.\; \Gamma \der V:\delta \And T\delta \leq_\Comp \tau$
\item $\Gamma \der M\Bind V : \tau \Then$ \\
       		$\exists I, \delta_i, \tau_i.~  \forall i \in I.\; \Gamma \der M:T\delta_i \And \Gamma\der V:\delta_i\to\tau_i \And
			\bigwedge_{i\in I}\tau_i \leq_\Comp \tau$
\end{enumerate}
\end{lemma}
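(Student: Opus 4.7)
The plan is to prove each of the four items by structural induction on the derivation of the premise, reasoning by cases on the last typing rule. Since each item fixes the top-level syntactic shape of the term ($x$, $\lambda x.M$, $\Unit V$ or $M\Bind V$), the only syntax-directed rule whose conclusion can match is the corresponding one $\Axiom$, $\IntrArr$, $\IntrUnit$ or $\ElimArr$; in each such case I will read the witnesses in the conclusion directly off the premises of the rule, taking $I$ to be a singleton and using reflexivity of $\leq_\Val$ or $\leq_\Comp$.

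The three non-syntax-directed rules are $(\omega)$, $\IntrInter$ and $(\leq)$. The hypothesis $\delta \neq \omega_\Val$ (resp.\ $\tau \neq \omega_\Comp$) rules out $(\omega)$ as the final step. For $(\leq)$, say concluding $\Gamma \der P : \sigma$ from $\Gamma \der P : \sigma'$ with $\sigma' \leq \sigma$, the inductive hypothesis applied to the premise will produce a witness family whose indexed intersection is bounded above by $\sigma'$, and transitivity of $\leq$ then upgrades the bound to $\sigma$. For $\IntrInter$, deriving $\Gamma \der P : \sigma_1 \Inter \sigma_2$ from $\Gamma \der P : \sigma_i$ for $i=1,2$, I will apply the inductive hypothesis to each summand (dropping any summand equated to $\omega$, which contributes nothing); in items 2 and 4 the disjoint union of the indexing sets yields the required family, and the bound follows by monotonicity of $\Inter$, while in item 3, after pasting with $\IntrInter$ to obtain $\Gamma \der V : \delta_1 \Inter \delta_2$, the axiom $T\delta \Inter T\delta' \leq_\Comp T(\delta \Inter \delta')$ together with monotonicity of $T$ under $\leq_\Val$ will give $T(\delta_1 \Inter \delta_2) \leq_\Comp \tau_1 \Inter \tau_2$.

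The hard part will be the interaction of the subsumption rule with the axioms $\omega_\Val \leq_\Val \omega_\Val \to \omega_\Comp$ and the $T$-distributivity axiom, which can in principle ``promote'' a trivial type to a structured one without a corresponding promotion of the underlying derivation. To control this, I will appeal to Lemma \ref{lem:tau-neq-omega}, which pins down the shape of any non-trivial computation type as $=_\Comp$-equivalent to some $T\delta$, and, symmetrically, to the explicit list of axioms in Definition \ref{def:type-theories-Th_V-Th_C} to constrain the shape of any value type strictly above $\omega_\Val$. Whenever subsumption arrives at the target type from a trivial one, the required witness family will be produced with $\delta_i = \omega_\Val$ and $\tau_i = \omega_\Comp$, both derivable by rule $(\omega)$, and the required bound $\omega_\Val \to \omega_\Comp \leq_\Val \delta$ (resp.\ $T\omega_\Val \leq_\Comp \tau$) will be forced by those axioms, so the case still closes.
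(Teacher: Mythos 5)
Your proposal is correct and is essentially the paper's own proof: the paper proceeds by exactly this induction on derivations, with rule $(\omega)$ excluded by the non-triviality hypothesis, the syntax-directed rules yielding singleton witness families, rule $(\leq)$ closed by transitivity of the preorder, and rule $\IntrInter$ closed by taking the union of the index families (for item 3, combining the two value types with $\IntrInter$ and using monotonicity of $T$ to get $T(\delta_1\Inter\delta_2)\leq_\Comp \tau_1\Inter\tau_2$). The only inaccuracy is in your final fallback for item 4, where the witness $\Gamma\der M:T\omega_\Val$ is \emph{not} derivable by rule $(\omega)$; but that situation arises only when the premise of subsumption is $=_\Comp \omega_\Comp$, which forces $\tau =_\Comp \omega_\Comp$ and is thus ruled out by the hypothesis (or handled by an empty index family), so the argument as a whole still closes.
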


\begin{lemma}[Substitution lemma] \label{lem:SubstLemma} 

If $\Gamma, x:\delta \der M:\tau$ and $\Gamma \der V:\delta$ then $\Gamma \der M\Subst{V}{x}: \tau$.
\end{lemma}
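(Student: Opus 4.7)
The plan is to prove the lemma by induction on the derivation of the hypothesis $\Gamma, x:\delta \der M:\tau$. Since the grammar of $\TFlambdaComp$ interleaves values and computations (a value $\lambda y.M$ contains a computation, and a computation $\Unit V$ or $N \Bind V$ contains values), I would first strengthen the statement so that both sorts are handled simultaneously: for any $P \in \Term$ and appropriate type $\sigma$, if $\Gamma, x:\delta \der P:\sigma$ and $\Gamma \der V:\delta$, then $\Gamma \der P\Subst{V}{x}:\sigma$. This matches the mutual induction on the grammar.

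Before the main induction, I would prove an auxiliary weakening property: if $\Gamma \der P:\sigma$ and $y \notin \Dom(\Gamma)$, then $\Gamma, y:\delta' \der P:\sigma$ for any $\delta' \in \ValType$. This is a direct induction on derivations, the only nontrivial case being (Ax), which is immediate since $y \notin \Dom(\Gamma)$ preserves the variable lookups used elsewhere.

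For the main induction, the axiom case splits into two subcases: if $P \equiv x$ then the only way to form the judgment by (Ax) forces $\sigma = \delta$, and since $P\Subst{V}{x} \equiv V$, the conclusion is precisely the hypothesis $\Gamma \der V:\delta$; if $P \equiv y \neq x$, then $P\Subst{V}{x} \equiv y$ and $\Gamma \der y:\sigma$ still follows by (Ax), as the assumption $x:\delta$ was not used. The $\IntrArr$ case is the delicate one: for $P \equiv \lambda y.M$ with conclusion $\delta' \to \tau'$, I use $\alpha$-conversion to arrange $y \neq x$ and $y \notin \FV(V)$; applying the induction hypothesis to the premise $\Gamma, x:\delta, y:\delta' \der M:\tau'$ together with the weakened $\Gamma, y:\delta' \der V:\delta$ yields $\Gamma, y:\delta' \der M\Subst{V}{x}:\tau'$, from which $\IntrArr$ rebuilds $\Gamma \der \lambda y.M\Subst{V}{x}:\delta'\to\tau'$, and this is $\Gamma \der (\lambda y.M)\Subst{V}{x}:\delta'\to\tau'$ by the freshness of $y$. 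The cases $\IntrUnit$ and $\ElimArr$ simply apply the induction hypothesis to each premise and reapply the rule; the structural rules $(\omega)$, $\IntrInter$, and $(\leq)$ are immediate from the induction hypothesis on their premises, since none of them inspect the shape of $P$.

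The main obstacle I anticipate is the bookkeeping for the $\IntrArr$ case, specifically the coordination of $\alpha$-conversion, the freshness assumption $y \notin \FV(V)$, and the use of weakening to lift $\Gamma \der V:\delta$ to $\Gamma, y:\delta' \der V:\delta$. Nothing here is mathematically deep, but it has to be discharged cleanly so that the mutual induction on values and computations closes uniformly over all rules of Definition \ref{def:typeAssignment}.
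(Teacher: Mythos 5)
Your proof is correct, but it takes a different route from the paper's. The paper also strengthens the statement to a simultaneous claim about values and computations, but then argues by structural induction on the terms $W$ and $M$, using the Generation Lemma (Lemma \ref{lem:genLemma}) to invert the typing judgment at each constructor: e.g.\ for $W \equiv \lambda y.M'$ it extracts a family $\Gamma, x:\delta, y:\delta'_i \der M':\tau_i$ with $\bigwedge_i \delta'_i \to \tau_i \leq_\Val \delta'$, applies the induction hypothesis to each, and reassembles with $\IntrArr$, $\IntrInter$ and $(\leq)$. You instead do induction on the derivation, which makes the non-syntax-directed rules $(\omega)$, $\IntrInter$ and $(\leq)$ into trivial cases (the term is unchanged, so one just reapplies the rule after the induction hypothesis) and removes any dependence on Generation; the price is that you must treat each typing rule explicitly rather than each term constructor. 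Your version also has the merit of making explicit two points the paper leaves tacit: the weakening lemma needed to lift $\Gamma \der V:\delta$ to $\Gamma, y:\delta' \der V:\delta$ before the induction hypothesis can be invoked under a binder (the paper's $\lambda$-case silently uses the same fact), and the $\alpha$-conversion bookkeeping ensuring $(\lambda y.M)\Subst{V}{x} \equiv \lambda y.(M\Subst{V}{x})$. One cosmetic remark: you describe $(\omega)$ as following "from the induction hypothesis on its premises," but $(\omega)$ has no premises; the case is immediate because $(\omega)$ directly yields $\Gamma \der P\Subst{V}{x}:\omega$.
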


\begin{theorem}[Subject reduction]\label{thm:subjectReduction}

If $\Gamma \der M:\tau$ and $M \Red N$ then $\Gamma \der N:\tau$.
\end{theorem}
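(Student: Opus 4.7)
The plan is to proceed by induction on the derivation of the one-step reduction $M \Red N$, using the context grammar from Section~\ref{subsec:reduction}. The compatibility cases are routine: if the reduction occurs strictly inside a computation context $\CompCtx$, I strip off the outermost constructor of $\CompCtx$ by inverting the appropriate typing rule (through Lemma~\ref{lem:genLemma}), apply the induction hypothesis to the subterm being reduced, and reapply the same introduction or elimination rule — possibly interleaved with $\IntrInter$ and $(\leq)$ when the inversion returns a family of component judgments $\Gamma \der P : \sigma_i$ whose intersection bounds the original type. The same works for reduction under a value context (inside $\lambda x.(\cdot)$), using $\IntrArr$.

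The real content is in the three root-level base cases, $\Betac$, $\BindRight$, $\BindLeft$. For each I first dispatch the degenerate situation $\tau =_\Comp \omega_\Comp$ by rule $(\omega)$ and $(\leq)$. Assume then $\tau \not=_\Comp \omega_\Comp$. For $\Betac$, i.e.\ $\Unit V \Bind \lambda x.M \Red M\Subst{V}{x}$, Lemma~\ref{lem:genLemma}(4) gives a family $\{\delta_i,\tau_i\}_{i\in I}$ with $\Gamma \der \Unit V : T\delta_i$, $\Gamma \der \lambda x.M : \delta_i \to \tau_i$, and $\bigwedge_i \tau_i \leq_\Comp \tau$; then Lemma~\ref{lem:genLemma}(3) yields $\Gamma \der V : \delta_i$ (after absorbing the $T\delta'_i \leq_\Comp T\delta_i$ bound via the monotonicity axiom read in the derivation direction that the Generation Lemma provides), while Lemma~\ref{lem:genLemma}(2) unpacks $\lambda x.M$ into a collection of typings $\Gamma, x:\delta_{ij} \der M : \tau_{ij}$ with $\bigwedge_j \delta_{ij} \to \tau_{ij} \leq_\Val \delta_i \to \tau_i$. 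An application of Lemma~\ref{lem:SubstLemma} at each component, followed by $\IntrInter$ and $(\leq)$, produces $\Gamma \der M\Subst{V}{x} : \tau$.

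For $\BindRight$, the subject of reduction is $M \Bind \lambda x.\Unit x \Red M$: Lemma~\ref{lem:genLemma}(4) gives $\Gamma \der M : T\delta_i$ and $\Gamma \der \lambda x.\Unit x : \delta_i \to \tau_i$ with $\bigwedge_i \tau_i \leq_\Comp \tau$; iterating Lemma~\ref{lem:genLemma} on the $\eta$-like term $\lambda x.\Unit x$ forces $\tau_i \geq_\Comp T\delta_i$ (since the body $\Unit x$ can only be typed, via (3) and (1), by a computation type bounded below by $T\delta_i$). Hence $\Gamma \der M : T\delta_i \leq_\Comp \tau_i$ for each $i$, and $\IntrInter$ together with $(\leq)$ delivers $\Gamma \der M : \tau$. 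For $\BindLeft$, namely $(L \Bind \lambda x.M)\Bind\lambda y.N \Red L\Bind\lambda x.(M\Bind\lambda y.N)$, two nested applications of Lemma~\ref{lem:genLemma}(4) extract a doubly-indexed family of typings $\Gamma \der L : T\delta_{ij}$, $\Gamma \der \lambda x.M : \delta_{ij} \to T\delta'_i$, $\Gamma \der \lambda y.N : \delta'_i \to \tau_i$, with $\bigwedge_i \tau_i \leq_\Comp \tau$ and analogous inequalities on the inner bind; the side condition $x \notin \FV(N)$ ensures that $\lambda y.N$ may be typed in the extended basis $\Gamma, x:\delta_{ij}$. Reassembling, I first type $M\Bind\lambda y.N$ by $\ElimArr$ inside the context $\Gamma,x:\delta_{ij}$, then abstract via $\IntrArr$, then apply $\ElimArr$ one final time to $L$ and $\lambda x.(M\Bind\lambda y.N)$, and close under $\IntrInter$ and $(\leq)$.

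The main obstacle is the $\BindLeft$ case: the critical step is ensuring that the two Generation Lemma passes produce a consistent family of intermediate computation types $T\delta'_i$ that simultaneously bounds the inner bind's range and the outer bind's domain. Concretely, one must know that whenever $\lambda x.M$ is given type $\delta_{ij} \to \tau_{ij}$ by Generation, the $\tau_{ij}$ can be taken to be of the form $T\delta'_{ij}$ (or bounded by one) so that the reassembly on the right-hand side of $\BindLeft$ typechecks; this is where Lemma~\ref{lem:tau-neq-omega} does the essential work, letting me replace any non-$\omega_\Comp$ computation type by some $T\delta$ up to $\leq_\Comp$. The rest is bookkeeping with $\IntrInter$, $(\leq)$, and the monotonicity and distributivity axioms of $\Th_\Val$ and $\Th_\Comp$.
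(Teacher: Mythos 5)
Your overall strategy coincides with the paper's: induction on the one-step reduction, with the compatibility cases handled by inversion-and-reassembly, the $\Betac$ case discharged by Lemma~\ref{lem:genLemma} followed by Lemma~\ref{lem:SubstLemma}, and the $\BindRight$ and $\BindLeft$ cases by repeated applications of the Generation Lemma followed by a re-derivation of the contractum. The $\Betac$ and $\BindRight$ cases are essentially the paper's argument, and your identification of $\BindLeft$ as the delicate case is also right.

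There is, however, one concrete missing ingredient. In both the $\BindRight$ and $\BindLeft$ cases you must pass from an inequality of the form $\bigwedge_{j}(\delta_{j}\to\tau_{j})\leq_\Val \delta\to\tau$ (which is what Lemma~\ref{lem:genLemma}(2) and (4) hand you) to the existence of a subfamily $J$ with $\bigwedge_{j\in J}\delta_{j}\geq_\Val\delta$ and $\bigwedge_{j\in J}\tau_{j}\leq_\Comp\tau$; this is exactly what you need to ``force $\tau_i\geq_\Comp T\delta_i$'' for $\lambda x.\Unit x$, and to make the two nested Generation passes produce matching intermediate types in the $\BindLeft$ reassembly. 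This inversion of the preorder is not obtainable by ``bookkeeping with the monotonicity and distributivity axioms'': those axioms only let you derive inequalities between arrow types, not decompose a given one, and the property requires its own induction on the derivation of $\leq_\Val$ (it can fail for type theories with extra axioms). The paper imports it explicitly as Lemma~2.4(ii) of \cite{BCD'83} in the appendix proof of this theorem. Lemma~\ref{lem:tau-neq-omega}, to which you attribute ``the essential work,'' does something different (it normalizes non-trivial computation types to the form $T\delta$) and cannot substitute for the arrow-inequality inversion. A smaller omission of the same kind: your reassembly silently uses basis strengthening ($\Gamma,x:\delta'\der P:\sigma$ and $\delta\leq_\Val\delta'$ imply $\Gamma,x:\delta\der P:\sigma$) and weakening for the $x\notin\FV(N)$ step; the paper acknowledges hiding these, but they should at least be stated. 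With the inversion lemma added, your proof goes through.
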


\Proof
We only consider the case $M\equiv \Unit V \Bind (\lambda x. M')$ and $N\equiv M'\Subst{V}{x}$.
	Since $\Gamma \der M:\tau $, by \ref{lem:genLemma} we have:
	$
	\exists I, \delta_i, \tau_i. \ \forall i\in I \ \Gamma \der \Unit V: T\delta_i \mbox{ and } 
	\Gamma \der \lambda x.M':\delta_i \to \tau_i \mbox{ and } \bigwedge_{i\in I} \tau_i \leq \tau 
	$.
	By the same lemma we also have
	$
	\exists I, \delta_i, \tau_i. \ \forall i\in I\   \exists \delta'_i\  \Gamma \der V: \delta'_i \mbox{ and } 
	T\delta'_i\leq T\delta_i \mbox{ and }  \Gamma , x:\delta_i  \der M':\tau_i \mbox{ and } \bigwedge_{i\in I} \tau_i \leq \tau $
	By $\IntrInter$ and $(\leq)$ rules, we get $\Gamma , x:\delta \der M':\tau$, and hence by 
	Lemma \ref{lem:SubstLemma} we have that $\Gamma \der M'\Subst{V}{x} : \tau$.
\QED

\section{Type interpretation and soundness}\label{sec:TypeInterp}

To interpret  value and computation types we extend the usual interpretation of intersection
types over $\lambda$-models to $T$-models. Let $(D, T, \Phi, \Psi)$ 
be such a model; then for $d, d' \in D$ we abbreviate: $d \cdot d' = \Phi(d)(d')$. Also, if $X\subseteq D$ and $Y\subseteq TD$ then $X\Rightarrow Y=\Set{d\in D\mid \forall d'\in X.\; d \cdot d'\in Y}$.

\begin{definition}\label{def:interpretationtypes}
	Let  $\xi \in \TypeEnv_D = \TypeVar \to  2^D$ a type variable interpretation; 
	the maps $\Sem{\cdot}^D : \ValType \times \TypeEnv_D \to 2^D $ and 
	$\Sem{\cdot}^{TD} : \ComType \times \TypeEnv_D \to 2^{TD}$ 
	are {\em type interpretations} if:
	\[\begin{array}{rcl@{\hspace{1cm}}rcl}
	\Sem{\alpha}^D_\xi & = & \xi(\alpha) & 
	\Sem{\delta \to \tau}^D_\xi & = &  \Sem{\delta}^D_\xi \Rightarrow \Sem{\tau}^{TD}_\xi  \\ [2mm]
	\Sem{\omega_\Val}^D_\xi & = & D  & 
	\Sem{\delta \wedge \delta'}^D_\xi & = &\Sem{\delta}^D_\xi \cap \Sem{\delta'}^D_\xi \\ [2mm]
	\Sem{\omega_\Comp}^{TD}_\xi & = & TD &
	\Sem{\tau \wedge \tau'}^{TD}_\xi & = & \Sem{\tau}^{TD}_\xi \cap \Sem{\tau'}^{TD}_\xi 
	\end{array}
	\]
	Moreover the following implication holds:
	\[d \in \Sem{\delta}^D_\xi \Then \Unit d \in \Sem{T\delta}^{TD}_\xi\]
	We say that $\Sem{\cdot}^{TD}$ is {\em strict} if 
	\[\Sem{T\delta}^{TD}_\xi = \Set{\Unit d \mid d \in \Sem{\delta}_\xi}\]
\end{definition}

In the following if $D$ is a $T$-model then we assume that $\Sem{\cdot}^D$ and 
$\Sem{\cdot}^{TD}$ are type interpretations satisfying all conditions in Definition \ref{def:interpretationtypes} but are not necessarily strict.
Also we assume $\xi \in \TypeEnv_D$ is admissible and $\rho \in \Env_D$. 

\begin{lemma}\label{lem:inclusion}
	Let $D$,
	$\Sem{\cdot}^{TD}$, and $\xi \in \TypeEnv_D$ as above.\\
	The couple $(D,\xi)$ \emph{preserves} $\leq_\Val$ and $\leq_\Comp$, that is: for all $\delta, \delta' \in \ValType$ and for all $\tau, \tau'\in \ComType$, one has:
	\[ \delta \leq_\Val \delta' ~ \Then ~ \Sem{\delta}^D_\xi \subseteq \Sem{\delta'}^D_\xi 
	\quad \mbox{and} \quad
	\tau \leq_\Comp \tau' ~ \Then ~ \Sem{\tau}^{TD}_\xi \subseteq \Sem{\tau'}^{TD}_\xi 
	\]
\end{lemma}

Clearly any strict type interpretation is a type interpretation. Let us define (strict) truth and (strict) validity of typing judgments. 

\begin{definition}[Truth and validity]\label{def:judgmentModel}
	We say that $\Gamma \der V:\delta$ and $\Gamma \der M:\tau$ are {\em true} in a $T$-model $D$ w.r.t. the type
	interpretations $\Sem{\cdot}^D$ and $\Sem{\cdot}^{TD}$ if
	\begin{enumerate}
		\item $\rho,\xi \models^D \Gamma$ if $\rho(x) \in \Sem{\Gamma(x)}^D_\xi$ for all $x \in \Dom(\Gamma)$
		\item $\Gamma \models^D V:\delta$ if $\rho,\xi \models^D \Gamma$ implies
			$\Sem{V}^D_\rho \in \Sem{\delta}^D_\xi$ 
		\item $\Gamma \models^D M:\tau$ if $\rho,\xi \models^D \Gamma$ implies $\Sem{M}^{TD}_\rho \in \Sem{\tau}^{TD}_\xi$
	\end{enumerate}
	We say that $\Gamma \der V:\delta$ and $\Gamma \der M:\tau$ are {\em valid}, written $\Gamma \models V:\delta$ and $\Gamma \models M:\tau$ 
	respectively, if $\Gamma \models^D V:\delta$ and $\Gamma \models^D M:\tau$ for any $T$-model $D$ and type interpretations 
	$\Sem{\cdot}^D$ and $\Sem{\cdot}^{TD}$.
	
	Finally we say that $\Gamma \der V:\delta$ and $\Gamma \der M:\tau$ are {\em strictly true} in $D$, written $\Gamma \models^D_s V:\delta$
	and $\Gamma \models^D_s M:\tau$ respectively, if the interpretations $\Sem{\cdot}^D$ and $\Sem{\cdot}^{TD}$ are strict;
	also they are {\em strictly valid}, written $\Gamma \models_s V:\delta$ and $\Gamma \models_s M:\tau$, if strictly true for any $D$.
\end{definition}

Unfortunately the type system in Definition \ref{def:typeAssignment} is not sound w.r.t. arbitrary type interpretations. The difficulty comes from rule
$\ElimArr$, because in general $\Sem{T\delta}^{TD}_\xi \supseteq \Set{\Unit d \mid d \in \Sem{\delta}_\xi}$ and inclusion might be proper. 
In case of strict
type interpretations, however, this is an equality and the soundness of type assignment is easily established.

\begin{theorem}[Soundness w.r.t. strict interpretations]\label{thr:strict-soundness}
	\[ \Gamma \der V :\delta ~ \Then ~ \Gamma \models_s V:\delta \quad \mbox{and} \quad
	\Gamma \der M :\tau ~ \Then ~ \Gamma \models_s M:\tau
	\]
\end{theorem}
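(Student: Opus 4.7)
The proof proceeds by mutual induction on the derivations of $\Gamma \der V:\delta$ and $\Gamma \der M:\tau$, doing case analysis on the last rule applied. I would fix once and for all a $T$-model $D$ equipped with \emph{strict} type interpretations $\Sem{\cdot}^D$ and $\Sem{\cdot}^{TD}$, and assume $\rho,\xi \models^D \Gamma$; it then suffices to establish $\Sem{V}^D_\rho \in \Sem{\delta}^D_\xi$ and $\Sem{M}^{TD}_\rho \in \Sem{\tau}^{TD}_\xi$ for the judgment at the root of the given derivation.

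The axiom and structural cases are routine. Rule (Ax) is immediate from the definition of $\rho,\xi \models^D \Gamma$. Rule $(\omega)$ uses that top types are interpreted as the whole carrier, $\Sem{\omega_\Val}^D_\xi = D$ and $\Sem{\omega_\Comp}^{TD}_\xi = TD$. Rule $\IntrInter$ follows because intersections of types denote set-theoretic intersections, and rule $(\leq)$ is exactly Lemma \ref{lem:inclusion}. For rule $\IntrArr$, given $d \in \Sem{\delta}^D_\xi$ the environment $\rho[x\mapsto d]$ together with $\xi$ satisfies $\Gamma, x:\delta$, so by the induction hypothesis $\Sem{M}^{TD}_{\rho[x\mapsto d]} \in \Sem{\tau}^{TD}_\xi$; unfolding the semantic clause for $\lambda$-abstraction (and using $\Phi \circ \Psi = \Id_{D\to TD}$) this equals $\Sem{\lambda x.M}^D_\rho \cdot d$, so $\Sem{\lambda x.M}^D_\rho \in \Sem{\delta}^D_\xi \Rightarrow \Sem{\tau}^{TD}_\xi = \Sem{\delta \to \tau}^D_\xi$. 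Rule $\IntrUnit$ is exactly the closure condition $d \in \Sem{\delta}^D_\xi \Then \Unit d \in \Sem{T\delta}^{TD}_\xi$ required of every type interpretation.

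The crucial case is rule $\ElimArr$. By the induction hypothesis, $\Sem{M}^{TD}_\rho \in \Sem{T\delta}^{TD}_\xi$ and $\Sem{V}^D_\rho \in \Sem{\delta \to \tau}^D_\xi = \Sem{\delta}^D_\xi \Rightarrow \Sem{\tau}^{TD}_\xi$. Strictness gives $\Sem{T\delta}^{TD}_\xi = \Set{\Unit d \mid d \in \Sem{\delta}^D_\xi}$, so $\Sem{M}^{TD}_\rho = \Unit d$ for some $d \in \Sem{\delta}^D_\xi$. Applying the left-unit law of the monad,
\[
\Sem{M \Bind V}^{TD}_\rho \,=\, \Sem{M}^{TD}_\rho \Bind \Sem{V}^D_\rho \,=\, (\Unit d) \Bind \Sem{V}^D_\rho \,=\, \Sem{V}^D_\rho \cdot d,
\]
and the right-hand side lies in $\Sem{\tau}^{TD}_\xi$ by the very definition of $\Sem{\delta}^D_\xi \Rightarrow \Sem{\tau}^{TD}_\xi$.

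The only delicate point, and the one where strictness is indispensable, is precisely the $\ElimArr$ case above: without the equality $\Sem{T\delta}^{TD}_\xi = \Set{\Unit d \mid d \in \Sem{\delta}^D_\xi}$, an inhabitant of $\Sem{T\delta}^{TD}_\xi$ need not be of the form $\Unit d$, and then there is no way to collapse $\Sem{M \Bind V}^{TD}_\rho$ to an application of $\Sem{V}^D_\rho$ via the monad laws. Everything else is a standard adaptation of the BCD-style soundness argument to the two-sorted setting of $\TFlambdaComp$, with the two statements proved simultaneously to accommodate the mutual recursion between value and computation judgments.
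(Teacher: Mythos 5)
Your proposal is correct and follows essentially the same route as the paper's proof: simultaneous induction on the two derivations, with the structural rules handled by the definition of type interpretation and Lemma \ref{lem:inclusion}, the $\IntrUnit$ and $\IntrArr$ cases by the closure condition and the semantic clause for abstraction, and the $\ElimArr$ case resolved by strictness forcing $\Sem{M}^{TD}_\rho = \Unit d$ so that the left-unit law collapses $\Sem{M\Bind V}^{TD}_\rho$ to $\Sem{V}^D_\rho \cdot d$. Your added remark pinpointing $\ElimArr$ as the only place where strictness is indispensable matches the paper's own discussion preceding the theorem.
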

\Proof
	By simultaneous induction on the derivations of $\Gamma\der V:\delta$ and $\Gamma\der M:\tau$.
	Rules $\IntrInter$ and $(\omega)$ are sound by the definition of type  interpretation.
	Rule $(\leq)$ is sound by Lemma \ref{lem:inclusion}, as we proved that $(D,\xi)$ 
	preserves $\leq_\Val$ and $\leq_\Comp$; rule $\IntrUnit$ is immediate by induction.
	
	\begin{description}
	\item Rule $\IntrArr$: if $\rho,\xi \models^D_s \Gamma$ and $d\in \Sem{\delta}^D_\xi$ then
		$\rho[x\mapsto d], \xi \models^D_s \Gamma , x:\delta$; by induction $\Gamma , x:\delta \models^D_s M:\tau$ which implies
		\[\Sem{\lambda x.M}^D_\rho \cdot d = \Sem{M}^{TD}_{\rho[x\mapsto d]}\in \Sem{\tau}^{TD}_\xi\]
		Hence $\Sem{\lambda x.M}^D_\rho \in \Sem{\delta\to \tau}^D_\xi$ by the arbitrary choice of $d$.
		
	\item Rule $\ElimArr$: let $\rho,\xi \models^D_s \Gamma$ and assume by induction that $\Sem{M}^{TD}_\rho \in \Sem{T\delta}^{TD}_\xi$; 
		 because of strictness we have that  $\Sem{M}^{TD}_\rho = \Unit d$ for some
		 $d \in \Sem{\delta}^D_\xi$, therefore
		 \[ \Sem{M \Bind V}^{TD}_\rho = \Sem{M}^{TD}_\rho \Bind \Sem{V}^{D}_\rho = \Unit d \Bind \Sem{V}^{D}_\rho = \Sem{V}^{D}_\rho \cdot d
		 \in \Sem{\tau}^{TD}_\xi\]
		 since $\Sem{V}^{D}_\rho \in \Sem{\delta\to \tau}^{D}_\xi$ by induction.
	\end{description}
	\QED
	
\newcommand{\StateM}{\textsf{S}}
\newcommand{\Loc}{{\cal L}}
\newcommand{\MLet}[2]{\textsf{let} \, #1 \, \textsf{in} \, #2}

The hypothesis of strictness of type interpretation is quite restrictive (although it suffices for proving computational adequacy: see Section \ref{sec:adequacy}).
Consider for example the state monad $\StateM D = S \to D \times S$, where $S$ is some domain of states (e.g. $S \subseteq \Loc \to D$ where $\Loc$ is some
set of locations, and it is ordered by $s \sqsubseteq_S s'$ if $s(\ell) \sqsubseteq_D s'(\ell)$ for all $\ell \in \Loc$), and according to \cite{Moggi'91} define:
\[ \UnitSub{S} \,d = \metalambda s \in S. (d, s) \qquad  a \Bind_S f = \metalambda s \in S. \; \MLet{(d, s') = a\,s}{f\,d\,s'}\]
Given some $d \in D$ and $\ell \in \Loc$, let $a = \UnitSub{S} \,d \in \StateM D$ and 
\[ f = \metalambda d \, s.\, (d, s[\ell:=d]) \in D \to \StateM D,\] 
where $s[\ell:=d]$ is the update of $s$ in $\ell$ to $d$. Now $a\, s = (d, s)$ and therefore
\[ (a \Bind_S f) s = f \,d \,s = (d, s[\ell:=d]) \]
namely $a \Bind_S f = \metalambda s. (d, s[\ell:=d])$ which is a defined computation that is not $\bot_{\StateM D}$ but different than $\UnitSub{S} \,d$, therefore it doesn't belong to the strict type interpretation of any non trivial type $\StateM\, \delta$.

\begin{definition}[Monadic type interpretation]\label{def:monadicTypeInterp}
Let $D$ be a $T$-model; then type interpretations $\Sem{\cdot}^D$ and $\Sem{\cdot}^{TD}$ are monadic if for any $d \in D$ and $a \in TD$
\[ \exists \delta' .\,d\in \Sem{\delta' \to T\delta}^D_\xi \And a\in \Sem{T\delta'}^{TD}_\xi \Then a\Bind d \in \Sem{T\delta}^{TD}_\xi.\]

\end{definition}

By the very definition monadic type interpretations are not inductive; 
in particular $\Sem{T\delta}^{TD}_\xi$ depends on itself and also on $\Sem{T\delta'}^{TD}_\xi$ 
for any $\delta'$. To turn this definition into an inductive one, we make essential use of the
correspondence among intersection types and the category of $\omega$-algebraic lattices.

Henceforth $\Dcat$ is the category of $\omega$-algebraic lattices, which is a particular category of domains (see e.g. \cite{AbramskyJung94} and \cite{Amadio-Curien'98}, ch. 5). Objects of $\Dcat$
are lattices whose elements are directed sup of the {\em compact points} 
$c \in \Compact(D) \subseteq D$ they dominate, where $c$ is {\em compact} if whenever $X$ is directed and $c \Order \bigsqcup X$ there is some  $x\in X$ s.t. $c \sqsubseteq  x$; moreover $\Compact(D)$ is countable.

Suppose that the monad $T$ is an $\omega$-continuous functor over $\Dcat$, so that the functor $\textbf{F}(X) = X \to TX$ is such.
Let $D_\infty = \invLim D_n$ where $D_0$ is some fixed domain, and 
$D_{n+1} = \textbf{F}(D_n) = [D_n \to TD_n]$ is such that for all $n$, 
$D_n \triangleleft D_{n+1}$ is an embedding. As a consequence we have
$D_\infty \simeq \textbf{F}(D_\infty) = [D_\infty \to TD_\infty]$ and we call $\Phi$ the isomorphism. Also, by continuity of $T$,
we have that $TD_\infty = \invLim TD_n$. If $x \in D_\infty$, we write $x_n$ for its projection to $D_n$; 
if $x \in D_n$ then we identify $x$ with its injection into $D_\infty$. 
We fix the notation for the standard notions from  inverse limit construction.
For all $n\in \mathbb{N}$ the following are injection-projection pairs:
\begin{itemize}
		\item $\varepsilon_n: D_n \to D_{n+1}$, $\pi_n:D_{n+1} \to D_n$
		\item Let $m>n$. $\varepsilon_{n,m}: D_n \to D_{m}$, $\pi_{m,n}:D_{m} \to D_n$
		\item $\varepsilon_{n,\infty}: D_n \to D_{\infty}$, $\pi_{\infty,n}:D_{\infty} \to D_n$
\end{itemize}
For definitions see any standard text on domain theory, e.g. \cite{Amadio-Curien'98}. The following lemma lists some well known facts.
We set the following abbreviation $x_n= \pi_{\infty,n}(x)$.
\begin{lemma}\label{lem:D_inftyPropeties}
Let $x, y \in D_\infty$:

\begin{enumerate}
\item $x = \bigsqcup_n x_n$ 
\item if $x \in D_n$ then $x = x_n$
\item $x \cdot y = \bigsqcup_n x_{n+1}(y_n)$
\item if $y \in D_n$ then $x \cdot y = x_{n+1}(y)$
\end{enumerate}
where $x_n= \pi_{\infty,n}(x)$.
\end{lemma}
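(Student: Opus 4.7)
The plan is to derive each of the four clauses by unpacking the inverse-limit machinery that defines $D_\infty$, using only: (i) the basic embedding-projection identities $\pi_{\infty,n}\circ\varepsilon_{n,\infty}=\Id_{D_n}$, $\varepsilon_{n,\infty}\circ\pi_{\infty,n}\sqsubseteq\Id_{D_\infty}$, and $\pi_{\infty,m}\circ\varepsilon_{n,\infty}=\varepsilon_{n,m}$ for $m\geq n$; (ii) the defining property of the inverse limit in the category of $\omega$-algebraic lattices, namely $\Id_{D_\infty}=\bigsqcup_n\varepsilon_{n,\infty}\circ\pi_{\infty,n}$ as a directed join; and (iii) the $\omega$-continuity of $T$, which yields $TD_\infty\simeq\invLim TD_n$ with analogous embedding-projection pairs on the $T$-side.

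Clauses (1) and (2) drop out immediately. For (1) we apply the identity $\Id_{D_\infty}=\bigsqcup_n\varepsilon_{n,\infty}\circ\pi_{\infty,n}$ to $x$ and obtain $x=\bigsqcup_n \varepsilon_{n,\infty}(x_n)=\bigsqcup_n x_n$ under the tacit identification $D_n\hookrightarrow D_\infty$. For (2), $x\in D_n$ means $x=\varepsilon_{n,\infty}(x')$ for some $x'\in D_n$, and then $x_n=\pi_{\infty,n}(\varepsilon_{n,\infty}(x'))=x'=x$.

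Clause (3) is essentially the explicit description of the isomorphism $\Phi:D_\infty \simeq D_\infty\to TD_\infty$. Unwinding the chain $D_\infty \simeq \invLim(D_n\to TD_n)\simeq (\invLim D_n)\to (\invLim TD_n) = D_\infty\to TD_\infty$, where the middle step uses $\omega$-continuity of $T$, one sees that for $z\in D_n$ the value $\Phi(x)(z)$ coincides, modulo the embedding $TD_n\hookrightarrow TD_\infty$, with $x_{n+1}(z)$ read in $D_n\to TD_n$ via $D_{n+1}=D_n\to TD_n$. Writing $x\cdot y = \Phi(x)(y)$, Scott-continuity of $\Phi(x)$ together with (1) applied to $y$ give $x\cdot y = \bigsqcup_n \Phi(x)(y_n) = \bigsqcup_n x_{n+1}(y_n)$, proving (3).

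Clause (4) is then a direct corollary of (3) combined with (2): if $y\in D_n$ then $y=y_n$ by (2), and the identity $\Phi(x)(z)=x_{n+1}(z)$ for $z\in D_n$ yields $x\cdot y = x_{n+1}(y)$. The main obstacle, as usual in inverse-limit arguments, is keeping track of the implicit embeddings $D_n\hookrightarrow D_\infty$ and $TD_n\hookrightarrow TD_\infty$ and verifying that the isomorphism $\Phi$ restricts on each approximant exactly to the evaluation map $x_{n+1}$; once the notational conventions just set up are in force, the identity follows formally from the universal property of the inverse limit applied level-wise.
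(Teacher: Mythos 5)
The paper itself offers no proof of this lemma: it is introduced with the sentence ``The following lemma lists some well known facts'' together with a pointer to standard texts on domain theory, so there is nothing to compare your argument against line by line. Your treatment of clauses (1) and (2) is correct and is the standard argument from $\Id_{D_\infty}=\bigsqcup_n\varepsilon_{n,\infty}\circ\pi_{\infty,n}$ and $\pi_{\infty,n}\circ\varepsilon_{n,\infty}=\Id_{D_n}$.

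The gap is in the identity you use to get (3) and (4), namely that for $z\in D_n$ the value $\Phi(x)(z)$ ``coincides, modulo the embedding, with $x_{n+1}(z)$''. This is false. Since $x_{n+1}=\pi_{\infty,n+1}(x)$ corresponds under $D_{n+1}=[D_n\to TD_n]$ to $T\pi_{\infty,n}\circ\Phi(x)\circ\varepsilon_{n,\infty}$, what is true is $x_{n+1}(z)=(\Phi(x)(z))_n$: it is the \emph{projection} of $\Phi(x)(z)$ to $TD_n$, so in general only $x_{n+1}(z)\sqsubseteq\Phi(x)(z)$, with equality precisely when $\Phi(x)(z)$ already lies in $TD_n$. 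A concrete counterexample (take $T$ to be the identity monad): with $D_0=\{\bot,\top\}$ and $x\in D_2$ the constant map with value $\mathrm{id}_{D_0}\in D_1\setminus D_0$, one has $x\cdot\bot=\mathrm{id}_{D_0}$ while $x_1(\bot)=\bot$. Clause (3) survives, but your second equality $\bigsqcup_n\Phi(x)(y_n)=\bigsqcup_n x_{n+1}(y_n)$ then needs the standard interleaving argument rather than a termwise identity: one direction is $x_{n+1}(y_n)=(\Phi(x)(y_n))_n\sqsubseteq\Phi(x)(y)$, and for the other, for fixed $m$ and all $n\ge m$ one has $x_{n+1}(y_n)\sqsupseteq(\Phi(x)(y_m))_n$, whose supremum over $n$ is $\Phi(x)(y_m)$ by clause (1) applied in $TD_\infty$. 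Your derivation of (4), however, collapses entirely, and the same counterexample shows that clause (4) as literally stated is problematic: for $y\in D_n$ one only gets $x\cdot y=\bigsqcup_{m\ge n}x_{m+1}(y)$, the correct ``well known fact'' being the dual one, that $x\cdot y=x(y_n)$ when $x\in D_{n+1}$. A sound write-up should prove (1)--(3) as indicated and either repair or flag (4) rather than derive it from the false pointwise identity.
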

To these we add:

\begin{lemma}\label{lem:D_inftyPropeties-T}
Let $x \in D_\infty$ and $y \in TD_\infty$:

\begin{enumerate}
\item $(\Unit x)_n = \Unit_{n+1} x$ 
\item $(y \Bind x)_n = y_n \Bind x_n$
\item $a\Bind d= \bigsqcup_n (a_n \Bind d_{n+1})$.
\end{enumerate}
\end{lemma}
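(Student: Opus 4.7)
\medskip\noindent\textbf{Proof plan.} All three claims assert that the monadic operations on the inverse limit commute with the projections to the approximants; the engine is the $\omega$-continuity of $T$ (which gives $TD_\infty = \invLim TD_n$ with projections $T\pi_{\infty,n}$) together with the naturality of the monad structure.

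For (1) I would argue purely by naturality of the unit. Regarding $\Unit$ as a natural transformation $\mathrm{Id} \Rightarrow T$, apply it to the morphism $\pi_{\infty,n}\colon D_\infty \to D_n$. The resulting square gives
\[ T\pi_{\infty,n} \circ \Unit_{D_\infty} \;=\; \Unit_{D_n} \circ \pi_{\infty,n}, \]
and evaluating at $x \in D_\infty$, together with the fact that the projection of $TD_\infty$ onto $TD_n$ is exactly $T\pi_{\infty,n}$ (by continuity of $T$), yields $(\Unit x)_n = \Unit_{D_n}\, x_n$, which is the claimed identity once we identify $\Unit_{n+1}$ with the unit at the relevant approximant (matching the indexing convention of Lemma \ref{lem:D_inftyPropeties}.3).

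For (2) the same naturality pattern applies to Kleisli extension. Think of $x \in D_\infty$ as the function $\Phi(x)\colon D_\infty \to TD_\infty$, and similarly read $x_n$ at the level of approximants via the iso $D_{n+1} \cong D_n \to TD_n$. The Kleisli extension $(-)^*$ of the monad is natural in its input domain, so projecting both factors through $\pi_{\infty,n}$ (respectively $T\pi_{\infty,n}$) commutes with $\Bind$. Concretely, one checks that $T\pi_{\infty,n}(y \Bind \Phi(x)) = y_n \Bind \Phi_n(x_n)$ using the definition of $\Phi$ as the inverse limit of the $\Phi_n$'s and the monad associativity/unit laws; after identifying each $x_n$ with its functional reading at the next level, this reads as $(y \Bind x)_n = y_n \Bind x_n$.

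For (3) I would combine (2) with part (1) of Lemma \ref{lem:D_inftyPropeties}, which says every $z \in TD_\infty$ satisfies $z = \bigsqcup_n z_n$. Hence $a \Bind d = \bigsqcup_n (a \Bind d)_n$, and substituting (2) gives $\bigsqcup_n (a_n \Bind d_{n+1})$, after tracking the index shift between an element of $D_\infty$ and its natural functional approximant in $D_{n+1} = D_n \to TD_n$ (exactly as in the analogous formula $x \cdot y = \bigsqcup_n x_{n+1}(y_n)$ of Lemma \ref{lem:D_inftyPropeties}).

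The main obstacle I anticipate is purely bookkeeping: keeping straight the indexing shift between a point of $D_\infty$ viewed as an element (giving $x_n \in D_n$) and viewed as a function (giving $x_{n+1} \in D_n \to TD_n$), and correspondingly choosing at which level the $\Bind$ operation is evaluated. Once the naturality diagrams are drawn at the right levels and $T$-continuity is invoked to identify the projections of $TD_\infty$, the verifications are essentially a pair of commuting diagrams plus one continuity-driven directed sup.
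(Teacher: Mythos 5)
The paper itself gives no proof of this lemma: it is presented as an addendum to the ``well known facts'' of Lemma~\ref{lem:D_inftyPropeties} and no argument appears in the appendix either, so there is nothing to compare your proposal against line by line. On its own merits, your plan is the standard and essentially correct one: naturality of $\Unit$ (which, in the Kleisli presentation, follows from the left unit law once one sets $Tf = \metalambda a.\,a \Bind (\Unit \circ f)$) gives $T\pi_{\infty,n}\circ\Unit_{D_\infty} = \Unit_{D_n}\circ\pi_{\infty,n}$ and hence item (1); the two naturality laws of Kleisli extension, $Tg\circ f^* = (Tg\circ f)^*$ and $f^*\circ Th = (f\circ h)^*$ (both consequences of associativity and left unit), together with $\omega$-continuity of $T$ identifying the projection $TD_\infty\to TD_n$ with $T\pi_{\infty,n}$, give item (2); and item (3) follows from continuity of $\Bind$ and $a=\bigsqcup_n a_n$, $\Phi(d)=\bigsqcup_n \varepsilon_{n+1,\infty}(d_{n+1})$. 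You are also right that the indexing in the statement is off and must be repaired exactly as you describe.

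One caveat deserves attention in item (2). Pushing the naturality square through an embedding--projection pair only yields an inequality in one direction, because $\varepsilon_{n,\infty}\circ\pi_{\infty,n}\sqsubseteq\Id_{D_\infty}$: the computation gives $(y\Bind x)_n \sqsupseteq y_n\Bind x_{n+1}$ directly, but the reverse inclusion requires showing that $T\pi_{m,n}(y_m\Bind x_{m+1}) = y_n\Bind x_{n+1}$ for $m\geq n$, i.e.\ the coherence of the approximating family, and the naive estimate again only gives $\sqsupseteq$. This mirrors the situation for ordinary application in $D_\infty$, where one has $x\cdot y=\bigsqcup_n x_{n+1}(y_n)$ (Lemma~\ref{lem:D_inftyPropeties}.3) but the exact identity $(x\cdot y)_n=x_{n+1}(y_n)$ is \emph{not} claimed for arbitrary $x,y\in D_\infty$, only the special case where the argument already lives at a finite level (Lemma~\ref{lem:D_inftyPropeties}.4). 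So either item (2) should be read under the analogous finiteness hypothesis, or your proof of it needs the extra coherence argument spelled out rather than appealing to ``a commuting diagram''; item (3), which is what is actually used in Theorem~\ref{thr:adequateTypeInterp}, is unaffected since there the directed sup absorbs the discrepancy.
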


If $X \subseteq D_\infty$ we write $X_n = \Set{x_n \mid x \in X}$, similarly for $Y_n$ when $Y \subseteq TD_\infty$. By means of this we define a notion of approximated type interpretation such that each type
turns out to denote certain well behaved subset either of $D_n$ or $TD_n$ according to its kind, called
admissible subset (\cite{AbramskyJung94} sec. 2.1.6). 

\begin{definition}\label{def:admissiblepred}
	A predicate on (i.e. a subset of) a domain $D$ is \emph{admissible} 
	if it contains $\bot$ and is closed under directed suprema.
	We write as $\Adm(D)$ the set of all admissible predicates on a domain $D$.
	Fix a domain $D$ and a subset $X\subseteq D$; we define the following operator: 
	\[cl_D(X)= \bigcap \Set{P \in \Adm(D) \mid P\supseteq \Compact(X)}\] where 
	$\Compact(X)$ are the elements of $\Compact(D)$ bounded above by the elements of $X$.
\end{definition}

Our goal is to show that the type interpretations are admissible subsets of either $D_n$ or $TD_n$
provided that the $\xi(\alpha)$ are such. To enforce admissibility of the $\xi(\alpha)$ we have introduced the
$cl_D$ operator.

\begin{lemma}\label{lem:admissible}
	The operator $cl_D: \Power D \to \Power D$ is a closure, and $cl_D(X)$ is an object of $\Dcat$ for all 	$X\subseteq D$, hence it is admissible and algebraic. 
\end{lemma}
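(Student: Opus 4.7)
The plan is to derive an explicit description of $cl_D(X)$ and then read off each required property from it. The description I aim to establish is
\[
cl_D(X) \;=\; \{\bot\} \cup \Bigl\{\bigsqcup S \;\Big|\; \emptyset \neq S \subseteq \Compact(X),\ S\ \text{directed}\Bigr\}.
\]
The inclusion from right to left is immediate: every admissible $P \supseteq \Compact(X)$ contains $\bot$ and the directed sup of any directed $S \subseteq \Compact(X) \subseteq P$. The reverse inclusion reduces to showing the right-hand side is itself admissible, so that it is one of the predicates being intersected. The only non-trivial point is closure under directed suprema: given a directed family $\{y_t = \bigsqcup S_t\}_{t \in T}$ of such sups, $\bigcup_t S_t$ is directed — for $c \in S_t$ and $c' \in S_{t'}$, pick $t''$ with $y_t, y_{t'} \leq y_{t''}$ by directedness of $T$, then use compactness of $c$ and $c'$ to locate a common upper bound inside $S_{t''}$ — and its supremum is $\bigsqcup_t y_t$.

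With this description in hand, the three closure axioms reduce to short checks. \textbf{Extensivity} uses $\omega$-algebraicity of $D$: each $x \in X$ equals $\bigsqcup \{c \in \Compact(D) : c \leq x\}$, which is a directed subset of $\Compact(X)$. \textbf{Monotonicity} is immediate because $X \subseteq Y$ yields $\Compact(X) \subseteq \Compact(Y)$. \textbf{Idempotence} is the non-routine step: I prove the identity $\Compact(cl_D(X)) \subseteq \Compact(X)$ by taking $c \in \Compact(D)$ with $c \leq y = \bigsqcup S$ and $S \subseteq \Compact(X)$ directed; compactness of $c$ yields $s \in S$ with $c \leq s \leq x$ for some $x \in X$, whence $c \in \Compact(X)$. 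Combined with the explicit description this gives $cl_D(cl_D(X)) \subseteq cl_D(X)$, while the reverse inclusion is just extensivity.

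The structural claim then drops out. Admissibility of $cl_D(X)$ is built into the construction. For algebraicity I show that every $y \in cl_D(X)$ is the directed supremum of $\{c \in \Compact(X) : c \leq y\}$: this set is directed by the same compactness argument (any two compacts below $y = \bigsqcup S$ are majorised by a single element of $S$) and it contains $S$, so its sup is $y$. The compact elements of $cl_D(X)$ viewed as a sub-dcpo of $D$ are therefore exactly $\Compact(X)$, which is a countable subset of $\Compact(D)$, yielding the $\omega$-algebraic structure on $cl_D(X)$. The hard part is the idempotence step, specifically $\Compact(cl_D(X)) \subseteq \Compact(X)$, which is also the technical heart of algebraicity — so once this one compactness lemma is in place, both halves of the statement follow from the same argument.
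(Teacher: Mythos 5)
The paper states this lemma without any proof, in the body or the appendix, so there is nothing of the authors' to compare your argument against; I assess it on its own terms. Your explicit description of $cl_D(X)$ as $\{\bot\}$ together with the sups of nonempty directed subsets of $\Compact(X)$ is correct, and each of the three verifications is sound: the union-of-the-$S_t$ argument (using compactness to glue the directed pieces together) gives closure under directed sups, extensivity follows from $\omega$-algebraicity of $D$, and the key compactness lemma $\Compact(cl_D(X)) \subseteq \Compact(X)$ does double duty for idempotence and for algebraicity, exactly as you say. You are also right that this compactness step is the genuine content: the operator is not of the generic form ``intersect all closed sets containing $X$'' (it intersects those containing $\Compact(X)$), so idempotence is not automatic. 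The only loose end is the degenerate case $X = \emptyset$, where $\Compact(cl_D(\emptyset)) = \{\bot\} \not\subseteq \emptyset = \Compact(\emptyset)$; idempotence still holds there because $cl_D(\emptyset) = \{\bot\}$ is a fixed point, but the inclusion you state needs that case split off.

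What your proof does not reach is the clause ``$cl_D(X)$ is an object of $\Dcat$'': the paper fixes $\Dcat$ to be the category of $\omega$-algebraic \emph{lattices}, whereas you establish only that $cl_D(X)$ is an admissible sub-dcpo of $D$ with countable basis $\Compact(X)$. In fact the lattice property fails in general. Take incomparable $a, b \in \Compact(D)$ and $X = \{a, b\}$; then any directed $S \subseteq \Compact(X)$ lies entirely below $a$ or entirely below $b$ (an upper bound in $S$ of an element not below $b$ and an element not below $a$ would itself have to sit below one of $a$ or $b$, a contradiction), so $cl_D(X)$ is exactly the set of elements below $a$ or below $b$, which contains $a$ and $b$ but no common upper bound for them, hence is not a lattice. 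So the missing piece is better read as a defect of the lemma as stated than of your argument: what you prove --- closure, admissibility, $\omega$-algebraicity --- is precisely the ``hence it is admissible and algebraic'' part that the subsequent development (Definition \ref{def:approximateTypeInterp} and Lemma \ref{lem:adminter}) actually relies on.
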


Next we define the notion of approximated type interpretation, that in the limit is a monadic
type interpretation (Theorem \ref{thr:adequateTypeInterp}).

\begin{definition}\label{def:approximateTypeInterp}
Let $\xi \in \Env_{D_\infty}$; then we define a family of {\em approximated type interpretations} $\Sem{\delta}^{D_n}_\xi \subseteq D_n$ and 
$\Sem{\tau}^{TD_n}_\xi \subseteq TD_n$ inductively over $n \in \Nat$, and then over types:

\[ \Sem{\alpha}^{D_n}_\xi = cl_{D_n}(\xi(\alpha)_n) \qquad \Sem{\omega_\Val}^{D_n}_\xi = D_n \qquad \Sem{\omega_\Comp}^{TD_n}_\xi = TD_n\]
\[ \Sem{\delta \wedge \delta'}^{D_n}_\xi = \Sem{\delta}^{D_n}_\xi \cap \Sem{\delta'}^{D_n}_\xi \qquad
   \Sem{\tau \wedge \tau'}^{TD_n}_\xi = \Sem{\tau}^{TD_n}_\xi \cap \Sem{\tau'}^{TD_n}_\xi \]
\[	\Sem{\delta \to \tau}^{D_0}_\xi = D_0 \qquad \qquad
	\Sem{\delta \to \tau}^{D_{n+1}}_\xi = \Set{d\in D_{n+1} \mid \forall d'\in \Sem{\delta}^{D_n}_\xi .\; d(d')\in \Sem{\tau}^{TD_n}_\xi} \]
\[\begin{array}{lll}
\Sem{T\delta}^{TD_0}_\xi & = & TD_0 \\ [1mm]
\Sem{T\delta}^{TD_{n+1}}_\xi & = & \Set{\Unit d  \in TD_n \mid \, d \in \Sem{\delta}^{D_n}_\xi } \; \cup \\ [1mm]
	&& \Set{a\Bind d \in TD_{n+1} \mid \exists \delta' .\,d\in \Sem{\delta' \to T\delta}^{D_{n+1}}_\xi \And a\in \Sem{T\delta'}^{TD_n}_\xi }
\end{array}
\]
\end{definition}

The use of $cl_{D_n}$ in the definition of $\Sem{\alpha}^{D_n}_\xi$ can be avoided if $\xi(\alpha)_n$ is
admissible, for which it suffices that $\xi(\alpha)$ is admissible. We say that $\xi$ is admissible
if all $\xi(\alpha)$ are such. Clearly $cl_D(\xi(\alpha)) = \xi(\alpha)$ if $\xi$ is admissible.

\begin{lemma}\label{lem:adminter}
Let $\xi\in\Env_{D_\infty}$. For all $n\in \mathbb{N}$, every element of the families $\Sem{\delta}^{D_n}_\xi$ and $\Sem{\tau}^{TD_n}_\xi$ is an admissible predicate on $D_n$ and $TD_n$, respectively. Moreover, they are algebraic domains.
In addition, every such admissible predicate is an $\omega$-algebraic sublattice of $D_n$ and $TD_n$, respectively.
\end{lemma}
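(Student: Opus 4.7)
The plan is to proceed by a nested induction: outer induction on $n \in \mathbb{N}$ and, at each level, an inner structural induction on the type syntax, proving the three claims (admissibility, algebraicity as a domain, and being an $\omega$-algebraic sublattice) simultaneously. Admissibility gives the first two claims in the setting of $\omega$-algebraic lattices automatically, since an admissible sub-poset of an $\omega$-algebraic lattice that is additionally closed under binary meets and joins inherits the algebraic structure from the ambient domain; so the real content is closure under $\bot$, directed sups, and finite meets/joins.

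For the base cases I would dispose of $\omega_\Val$, $\omega_\Comp$, and the $n=0$ levels of $\delta\to\tau$ and $T\delta$ immediately, since they coincide with the ambient $D_0$ or $TD_0$. For type atoms $\alpha$, Lemma~\ref{lem:admissible} gives that $cl_{D_n}(\xi(\alpha)_n)$ is an $\omega$-algebraic sublattice of $D_n$. Intersections are handled by the standard observation that the intersection of two admissible $\omega$-algebraic sublattices is again one. For the arrow case $\Sem{\delta\to\tau}^{D_{n+1}}_\xi$ I would use the usual pointwise argument: the constant $\bot$ function is in the set because $\bot\in\Sem{\tau}^{TD_n}_\xi$ by the inner IH; for a directed family $\{f_i\}$ the pointwise sup $(\bigsqcup_i f_i)(d') = \bigsqcup_i f_i(d')$ stays in $\Sem{\tau}^{TD_n}_\xi$ by admissibility at level $n$; and pointwise binary meets and joins preserve membership because $\Sem{\tau}^{TD_n}_\xi$ is a sublattice by the IH.

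The main obstacle is the monadic case $\Sem{T\delta}^{TD_{n+1}}_\xi$, whose definition is a union of a $\Unit$-generated piece and a $\Bind$-generated piece with an existential over $\delta'$. I would first establish that $\bot_{TD_{n+1}}$ lies in the set by taking $\delta'=\omega_\Val$: then $\bot_{D_{n+1}}\in\Sem{\omega_\Val\to T\delta}^{D_{n+1}}_\xi$ (because $\bot\cdot d'=\bot\in\Sem{T\delta}^{TD_n}_\xi$ by IH) and $\bot_{TD_n}\in\Sem{T\omega_\Val}^{TD_n}_\xi$ trivially, so $\bot_{TD_n}\Bind\bot_{D_{n+1}}=\bot_{TD_{n+1}}$ by Lemma~\ref{lem:D_inftyPropeties-T}. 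For closure under directed sups I would exploit $\omega$-continuity of $T$, $\Unit$, and $\Bind$ together with the IH, extracting from a directed family of witnesses $(a_i,d_i,\delta'_i)$ a common $\delta'$ (using that the relevant type interpretations at level $n$ are sublattices and hence closed under joins) along which the suprema $a = \bigsqcup_i a_i$ and $d = \bigsqcup_i d_i$ realise the given limit.

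The most delicate point I expect to face is closure under binary meets and joins in this monadic case: given two presentations $a_1\Bind d_1$ and $a_2\Bind d_2$ with different witnesses $\delta'_1,\delta'_2$, one must produce a single witness for $(a_1\Bind d_1)\sqcup(a_2\Bind d_2)$ (resp.\ the meet). I plan to take $\delta'=\delta'_1\wedge\delta'_2$ and exploit the preservation of the type preorder from Lemma~\ref{lem:inclusion} together with the definitional clauses of $\Th_\Val$ and $\Th_\Comp$ relating $T$ with intersection to re-present both summands over this common witness; continuity of $\Bind$ and the induction hypothesis that $\Sem{\delta'\to T\delta}^{D_{n+1}}_\xi$ and $\Sem{T\delta'}^{TD_n}_\xi$ are sublattices then yield the required closure. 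This re-presentation step under a common witness is the crux of the argument and the only place where the interaction between monad and intersection genuinely has to be worked out.
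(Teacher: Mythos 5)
The paper itself omits the proof of this lemma, so there is nothing to compare your argument against line by line; judged on its own terms, your overall architecture (outer induction on $n$, inner induction on types, following the clauses of Definition~\ref{def:approximateTypeInterp}) is the natural one, and your treatment of atoms, intersections and the arrow clause is essentially right. But the step you yourself single out as the crux --- closure of $\Sem{T\delta}^{TD_{n+1}}_\xi$ under binary meets/joins and directed sups --- does not go through as planned, for a variance reason. Given $a_1\Bind d_1$ and $a_2\Bind d_2$ with witnesses $\delta'_1,\delta'_2$, taking $\delta'=\delta'_1\Inter\delta'_2$ does put both $d_i$ into $\Sem{\delta'\to T\delta}^{D_{n+1}}_\xi$ (the arrow interpretation grows as its source shrinks), but it moves the $a$-side the wrong way: one has $\Sem{T(\delta'_1\Inter\delta'_2)}^{TD_n}_\xi\subseteq\Sem{T\delta'_i}^{TD_n}_\xi$, so from $a_i\in\Sem{T\delta'_i}^{TD_n}_\xi$ you cannot conclude $a_i\in\Sem{T\delta'}^{TD_n}_\xi$. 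Any common witness $\delta''$ would have to satisfy simultaneously $\Sem{\delta''}\subseteq\Sem{\delta'_i}$ (to keep the $d_i$) and, in effect, $\Sem{\delta'_i}\subseteq\Sem{\delta''}$ (to keep the $a_i$), which forces $\Sem{\delta'_1}=\Sem{\delta'_2}$; so in general no common witness exists and the re-presentation step, which you correctly call the crux, fails.

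There are further difficulties in the same clause. Even with a common witness, $(a_1\sqcup a_2)\Bind(d_1\sqcup d_2)$ is only an upper bound of $(a_1\Bind d_1)\sqcup(a_2\Bind d_2)$ by monotonicity of $\Bind$, not the join itself, and the defined set is not upward closed; nor is the image of the continuous binary map $\Bind$ closed under joins, meets, or sups of directed families mixing the $\Unit$-generated and $\Bind$-generated pieces (for directed families you can at least reduce to one piece by a cofinality argument, which you do not make). Two smaller points: $\bot_{TD_n}\Bind\bot_{D_{n+1}}=\bot_{TD_{n+1}}$ is not a consequence of the monad laws and must be assumed or proved for the monads considered; and in the arrow case the binary meet in $[D_n\to TD_n]$ is not computed pointwise, so \emph{pointwise meets preserve membership} needs a supplementary argument. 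To repair the proof one would either have to close the right-hand side of the $T\delta$ clause under the lattice operations explicitly (e.g.\ by applying $cl_{TD_{n+1}}$ to the displayed union, as is already done for atoms) or exploit structure of $T$ beyond $\omega$-continuity; as written, your plan does not establish the lemma.
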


\begin{lemma}\label{lem:i-ppair}
	Fix a $\xi\in\Env_{D_\infty}$. Define $\varepsilon^\delta_n = \varepsilon_n \upharpoonright \Sem{\delta}^{D_n}_\xi$ and $\pi^\delta_n = \pi_n \upharpoonright \Sem{\delta}^{D_n}_\xi$. \\
	Then, for all $\delta \in \ValType$ $(\varepsilon^\delta_n, \pi^\delta_n)$ is injection-projection pair.
\end{lemma}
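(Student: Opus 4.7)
The plan is to prove, by simultaneous induction on $\delta\in\ValType$ and $\tau\in\ComType$ (nested inside an outer induction on $n$), the two well-typedness statements
\[
\varepsilon_n\bigl(\Sem{\delta}^{D_n}_\xi\bigr) \subseteq \Sem{\delta}^{D_{n+1}}_\xi, \qquad \pi_n\bigl(\Sem{\delta}^{D_{n+1}}_\xi\bigr) \subseteq \Sem{\delta}^{D_n}_\xi,
\]
and the analogous containments for $\tau$ with the lifted pair $(\varepsilon^T_n,\pi^T_n)$ on $TD_n$. Once these are in place, the defining equalities of an injection-projection pair, $\pi^\delta_n\circ\varepsilon^\delta_n=\Id$ and $\varepsilon^\delta_n\circ\pi^\delta_n\sqsubseteq\Id$, are inherited verbatim from $(\varepsilon_n,\pi_n)$ restricted to the relevant subsets, since restriction does not change the pointwise action.

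The base case $n=0$ is immediate: the clauses of Definition \ref{def:approximateTypeInterp} set $\Sem{\delta'\to\tau}^{D_0}_\xi=D_0$ and $\Sem{T\delta}^{TD_0}_\xi=TD_0$, which absorb every image. For $\delta\equiv\alpha$, one combines Lemma \ref{lem:admissible} with continuity of $\varepsilon_n,\pi_n$: since $\varepsilon_n\circ\pi_{\infty,n}\sqsubseteq\pi_{\infty,n+1}$ and $\pi_n\circ\pi_{\infty,n+1}=\pi_{\infty,n}$ by the standard properties of the inverse limit, the i-p pair maps the admissible closure $cl_{D_n}(\xi(\alpha)_n)$ into $cl_{D_{n+1}}(\xi(\alpha)_{n+1})$ and back. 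The cases for $\omega_\Val,\omega_\Comp$ and for both $\wedge$ constructors are routine.

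For $\delta\equiv\delta'\to\tau$ at level $n+1$, invoke the function-space lifting of the i-p pair, $\varepsilon_{n+1}(f)=\varepsilon^T_n\circ f\circ\pi_n$: given $f\in\Sem{\delta'\to\tau}^{D_{n+1}}_\xi$ and $d''\in\Sem{\delta'}^{D_{n+1}}_\xi$, the projection hypothesis for $\delta'$ gives $\pi_n(d'')\in\Sem{\delta'}^{D_n}_\xi$, the defining property of $f$ then yields $f(\pi_n(d''))\in\Sem{\tau}^{TD_n}_\xi$, and finally the embedding hypothesis for $\tau$ gives $\varepsilon^T_n(f(\pi_n(d'')))\in\Sem{\tau}^{TD_{n+1}}_\xi$. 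The projection direction is dual.

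The main obstacle is the case $\tau\equiv T\delta$, because $\Sem{T\delta}^{TD_{n+1}}_\xi$ is the union of two syntactically distinct families. The key ingredients are the naturality equations of $\Unit$ and $\Bind$ with respect to the i-p pairs, consequences of $T$ being an $\omega$-continuous monad on $\Dcat$ and already reflected at the limit in Lemma \ref{lem:D_inftyPropeties-T}: at each finite level one has $\varepsilon^T_n(\Unit_n d)=\Unit_{n+1}(\varepsilon_n d)$ and $\varepsilon^T_{n+1}(a\Bind f)=\varepsilon^T_n(a)\Bind\varepsilon_{n+1}(f)$, together with their $\pi$-duals. An element $\Unit d$ with $d\in\Sem{\delta}^{D_n}_\xi$ is then embedded to $\Unit(\varepsilon_n d)$, which lies in the first summand at level $n+2$ by the embedding hypothesis for $\delta$; an element $a\Bind f$ is embedded to $\varepsilon^T_n(a)\Bind\varepsilon_{n+1}(f)$, which lies in the second summand by the embedding hypotheses for $T\delta'$ and $\delta'\to T\delta$. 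The projection direction works symmetrically, and the verification of the i-p equations on the restricted domain then closes the argument.
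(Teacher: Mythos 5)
The paper omits the proof of this lemma entirely (it is not in the Appendix), so there is no official argument to compare against; judged on its own, your proposal is the natural and, as far as I can see, correct way to prove it: strengthen the statement to the two well-typedness containments $\varepsilon_n(\Sem{\delta}^{D_n}_\xi)\subseteq\Sem{\delta}^{D_{n+1}}_\xi$ and $\pi_n(\Sem{\delta}^{D_{n+1}}_\xi)\subseteq\Sem{\delta}^{D_n}_\xi$, prove them by an outer induction on $n$ (all types at once, which is what licenses the appeal to the hypothesis for the existentially quantified $\delta'$ in the $T\delta$ clause), and then inherit the equations $\pi^\delta_n\circ\varepsilon^\delta_n=\Id$ and $\varepsilon^\delta_n\circ\pi^\delta_n\sqsubseteq\Id$ from the unrestricted pair. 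The identification of naturality of $\Unit$ and $\Bind$ with respect to the embedding-projection pairs as the key ingredient for the $T\delta$ case is exactly right. One spot where your sketch is thinner than it should be is the base case in the embedding direction: the level-$0$ clauses only ``absorb every image'' for the projections, whereas showing $\varepsilon^T_0(TD_0)\subseteq\Sem{T\delta}^{TD_1}_\xi$ requires observing that every $a\in TD_0$ can be written as $a\Bind\Unit_{D_0}$ (the right-unit law), which places its embedding in the second summand of the level-$1$ clause; this is easily repaired but is not covered by what you wrote.
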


Define $\varepsilon^{T\delta}_n=T\varepsilon^{\delta}_n$ and $\pi^{T\delta}_n=T\pi^{\delta}_n$.
By functoriality of $T$ and Lemma \ref{lem:i-ppair}  they are an injection-projection pair, too.
This implies that the following subsets of $D_\infty$ and $TD_\infty$ do exist.

\begin{definition}\label{def:limitsem}
\hfill

	\begin{enumerate}
		\item $\Sem{\delta}^{D_\infty}_\xi=\lim_{\leftarrow}\Sem{\delta}^{D_n}_\xi$
		\item $\Sem{T\delta}^{TD_\infty}_\xi=\lim_{\leftarrow}\Sem{T\delta}^{TD_n}_\xi$
	\end{enumerate} 
\end{definition}



\begin{theorem}\label{thr:adequateTypeInterp}
Let $\xi\in\Env_{D_\infty}$ be admissible. Then
the mappings  $\Sem{\cdot}^{D_\infty}_\xi$ and $\Sem{\cdot}^{TD_\infty}_\xi$ are monadic type interpretations; in particular:
\begin{enumerate}
	\item $d \in \Sem{\delta}^{D_\infty}_\xi \Then \Unit d \in \Sem{T\delta}^{TD_\infty}_\xi$
	\item $\exists \delta' .\,d\in \Sem{\delta' \to T\delta}^{D_\infty}_\xi \And a\in \Sem{T\delta'}^{TD_\infty}_\xi \Then a\Bind d \in \Sem{T\delta}^{TD_\infty}_\xi$
\end{enumerate}
\end{theorem}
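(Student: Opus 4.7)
The plan is to verify the two closure conditions by reducing to the level-wise approximations of Definition \ref{def:approximateTypeInterp} and then lifting through the inverse limit of Definition \ref{def:limitsem}. Before attacking (1) and (2) directly, I would first check that the limit interpretations satisfy the basic semantic clauses of Definition \ref{def:interpretationtypes}. The cases of atoms, $\omega_\Val$, $\omega_\Comp$ and intersections are routine because inverse limits commute with finite intersections. The nontrivial case is the arrow: I would show $\Sem{\delta\to\tau}^{D_\infty}_\xi = \Sem{\delta}^{D_\infty}_\xi \Rightarrow \Sem{\tau}^{TD_\infty}_\xi$ by using the isomorphism $D_\infty\simeq D_\infty\to TD_\infty$ induced by $\Phi$ together with Lemma \ref{lem:D_inftyPropeties}(4), which says $d\cdot d' = d_{n+1}(d'_n)$ when $d'\in D_n$, so membership in $\Sem{\delta\to\tau}^{D_{n+1}}_\xi$ is exactly membership at level $n+1$ of the limit function space.

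For clause (1), assume $d\in\Sem{\delta}^{D_\infty}_\xi$, so by Definition \ref{def:limitsem} we have $d_n\in\Sem{\delta}^{D_n}_\xi$ for every $n$. Lemma \ref{lem:D_inftyPropeties-T}(1) gives $(\Unit d)_{n+1} = \Unit_{n+1}(d_n)$. By the first disjunct in the approximated clause for $T\delta$ at level $n+1$, we get $(\Unit d)_{n+1}\in\Sem{T\delta}^{TD_{n+1}}_\xi$ for each $n$. Since admissibility (Lemma \ref{lem:adminter}) ensures this family is compatible under the projection-injection pairs, I conclude $\Unit d\in\Sem{T\delta}^{TD_\infty}_\xi$.

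For clause (2), assume $d\in\Sem{\delta'\to T\delta}^{D_\infty}_\xi$ and $a\in\Sem{T\delta'}^{TD_\infty}_\xi$. The projections satisfy $d_{n+1}\in\Sem{\delta'\to T\delta}^{D_{n+1}}_\xi$ and $a_n\in\Sem{T\delta'}^{TD_n}_\xi$, so the second disjunct of the approximated clause for $T\delta$ yields $a_n \Bind d_{n+1}\in \Sem{T\delta}^{TD_{n+1}}_\xi$ for every $n$. Now I invoke Lemma \ref{lem:D_inftyPropeties-T}(3), which writes $a \Bind d = \bigsqcup_n (a_n \Bind d_{n+1})$ as a directed supremum. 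By Lemma \ref{lem:adminter}, each $\Sem{T\delta}^{TD_{n+1}}_\xi$ is admissible, and the family is compatible along the injection-projection pairs of Lemma \ref{lem:i-ppair} (applied to $T\delta$), so the directed sup lives in the inverse limit $\Sem{T\delta}^{TD_\infty}_\xi$.

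The main obstacle I anticipate is not either closure condition individually, but rather the bookkeeping required to justify passing between the level-wise approximations and the inverse-limit interpretation. Specifically I would need to make precise that the limit $\Sem{T\delta}^{TD_\infty}_\xi$ viewed as a subset of $TD_\infty$ coincides, modulo the canonical isomorphism $TD_\infty\simeq\invLim TD_n$, with the elements all of whose projections lie in the corresponding approximated interpretations. Once this compatibility is in place and the arrow clause at the limit is verified, both (1) and (2) drop out by continuity of $\Unit$ and $\Bind$ in the sense of Lemma \ref{lem:D_inftyPropeties-T}, combined with admissibility.
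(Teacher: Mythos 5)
Your proposal is correct and follows essentially the same route as the paper's proof: reduce each clause to the level-wise approximations of Definition \ref{def:approximateTypeInterp}, use the two defining disjuncts of $\Sem{T\delta}^{TD_{n+1}}_\xi$ together with Lemma \ref{lem:D_inftyPropeties-T}, and lift to the inverse limit of Definition \ref{def:limitsem} via directed suprema and admissibility. The only cosmetic difference is that the paper phrases clause (1) through continuity of $\Unit$ (writing $\Unit d = \bigsqcup_n \Unit d_n$) where you use the projection identity, and you are more explicit about the compatibility bookkeeping that the paper leaves implicit.
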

\Proof
	By Definition \ref{def:approximateTypeInterp} $\Sem{\delta \to \tau}^{D_{n+1}}_\xi =  \Sem{\delta}^{D_n}_\xi \Rightarrow \Sem{\tau}^{TD_n}_\xi$. Let $d\in \Sem{\delta \to \tau}^{D_{n+1}}_\xi$ if and only if $d=\bigsqcup_n d_{n+1}$ where $d_{n+1}\in \Sem{\delta \to \tau}^{D_{n+1}}_\xi$ for all $n$.
	Similarly,$d'\in\Sem{\delta}^{D_\infty}_\xi$ if and only if $d'=\bigsqcup_n d'_{n+1}$ where $d'_{n+1}\in \Sem{\delta \to \tau}^{D_{n+1}}_\xi$ for all $n$.
	Now, by Lemma \ref{lem:D_inftyPropeties}, $d\cdot d'=\bigsqcup_n d_{n+1}(d'_n)$ and as seen so far $d_{n+1}(d'_n)\in \Sem{\tau}^{TD_n}_\xi$. Hence, $\bigsqcup_n d_{n+1}(d'_n)\in \lim_{\leftarrow}\Sem{\tau}^{TD_n}_\xi=\Sem{\tau}^{TD_\infty}_\xi$.
	\\ Now we prove the \emph{monadicity} of the type interpretation. Let $d\in \Sem{\delta}^{D_\infty}_\xi$. In particular, $\Sem{\delta}^{D_\infty}_\xi=\lim_{\leftarrow}\Sem{\delta}^{D_n}_\xi$, thus $d= \bigsqcup_n d_n$ where $d_n\in \Sem{\delta}^{D_n}_\xi$. 
	Since $\Unit d_n\in \Sem{T\delta}^{TD_n}_\xi$ by definition, and since $\Unit$ is continuous, $\Unit d=\Unit (\bigsqcup_n d_n)= \bigsqcup_n \Unit d_n$. As built so far, $\bigsqcup_n \Unit d_n \in \lim_{\leftarrow}\Sem{T\delta}^{TD_n}_\xi=\Sem{T\delta}^{TD_\infty}_\xi$. So clause (i) is satisfied.\\
	Let $a\in \Sem{T\delta'}^{TD_\infty}_\xi$ and $d\in \Sem{\delta'\to T\delta}^{D_{n+1}}_\xi$. According to (iii) in Lemma \ref{lem:D_inftyPropeties-T}, we have $a\Bind d= \bigsqcup_n (a_n \Bind d_{n+1})$. By construction $a_n\in \Sem{T\delta'}^{TD_n}_\xi$ and $d_{n+1}\in \Sem{\delta'\to T\delta}^{D_{n+1}}_\xi$. By definition $a_n\Bind d_{n+1}\in \Sem{T\delta}^{TD_{n+1}}_\xi$, thus $a\Bind d\in \lim_{\leftarrow}\Sem{T\delta}^{TD_n}_\xi=\Sem{T\delta}^{TD_\infty}_\xi$. So clause (ii) is satisfied.
\QED

\section{The filter model construction }\label{sec:filter}


Let $(L, \leq)$ be an inf-semilattice; a non empty $F \subseteq L$ is a {\em filter} of $L$ if it is upward closed and closed 
under finite infs; $\Filt(L)$ is the set of filters of $L$. The next proposition and theorem are known from the literature, e.g. \cite{DaveyPriestley90,Amadio-Curien'98}:

\begin{proposition}
If $(L, \leq)$ is an inf-semilattice then $\Filt(L) = (\Filt(L), \subseteq)$ is an algebraic lattice, whose compact elements are the filters $\Up a = \Set{a' \in L \mid a \leq a'}$. Hence $\Filt(L)$ is $\omega$-algebraic if $L$ is denumerable.
\end{proposition}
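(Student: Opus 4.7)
The result is classical, so the plan is essentially to recall the standard argument and to keep an eye on the ``$\omega$-algebraic'' refinement in the last clause. I would proceed in three stages: lattice structure, identification of the compacts, algebraicity.

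First I would establish that $(\Filt(L), \subseteq)$ is a complete lattice. Since every filter contains the top of $L$ (which exists because $L$ is an inf-semilattice with $\omega$ as the top in our intended application), arbitrary intersections of filters are filters: upward-closure and closure under finite infs are both preserved under intersection, and non-emptiness is ensured by the common top. Hence meets exist, and joins can be defined as the meet of all filters containing the union, giving a complete lattice with bottom $\Up\top$ (or the smallest filter). The key computation for algebraicity is the behaviour of directed sups: if $\mathcal{D}\subseteq\Filt(L)$ is directed, I would verify that $\bigsqcup\mathcal{D}=\bigcup\mathcal{D}$; non-emptiness and upward-closure are immediate, and closure under $\wedge$ uses that for $a\in F_1\in\mathcal{D}$ and $b\in F_2\in\mathcal{D}$ there is $F_3\supseteq F_1\cup F_2$ in $\mathcal{D}$, so $a\wedge b\in F_3\subseteq\bigcup\mathcal{D}$.

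Next I would identify the compact elements as exactly the principal filters $\Up a$. For the easy direction, if $\Up a\subseteq\bigcup_i F_i$ with $\{F_i\}$ directed, then $a$ itself lies in some $F_j$, whence $\Up a\subseteq F_j$ by upward-closure, showing $\Up a$ is compact. For the converse I would write an arbitrary filter $F$ as $F=\bigcup_{a\in F}\Up a$ and observe that the family $\{\Up a\mid a\in F\}$ is directed: for $a,b\in F$ the meet $a\wedge b$ lies in $F$ and $\Up(a\wedge b)\supseteq\Up a\cup\Up b$. If $F$ is compact, directedness forces $F\subseteq\Up a$ for some $a\in F$; combined with $\Up a\subseteq F$ this gives $F=\Up a$.

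Algebraicity is then immediate from the same decomposition $F=\bigsqcup_{a\in F}\Up a$, since the union is directed and each $\Up a$ is compact. Finally, if $L$ is denumerable then the set of compacts $\{\Up a\mid a\in L\}$ is at most denumerable, so $\Filt(L)$ is $\omega$-algebraic. The only delicate point I would take care with is verifying that the directed union of filters is again a filter (rather than just an ideal or upper set) and making explicit the directedness of $\{\Up a\mid a\in F\}$ via the meet operation; everything else is a routine verification.
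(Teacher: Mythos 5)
The paper does not prove this proposition; it is stated as known from the literature (Davey--Priestley, Amadio--Curien), so there is nothing to compare against, and your argument is the standard one and is correct: directed unions of filters are filters, the compacts are exactly the principal filters $\Up a$, and $F=\bigcup_{a\in F}\Up a$ is a directed decomposition into compacts. The one point genuinely worth flagging is the one you flagged yourself: as literally stated the proposition needs $L$ to have a top element (otherwise two principal filters can have empty intersection and $\Filt(L)$ fails to be a lattice), and this holds in every instance the paper uses, since the type theories all have $\omega$ as top and $\CompactOp(D)$ has the bottom of $D$ as its top.
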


Any $D \in |\Dcat|$ arises by ideal completion of $\Compact(D)$ taken with the restriction of the order $\Order$ over $D$; dually it is isomorphic to the {\em filter completion} of $\CompactOp(D)$, that is $\Compact(D)$ ordered by $\OrderOp$, the inverse of $\Order$.

\begin{theorem}[Representation theorem]\label{thr:representation}
Let $D \in |\Dcat|$; then $\CompactOp(D)$ is an inf-semilattice and $D \simeq \Filt(\CompactOp(D))$ is an isomorphism in $\Dcat$. 
\end{theorem}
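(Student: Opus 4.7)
The strategy is the familiar duality between the ideal completion and the filter completion of a poset of compacts. First I would verify that $\CompactOp(D)$ is an inf-semilattice: in any $\omega$-algebraic lattice the compacts are closed under finite joins (if $c_1,c_2$ are compact and $c_1 \sqcup c_2 \Order \bigsqcup X$ with $X$ directed, then $c_i \Order x_i$ for some $x_i \in X$, so by directedness $c_1 \sqcup c_2 \Order x$ for some $x \in X$), and $\bot$ is compact; reversing the order turns these finite joins into finite meets and $\bot$ into the top of $\CompactOp(D)$.

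Next I would define the candidate isomorphism and its inverse:
\[
\phi\colon D \to \Filt(\CompactOp(D)), \quad \phi(x)=\Set{c\in\Compact(D)\mid c \Order x},
\qquad
\psi\colon \Filt(\CompactOp(D)) \to D, \quad \psi(F)=\bigsqcup F.
\]
I would then check that $\phi(x)$ really is a filter of $\CompactOp(D)$: it is non-empty (contains $\bot$), upward-closed in $\CompactOp(D)$ (i.e.\ downward-closed in $D$, which is immediate from transitivity of $\Order$), and closed under the inf in $\CompactOp(D)$ (i.e.\ finite $\sqcup$ in $D$, which stays in $\Compact(D)$ and is still below $x$). Dually, a filter $F\in\Filt(\CompactOp(D))$ is directed in the order of $D$ (finite joins of its elements remain in $F$), so $\psi(F)=\bigsqcup F$ is well-defined in $D$.

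The two maps are mutual inverses. The identity $\psi(\phi(x))=x$ is exactly algebraicity of $D$. For $\phi(\psi(F))=F$, one inclusion is trivial since $c\in F$ implies $c\Order \bigsqcup F$; for the converse, if $c\in\Compact(D)$ satisfies $c\Order \bigsqcup F$, compactness gives some $c'\in F$ with $c\Order c'$, and then downward-closure of $F$ in $D$ (= upward-closure in $\CompactOp(D)$) yields $c\in F$. Both $\phi$ and $\psi$ are manifestly monotone with respect to $\sqsubseteq$ on $D$ and $\subseteq$ on $\Filt(\CompactOp(D))$, so this already yields an order-iso, hence an iso in $\Dcat$.

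Finally I would verify Scott-continuity explicitly for the record: for a directed $X\subseteq D$, $c\Order\bigsqcup X$ with $c$ compact iff $c\Order x$ for some $x\in X$, so $\phi(\bigsqcup X)=\bigcup_{x\in X}\phi(x)$; and the directed sup in $\Filt(\CompactOp(D))$ is directed union, whose sup in $D$ is obviously the sup of the sups. The only step requiring care — and the one I expect to be the main point to get right — is the consistent bookkeeping of the order reversal between $\Compact(D)$ and $\CompactOp(D)$, so that ``filter'' in the sense of Definition given just before the statement (upward-closed, closed under finite infs in $\CompactOp(D)$) matches ``downward-closed directed subset of compacts'' in $D$; once that dictionary is fixed, everything else is routine.
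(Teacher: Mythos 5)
Your proof is correct and is exactly the standard ideal/filter-completion duality argument that the paper itself relies on: the paper gives no proof of this theorem, deferring to the literature (Davey--Priestley, Amadio--Curien), and your two maps $\phi(x)=\Set{c\in\Compact(D)\mid c\Order x}$ and $\psi(F)=\bigsqcup F$, with algebraicity giving $\psi\circ\phi=\Id$ and compactness plus upward-closure in $\CompactOp(D)$ giving $\phi\circ\psi=\Id$, is precisely that argument. The care you take with the order reversal, and the observation that an order-isomorphism of complete lattices is automatically Scott-continuous, are the right points to make explicit.
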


Let $\Th = (\Types, \leq)$ be a type theory. Elements of $\Filt(\TypeTh)$ are the non empty subsets $F \subseteq \Types$ which are upward closed w.r.t. the preorder $\leq$ and such that if $\sigma,\sigma' \in F$ then $\sigma\Inter\sigma'\in F$. This definition, coming from \cite{BCD'83}, is not the same
as that of $\Filt(L)$ because $\leq$ is just a preorder, and infs do exist only in the quotient $\TypeTh_{/\leq}$, 
which is indeed an inf-semilattice. The resulting partial order $\Filt(\Th) = (\Filt(\TypeTh), \subseteq)$, however, is isomorphic to $(\Filt(\TypeTh_{/\leq}), \subseteq)$.


\begin{lemma}\label{lem:typeFilters}
Let $\TypeTh_{/\leq}$ be the quotient of the pre-order $\Th = (\TypeTh, \leq)$, whose elements are the equivalence classes $[\sigma] = \Set{\sigma' \in \TypeTh \mid \sigma \leq \sigma' \leq \sigma}$, ordered by the relation $[\sigma] \lesssim [\sigma'] \Iff \sigma \leq \sigma'$. Then 
$(\TypeTh_{/\leq}, \lesssim)$ is an inf-semilattice; moreover
$\Filt(\Th) \simeq \Filt(\TypeTh_{/\leq})$ is an isomorphism in $\Dcat$.
\end{lemma}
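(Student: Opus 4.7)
\medskip\noindent
\textbf{Proof plan.} The statement has two parts: that $(\TypeTh_{/\leq}, \lesssim)$ is an inf-semilattice, and that the two filter lattices are $\Dcat$-isomorphic. I would first verify that $\lesssim$ is well-defined on equivalence classes and is a partial order. Reflexivity and transitivity descend from $\leq$; antisymmetry is immediate because $[\sigma] \lesssim [\sigma']$ and $[\sigma'] \lesssim [\sigma]$ unfold to $\sigma \leq \sigma'$ and $\sigma' \leq \sigma$, which is exactly the condition $[\sigma]=[\sigma']$. For the inf-semilattice structure, the natural candidate is $[\sigma] \sqcap [\sigma'] := [\sigma \Inter \sigma']$. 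Well-definedness on classes follows from monotonicity of $\Inter$ (Definition \ref{def:theories}). The lower bound property is the axiom $\sigma\Inter\sigma' \leq \sigma$ together with commutativity of $\Inter$ giving $\sigma\Inter\sigma' \leq \sigma'$; the universal property is the second inference rule of Definition \ref{def:theories}, which from $\tau \leq \sigma$ and $\tau \leq \sigma'$ yields $\tau \leq \sigma\Inter\sigma'$.

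For the isomorphism, define
\[\Phi : \Filt(\Th) \to \Filt(\TypeTh_{/\leq}), \quad \Phi(F) = \Set{[\sigma] \mid \sigma \in F},\]
\[\Psi : \Filt(\TypeTh_{/\leq}) \to \Filt(\Th), \quad \Psi(G) = \Set{\sigma \in \TypeTh \mid [\sigma] \in G}.\]
I would check that $\Phi(F)$ is a filter in $\TypeTh_{/\leq}$: nonemptiness is clear; if $[\sigma] \in \Phi(F)$ with representative $\sigma \in F$ and $[\sigma]\lesssim[\tau]$, then $\sigma \leq \tau$, so $\tau \in F$ and $[\tau] \in \Phi(F)$; closure under $\sqcap$ uses closure of $F$ under $\Inter$. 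Symmetrically $\Psi(G)$ is a filter: upward-closedness uses that $\lesssim$ on classes is witnessed by $\leq$ on representatives; closure under $\Inter$ uses closure of $G$ under $\sqcap$ via the identification $[\sigma\Inter\sigma']=[\sigma]\sqcap[\sigma']$.

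Next, mutual inverseness: $\Phi(\Psi(G))=G$ holds on the nose because the outer equivalence class comprehension cancels the representative-picking. For $\Psi(\Phi(F))=F$, the inclusion $F \subseteq \Psi(\Phi(F))$ is trivial; for the converse, if $\sigma \in \Psi(\Phi(F))$ then $[\sigma]=[\sigma']$ for some $\sigma'\in F$, whence $\sigma' \leq \sigma$ and upward closure of $F$ forces $\sigma \in F$. Both $\Phi$ and $\Psi$ are manifestly monotone w.r.t.\ $\subseteq$, so together with being mutual inverses they form an order-isomorphism between the two filter lattices. Since both lattices are $\omega$-algebraic (by the preceding Proposition applied to the inf-semilattices $\TypeTh_{/\leq}$ and to $\TypeTh_{/\leq}$ itself viewed under its own semilattice structure), this order-isomorphism preserves arbitrary, in particular directed, suprema, hence is an iso in $\Dcat$.

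The only subtle point I anticipate is the direction $\Psi\circ\Phi=\Id$: one must not confuse the fact that $[\sigma]\in \Phi(F)$ means \emph{some} representative lies in $F$ with the stronger (and needed) claim that \emph{every} representative does. This is recovered immediately from upward closure of $F$ under $\leq$ applied to both inequalities defining the equivalence class, and is really the only place where the preorder-vs-partial-order distinction matters in the argument.
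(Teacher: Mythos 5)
Your proof is correct and follows essentially the same route as the paper: the paper dismisses the inf-semilattice part as routine (with the same meet $[\sigma\Inter\sigma']$ implicitly in mind) and exhibits exactly the map $F \mapsto \Set{[\sigma] \mid \sigma \in F}$ as the isomorphism. Your write-up merely fills in the verification details, including the one genuinely delicate point (that $\Psi\circ\Phi=\Id$ needs upward closure of $F$ under the preorder), which you handle correctly.
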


\Proof Proving that $\lesssim$ is well defined and $(\TypeTh_{/\leq}, \lesssim)$ is an inf-semilattice is routine. 
The isomorphism $\Filt(\Th) \simeq \Filt(\TypeTh_{/\leq})$ is given by the map
$F \mapsto \Set{ [\sigma] \mid  \sigma \in F}$. 
\QED

%
By Theorem \ref{thr:representation} and Lemma \ref{lem:typeFilters}, any $D \in |\Dcat|$ is isomorphic to the filter domain $\Filt_D = \Filt(\TypeTh_D)$
of some intersection type theory $\Th_D = (\TypeTh_D, \leq_D)$, called the Lindenbaum algebra of $\Compact(D)$ in \cite{Abramsky'91}.


\begin{definition}\label{def:eats}\hfill

\noindent
For $D, E \in |\Dcat|$, the {\em functional type theory} 
$\Th_{D\to E} = (\TypeTh_{D\to E}, \leq_{D\to E})$ is the least type theory
such that $\TypeTh_{D\to E}$ includes all expressions of the form $\delta \to \varepsilon$ for
$\delta \in \TypeTh_D$ and $\varepsilon \in \TypeTh_E$ and $ \leq_{D\to E}$ is such that
\[ \begin{array}{l@{\hspace{1cm}}c}
\omega_{D \to E} \leq_{D\to E} \omega_D \to \omega_E &
(\delta \to \varepsilon) \Inter (\delta \to \varepsilon') \leq_{D\to E} \delta \to (\varepsilon \Inter \varepsilon')  
\\ [2mm]
\prooftree
	\delta' \leq_{D\to E} \delta \quad \varepsilon \leq_E \varepsilon'
\justifies
	\delta \to \varepsilon \leq_{D\to E} \delta' \to \varepsilon'
\endprooftree
\end{array}\]
Also $\Th_{D\to E}$ is {\em continuous} if
\begin{equation}\label{eq:continuousEats}
\bigwedge_{i\in I}  (\delta_i\to\varepsilon_i) \leq_{D\to E} \delta \to \varepsilon \Then
	\bigwedge \Set{\varepsilon_i \mid i\in I \And \delta \leq_D \delta_i} \leq_E \varepsilon 
\end{equation}
\end{definition}

\begin{remark}\label{rem:eats}
The theory $\Th_{D\to E}$ is an {\em extended abstract type systems}, shortly {\em eats} 
(see e.g. \cite{Amadio-Curien'98} ch. 3),
but for the sorts of type expressions. It is continuous if it is a continuous eats.
\end{remark}

%

\begin{proposition}\label{prop:eats}
Let $\Th_{D\to E}$ be a continuous functional type theory. Then
the domain $\Filt_{D\to E} = \Filt(\Th_{D\to E})$ is isomorphic to $[\Filt_D \to \Filt_E]$, namely the 
domain of Scott continuous functions from $D$ to $E$.
\end{proposition}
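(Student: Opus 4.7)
The plan is to construct the isomorphism along classical filter-model lines, exploiting that in this setting the domain operations lift to syntactic operations on filters of arrow types, with the continuity hypothesis (\ref{eq:continuousEats}) as the essential ingredient ensuring the resulting correspondence is an isomorphism.

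First I would define the application of filters
\[ F \cdot G \;=\; \{\varepsilon \in \TypeTh_E \mid \exists \delta \in G.\; \delta\to\varepsilon \in F\}, \qquad F \in \Filt_{D\to E},\; G \in \Filt_D,\]
and check that $F\cdot G \in \Filt_E$: upward closure uses the monotonicity rule of Def.~\ref{def:eats}, and closure under $\Inter$ uses the axiom $(\delta\to\varepsilon)\Inter(\delta\to\varepsilon') \leq_{D\to E} \delta\to(\varepsilon\Inter\varepsilon')$ together with the fact that $G$ itself is $\Inter$-closed. Monotonicity and Scott continuity of $F \mapsto F\cdot G$ and $G \mapsto F\cdot G$ follow routinely since every element of a filter is dominated by a finite intersection of ``basic'' elements.

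Next I would introduce the two candidate maps
\[\Phi: \Filt_{D\to E} \to [\Filt_D \to \Filt_E], \qquad \Phi(F)(G) = F \cdot G,\]
\[\Psi: [\Filt_D \to \Filt_E] \to \Filt_{D\to E}, \qquad \Psi(f) = \{\sigma \mid \exists I, \delta_i, \varepsilon_i.\; \bigwedge_{i \in I}(\delta_i\to\varepsilon_i) \leq_{D\to E} \sigma \text{ and } \varepsilon_i \in f(\Up \delta_i)\},\]
and verify that $\Psi(f)$ is a filter (immediate from the definition, together with continuity of $f$ which gives $\omega_E \in f(\Up\omega_D)$, hence $\omega_{D\to E}$ witnesses non-emptiness via the first axiom of Def.~\ref{def:eats}). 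Continuity of $\Phi$ and $\Psi$ as maps between algebraic lattices reduces to their behaviour on compact elements, i.e.\ on principal filters $\Up\sigma$ and on step functions.

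The main step, and the expected difficulty, is the pair of identities $\Phi\circ\Psi = \Id$ and $\Psi\circ\Phi = \Id$. For $\Phi(\Psi(f))(G) \subseteq f(G)$: if $\varepsilon \in \Psi(f)\cdot G$, there is $\delta \in G$ with $\delta\to\varepsilon \in \Psi(f)$, hence $\bigwedge_i(\delta_i\to\varepsilon_i) \leq_{D\to E} \delta\to\varepsilon$ with $\varepsilon_i \in f(\Up\delta_i)$. This is precisely where the continuity hypothesis (\ref{eq:continuousEats}) is unavoidable: it yields
\[ \bigwedge\{\varepsilon_i \mid \delta \leq_D \delta_i\} \leq_E \varepsilon, \]
and for each such $i$ we have $\Up\delta_i \subseteq G$ (since $\delta_i \in \Up\delta \subseteq G$), so by monotonicity of $f$ each $\varepsilon_i$ lies in $f(G)$, whence $\varepsilon \in f(G)$ by $\Inter$-closure and upward closure. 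The reverse inclusion $f(G) \subseteq \Phi(\Psi(f))(G)$ uses continuity of $f$ to reduce to the case $G = \Up\delta$, where it is immediate from the definition of $\Psi$. Finally, $\Psi(\Phi(F)) = F$: the inclusion $\supseteq$ follows because every $\sigma \in F$ is dominated by a finite $\bigwedge_i(\delta_i\to\varepsilon_i) \in F$ (by structure of $\TypeTh_{D\to E}$), each of which satisfies $\varepsilon_i \in F\cdot\Up\delta_i = \Phi(F)(\Up\delta_i)$; the inclusion $\subseteq$ is again a direct application of (\ref{eq:continuousEats}) to conclude that $\bigwedge_i(\delta_i\to\varepsilon_i) \in F$ whenever all the required $\varepsilon_i$'s arise from $\Phi(F)(\Up\delta_i)$. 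The isomorphism being established in $\Dcat$ then follows from continuity of $\Phi$ and $\Psi$ shown above.
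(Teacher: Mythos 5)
Your proposal is correct and follows essentially the same route as the paper: the paper defines the same filter application $u \cdot d$ and the same pair $\Phi^\Filt, \Psi^\Filt$ of (\ref{eq:PhiPsi}), and dismisses the verification of the two composite identities as simple calculations on compact elements, which is exactly what you carry out, correctly locating the continuity condition (\ref{eq:continuousEats}) as the crucial ingredient in $\Phi\circ\Psi \subseteq \Id$. (One nitpick: the inclusion $\Psi(\Phi(F)) \subseteq F$ needs only the contravariance rule of Definition \ref{def:eats} and filter closure, not (\ref{eq:continuousEats}).)
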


\Proof
First if $u \in \Filt_{D\to E}$ and $d \in \Filt_D$ then
\[u \cdot d = \Set{\varepsilon \in \Types_E \mid \exists\, \delta \to \varepsilon \in u. \;\delta \in d} \in \Filt_E\] 
Then the isomorphism $\Filt_{D\to E} \simeq [\Filt_D \to \Filt_E]$ is given by 
\begin{equation}\label{eq:PhiPsi}
\Phi^\Filt(u) = \metalambda d \in \Filt_D.\; u \cdot d \qquad 
\Psi^\Filt(f) = \Set{\bigwedge_{i\in I} (\delta_i\to\varepsilon_i) \mid \forall i \in I.\; \varepsilon_i \in f(\!\Up\delta_i)}
\end{equation}
To prove that  $\Phi^\Filt\comp \Psi^\Filt = \Id_{[\Filt_D \to \Filt_E]}$ and $\Psi^\Filt \comp \Phi^\Filt = \Id_{\Filt_{D\to E}}$ it is enough to show this for compact elements, which is obtained by simple calculations.
\QED

\begin{definition}\label{def:Th^T}
Let $\Th = (\TypeTh, \leq)$ be a type theory and $T$ a unary symbol; then 
$\Th^T = (\TypeTh^T, \leq^T)$ is the least type theory such that $\TypeTh^T$ is defined by the grammar
\[\varphi ::= T\sigma \mid \varphi \Inter \varphi' \mid \omega_T\]
and, for $\sigma, \sigma' \in \TypeTh$ 
\[\varphi \leq^T \omega_T \qquad
\sigma \leq \sigma' \Then T\sigma \leq^T T\sigma' \qquad T\sigma \Inter T\sigma' \leq^T T(\sigma \Inter \sigma')\]
\end{definition}

Clearly $\Th_\Comp = (\Th_\Val)^T$ that we abbreviate by $\Th_\Val^T$; also, by Theorem \ref{thr:representation}, if $\Th_D$ is the theory of some $D\in |\Dcat|$ then 
$\Filt_{TD} = \Filt(\Th_D^T)$ is again a domain with theory $\Th_D^T$, and 
$\TT D = \Filt_{TD}$ is a well defined, total map $\TT: |\Dcat| \to |\Dcat|$. 

In the following we abbreviate $\Filt_D = \Filt(\Th_D)$ and $\Filt_{TD} = \Filt(\Th_D^T)$. We also suppose that the set of atoms $\TypeVar_0$ is non empty and fixed, and 
in one-to-one correspondence to the compacts $\Compact(D_0)$ of some fixed
domain $D_0$; also assume that $\alpha_d \leq_\Val \alpha_e$ in the theory $\Th_\Val$ if and only if $e \sqsubseteq d$ in $D_0$.

Finally, we enforce extensionality of the resulting $T$-model (see below) by adding to $\Th_\Val$
axioms that equate each atomic type $\alpha$ to an arrow type, for which we have similar choices
that is either all (in-)equations $\alpha =_\Val \omega_\Val \to T\alpha$
by analogy to Scott's models, or $\alpha =_\Val \alpha_\Val \to T\alpha$ by analogy to Park's. No matter
which is the actual choice, we ambiguously call $\Th_\Val^\eta$ the resulting theory and
$D_* = \Filt(\Th_\Val^\eta)$ its filter domain.

 \begin{lemma}\label{lem:filtIso}
 We have that $\Th_\Val^\eta = \Th_{D_* \to \TT D_*}$ and it is
 a continuous functional theory. Therefore $D_* \simeq [D_* \to \TT D_*]$.
 \end{lemma}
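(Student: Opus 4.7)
The plan is to identify $\Th_\Val^\eta$ with the functional type theory $\Th_{D_* \to \TT D_*}$ as preordered sets, then verify that this functional theory satisfies the continuity condition (\ref{eq:continuousEats}), so that Proposition \ref{prop:eats} applies and yields the isomorphism.

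First I would analyse the syntactic shape of value types in $\Th_\Val^\eta$ modulo the preorder. By the extensionality axioms ($\alpha =_\Val \omega_\Val \to T\alpha$, or the Park-style $\alpha =_\Val \alpha \to T\alpha$), every atomic value type is equivalent to an arrow type $\delta \to \tau$ with $\delta \in \ValType$ and $\tau \in \ComType$; hence, using idempotence and commutativity of $\Inter$, every value type is $=_\Val^\eta$-equivalent to a finite intersection of arrows. Since $\ComType$ is already $\ValType^T$ by construction, the set of arrow types $\delta \to \tau$ with $\delta \in \TypeTh_{D_*}$ and $\tau \in \TypeTh_{\TT D_*}$ coincides (modulo the equivalence) with the generators of $\Th_{D_* \to \TT D_*}$ in Definition \ref{def:eats}.

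Second, I would verify that the two preorders coincide. Inspecting Definition \ref{def:eats}, the axioms for $\leq_{D_* \to \TT D_*}$ are exactly the arrow axioms in Definition \ref{def:type-theories-Th_V-Th_C}: the top axiom $\omega_\Val \leq_\Val \omega_\Val \to \omega_\Comp$, the distributivity $(\delta \to \tau) \Inter (\delta \to \tau') \leq_\Val \delta \to (\tau \Inter \tau')$, and the contravariant monotonicity of $\to$. Hence $\leq_{D_* \to \TT D_*}\,\subseteq\,\leq_\Val^\eta$. Conversely, every axiom generating $\leq_\Val^\eta$ either is an axiom of $\Th_{D_* \to \TT D_*}$ or is an inequality involving $\omega_\Val$ or atoms, which is now derivable once atoms are rewritten via the extensional axiom as arrows.

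The main obstacle will be checking the continuity condition (\ref{eq:continuousEats}): given $\bigwedge_{i\in I}(\delta_i\to\tau_i)\leq_\Val^\eta \delta\to\tau$, one must conclude $\bigwedge\{\tau_i\mid \delta\leq_\Val^\eta \delta_i\}\leq_\Comp \tau$. I would adapt the standard \emph{beta-soundness} technique for BCD-like systems (see \cite{BarendregtDS2013}, 14.1) to the two-sort setting, proceeding by induction on the derivation of the inequality on the left. The delicate cases are those in which the derivation passes through distributivity, contravariance, or the extensional axiom for atoms; in these cases, Lemma \ref{lem:tau-neq-omega} lets us normalise computation types on the right-hand side to canonical form $T\delta'$ (or $\omega_\Comp$), reducing to inequalities of value types to which the induction hypothesis applies. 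A companion statement for $\leq_\Comp$, proved simultaneously and using only the rules of Definition \ref{def:type-theories-Th_V-Th_C} for computation types, handles the computation-type half of the induction.

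Once continuity is established, $\Th_\Val^\eta$ is a continuous functional type theory in the sense of Definition \ref{def:eats}, and Proposition \ref{prop:eats} applies with $D = D_*$ and $E = \TT D_*$, yielding $D_* = \Filt(\Th_\Val^\eta) \simeq [\Filt_{D_*} \to \Filt_{\TT D_*}] = [D_* \to \TT D_*]$ in $\Dcat$, as required.
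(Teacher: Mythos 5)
Your proposal is correct and follows essentially the same route as the paper: identify $\Th_\Val^\eta$ with the functional theory $\Th_{D_* \to \TT D_*}$ (via the extensionality axioms turning atoms into arrows and the matching arrow axioms), establish the continuity condition (\ref{eq:continuousEats}) by the BCD-style beta-soundness argument, and conclude by Proposition \ref{prop:eats}. The paper's own proof is in fact terser --- it simply asserts the identification and defers continuity to ``the analogous proof for system BCD'' --- so your elaboration of the beta-soundness induction and the role of Lemma \ref{lem:tau-neq-omega} fills in exactly the detail the paper omits.
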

 
 \Proof 
By hypothesis $\Th^\eta_\Val = \Th_{D_*}$; by definition  $\Th_\Comp = (\Th^\eta_\Val)^T = \Th_{\TT D_*}$.
By Definition \ref{def:type-theories-Th_V-Th_C}  we have that $\Th_{D_*} = \Th_{D_* \to \TT D_*}$.
To see that $\Th_{D_* \to \TT D_*}$ is a continuous functional type theory it suffices to show
that condition (\ref{eq:continuousEats}) holds in $\Th_\Val$ (similar to the analogous proof for system BCD).
Hence 
\[D_* \simeq \Filt(\Th_{D_* \to \TT D_*}) \simeq [\Filt(\Th_{D_*}) \to \Filt(\Th_{\TT D_*})]
\simeq [D_* \to \TT D_*]
\]
by Proposition \ref{prop:eats}. 
\QED

\begin{lemma}\label{lem:filtMonad}
Let $\TT: |\Dcat| \to |\Dcat|$ be as above. Define $\UnitSub{D}^\Filt: \Filt_D \to \Filt_{TD}$ and 
$\Bind_{D,E}^\Filt: \Filt_{TD} \times \Filt_{D \to \TT E} \, \to \Filt_{\TT E}$ such that:
\[\UnitSub{D}^\Filt \ d = \Up \Set{T\delta \in \Types_{\TT D} \mid \delta \in d} \qquad
t \Bind_{D,E}^\Filt e = \Up\Set{\tau \in \Types_{\TT E} \mid \exists\ \delta\to\tau \in e. \; T\delta \in t}\]
Then $(\TT, \Unit^\Filt, \Bind^\Filt)$ is a monad over $\Dcat$.
\end{lemma}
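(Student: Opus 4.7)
The plan is to verify in turn (a) that $\UnitSub{D}^\Filt$ and $\Bind_{D,E}^\Filt$ land in the right filter domains and are Scott continuous, and (b) the three monadic equations of Definition \ref{def:monad}. Well-definedness reduces to checking closure under binary intersection, since upward closure is built into the definitions via $\Up$. For $\UnitSub{D}^\Filt$, if $T\delta,T\delta'$ lie in $\UnitSub{D}^\Filt\,d$, then $\delta,\delta' \in d$, whence $\delta\Inter\delta'\in d$; the axiom $T\delta \Inter T\delta' \leq^T T(\delta\Inter\delta')$ of $\Th^T$ (Def. \ref{def:Th^T}) together with upward closure gives $T\delta \Inter T\delta'\in \UnitSub{D}^\Filt\,d$. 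For $\Bind_{D,E}^\Filt$, if $\tau,\tau'$ are witnessed by $\delta\to\tau\in e$, $T\delta\in t$ and $\delta'\to\tau'\in e$, $T\delta'\in t$, then by contravariance of $\to$ both $(\delta\Inter\delta')\to\tau$ and $(\delta\Inter\delta')\to\tau'$ lie in $e$; the arrow/$\Inter$-axiom gives $(\delta\Inter\delta')\to(\tau\Inter\tau')\in e$, while the $T$-axiom and $t$ being a filter yield $T(\delta\Inter\delta')\in t$, so $\tau\Inter\tau'\in t\Bind^\Filt e$. Continuity in each argument is immediate: membership of a fixed compact element in the output depends on finitely many compact elements of the inputs, so both operations commute with directed unions of filters.

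For the \emph{left unit law} $(\UnitSub{D}^\Filt\,d)\Bind^\Filt e = \Phi^\Filt(e)\,d$, a chain of equivalences suffices:
\[ \tau\in \UnitSub{D}^\Filt\,d\Bind^\Filt e \Iff \exists\,\delta\to\tau'\in e.\; T\delta\in \UnitSub{D}^\Filt\,d \And \tau\geq\tau' \Iff \exists\,\delta\to\tau'\in e.\; \delta\in d \And \tau\geq\tau', \]
and the rightmost condition is the definition of $e\cdot d$ used in the proof of Prop. \ref{prop:eats}.

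For the \emph{right unit law}, one interprets $\Unit$ qua morphism as the filter $u=\Psi^\Filt(\UnitSub{D}^\Filt)\in\Filt_{D\to \TT D}$, which by the explicit formula (\ref{eq:PhiPsi}) contains every arrow $\delta\to T\delta$. The inclusion $t\subseteq t\Bind^\Filt u$ is easy: for $T\delta\in t$, the witness $\delta\to T\delta\in u$ yields $T\delta\in t\Bind^\Filt u$. For the converse, any $\tau\in t\Bind^\Filt u$ is witnessed by some $\delta\to\tau'\in u$ with $T\delta\in t$ and $\tau\geq\tau'$; unfolding $u$ one checks that $\delta\to\tau'\in u$ forces $T\delta\leq^T \tau'$, whence $\tau'\in t$ and so $\tau\in t$. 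For \emph{associativity}, I unfold $(t\Bind^\Filt e)\Bind^\Filt e'$ and $t\Bind^\Filt \Psi^\Filt(\metalambda d.\,\Phi^\Filt(e)\,d\Bind^\Filt e')$ at the level of type membership; a type $\tau$ lies in either iff there exist $\delta\to\sigma\in e$ and $\sigma\to\tau'\in e'$ (modulo $\leq$) with $T\delta\in t$ and $\tau\geq\tau'$, and the two unfoldings match by the axioms of the functional theory and $\Th^T$.

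The hard part will be the right unit law, where I must characterize the filter $\Psi^\Filt(\UnitSub{D}^\Filt)$ and show that an arrow type in it cannot produce output $T\delta'$ without requiring the input to carry $T\delta$ for a comparable $\delta$. The other cases are algebraic manipulations in $\Th^T$ and in the continuous functional theory of Def. \ref{def:eats}.
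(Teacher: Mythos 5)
Your proof is correct and follows essentially the same route as the paper's: both verify the three monad equations of Definition \ref{def:monad} by unfolding the filter-theoretic definitions of $\Unit^\Filt$ and $\Bind^\Filt$, with the left unit law reducing to the formula for $f\cdot d$ from Proposition \ref{prop:eats}, the right unit law reducing to the observation that any arrow $\delta\to\tau'$ in the filter representing $\Unit^\Filt$ forces $T\delta\leq^T\tau'$, and associativity left to a routine unfolding. You are in fact somewhat more thorough than the paper, which skips the well-definedness and continuity checks and states the right-unit key fact as an equality $\tau=_\Comp T\delta$ rather than the inequality you correctly derive; the only cosmetic slip is writing $\sigma\to\tau'\in e'$ in the associativity sketch, where the domain of that arrow must be a value type $\delta''$ with $\sigma=T\delta''$.
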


\begin{corollary}\label{cor:filtModel}
Let $\Phi^\Filt, \Psi^\Filt$ be defined by (\ref{eq:PhiPsi}). 
Then $(D_*, \TT, \Phi^\Filt, \Psi^\Filt)$ is a $T$-model.
\end{corollary}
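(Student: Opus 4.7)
The plan is to assemble the prior results: the statement is essentially a corollary in that every ingredient has already been established, and what remains is to check that the pieces fit the interface of Definition \ref{def:T-model}.

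First I would check that $D_* \in |\Dcat|$. By construction $D_* = \Filt(\Th_\Val^\eta)$ and the type language underlying $\Th_\Val^\eta$ is denumerable (being built from the countable set $\TypeVar_0$ of atoms together with the finitary constructors $\to$, $\Inter$, $T$, and the constants $\omega_\Val,\omega_\Comp$); hence, by the Proposition preceding Theorem \ref{thr:representation}, $\Filt_{D_*}$ is an $\omega$-algebraic lattice, so it lies in $\Dcat$. Similarly $\TT D_* = \Filt(\Th_\Val^{\eta,T})$ is an object of $\Dcat$.

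Next I would invoke Lemma \ref{lem:filtMonad} to conclude that $(\TT, \Unit^\Filt, \Bind^\Filt)$ is a monad over $\Dcat$, giving the monadic component of the tuple. The third step is the reflexive-object condition. By Lemma \ref{lem:filtIso} we know $\Th_\Val^\eta = \Th_{D_* \to \TT D_*}$ and that this is a continuous functional type theory; hence Proposition \ref{prop:eats} applies with $D := D_*$ and $E := \TT D_*$, yielding the isomorphism $\Filt_{D_* \to \TT D_*} \simeq [\Filt_{D_*} \to \Filt_{\TT D_*}] = [D_* \to \TT D_*]$. The explicit Scott-continuous maps realizing this isomorphism are precisely $\Phi^\Filt$ and $\Psi^\Filt$ from equation (\ref{eq:PhiPsi}), since by the same Proposition they are inverse to one another on compact elements and therefore, by continuous extension, everywhere. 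In particular $\Phi^\Filt \circ \Psi^\Filt = \Id_{D_* \to \TT D_*}$, so the required equation in Definition \ref{def:T-model} holds; as a bonus, $\Psi^\Filt \circ \Phi^\Filt = \Id_{D_*}$, so $D_*$ is actually an extensional $T$-model.

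The only point demanding any care is the coherence of the domains of $\Phi^\Filt$ and $\Psi^\Filt$: one must verify that the chain $D_* = \Filt(\Th_\Val^\eta) = \Filt(\Th_{D_* \to \TT D_*}) \simeq [D_* \to \TT D_*]$ really does connect the filter domain over $\Th_\Val^\eta$ with the function space appearing in the signature of a $T$-model. This is exactly what the $\eta$-axioms added to $\Th_\Val$ in the paragraph preceding Lemma \ref{lem:filtIso} were designed to secure, and is the only non-bookkeeping step of the argument; once it is in place the corollary follows immediately from Lemma \ref{lem:filtMonad} together with the isomorphism just displayed.
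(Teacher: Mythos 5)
Your proposal is correct and follows exactly the route the paper intends: the corollary is meant to be read off directly from Lemma \ref{lem:filtMonad} (the monad structure) together with Lemma \ref{lem:filtIso} and Proposition \ref{prop:eats} (the isomorphism $D_* \simeq [D_* \to \TT D_*]$ realized by $\Phi^\Filt$ and $\Psi^\Filt$), which is precisely your assembly. Your observation that the model is in fact extensional also matches the paper's stated purpose for the $\eta$-axioms added to $\Th_\Val$.
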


Next we show that $D_*$ is a limit $T$-model, hence it admits 
a monadic type interpretation by Theorem \ref{thr:adequateTypeInterp}.
Let's stratify types according to the rank map: $r(\alpha) = r(\omega_\Val) = r(\omega_\Comp) = 0$, $r(\sigma\Inter\sigma') = \max(r(\sigma), r(\sigma'))$ (for
$\sigma\Inter\sigma' \in \ValType \cup \ComType$),
$r(\delta\to \tau) = \max(r(\delta)+1, r(\tau))$ and $r(T\delta) = r(\delta) + 1$. If $\Types$ is any language of intersection types, we set: 
$\Types_n = \Set{\sigma \in \Types \mid r(\sigma) \leq n}$.


\begin{lemma}\label{lem:limit}
 Let $\leq_n = ~ \leq \ \upharpoonright \! \Types_n \times \Types_n$ (for both $\leq_\Val$ and $\leq_\Comp$) and $\Th_n$ and 
 $D_n = \Filt(\Th_n)$ be the respective type theories and filter domains. Then the $D_n$ form a denumerable chain of domains
 such that $D_* = \lim_{\leftarrow} D_n$ is a limit $T$-model. 
 \end{lemma}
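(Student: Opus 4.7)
The plan is in three steps: (i) establish a denumerable chain structure on the $D_n$ via injection-projection pairs, (ii) show $D_* \simeq \invLim D_n$ as $\omega$-algebraic lattices, and (iii) verify that this chain matches the functorial construction of Section~\ref{sec:TypeInterp}, so that $D_*$ fits the hypotheses of Theorem~\ref{thr:adequateTypeInterp} and is thereby a limit $T$-model.

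For step~(i), the inclusion $\Types_n \subseteq \Types_{n+1}$ induces a projection $\pi_n: D_{n+1} \to D_n$ by $\pi_n(F) = F \cap \Types_n$; this is well defined since $\leq_n$ is the restriction of $\leq_{n+1}$ to $\Types_n$, so the intersection remains upward closed and closed under binary meets. The corresponding embedding $\varepsilon_n: D_n \to D_{n+1}$ sends $F \in D_n$ to its upward closure in $\Th_{n+1}$. One checks $\pi_n \circ \varepsilon_n = \Id_{D_n}$ (closing $F$ upward in $\Th_{n+1}$ and then restricting back to $\Types_n$ recovers $F$, because no new $\leq$-relations among rank-$\leq n$ types appear at level $n+1$) and $\varepsilon_n \circ \pi_n \sqsubseteq \Id_{D_{n+1}}$ (any filter contains the upward closure of its own projection). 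Both maps preserve directed unions, hence are Scott continuous.

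For step~(ii), every $\sigma \in \Types$ has finite rank, so every filter $F \in D_*$ is determined by the coherent family $(F \cap \Types_n)_n$ w.r.t.\ the $\pi_n$; conversely, given a coherent sequence $(F_n)_n$, the union $\bigcup_n F_n$ is a filter in $D_*$. These two maps are mutually inverse and order-preserving, hence give an isomorphism $D_* \simeq \invLim D_n$ in $\Dcat$, and the $D_n$ form a denumerable chain of $\omega$-algebraic lattices.

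For step~(iii), the content is to verify $D_{n+1} \simeq [D_n \to TD_n]$ for all $n$. Up to $\leq_{n+1}$-equivalence, value types of rank $\leq n+1$ are generated by arrows $\delta \to \tau$ with $\delta \in \ValType_n$ and $\tau \in \ComType_{n+1}$, while $\ComType_{n+1}$, being the intersections of $T\delta$'s with $\delta \in \ValType_n$, is precisely the type theory of $TD_n = \Filt(\Th_n^T)$. Hence $\Th_{n+1}$ restricted to value types is a continuous functional type theory in the sense of Definition~\ref{def:eats}, exactly as in the argument for Lemma~\ref{lem:filtIso}; Proposition~\ref{prop:eats} then yields $D_{n+1} \simeq [D_n \to TD_n]$, i.e.\ the step $D_{n+1} = \mathbf{F}(D_n)$ of the functorial chain of Section~\ref{sec:TypeInterp}. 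The main obstacle is the extensionality axioms of $\Th_\Val^\eta$, which equate atoms $\alpha$ of rank $0$ with arrow types of rank $1$: one has to check that, when passed through the quotient by $\leq_{n+1}$, these equations induce exactly the embedding $D_0 \retract D_1$ arising from $\mathbf{F}$, so that the base of the chain matches the functorial iteration rather than introducing spurious identifications across levels. Once this verification is in place, the hypotheses of Theorem~\ref{thr:adequateTypeInterp} are satisfied and $D_*$ is a limit $T$-model supporting the monadic type interpretation.
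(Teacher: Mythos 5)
Your proof is correct and follows essentially the same route as the paper's: the same injection-projection pairs (embedding by upward closure in $\Th_{n+1}$, projection by restriction to rank $\leq n$), the same key observation that $\leq_{n+1}$ restricted to $\Types_n\times\Types_n$ coincides with $\leq_n$, and the same identification of $D_*$ with the inverse limit via unions of coherent families. Your step~(iii), verifying $D_{n+1} \simeq [D_n \to TD_n]$ so that the chain really is the functorial iteration of $\mathbf{F}$ (including the subtlety about the extensionality axioms at rank $0$), is left implicit in the paper's proof, so making it explicit is a useful addition rather than a divergence.
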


 \Proof 
Recall that $D_n = \Filt(\Th_n^\eta)$. Define $\varepsilon_n: D_n \to D_{n+1}$, $\pi_n:D_{n+1} \to D_n$ as follows:
\[\begin{array}{rcl@{~~}r}
\varepsilon_n(d) & = &  \Set{\delta \in \ValType_{n+1} \mid \exists \delta' \in d.\; \delta' \leq_{n+1} \delta} & \mbox{for $d \in D_n = \Filt(\Th_n^\eta)$}
\\ [1mm]
\pi_n(e) & = & \Set{\delta \in e \mid r(\delta) \leq n} & \mbox{for $e \in D_{n+1} = \Filt(\Th_{n+1}^\eta)$}
\end{array}
\]
Let $d \in D_n$ and $e = \varepsilon_n(d)$. 
Both $\varepsilon_n(d)$ an $\pi_n(e)$ are filters.  Clearly $d \subseteq \varepsilon_n(d) \cap \ValType_n$, hence
$d \subseteq \pi_n(e)$.

Vice versa if $\delta \in \pi_n(e)$ then $r(\delta) \leq n$ and for some $\delta' \in d$, $\delta' \leq_{n+1} \delta$.
But $\leq_{n+1} \upharpoonright \! \ValType_n \times \ValType_n = ~ \leq_n$, hence
$\delta' \leq_{n} \delta$ which implies $\delta \in d$ as the latter is a filter. We conclude that
$\pi_n(e) = d$, and therefore $\pi_n \circ \varepsilon_n = \Id_{D_n}$ by the arbitrary choice of $d$.

On the other hand if $d = \pi_n(e)$ for some arbitrary $e \in D_{n+1}$ and $\delta \in \varepsilon_n(d)$
it follows that there exists $\delta' \in e$ such that $r(\delta') \leq n$ and $\delta' \leq_{n+1} \delta$, which
implies that $\delta \in e$ being $e$ a filter w.r.t. $\leq_{n+1}$. 

In conclusion $\varepsilon_n \circ \pi_n \Order \Id_{D_{n+1}}$ where $\Order$ is the point-wise ordering induced by subset inclusion. Combining with the previous equation we conclude that $(\varepsilon_n,\pi_n)$
is an injection-projection pair, and hence $D_* \subseteq \lim_{\leftarrow}D_n$. To see the inverse inclusion
just note that any filter $d \in D_*$ is the union of all its restrictions to the rank $n$, namely in $D_n$, and
hence is a filter in $D_\infty$. 
\QED
 
\begin{remark}\label{rem:limit}
It is not the case that $D_n \subseteq D_{n+1}$: indeed let $\delta \to \tau \in d \in D_n$ and take
any $\delta'$ such that $r(\delta') = n$; hence $r(\delta \Inter \delta' \to \tau) = n+1$ and $\delta \Inter \delta' \to \tau \not\in d$. But $\delta \to \tau \leq_{n+1} \delta \Inter \delta' \to \tau$, therefore the latter type belongs to any filter in $D_{n+1}$ including $d$. Then we see that $\varepsilon_n$ cannot be just set theoretic inclusion.
The very same example shows that $\pi_n(e) \subseteq e$ is a proper inclusion in general.
\end{remark}



\section{Soundness and completeness of the type system}\label{sec:soundandcompleteness}

In the following let us fix a $T$-model $D$ and assume that 
$\Sem{\cdot}^{TD}$ is a monadic type interpretation.
Also we assume $\xi \in \TypeEnv_D$ is admissible and $\rho \in \Env_D$. 

\begin{lemma}\label{lem:inclusion}
	Let $D$,
	$\Sem{\cdot}^{TD}$, and $\xi \in \TypeEnv_D$ as above.\\
	The couple $(D,\xi)$ \emph{preserves} $\leq_\Val$ and $\leq_\Comp$, that is: for all $\delta, \delta' \in \ValType$ and for all $\tau, \tau'\in \ComType$, one has:
	\[ \delta \leq_\Val \delta' ~ \Then ~ \Sem{\delta}^D_\xi \subseteq \Sem{\delta'}^D_\xi 
	\quad \mbox{and} \quad
	\tau \leq_\Comp \tau' ~ \Then ~ \Sem{\tau}^{TD}_\xi \subseteq \Sem{\tau'}^{TD}_\xi 
	\]
\end{lemma}

\begin{definition}\label{def:judgmentModel}
	\hfill
	
	\begin{enumerate}
		\item $\rho,\xi \models^D \Gamma$ if $\rho(x) \in \Sem{\Gamma(x)}^D_\xi$ for all $x \in \Dom(\Gamma)$
		\item $\Gamma \models^D V:\delta$ ($\Gamma \models^D M:\tau$) if $\rho,\xi \models^D \Gamma$ implies
		$\Sem{V}^D_\rho \in \Sem{\delta}^D_\xi$ ($\Sem{M}^{TD}_\rho \in \Sem{\tau}^{TD}_\xi$)
		\item $\Gamma \models V:\delta$ ($\Gamma \models M:\tau$) if $\Gamma \models^D V:\delta$ ($\Gamma \models^D M:\tau$) for all $D$. 
	\end{enumerate}
\end{definition}


\begin{theorem}[Soundness]\label{thr:soundness}
	\[ \Gamma \der V :\delta ~ \Then ~ \Gamma \models V:\delta \quad \mbox{and} \quad
	\Gamma \der M :\tau ~ \Then ~ \Gamma \models M:\tau
	\]
\end{theorem}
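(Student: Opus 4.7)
The plan is to proceed by simultaneous induction on the derivations of $\Gamma \der V:\delta$ and $\Gamma \der M:\tau$, following the pattern of Theorem \ref{thr:strict-soundness}. Most cases are handled exactly as there and carry over verbatim: the axiom rule is immediate from $\rho,\xi \models^D \Gamma$; rule $(\omega)$ uses $\Sem{\omega_\Val}^D_\xi = D$ and $\Sem{\omega_\Comp}^{TD}_\xi = TD$; rule $(\Inter I)$ follows from the intersection clauses of Definition \ref{def:interpretationtypes}; rule $(\leq)$ appeals to the preservation Lemma \ref{lem:inclusion}; rule $\IntrArr$ plugs an arbitrary $d \in \Sem{\delta}^D_\xi$ into the induction hypothesis on the body; and $\IntrUnit$ is precisely the implication $d \in \Sem{\delta}^D_\xi \Then \Unit d \in \Sem{T\delta}^{TD}_\xi$ required of any type interpretation.

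The only rule where the earlier proof really used strictness is $\ElimArr$, and this is where the work sits. Suppose we derive $\Gamma \der M \Bind V : \tau$ from $\Gamma \der M : T\delta$ and $\Gamma \der V : \delta \to \tau$, and let $\rho, \xi \models^D \Gamma$. By induction $\Sem{M}^{TD}_\rho \in \Sem{T\delta}^{TD}_\xi$ and $\Sem{V}^D_\rho \in \Sem{\delta \to \tau}^D_\xi$, and by Definition \ref{def:interpretation} we have $\Sem{M \Bind V}^{TD}_\rho = \Sem{M}^{TD}_\rho \Bind \Sem{V}^D_\rho$. If $\tau =_\Comp \omega_\Comp$ the conclusion is trivial, as $\Sem{\omega_\Comp}^{TD}_\xi = TD$. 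Otherwise Lemma \ref{lem:tau-neq-omega} gives some $\delta'' \in \ValType$ with $\tau =_\Comp T\delta''$, and Lemma \ref{lem:inclusion} then yields $\Sem{V}^D_\rho \in \Sem{\delta \to T\delta''}^D_\xi$. Applying the monadicity clause of Definition \ref{def:monadicTypeInterp} with the intermediate type instantiated to $\delta$, we get $\Sem{M}^{TD}_\rho \Bind \Sem{V}^D_\rho \in \Sem{T\delta''}^{TD}_\xi = \Sem{\tau}^{TD}_\xi$, as required.

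The main obstacle is therefore the $\ElimArr$ case: the loss of strictness forces us to transport membership through $\Bind$ using monadicity, and for this the target type $\tau$ must be (up to $=_\Comp$) of the form $T\delta''$ — which is exactly what Lemma \ref{lem:tau-neq-omega} ensures, with the degenerate $\omega_\Comp$ case dispatched separately. A minor bookkeeping point is that everything is stated for an arbitrary $T$-model $D$ equipped with a monadic interpretation and admissible $\xi$; the theorem is non-vacuous because Theorem \ref{thr:adequateTypeInterp} provides such interpretations on the inverse-limit $T$-model, and the filter construction of Section \ref{sec:filter} provides another one via $D_*$.
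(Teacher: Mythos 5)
Your proposal is correct and follows essentially the same route as the paper: simultaneous induction on derivations, with $\IntrUnit$ discharged by the unit clause of the type interpretation and $\ElimArr$ discharged by monadicity (Definition~\ref{def:monadicTypeInterp}). In fact you spell out the one step the paper leaves implicit in the $\ElimArr$ case --- that the monadicity clause only applies to targets of the form $T\delta''$, so one must first dispatch $\tau =_\Comp \omega_\Comp$ and otherwise invoke Lemma~\ref{lem:tau-neq-omega} together with Lemma~\ref{lem:inclusion} to rewrite $\tau$ as $T\delta''$ --- which is exactly the right bookkeeping.
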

\Proof
	By simultaneous induction on the derivations of $\Gamma\der V:\delta$ and $\Gamma\der M:\tau$.
	Rules $\IntrInter$ and $(\omega)$ are sound by a quick inspection on the definition 
	of type  interpretation.
	Rule $(\leq)$ is sound by lemma \ref{lem:inclusion}, as we proved that $(D,\xi)$ 
	preserves $\leq_\Val$ and $\leq_\Comp$.
	
	In order to prove soundness of tule $\IntrArr$, assume $\Gamma , x:\delta\der M:\tau$ and $\rho,\xi \models \Gamma$, one has to show that $\Sem{\lambda x.M}^D_\rho \in \Sem{\delta \to \tau}^D_\xi$. Let $d\in \Sem{\delta}^D_\xi$, the thesis is equivalent to say that $\Sem{M}^{TD}_{\xi[x\mapsto d]}\in \Sem{\tau}^{TD}_\xi$, by the arbitrary choice of $d$, because in any model this is the same as $\Sem{\lambda x.M}^D_\rho \cdot d \in \Sem{\tau}^{TD}_\xi$. If we set $F_D=\Set{ e\in D \mid \exists M, \rho .\; e=\Sem{\lambda x.M}^D_\rho}$ then the semantic interpretation of rule $\IntrArr$ says that $\Set{e\in F_D\mid \forall d.; d\in \Sem{\delta}^D_\xi \Rightarrow e\cdot d\in \Sem{\tau}^{TD}_\xi}\subseteq \Sem{\delta\to \tau}^d_\xi$.\\
	For what concerns rule $\IntrUnit$, suppose that the assertion is true for the derivation of $\Gamma \der V:\delta$, namely $\Gamma\models V:\delta$, and $\rho, \xi\models \Gamma$. To prove that $\Gamma \models \Unit V$, one has to show that $\Sem{\Unit V}^{TD}_\rho \in \Sem{T\delta}^{TD}_\xi$, that is equivalent to $\Unit \Sem{V}^{D}_\rho \in \Sem{T\delta}^{TD}_\xi$ and this true by  (i) of Theorem
	\ref{thr:adequateTypeInterp}
Soundness of rule $\ElimArr$ follows a very similar path, invoking propriety (ii) of of the same theorem.
\QED

Recall that $D_* = \Filt(\Th_\Val^\eta)$. 
Let $\xi_0 \in \TypeEnv_{D_*}$ be defined by $\xi_0(\alpha) = \Set{d \in D_* \mid \alpha \in d}
\cup \Set{\!\Up\omega_\Val}$, that is admissible.
Also let $\Sem{\cdot}^{D_*}$ and $\Sem{\cdot}^{TD_*}$ be the monadic interpretations of
Definition \ref{def:limitsem}.

\begin{lemma}\label{lem:xi_0}
\[\Sem{\delta}^{D_*}_{\xi_0} = \Set{d \in D_* \mid \delta \in d} \quad \mbox{and} \quad
\Sem{\tau}^{TD_*}_{\xi_0} = \Set{a \in TD_* \mid \tau \in a}.
\]
\end{lemma}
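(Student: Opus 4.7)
The plan is to prove both equalities by simultaneous induction on the structure of value types $\delta$ and computation types $\tau$, using the filter presentation of $D_*$ from Lemma~\ref{lem:filtIso}, the explicit filter-level monad $(\TT, \Unit^\Filt, \Bind^\Filt)$ from Lemma~\ref{lem:filtMonad}, and the limit construction of Definition~\ref{def:limitsem}. The easy cases are dispatched immediately. For $\omega_\Val$ and $\omega_\Comp$ both sides equal the whole domain, since every filter contains $\omega$. Intersections follow from the fact that $\sigma \Inter \sigma' \in d$ iff $\sigma \in d$ and $\sigma' \in d$, matching the intersection clause of Definition~\ref{def:approximateTypeInterp}. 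For a type variable $\alpha$, the environment $\xi_0$ is chosen so that the admissibility closure $cl_{D_n}(\xi_0(\alpha)_n)$ at each level collapses to the filters of $D_n$ containing $\alpha$, and the inverse limit reconstructs $\Set{d \in D_* \mid \alpha \in d}$.

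For the arrow case $\delta \to \tau$, I would unfold the semantics using the explicit application formula $d \cdot d' = \Set{\tau' \mid \exists\, \delta'' \to \tau' \in d,\; \delta'' \in d'}$ from Proposition~\ref{prop:eats}. The induction hypothesis on $\delta$ and $\tau$ rewrites the semantic interpretation as
\[
\Sem{\delta \to \tau}^{D_*}_{\xi_0} = \Set{d \in D_* \mid \forall d' \in D_*,\; \delta \in d' \Then \tau \in d \cdot d'}.
\]
Inclusion $(\supseteq)$ is immediate: if $\delta \to \tau \in d$ and $\delta \in d'$ then $\tau \in d \cdot d'$ by definition. For $(\subseteq)$, I plug in the principal filter $d' = \!\Up\delta$; the hypothesis yields some $\delta'' \to \tau \in d$ with $\delta \leq_\Val \delta''$, and contravariance of $\to$ combined with upward closure of $d$ gives $\delta \to \tau \in d$.

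The $T\delta$ case is the main obstacle, since $\Sem{T\delta}^{TD_*}_{\xi_0}$ is an inverse limit of inductively generated approximations built from $\Unit^\Filt$ and $\Bind^\Filt$. The inclusion $\Sem{T\delta}^{TD_*}_{\xi_0} \subseteq \Set{a \mid T\delta \in a}$ is a level-by-level verification via the explicit formulas: $\Unit^\Filt d$ contains $T\delta$ whenever $\delta \in d$, and $a \Bind^\Filt d$ contains $T\delta$ whenever $\delta' \to T\delta \in d$ and $T\delta' \in a$; combined with the induction hypothesis on $\delta$ and on the arrow types $\delta' \to T\delta$, and with admissibility under directed sups, this keeps the limit inside the target. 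The converse is the delicate direction. Given $a \in TD_*$ with $T\delta \in a$, I would use Lemma~\ref{lem:tau-neq-omega} to recognise each compact witness $\!\Up\tau \subseteq a$ with $\tau \leq_\Comp T\delta$ as some $\Unit^\Filt(\!\Up\delta'')$ with $\delta'' \leq_\Val \delta$, placing it in $\Sem{T\delta}^{TD_*}_{\xi_0}$ by the inductive characterisation of $\Sem{\delta}^{D_*}_{\xi_0}$. The remaining compact witnesses of $a$, which need not themselves contain $T\delta$, are absorbed by a $\Bind^\Filt$-decomposition through a carefully chosen filter in $\Sem{\delta' \to T\delta}^{D_*}_{\xi_0}$; admissibility and the limit definition then place $a$ inside $\Sem{T\delta}^{TD_*}_{\xi_0}$.
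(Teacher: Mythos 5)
Your overall strategy --- a simultaneous induction on types using the explicit filter-level formulas for $d \cdot d'$, $\Unit^\Filt$ and $\Bind^\Filt$ --- is the natural way to establish this lemma, and it is considerably more explicit than the paper's own proof, which is a one-liner appealing to Corollary \ref{cor:filtModel} and Lemma \ref{lem:inclusion}. The cases for atoms, $\omega$, intersections and arrows are handled correctly; in particular, instantiating the hypothesis at the principal filter $\Up\delta$ and then using contravariance of $\to$ together with upward closure of $d$ is exactly the right move for the $(\subseteq)$ direction of the arrow case.

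The gap is in the converse inclusion $\Set{a \in TD_* \mid T\delta \in a} \subseteq \Sem{T\delta}^{TD_*}_{\xi_0}$. Your plan --- recognise some compact witnesses of $a$ as $\Unit^\Filt(\Up\delta'')$ and have the remaining ones ``absorbed by a $\Bind^\Filt$-decomposition through a carefully chosen filter'' --- is not an argument: you neither exhibit the filter in $\Sem{\delta' \to T\delta}^{D_*}_{\xi_0}$ nor explain why such a decomposition reassembles into $a$ inside sets that Definition \ref{def:approximateTypeInterp} builds only as the union of a $\Unit$-clause and a $\Bind$-clause. In fact no $\Bind^\Filt$-decomposition is needed. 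By Lemma \ref{lem:tau-neq-omega} every computation type other than $\omega_\Comp$ is $=_\Comp T\delta'$ for some $\delta'$, and $T\delta' \Inter T\delta'' =_\Comp T(\delta' \Inter \delta'')$; hence for any filter $a$ with $T\delta \in a$ the set $d_a = \Set{\delta' \mid T\delta' \in a}$ is a filter of value types, $a = \Up\Set{T\delta' \mid \delta' \in d_a} = \Unit^\Filt(d_a)$, and $\delta \in d_a$, so $d_a \in \Sem{\delta}^{D_*}_{\xi_0}$ by the induction hypothesis and $a$ lands in the $\Unit$-clause at every level. This closes the case directly; as written, your treatment of this direction does not go through. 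A smaller point: your appeal to an ``induction hypothesis on the arrow types $\delta' \to T\delta$'' in the forward inclusion is not a structural induction on types, since $\delta' \to T\delta$ is not a subterm of $T\delta$; you must make explicit that the induction is primarily on the level $n$ of the approximants of Definition \ref{def:approximateTypeInterp}, as your phrase ``level-by-level'' only hints.
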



\Proof $D_*$ is a $T$-model by Corollary \ref{cor:filtModel}, and $\Sem{\cdot}^{D_*}$ and $\Sem{\cdot}^{TD_*}$ are  monadic, hence Lemma \ref{lem:inclusion} applies.
\QED

\begin{lemma}[Type Semantics Theorem]\label{lem:type-semantics}
For any $\rho \in \Env_{D_*}$:

\begin{enumerate}
\item $\Sem{V}^{D_*}_\rho = 
	\Set{\delta \in \ValType \mid \exists \Gamma. \; \rho,\xi_0 \models \Gamma \And \Gamma \der V:\delta}$
\item $\Sem{M}^{TD_*}_\rho = 
	\Set{\tau \in \ComType \mid \exists \Gamma. \; \rho,\xi_0 \models \Gamma \And \Gamma \der M:\tau}$
\end{enumerate}
\end{lemma}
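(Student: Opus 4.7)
The plan is to prove both clauses simultaneously by induction on the structure of the value $V$ and the computation $M$, with each case split into the two inclusions. Write $\mathcal{V}(V) = \{\delta \mid \exists \Gamma.\ \rho,\xi_0 \models \Gamma \And \Gamma \der V:\delta\}$ and similarly $\mathcal{C}(M)$ for the right-hand sides of (1) and (2).

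For the inclusion $\mathcal{V}(V) \subseteq \Sem{V}^{D_*}_\rho$ and $\mathcal{C}(M) \subseteq \Sem{M}^{TD_*}_\rho$, I would not need the induction at all: if $\Gamma \der V:\delta$ and $\rho,\xi_0 \models \Gamma$, then by the Soundness theorem (Theorem \ref{thr:soundness}) $\Sem{V}^{D_*}_\rho \in \Sem{\delta}^{D_*}_{\xi_0}$, and by Lemma \ref{lem:xi_0} this reads exactly $\delta \in \Sem{V}^{D_*}_\rho$; the computation case is identical, using that the monadic type interpretation at $D_*$ is provided by Theorem \ref{thr:adequateTypeInterp} together with Corollary \ref{cor:filtModel} and Lemma \ref{lem:limit}.

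The interesting direction is the reverse. For $V \equiv x$, if $\delta \in \rho(x)$ then $\Gamma = \{x:\delta\}$ satisfies $\rho,\xi_0 \models \Gamma$ (again by Lemma \ref{lem:xi_0}) and axiom $(Ax)$ types it. For $V \equiv \lambda x.N$, I would unfold $\Sem{\lambda x.N}^{D_*}_\rho = \Psi^\Filt(\metalambda d.\Sem{N}^{TD_*}_{\rho[x\mapsto d]})$, so by formula~(\ref{eq:PhiPsi}) any $\delta$ in this filter satisfies $\bigwedge_{i\in I}(\delta_i\to\tau_i) \leq_\Val \delta$ for some finite family with $\tau_i \in \Sem{N}^{TD_*}_{\rho[x\mapsto \Up\delta_i]}$; since $\delta_i \in \Up\delta_i$, the induction hypothesis on $N$ (applied in the environment $\rho[x\mapsto \Up\delta_i]$) yields a basis $\Gamma_i, x:\delta'_i$ deriving $N:\tau_i$ with $\delta_i \leq_\Val \delta'_i$, which by $(\leq)$ can be replaced by $x:\delta_i$; then $(\to I)$, pointwise intersection of the $\Gamma_i$, $(\Inter I)$ and $(\leq)$ yield $\Gamma \der \lambda x.N:\delta$, while $\rho,\xi_0 \models \Gamma$ follows because the value interpretations are closed under finite intersection. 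The cases $M \equiv \Unit V$ and $M \equiv N\Bind V$ proceed analogously from the explicit formulas for $\UnitSub{D_*}^\Filt$ and $\Bind^\Filt_{D_*,D_*}$ in Lemma \ref{lem:filtMonad}: any nontrivial $\tau$ in the resulting filter dominates a finite meet of atoms of the form $T\delta_j$ (respectively $\tau_j$ coming from a pair $T\delta_j \in \Sem{N}^{TD_*}_\rho$, $\delta_j\to\tau_j \in \Sem{V}^{D_*}_\rho$), and the induction hypotheses together with $(\Unit I)$ or $(\to E)$, then $(\Inter I)$ and $(\leq)$, finish the job; the degenerate case $\tau =_\Comp \omega_\Comp$ is handled by rule $(\omega)$.

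The main obstacle will be the abstraction case: one must be careful that the bases $\Gamma_i$ produced by the induction hypothesis at different arguments $\Up\delta_i$ can be coherently merged into a single $\Gamma$ satisfying $\rho,\xi_0 \models \Gamma$, and that the type the hypothesis assigns to $x$ can be tightened from the slightly larger $\delta'_i$ to $\delta_i$ without breaking the derivation. Both points are routine once one notes that finite intersections of types in a filter remain in that filter (so $\rho,\xi_0 \models \Gamma$ transports through $(\Inter I)$) and that $(\leq)$ is available on the antecedent side via the axiom rule; still, this is the place where the filter-structure of $D_*$, the shape of $\Psi^\Filt$, and the intersection rules of the type system must all be aligned, so it is where the bookkeeping is heaviest.
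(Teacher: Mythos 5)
Your proposal follows essentially the same route as the paper: the inclusions $\supseteq$ are obtained directly from Theorem \ref{thr:soundness} together with Lemma \ref{lem:xi_0}, and the inclusions $\subseteq$ by induction on $V$ and $M$, unfolding the filter-theoretic formulas for $\Psi^\Filt$, $\Unit^\Filt$ and $\Bind^\Filt$ and merging the bases produced by the induction hypotheses via pointwise intersection. Your explicit treatment of tightening $x:\delta'_i$ to $x:\delta_i$ in the abstraction case is exactly the step the paper glosses as ``no theoretical loss in supposing $\Gamma_i = \Gamma'_i, x:\delta_i$'', so the argument is correct and matches the paper's proof.
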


\Proof
By Theorem \ref{thr:soundness} and the fact that $D_*$ is a $T$-model and type interpretations are monadic, both inclusions $\supseteq$ follow by Lemma \ref{lem:xi_0}. To see inclusions $\subseteq$ we reason by induction over $V$ and $M$.
\begin{description}
\item Case $V \equiv x$: if $\delta \in \Sem{x}^{D_*}_\rho = \rho(x)$ take $\Gamma = x:\delta$. Then
	clearly $\Gamma\der x:\delta$. On the other hand
	$\rho,\xi_0 \models x:\delta$ if $\rho(x) \in \Sem{\delta}^{D_*}_{\xi_0}$ 
	that is if $\delta \in \rho(x)$ by Lemma \ref{lem:xi_0}, which holds by hypothesis.
\item Case $V \equiv \lambda x.M$: if $\delta \in \Sem{\lambda x.M}^{D_*}_\rho$, recall 
	from Definition \ref{def:interpretation} and Proposition \ref{prop:eats} that:
	\[\begin{array}{lll}
	\Sem{\lambda x.M}^{D_*}_\rho & = &
	\Psi^\Filt(\metalambda d. \; \Sem{M}^{TD_*}_{\rho[x \,\mapsto d]})  \\
	& = &
	 \Up \Set{\bigwedge_{i\in I} (\delta_i\to\tau_i) \mid \forall i \in I.\; \tau_i \in 
		\Sem{M}^{TD_*}_{\rho[x \,\mapsto \!\Up\,\delta_i]}}
	\end{array}
	\]
	By induction for all $i \in I$ there exists $\Gamma_i$ s.t. 
	$\rho[x \,\mapsto \!\Up\,\delta_i],\xi_0 \models \Gamma_i$ and
	$\Gamma_i \der M : \tau_i$. This implies that $\rho(x) \in \Up\,\delta_i$ hence
	there is no theoretical loss in supposing that $\Gamma_i = \Gamma'_i, x:\delta_i$.
	Let $\Gamma' = \bigwedge_{i\in I}\Gamma'_i$ be the pointwise intersection of the $\Gamma'_i$;
	it follows that $\Gamma', x:\delta_i \der M : \tau_i$ for all $i\in I$. Therefore by $\IntrArr$ we have
	$\Gamma' \der \lambda x.M : \delta_i \to \tau_i$ for all $i\in I$, so that by $\IntrInter$ we conclude
	that $\Gamma' \der \lambda x.M : \bigwedge_{i\in I} (\delta_i\to\tau_i)$, and the thesis follows
	by $\bigwedge_{i\in I} (\delta_i\to\tau_i) \leq_\Val \delta$ and rule $(\leq)$.
\item Case $M \equiv \Unit V$: if $\tau \in \Sem{\Unit V}^{TD_*}_\rho = \Unit \Sem{V}^{D_*}_\rho
	= \Up\Set{T\delta \mid \delta \in \Sem{V}^{D_*}_\rho}$. By induction
	there is $\Gamma$ such that $\rho,\xi_0 \models \Gamma$ and $\Gamma \der V:\delta$, from which
	it follows that $\Gamma \der \Unit V:T\delta$ by $\IntrUnit$.
\item Case $M \equiv M' \Bind V$: if $\tau \in \Sem{M' \Bind V}^{TD_*}_\rho$ where:
	\[\Sem{M' \Bind V}^{TD_*}_\rho = \Sem{M'}^{TD_*}_\rho \Bind \Sem{V}^{D_*}_\rho =
	\Up \Set{\tau \mid \exists \delta \to \tau \in \Sem{V}^{D_*}_\rho.\; T\delta \in \Sem{M'}^{TD_*}_\rho}
	\]
	By induction there exist $\Gamma'$ and $\Gamma''$ such that:
	\[\rho,\xi_0 \models \Gamma' \And \Gamma' \der M':T\delta \quad \mbox{and} \quad
	\rho,\xi_0 \models \Gamma'' \And \Gamma'' \der V:\delta\to\tau
	\]
	Now let $\Gamma = \Gamma' \Inter \Gamma''$; we have that $\rho,\xi_0 \models \Gamma'$
	implies that $\Gamma'(x) \in \rho(x)$ for all $x \in \Dom(\Gamma')$ by Lemma \ref{lem:xi_0},
	and similarly  $\Gamma''(y) \in \rho(y)$ or all $y \in \Dom(\Gamma'')$; hence
	for all $z \in \Dom(\Gamma) = \Dom(\Gamma') \cup \Dom(\Gamma'')$ we have
	$\Gamma(z) = \Gamma'(z) \Inter \Gamma''(z) \in \rho(z)$ since $\rho(z)$ is a filter.
	It follows that $\rho,\xi_0 \models \Gamma$ and, since $\Gamma \leq_\Val \Gamma',\Gamma''$
	that both $\Gamma \der M':T\delta$ and $\Gamma \der V:\delta\to\tau$, from which we obtain
	$\Gamma \der  M' \Bind V : \tau$ by $\ElimArr$.
\end{description}

\QED

\begin{theorem}[Completeness]\label{thr:completeness}
\[ \Gamma \models V:\delta  ~ \Then ~  \Gamma \der V :\delta \quad \mbox{and} \quad
	\Gamma \models M:\tau ~ \Then ~ \Gamma \der M :\tau. 
\]
\end{theorem}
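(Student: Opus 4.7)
\noindent\textbf{Proof proposal for Theorem \ref{thr:completeness}.} The plan is to exploit the filter model $D_*$ as a canonical model, together with Lemma \ref{lem:xi_0} and Lemma \ref{lem:type-semantics}, which together say that, under the specific type environment $\xi_0$, the denotation of a term in $D_*$ is literally the set of its provable types (over some suitable basis). Given a valid judgment, I instantiate validity in $D_*$ with a carefully chosen term environment that encodes $\Gamma$, and then read off a derivation from the Type Semantics Theorem.

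\medskip

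Concretely, for a basis $\Gamma$ define $\rho_\Gamma \in \Env_{D_*}$ by $\rho_\Gamma(x) = \;\Up\!\Gamma(x)$, where $\Gamma(x) = \omega_\Val$ whenever $x \notin \Dom(\Gamma)$, so that $\rho_\Gamma(x)$ is a principal filter in $D_*$. Then $\rho_\Gamma, \xi_0 \models^{D_*} \Gamma$: by Lemma \ref{lem:xi_0}, this amounts to $\Gamma(x) \in \rho_\Gamma(x) = \;\Up\!\Gamma(x)$, which is trivial. Now suppose $\Gamma \models V:\delta$. Instantiating validity at $(D_*, \rho_\Gamma, \xi_0)$ gives $\Sem{V}^{D_*}_{\rho_\Gamma} \in \Sem{\delta}^{D_*}_{\xi_0}$, which by Lemma \ref{lem:xi_0} means $\delta \in \Sem{V}^{D_*}_{\rho_\Gamma}$. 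Applying Lemma \ref{lem:type-semantics} then yields a basis $\Gamma'$ with $\rho_\Gamma, \xi_0 \models \Gamma'$ and $\Gamma' \der V:\delta$. The case $\Gamma \models M:\tau$ is entirely analogous, using the second clause of Lemma \ref{lem:type-semantics}.

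\medskip

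It remains to pass from $\Gamma' \der V:\delta$ (resp.\ $\Gamma' \der M:\tau$) to $\Gamma \der V:\delta$ (resp.\ $\Gamma \der M:\tau$). Observe that $\rho_\Gamma, \xi_0 \models \Gamma'$ means $\Gamma'(x) \in \rho_\Gamma(x) = \;\Up\!\Gamma(x)$ for all $x \in \Dom(\Gamma')$, i.e.\ $\Gamma(x) \leq_\Val \Gamma'(x)$ pointwise; in other words $\Gamma$ is a stronger basis than $\Gamma'$. One then invokes a routine \emph{basis strengthening lemma}: if $\Gamma' \der P:\sigma$ and $\Gamma(x) \leq_\Val \Gamma'(x)$ for every $x$, then $\Gamma \der P:\sigma$. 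This is proved by a straightforward simultaneous induction on the derivation of $\Gamma' \der V:\delta$ and $\Gamma' \der M:\tau$; the only nontrivial case is rule $(\mathit{Ax})$, where $\Gamma' \der x:\Gamma'(x)$ is replaced by the derivation $\Gamma \der x:\Gamma(x)$ followed by $(\leq)$ using $\Gamma(x) \leq_\Val \Gamma'(x)$. All other rules pass through unchanged.

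\medskip

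The main conceptual step is the Type Semantics Theorem itself, which has already been established as Lemma \ref{lem:type-semantics}; once that is available, completeness is essentially bookkeeping. The only potential subtlety is the basis strengthening step, which requires confirming that the basis $\Gamma'$ extracted from Lemma \ref{lem:type-semantics} really is pointwise above $\Gamma$ rather than merely compatible in some weaker sense. This is ensured precisely by our choice of $\rho_\Gamma$ as the principal filter $\,\Up\!\Gamma(\cdot)$: had we picked a larger environment (e.g.\ enlarging $\rho_\Gamma(x)$ to include further types), Lemma \ref{lem:xi_0} would not have forced $\Gamma(x) \leq_\Val \Gamma'(x)$. Hence the canonicity of $\rho_\Gamma$ is exactly what makes the argument go through, and the completeness statement follows.
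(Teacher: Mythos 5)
Your proposal is correct and follows essentially the same route as the paper: instantiate validity in the filter model $D_*$ at the environment $\rho_\Gamma(x) = \;\Up\!\Gamma(x)$ and $\xi_0$, use Lemma \ref{lem:xi_0} to turn membership in the type interpretation into membership of the type in the term's denotation, extract $\Gamma'$ via Lemma \ref{lem:type-semantics}, and conclude by strengthening the basis from $\Gamma'$ to $\Gamma$. The only difference is that you spell out the basis strengthening lemma explicitly, which the paper leaves implicit.
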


\Proof
We show the second implication as the first one is similar. Assume that $\Gamma \models M:\tau$,
then in particular we have $\Gamma \models^{D_*} M:\tau$. Let $\rho_\Gamma \in 
\Env_{D_*}$ be defined by $\rho_\Gamma(x) = \Up\Gamma(x)$. By construction, we
have $\rho_\Gamma,\xi_0 \models \Gamma$ and hence $\Sem{M}^{TD_*}_{\rho_\Gamma}\in \Sem{\tau}^{TD_*}_{\xi_0}$. Thus, $\tau \in \Sem{M}^{TD_*}_{\rho_\Gamma}$ by Lemma \ref{lem:xi_0}. Therefore, there exists $ \Gamma'$ such that $ \rho_\Gamma, \xi_0 \models \Gamma' \mbox{ and } \Gamma' \der M:\tau$ by Lemma \ref{lem:type-semantics}. 
\\Without loss of theoretical generality, we can assume $X:=\Dom \Gamma=\Dom \Gamma' = \FV (M)$. For all $x\in X$, $\Gamma'(x)\in\rho_\Gamma (x)=\Up\Gamma(x)$ implies that $\Gamma\leq_\Val \Gamma'$. From this last consideration we conclude that $\Gamma \der M:\tau$.

\QED

\begin{corollary}[Subject expansion]\label{cor:subjectExpansion}
If $\Gamma \der M:\tau$ and $N \Red M$ then $\Gamma \der N:\tau$.
\end{corollary}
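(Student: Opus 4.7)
The plan is to derive subject expansion as a straightforward consequence of the semantic triangle established in the previous sections: soundness takes derivability to validity, the interpretation is invariant under reduction, and completeness takes validity back to derivability.

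First, I would apply Theorem \ref{thr:soundness} to the hypothesis $\Gamma \vdash M : \tau$ to obtain $\Gamma \models M : \tau$, that is, for every $T$-model $D$ (where $\Sem{\cdot}^{TD}$ is a monadic type interpretation) and every $\rho, \xi$ with $\rho, \xi \models^D \Gamma$, we have $\Sem{M}^{TD}_\rho \in \Sem{\tau}^{TD}_\xi$. Next, I would invoke Proposition \ref{prop:soundness-of-interpretation}, which applies to \emph{any} $T$-model and shows that $N \Red M$ implies $\Sem{N}^{TD}_\rho = \Sem{M}^{TD}_\rho$ for every $\rho \in \Env_D$. Substituting this equality into the previous membership statement yields $\Sem{N}^{TD}_\rho \in \Sem{\tau}^{TD}_\xi$ under the same hypotheses on $\rho$ and $\xi$, which is exactly $\Gamma \models^D N : \tau$. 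Since $D$ and the interpretation were arbitrary among those witnessing validity, we obtain $\Gamma \models N : \tau$.

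Finally, I would apply Theorem \ref{thr:completeness} to $\Gamma \models N : \tau$ to conclude $\Gamma \vdash N : \tau$, which is the desired statement. The argument is entirely schematic and does not require any case analysis on the structure of $N$ or on which of the rules $\Betac$, $\BindRight$, or $\BindLeft$ was used, because all the combinatorial work has already been absorbed into Proposition \ref{prop:soundness-of-interpretation} and into the proofs of soundness and completeness.

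I do not anticipate any substantial obstacle: the only point that deserves a sanity check is that Proposition \ref{prop:soundness-of-interpretation} is stated for arbitrary $T$-models, including those whose type interpretation is monadic in the sense of Definition \ref{def:monadicTypeInterp}, so that the equality $\Sem{N}^{TD}_\rho = \Sem{M}^{TD}_\rho$ is available in precisely the class of models used to define validity. Since validity quantifies over all $T$-models, restricting to $D_*$ if one prefers is also harmless, and one could alternatively give a direct $D_*$-only proof by reading off types from $\Sem{N}^{TD_*}_\rho = \Sem{M}^{TD_*}_\rho$ via Lemma \ref{lem:type-semantics}.
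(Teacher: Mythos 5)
Your proposal is correct and follows essentially the same route as the paper: soundness, invariance of the interpretation under reduction (Proposition \ref{prop:soundness-of-interpretation}), and completeness. The paper merely inlines the completeness step by working directly in $D_*$ via Lemma \ref{lem:xi_0} and Lemma \ref{lem:type-semantics}, which is the alternative you already note at the end of your argument.
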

\Proof
Since $M\Red N$, for all model $D$ and every $\rho$, $\Sem{N}^D_\rho=\Sem{M}^D_\rho$ by Proposition \ref{prop:soundness-of-interpretation}. In particular, by assuming $D=D_*$, $\Sem{N}^{TD_*}_{\rho_\Gamma}=\Sem{M}^{TD_*}_{\rho_\Gamma}$. 
\[
\begin{array}{rcll}
\Gamma \der M:\tau & \Then & \tau \in \Sem{M}^{TD_*}_{\rho_\Gamma} & \mbox{by Theorem \ref{thr:soundness} and Lemma \ref{lem:xi_0}}\\ [1mm]
& \Then & \tau \in \Sem{N}^{TD_*}_{\rho_\Gamma}\\[1mm]
& \Then & \exists \Gamma' .\; \rho_{\Gamma},\xi_0 \models \Gamma' \mbox{ and } \Gamma' \der M:\tau & \mbox{by Lemma \ref{lem:type-semantics}}\\[1mm]
& \Then & \Gamma \der N:\tau & \mbox{as in proof of Theorem \ref{thr:completeness}}.

\end{array}
\] 
\QED

\section{Computational adequacy}\label{sec:adequacy}
In denotational semantics using domain theoretic models, computational adequacy is the property that exactly all divergent programs are interpreted by $\bot$.
This notion is relative to the operational semantics at hand and to the observational notion of convergence. For this to make sense in the present setting,
we introduce a notion of convergence inductively, and relate it to the reduction relation.

Programs are closed expressions; let $Term^0 = \ValTerm^0 \cup \ComTerm^0$ be the set of closed terms.

\begin{definition}\label{def:convergence}
	Let $\Downarrow \; \subseteq \ComTerm^0 \times \ValTerm^0$ be the smallest relation satisfying: 
	\[
	\begin{array}{c@{\hspace{1cm}}c}
		\prooftree 
		\vspace{0.3cm}
		\justifies		\Unit V \Downarrow V
		\endprooftree &
		\prooftree 
		M\Downarrow V \qquad N\Subst{V}{x}\Downarrow W
		\justifies
		M\Bind \lambda x.N\Downarrow W
		\endprooftree 
	\end{array}
	\]
\end{definition}

\begin{lemma}\label{lem:ConvRed}
For all $M \in \ComTerm^0$ and $V \in \ValTerm^0$ we have: 
\[M \Downarrow V \Iff M \RedStar \Unit V.\]
\end{lemma}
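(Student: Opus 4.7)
For the $(\Rightarrow)$ direction, I would proceed by a straightforward structural induction on the derivation of $M \Downarrow V$. The base case $\Unit V \Downarrow V$ gives $M \RedStar \Unit V$ reflexively. In the inductive case $M \Bind \lambda x.N \Downarrow W$ obtained from premises $M \Downarrow V$ and $N\Subst{V}{x} \Downarrow W$, the induction hypothesis supplies $M \RedStar \Unit V$ and $N\Subst{V}{x} \RedStar \Unit W$. Compatibility of $\Red$ under the context $\Hole{\cdot}{TD}\Bind \lambda x.N$ then lifts the first to $M \Bind \lambda x.N \RedStar \Unit V \Bind \lambda x.N$; one $\Betac$-step yields $N\Subst{V}{x}$; and concatenating with the second induction hypothesis gives $M \Bind \lambda x.N \RedStar \Unit W$, as required.

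For the $(\Leftarrow)$ direction, my plan is to establish an \emph{expansion lemma}: whenever $M \Red N$ and $N \Downarrow V$, also $M \Downarrow V$. Once this is in place, iterating it along the reduction $M \RedStar \Unit V$ and using the axiom $\Unit V \Downarrow V$ as the base of the iteration gives $M \Downarrow V$. To prove the expansion lemma, I would first dispatch the three root-level cases by rebuilding the derivation:
a root $\Betac$-step $\Unit W \Bind \lambda x.P \Red P\Subst{W}{x}$ is absorbed by applying rule~2 with $\Unit W \Downarrow W$ as the left premise and the hypothesis $P\Subst{W}{x}\Downarrow V$ as the right;
a root $\BindRight$-step $P \Bind \lambda x.\Unit x \Red P$ is handled by rule~2 using the hypothesis $P \Downarrow V$ together with $(\Unit x)\Subst{V}{x} = \Unit V \Downarrow V$;
a root $\BindLeft$-step $(L \Bind \lambda x.P)\Bind \lambda y.Q \Red L\Bind \lambda x.(P \Bind \lambda y.Q)$ is handled by unpacking the derivation on the contractum through two applications of rule~2 to extract intermediate witnesses $W_1, W_2$, and then repacking them on the redex side in the symmetric order, where the side condition $x \notin \FV(Q)$ ensures that the inner substitution on $Q$ is trivial and both shapes coincide. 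For reductions occurring in a proper subterm I would induct on the shape of the computation context $\CompCtx$, threading the induction hypothesis through each constructor ($\Unit \ValCtx$, $\CompCtx' \Bind V$, $M' \Bind \ValCtx$).

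The main obstacle will be the contextual case in which the reduction is driven deep into a value position through the production $\CompCtx = \Unit \ValCtx$ with $\ValCtx = \lambda x.\CompCtx'$, where redex and contractum $\Unit(\lambda x.\CompCtx'[M_0])$ and $\Unit(\lambda x.\CompCtx'[M_0'])$ differ only inside the abstraction. Since rule~1 is the only rule that can conclude $\Downarrow$ on a term with outermost $\Unit$, one must argue that the witness extracted for the contractum can also be used for the redex, which forces a careful inspection of how $\Downarrow$ interacts with $\Red$ under $\lambda$. The natural remedy is to strengthen the expansion statement into a joint induction over computations and values, so that internal reductions of values are absorbed symmetrically at the level of the big-step semantics; this is where the technical weight of the proof concentrates.
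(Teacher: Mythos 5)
Your $(\Rightarrow)$ direction is correct and is exactly the ``straightforward induction'' the paper alludes to: the base case is reflexivity of $\RedStar$, and the inductive case chains $M \Bind \lambda x.N \RedStar \Unit V \Bind \lambda x.N \Red N\Subst{V}{x} \RedStar \Unit W$ using compatibility and one $\Betac$-step. Your plan for $(\Leftarrow)$ --- a single-step expansion lemma iterated along the reduction sequence --- is also the natural route, and your treatment of the three root-level redexes is correct (in particular the repacking of the two witnesses in the $\BindLeft$ case, using $x\notin\FV(Q)$).

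However, the ``main obstacle'' you flag in your last paragraph is not a technicality to be absorbed by a stronger induction hypothesis: it is a genuine counterexample to the expansion lemma with a fixed witness, and in fact to the biconditional as literally stated. Take $M \equiv \Unit(\lambda x.(\Unit x \Bind \lambda y.\Unit y))$, a closed computation. Through the context $\Unit(\lambda x.\langle\cdot\rangle)$ one has $M \Red \Unit(\lambda x.\Unit x)$, so $M \RedStar \Unit V$ with $V \equiv \lambda x.\Unit x$; but the only rule applicable to a term of shape $\Unit W$ is the axiom, so the unique value with $M \Downarrow W$ is $W \equiv \lambda x.(\Unit x \Bind \lambda y.\Unit y) \not\equiv V$. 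No joint induction over computations and values can repair this, because $\Downarrow$ never evaluates under a $\lambda$ while $\Red$ does. The statement that is actually true (and the only form used downstream, e.g.\ in Lemma \ref{lem:Conv-nonTrivial} and the adequacy corollary) is the existential one, $M\!\Downarrow \;\Iff\; \exists V.\, M \RedStar \Unit V$, for which your expansion argument does go through in the $\Unit\ValCtx$ case (both sides converge, just to different values); alternatively one can keep the precise witness by reading equality of values up to conversion, or by restricting $\RedStar$ to reductions that do not descend into value positions. You should also be explicit about the induction measure in the $M \Bind \ValCtx$ case: there the hypothesis delivers $P'\Subst{U}{x} \Downarrow W$ and you need to expand $P\Subst{U}{x} \Red P'\Subst{U}{x}$, which is not a structurally smaller context, so the induction is better organized on the big-step derivation of the contractum rather than on $\CompCtx$. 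The paper's own two-line proof (``both implications are proved by straightforward inductions'') does not engage with any of this; your analysis is more honest, but as written it stops exactly at the point where the claimed statement fails.
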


\Proof Both implications are proved by straightforward inductions. Just observe that reduction doesn't introduce new free variables, so that
if $M$ is closed then $\Unit V$ and therefore $V$ are such.
\QED

When $M\Downarrow V$ we say that $M$ {\em converges} to $V$. We abbreviate $M\!\Downarrow \;\Iff\; \exists V.\; M\Downarrow V$ and say simply that
$M$ converges. The predicate $M\!\Downarrow$ is non trivial. Indeed consider the closed term:
\[\Omega_\Comp \equiv \Unit (\lambda x. \Unit x \Bind x) \Bind  (\lambda x. \Unit x \Bind x)
\]
that is a translation of the well known term $\Omega \equiv (\lambda x.xx)(\lambda x.xx)$ from ordinary $\lambda$-calculus. Then the only reduction out of
$\Omega_\Comp$ is
\[\Omega_\Comp  \Red  (\Unit x \Bind x)\Subst{\lambda x.\Unit x \Bind x}{x} \equiv \Omega_\Comp \qquad \mbox{by $\beta_c$}
\]
which is not of the shape $\Unit V$ for any $V\in \ValTerm$, hence $\Omega_\Comp \not\Downarrow$.

\medskip
We say that $\tau \in \ComType$ is {\em non trivial} if $\tau \neq_\Comp \omega_\Comp$. The goal is to prove the next theorem, characterizing convergent terms:

\begin{theorem}[Characterization]\label{thr:Conv-nonTrivial}
For all $M \in \ComTerm^0$ we have: 
\[M\!\Downarrow~\Iff~ \exists \,\mbox{non trivial}\ \tau .\, \der M:\tau.\]
\end{theorem}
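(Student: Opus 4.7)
\medskip
\noindent \textbf{Proof plan.}

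The $(\Rightarrow)$ direction is a direct consequence of subject expansion. Assume $M \Downarrow V$; by Lemma \ref{lem:ConvRed} we have $M \RedStar \Unit V$. The $(\omega)$ rule gives $\der V:\omega_\Val$, so by $\IntrUnit$ we get $\der \Unit V : T\omega_\Val$. The type $T\omega_\Val$ is non trivial: inspecting the axioms of $\Th_\Comp$ in Definition \ref{def:type-theories-Th_V-Th_C} shows that none of them derives $\omega_\Comp \leq_\Comp T\omega_\Val$. Iterating Corollary \ref{cor:subjectExpansion} along the chain $M \RedStar \Unit V$ yields $\der M : T\omega_\Val$, which is the desired non trivial typing.

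For the $(\Leftarrow)$ direction, suppose $\der M : \tau$ with $\tau \neq_\Comp \omega_\Comp$. By Lemma \ref{lem:tau-neq-omega}, $\tau \leq_\Comp T\omega_\Val$, so rule $(\leq)$ gives $\der M : T\omega_\Val$. It therefore suffices to prove, by a reducibility argument, the stronger claim that for every closed $M$, $\der M : T\omega_\Val$ implies $M\!\Downarrow$. To this end, define by induction on value and computation types two families of predicates on closed terms:
\[
\begin{array}{l@{\qquad}l}
\mathcal{V}_{\omega_\Val} = \mathcal{V}_\alpha = \ValTerm^0 &
\mathcal{V}_{\delta \Inter \delta'} = \mathcal{V}_\delta \cap \mathcal{V}_{\delta'} \\[1mm]
\mathcal{V}_{\delta \to \tau} = \{V \in \ValTerm^0 \mid \forall W \in \mathcal{V}_\delta.\; \Unit W \Bind V \in \mathcal{C}_\tau\} &
\mathcal{C}_{\omega_\Comp} = \ComTerm^0 \\[1mm]
\mathcal{C}_{\tau \Inter \tau'} = \mathcal{C}_\tau \cap \mathcal{C}_{\tau'} &
\mathcal{C}_{T\delta} = \{M \in \ComTerm^0 \mid \exists V \in \mathcal{V}_\delta.\; M \RedStar \Unit V\}
\end{array}
\]

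The proof proceeds through four auxiliary lemmas. First, \emph{reduction closure}: if $V \in \mathcal{V}_\delta$ (resp.\ $M \in \mathcal{C}_\tau$) and $V \RedStar V'$ (resp.\ $M \RedStar M'$), then $V' \in \mathcal{V}_\delta$ (resp.\ $M' \in \mathcal{C}_\tau$), by induction on types. Second, \emph{expansion closure} on computations: if $M \RedStar M'$ and $M' \in \mathcal{C}_\tau$, then $M \in \mathcal{C}_\tau$; the key case $\tau \equiv T\delta$ is immediate since $M' \RedStar \Unit V$ implies $M \RedStar \Unit V$. Third, \emph{monotonicity}: $\delta \leq_\Val \delta'$ implies $\mathcal{V}_\delta \subseteq \mathcal{V}_{\delta'}$ and similarly for $\mathcal{C}$. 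Fourth, the \emph{Fundamental Lemma}: writing $\rho \models \Gamma$ for a closing substitution with $\rho(x) \in \mathcal{V}_{\Gamma(x)}$ on $\Dom(\Gamma)$, $\Gamma \der V:\delta$ implies $V[\rho] \in \mathcal{V}_\delta$ and $\Gamma \der M:\tau$ implies $M[\rho] \in \mathcal{C}_\tau$. This is proved by mutual induction on derivations: rule $\IntrArr$ uses expansion closure to absorb a $\Betac$ step, since $\Unit W \Bind \lambda x.M[\rho] \Red M[\rho[x \mapsto W]] \in \mathcal{C}_\tau$ by the inner induction hypothesis; rule $\ElimArr$ uses expansion closure and the definition of $\mathcal{V}_{\delta \to \tau}$; rule $(\leq)$ uses monotonicity. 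Applying the Fundamental Lemma to closed $\der M : T\omega_\Val$ with the empty substitution yields $M \in \mathcal{C}_{T\omega_\Val}$, so $M \RedStar \Unit V$ for some $V$, hence $M\!\Downarrow$ by Lemma \ref{lem:ConvRed}.

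The main obstacle is the monotonicity step for the axiom $T\delta \Inter T\delta' \leq_\Comp T(\delta \Inter \delta')$. If $M \in \mathcal{C}_{T\delta} \cap \mathcal{C}_{T\delta'}$ then $M \RedStar \Unit V$ and $M \RedStar \Unit V'$ with $V \in \mathcal{V}_\delta$, $V' \in \mathcal{V}_{\delta'}$, but these need not be literally equal, since reduction may continue inside the bodies of the values. Here confluence (Theorem \ref{thm:confluence}) is indispensable: it provides a common reduct which, by inspection of the reduction axioms, must be of the form $\Unit V''$ with $V \RedStar V''$ and $V' \RedStar V''$; then reduction closure gives $V'' \in \mathcal{V}_\delta \cap \mathcal{V}_{\delta'} = \mathcal{V}_{\delta \Inter \delta'}$, proving $M \in \mathcal{C}_{T(\delta \Inter \delta')}$. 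This is the precise point where confluence, reduction closure, and the inductive definition of the predicates fit together.
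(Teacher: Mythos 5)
Your proposal is correct and follows essentially the same route as the paper: the forward direction via subject expansion applied along $M \RedStar \Unit V$ with the non trivial type $T\omega_\Val$, and the converse via Tait-style reducibility predicates (your $\mathcal{V}_\delta, \mathcal{C}_\tau$ coincide extensionally with the paper's $\Iinterp{\delta}_\I, \Iinterp{\tau}_\I$), together with monotonicity under $\leq_\Val,\leq_\Comp$, closure under expansion, and a fundamental lemma for closing substitutions. Your handling of the axiom $T\delta \Inter T\delta' \leq_\Comp T(\delta \Inter \delta')$ --- extracting a common reduct $\Unit V''$ by confluence rather than asserting that the two terms $\Unit V_i$ are normal forms --- is in fact slightly more careful than the paper's own Lemma \ref{lem:leq-Iinterp}, since reduction may continue inside the body of a value.
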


Toward the proof, and following the pattern of Tait's computability method, we introduce some auxiliary notions.

\begin{definition}\label{def:I-interp}
Let $\I:\TypeVar \to \Power \ValTerm^0$ be a map; then define $\Iinterp{\delta}_\I \subseteq \ValTerm^0$ and $\Iinterp{\tau}_\I \subseteq \ComTerm^0$ by induction
as follows:
\begin{enumerate}
\item $\Iinterp{\alpha}_\I = \I(\alpha)$
\item $\Iinterp{\delta \to \tau}_\I = \Set{V \in \ValTerm^0 \mid \forall M \in \Iinterp{T\delta}_\I.\; M \Bind V \in \Iinterp{\tau}_\I}$
\item $\Iinterp{T\delta}_\I = \Set{M \in \ComTerm^0 \mid \exists V \in \Iinterp{\delta}_\I. \, M \Downarrow V}$
\item $\Iinterp{\omega_\Val}_\I = \ValTerm^0$ and $\Iinterp{\omega_\Comp}_\I = \ComTerm^0$
\item $\Iinterp{\delta \Inter \delta'}_\I = \Iinterp{\delta}_\I \cap \Iinterp{\delta'}_\I$ and 
	$\Iinterp{\tau \Inter \tau'}_\I = \Iinterp{\tau}_\I \cap \Iinterp{\tau'}_\I$.
\end{enumerate}
\end{definition}

\begin{lemma}\label{lem:leq-Iinterp}
Let $\I$ be arbitrary. Then:
\begin{enumerate}
\item $\delta \leq_\Val \delta' \Then \Iinterp{\delta}_\I \subseteq  \Iinterp{\delta'}_\I$
\item $\tau \leq_\Comp \tau' \Then \Iinterp{\tau}_\I \subseteq  \Iinterp{\tau'}_\I$
\end{enumerate}
\end{lemma}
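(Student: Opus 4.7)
The plan is to establish both statements simultaneously by induction on the derivation of $\delta \leq_\Val \delta'$ in $\Th_\Val$ (respectively of $\tau \leq_\Comp \tau'$ in $\Th_\Comp$) from the axioms and rules of Definition~\ref{def:type-theories-Th_V-Th_C}. The generic preorder fragment (reflexivity, transitivity, the meet laws for $\Inter$ and monotonicity) is immediate because, by Definition~\ref{def:I-interp}, intersection of types is interpreted as set-theoretic intersection in both sorts. The top axioms $\delta \leq_\Val \omega_\Val$ and $\tau \leq_\Comp \omega_\Comp$ hold directly from $\Iinterp{\omega_\Val}_\I = \ValTerm^0$ and $\Iinterp{\omega_\Comp}_\I = \ComTerm^0$; the axiom $\omega_\Val \leq_\Val \omega_\Val \to \omega_\Comp$ holds since for any closed value $V$ and any $M \in \Iinterp{T\omega_\Val}_\I$ the term $M \Bind V$ is a closed computation, hence lies in $\Iinterp{\omega_\Comp}_\I$.

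For the two proper inference rules I would unfold the definitions and apply the induction hypothesis. In the arrow rule, $\delta' \leq_\Val \delta$ and $\tau \leq_\Comp \tau'$ give, by IH, $\Iinterp{\delta'}_\I \subseteq \Iinterp{\delta}_\I$ and $\Iinterp{\tau}_\I \subseteq \Iinterp{\tau'}_\I$; from the first inclusion clause~(iii) of Definition~\ref{def:I-interp} yields $\Iinterp{T\delta'}_\I \subseteq \Iinterp{T\delta}_\I$. Combining, if $V \in \Iinterp{\delta \to \tau}_\I$ and $M \in \Iinterp{T\delta'}_\I$, then $M \in \Iinterp{T\delta}_\I$, so $M \Bind V \in \Iinterp{\tau}_\I \subseteq \Iinterp{\tau'}_\I$, giving $V \in \Iinterp{\delta' \to \tau'}_\I$. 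The $T$-monotonicity rule of $\Th_\Comp$ is handled in the same way, directly from clause~(iii), and the factorisation axiom $(\delta \to \tau) \Inter (\delta \to \tau') \leq_\Val \delta \to (\tau \Inter \tau')$ follows at once from the arrow clause.

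The one genuinely delicate case is $T\delta \Inter T\delta' \leq_\Comp T(\delta \Inter \delta')$, and it is where I expect the main obstacle to lie. Given $M \in \Iinterp{T\delta}_\I \cap \Iinterp{T\delta'}_\I$ we obtain witnesses $V \in \Iinterp{\delta}_\I$ with $M \Downarrow V$ and $V' \in \Iinterp{\delta'}_\I$ with $M \Downarrow V'$. By Lemma~\ref{lem:ConvRed}, $M \RedStar \Unit V$ and $M \RedStar \Unit V'$; applying confluence (Theorem~\ref{thm:confluence}) and observing that no redex is ever created at the head of a $\Unit$-term, the common reduct must be of the form $\Unit V''$ with $V \RedStar V''$ and $V' \RedStar V''$, whence $M \Downarrow V''$. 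It then remains to show $V'' \in \Iinterp{\delta}_\I \cap \Iinterp{\delta'}_\I$. I would discharge this by a short side induction on the shape of value and computation types, establishing that each $\Iinterp{\delta}_\I$ and $\Iinterp{\tau}_\I$ is stable under the internal reduction of its elements: the arrow clause uses that $M \Bind V \Red M \Bind V''$ is an instance of rule~(\ref{eq:compatibleClosure}), the intersection clause is immediate, the $T\delta$ clause again reduces via confluence to reduction inside a value, and the atomic case holds under the standing Tait-style assumption that each candidate $\I(\alpha)$ is closed under reduction (an assumption which is harmless in the applications of the lemma towards Theorem~\ref{thr:Conv-nonTrivial}). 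With this auxiliary closure, $V'' \in \Iinterp{\delta}_\I \cap \Iinterp{\delta'}_\I$ and $M \Downarrow V''$ together give $M \in \Iinterp{T(\delta \Inter \delta')}_\I$, concluding the case.
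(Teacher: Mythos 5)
Your treatment of every case except $T\delta \Inter T\delta' \leq_\Comp T(\delta \Inter \delta')$ coincides with the paper's: those cases are indeed immediate from the set-theoretic reading of $\Inter$, the clauses for $\omega_\Val$, $\omega_\Comp$ and the arrow, and the induction hypothesis. You have also correctly located the only delicate point. But your handling of it introduces a genuine gap relative to the statement as given: the lemma asserts the inclusions for \emph{arbitrary} $\I$, whereas your argument only goes through under the added hypothesis that each $\I(\alpha)$ is closed under reduction. The auxiliary ``stability under internal reduction'' of the sets $\Iinterp{\delta}_\I$ and $\Iinterp{\tau}_\I$ that you invoke is a forward-closure property (distinct from the saturation of Lemma~\ref{lem:saturated}, which is backward closure), and its base case fails for an arbitrary candidate assignment $\I(\alpha) \subseteq \ValTerm^0$. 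So as written you prove a weaker lemma, and the weakening would have to be propagated into Lemma~\ref{lem:adequacy} and the proof of Theorem~\ref{thr:Conv-nonTrivial}; it happens to be harmless there, but it is not the statement you were asked to prove.

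The detour is unnecessary. The missing observation is that a closed computation converges to at most one value: from $M \Downarrow V_1$ and $M \Downarrow V_2$ one already gets $V_1 \equiv V_2$, so the single witness lies in $\Iinterp{\delta}_\I \cap \Iinterp{\delta'}_\I = \Iinterp{\delta \Inter \delta'}_\I$ and the case closes for arbitrary $\I$, with no appeal to reduction-closure of the candidates. This uniqueness can be seen directly from Definition~\ref{def:convergence} --- the applicable rule and its premises are determined by the shape of the closed term, so $\Downarrow$ is a partial function --- and it is how the paper argues (via Lemma~\ref{lem:ConvRed} and confluence, identifying the two reducts $\Unit V_1$ and $\Unit V_2$). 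Your own analysis of the common reduct, namely that a $\Unit$-term only ever reduces to another $\Unit$-term, is exactly the observation needed to make that identification; you should use it to conclude $V_1 \equiv V_2$ rather than to launch the side induction on reduction-stability.
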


\Proof By checking axioms and rules in Definition \ref{def:type-theories-Th_V-Th_C}. The only non trivial cases concern the arrow and $T$-types.

\begin{description}
\item Let $V \in \Iinterp{(\delta \to \tau_1) \Inter (\delta\to \tau_2) }_\I = \Iinterp{\delta \to \tau_1}_\I \cap   \Iinterp{\delta \to \tau_2}_\I$, then for all
	$M \in \Iinterp{T\delta}_\I$ we have $M \Bind V \in \Iinterp{\tau_i}_\I$ for both $i=1,2$; hence 
	$M \Bind V \in \Iinterp{\tau_1}_\I \cap  \Iinterp{\tau_2}_\I =  \Iinterp{\tau_1 \Inter \tau_2}_\I$.  
	
\item Suppose that $\delta_1 \leq_\Val \delta_2$ and let $M \in \Iinterp{T\delta_1}_\I$; then there exists $V \in \Iinterp{\delta_1}_\I$ such that $M \Downarrow V$.
	By induction $\Iinterp{\delta}_\I \subseteq  \Iinterp{\delta'}_\I$ so that immediately we have $M \in \Iinterp{T\delta_2}_\I$.
	
\item Let $M \in \Iinterp{T\delta_1 \Inter T\delta_2}_\I = \Iinterp{T\delta_1}_\I \cap \Iinterp{T\delta_2}_\I$. Then there exists $V_1 \in \Iinterp{\delta_1}_\I$ and
	$V_2 \in \Iinterp{\delta_2}_\I$ such that $M \Downarrow V_1$ and $M \Downarrow V_2$. By Lemma \ref{lem:ConvRed} we have
	$M \RedStar \Unit V_i$ for both $i=1,2$ and these terms are in normal form; hence $V_1 \equiv V_2$ by Theorem \ref{subsec:confluence}. It follows that
	there exists a unique $V \in \Iinterp{\delta_1}_\I \cap \Iinterp{\delta_2}_\I = \Iinterp{\delta_1 \Inter \delta_2}_\I$ such that $M \RedStar \Unit V$, hence
	$M \in \Iinterp{T(\delta_1\Inter\delta_2)}_\I$.
	
\item Suppose that $\delta_2 \leq_\Val \delta_1$ and $\tau_1 \leq_\Comp \tau_2$. Let $V \in \Iinterp{\delta_1 \to \tau_1}_\I$ and $M \in \Iinterp{T\delta_2}_\I$; by
	the above $M \in \Iinterp{T\delta_1}_\I$ so that $M \Bind V \in \Iinterp{\tau_1}_\I$. By induction $\Iinterp{\tau_1}_\I \subseteq  \Iinterp{\tau_2}_\I$ 
	hence $M \Bind V \in \Iinterp{\tau_2}_\I$ so that $V \in \Iinterp{\delta_2 \to \tau_2}_\I$ by the choice of $M$.
\end{description} 
\QED

\begin{corollary}\label{cor:leq-Iinterp}
$T\omega_\Val$ is non trivial, and for all $\tau\in \ComType$ it is non trivial if and only if $\tau \leq T\omega_\Val$.
\end{corollary}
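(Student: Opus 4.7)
The corollary splits naturally into two parts: first, that $T\omega_\Val$ itself is non trivial, and second, that non-triviality of an arbitrary $\tau$ is equivalent to $\tau \leq_\Comp T\omega_\Val$. My plan is to establish the first part using the divergent term $\Omega_\Comp$ together with the semantic interpretation $\Iinterp{\cdot}_\I$, and then reduce the two implications of the equivalence to Lemma \ref{lem:tau-neq-omega} and to the first part respectively.

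For the first part, the aim is to show $\omega_\Comp \not\leq_\Comp T\omega_\Val$, since $T\omega_\Val \leq_\Comp \omega_\Comp$ always holds ($\omega_\Comp$ being the top of $\Th_\Comp$), and thus the failure of the reverse inequality is precisely $T\omega_\Val \neq_\Comp \omega_\Comp$. I would fix any $\I:\TypeVar \to \Power \ValTerm^0$ and observe that $\Iinterp{T\omega_\Val}_\I = \{M \in \ComTerm^0 \mid M\!\Downarrow\}$, while $\Iinterp{\omega_\Comp}_\I = \ComTerm^0$. The closed term $\Omega_\Comp$ discussed after Lemma \ref{lem:ConvRed} is closed and satisfies $\Omega_\Comp \not\Downarrow$, so $\Omega_\Comp \in \Iinterp{\omega_\Comp}_\I \setminus \Iinterp{T\omega_\Val}_\I$. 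By the contrapositive of Lemma \ref{lem:leq-Iinterp}(2), this rules out $\omega_\Comp \leq_\Comp T\omega_\Val$.

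For the second part, the forward implication ($\tau$ non trivial $\Rightarrow \tau \leq_\Comp T\omega_\Val$) is exactly the content of Lemma \ref{lem:tau-neq-omega}, so nothing new is needed there. For the backward implication, suppose $\tau \leq_\Comp T\omega_\Val$ but, for contradiction, $\tau =_\Comp \omega_\Comp$. Then $\omega_\Comp \leq_\Comp \tau \leq_\Comp T\omega_\Val$, contradicting the first part. Hence $\tau \neq_\Comp \omega_\Comp$, as required.

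The only non-routine step is the first part: establishing that $T\omega_\Val$ is non trivial. Once it is in place, the equivalence is purely formal. The key technical observation is that $\Omega_\Comp$, whose divergence is independent of any typing and was already verified syntactically, serves as a uniform semantic separator between the interpretations of $\omega_\Comp$ and $T\omega_\Val$ under \emph{every} choice of $\I$; this is precisely the kind of application for which Lemma \ref{lem:leq-Iinterp} was designed.
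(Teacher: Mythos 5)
Your proof is correct and follows essentially the same route as the paper's: both establish non-triviality of $T\omega_\Val$ by using $\Omega_\Comp$ to separate $\Iinterp{\omega_\Comp}_\I = \ComTerm^0$ from $\Iinterp{T\omega_\Val}_\I = \Set{M \mid M\!\Downarrow}$ via Lemma \ref{lem:leq-Iinterp}, and both dispatch the equivalence using Lemma \ref{lem:tau-neq-omega}. Your write-up merely spells out the backward implication of the equivalence (which the paper leaves implicit), and that step is handled correctly.
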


\Proof Where $T\omega_\Val =_\Comp \omega_\Comp$, by Lemma \ref{lem:leq-Iinterp} we would have $\Iinterp{T\omega_\Val}_\I = \Iinterp{\omega_\Comp}_\I$,
for any $\I$. But $\Iinterp{T\omega_\Val}_\I = \Set{M \in \ComTerm^0 \mid M \!\!\Downarrow} \neq \ComTerm^0 = \Iinterp{\omega_\Comp}_\I$ since
$\Omega_\Comp\!\not\Downarrow$.
The remaining part of the thesis now follows from Lemma \ref{lem:tau-neq-omega}.
\QED

We are now in place to show the only if part of Theorem \ref{thr:Conv-nonTrivial}.

\begin{lemma}\label{lem:Conv-nonTrivial}
$M\!\Downarrow~\Then~  \exists \,\mbox{non trivial}\ \tau .\, \der M:\tau$
\end{lemma}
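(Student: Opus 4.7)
The plan is to reduce the claim to subject expansion (Corollary \ref{cor:subjectExpansion}) and Lemma \ref{lem:ConvRed}. The point is that we don't need to track any fine structural information during the induction on $\Downarrow$: we only need to exhibit \emph{one} non-trivial witness type, and the simplest candidate, namely $T\omega_\Val$, always works.

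First, from $M \Downarrow V$ I would invoke Lemma \ref{lem:ConvRed} to deduce $M \RedStar \Unit V$. Second, for the closed value $V \in \ValTerm^0$ we have $\der V : \omega_\Val$ by the rule $(\omega)$. Applying the rule $\IntrUnit$ then yields $\der \Unit V : T\omega_\Val$. Third, I would iterate Corollary \ref{cor:subjectExpansion} along each step of the reduction $M \RedStar \Unit V$ (subject expansion is stated for one step, but extends to $\RedStar$ by an immediate induction on the length of the reduction), obtaining $\der M : T\omega_\Val$. Finally, $T\omega_\Val$ is non-trivial by Corollary \ref{cor:leq-Iinterp}, which concludes the proof.

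There is no real obstacle here; everything needed has already been established. The only point that deserves a brief mention in the write-up is the trivial extension of Corollary \ref{cor:subjectExpansion} from one-step to multi-step reduction, which is immediate by induction on the length of $M \RedStar \Unit V$.

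An alternative direct induction on the derivation of $M \Downarrow V$ would also work: the base case $\Unit V \Downarrow V$ is typed as above, and for $M \Bind \lambda x.N \Downarrow W$ coming from $M \Downarrow V$ and $N\Subst{V}{x} \Downarrow W$ one can again type $V$ with $\omega_\Val$, giving $\der \lambda x. N : \omega_\Val \to T\omega_\Val$ from $\der N : T\omega_\Val$ (which holds by the inductive hypothesis applied to $N\Subst{V}{x}$ together with subject expansion along $\Unit V \Bind \lambda x.N \Red N\Subst{V}{x}$), and then combining with $\der M : T\omega_\Val$ via $\ElimArr$ yields $\der M \Bind \lambda x.N : T\omega_\Val$. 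However the subject-expansion route above is strictly shorter and is the one I would write up.
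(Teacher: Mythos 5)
Your proof is correct and follows essentially the same route as the paper's: apply Lemma \ref{lem:ConvRed} to get $M \RedStar \Unit V$, derive $\der \Unit V : T\omega_\Val$ via $(\omega)$ and $\IntrUnit$, pull the typing back to $M$ by subject expansion, and invoke Corollary \ref{cor:leq-Iinterp} for non-triviality. Your explicit remark that Corollary \ref{cor:subjectExpansion} must be iterated over the multi-step reduction is a detail the paper leaves implicit, but the argument is the same.
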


\Proof If $M\!\Downarrow$ then $M \RedStar V$ for some $V \in \ValTerm^0$ by Lemma \ref{lem:ConvRed}; now $\der V:\omega_\Val$ so that
$\der \Unit V: T\omega_\Val$ by rule $\IntrUnit$; it follows that $\der M:T\omega_\Val$ by  
Corollary \ref{cor:subjectExpansion}, where $T\omega_\Val$ is non trivial by Corollary \ref{cor:leq-Iinterp}.
\QED

We say that a subset $X \subseteq \ComTerm^0$ is {\em saturated} if for all $M \in \ComTerm^0$, $M \Red N$ and $N \in X$ imply $M \in X$.

\begin{lemma}\label{lem:saturated}
For all $\tau \in \ComType$ and $\I$ the set $\Iinterp{\tau}_\I$ is saturated.
\end{lemma}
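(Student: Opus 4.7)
The plan is to proceed by structural induction on $\tau \in \ComType$. Recall that the grammar for computation types has only three clauses ($T\delta$, $\tau\Inter\tau'$, $\omega_\Comp$), so there are three cases, and no induction on value types is required since $\Iinterp{\delta}_\I$ enters only as a parameter inside the witness condition for $\Iinterp{T\delta}_\I$.

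The base cases are essentially free. For $\tau \equiv \omega_\Comp$, we have $\Iinterp{\omega_\Comp}_\I = \ComTerm^0$, which is trivially saturated. For the intersection case $\tau \equiv \tau_1 \Inter \tau_2$, saturation of $\Iinterp{\tau_1}_\I$ and $\Iinterp{\tau_2}_\I$ (by the induction hypothesis) transfers directly to their intersection $\Iinterp{\tau_1 \Inter \tau_2}_\I$.

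The heart of the argument is the case $\tau \equiv T\delta$. Suppose $M \Red N$ and $N \in \Iinterp{T\delta}_\I$, so that there exists $V \in \Iinterp{\delta}_\I$ with $N \Downarrow V$. By Lemma \ref{lem:ConvRed}, $N \RedStar \Unit V$. Since a one-step reduction is a reduction sequence, $M \RedStar N$, and concatenating we obtain $M \RedStar \Unit V$. Applying the converse direction of Lemma \ref{lem:ConvRed}, $M \Downarrow V$, and hence $M \in \Iinterp{T\delta}_\I$ with the same witness $V$.

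I do not expect any real obstacle here: the only non-trivial step is the $T\delta$ case, and that reduces to noting that $\Downarrow$ is closed under expansion along $\RedStar$, which is exactly what Lemma \ref{lem:ConvRed} delivers. Notice that the argument explicitly uses that $\Red$ is one-step — a stronger saturation along $\RedStar$ would follow by iterated application of the one-step version, which is all we need for later use in the characterization theorem.
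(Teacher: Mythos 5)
Your proof is correct and follows essentially the same route as the paper's: induction on $\tau$, with the $\omega_\Comp$ and intersection cases immediate, and the $T\delta$ case handled by translating $\Downarrow$ into $\RedStar \Unit V$ via Lemma \ref{lem:ConvRed}, prepending the step $M \Red N$, and translating back. The only cosmetic quibble is that the intersection case is an inductive case rather than a base case, but the argument itself is exactly the intended one.
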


\Proof By induction over $\tau$. The case $\tau \equiv \omega_\Comp$ is trivial; the case $\tau \equiv \tau_1 \Inter \tau_2$ is immediate by induction.
Let $\tau \equiv T\delta$: then by hypothesis there exists  $V \in \Iinterp{\delta}_\I$ such that $N \!\!\Downarrow V$. By Lemma \ref{lem:ConvRed} we have
that $N \RedStar V$ so that $M \RedStar V$ and we conclude by the same lemma.
\QED

\begin{lemma}\label{lem:adequacy}
Let $\Gamma \der M:\tau$ where $\Gamma = \Set{x_1:\delta_1, \ldots , x_k:\delta_k}$ and $M\in \ComTerm$. For any $V_1, \ldots , V_k \in \ValTerm^0$ and $\I$, 
if $V_i \in \Iinterp{\delta_i}_\I$ for all $i = 1, \ldots , k$ then $M\Subst{V_1}{x_1}\cdots\Subst{V_k}{x_k} \in \Iinterp{\tau}_\I$.
\end{lemma}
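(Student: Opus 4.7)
The plan is to prove Lemma \ref{lem:adequacy} by induction on the derivation of $\Gamma \der M:\tau$, but simultaneously strengthening the statement to cover values as well: namely, if $\Gamma \der V:\delta$ then $V\Subst{V_1}{x_1}\cdots\Subst{V_k}{x_k} \in \Iinterp{\delta}_\I$ whenever $V_i \in \Iinterp{\delta_i}_\I$ for all $i$. This simultaneous induction mirrors the mutual definition of values and computations, and is standard in the computability-method pattern. Writing $\vec V/\vec x$ for the simultaneous substitution, the goal in each case is to show membership in the interpretation of the type in the conclusion, using closure properties of $\Iinterp{\cdot}_\I$.

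First I would dispatch the structural cases. The axiom rule gives $x_i\Subst{\vec V}{\vec x} \equiv V_i$, which is in $\Iinterp{\delta_i}_\I$ by assumption. The $(\omega)$ rule is trivial since $\Iinterp{\omega_\Val}_\I$ and $\Iinterp{\omega_\Comp}_\I$ are the full sets of closed terms. The $\IntrInter$ rule is immediate from the definition as an intersection. The $(\leq)$ rule follows from Lemma \ref{lem:leq-Iinterp}. The $\IntrUnit$ rule uses the inductive hypothesis on $V$ together with the base case of $\Downarrow$: $\Unit(V\Subst{\vec V}{\vec x}) \Downarrow V\Subst{\vec V}{\vec x}$, placing it in $\Iinterp{T\delta}_\I$ directly by its definition.

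The two interesting cases are $\IntrArr$ and $\ElimArr$. For $\ElimArr$, from the induction hypothesis we obtain $M\Subst{\vec V}{\vec x} \in \Iinterp{T\delta}_\I$ and $V\Subst{\vec V}{\vec x} \in \Iinterp{\delta \to \tau}_\I$; by the very definition of the arrow clause, $(M \Bind V)\Subst{\vec V}{\vec x} \in \Iinterp{\tau}_\I$. For $\IntrArr$, I want to show $\lambda x.(M\Subst{\vec V}{\vec x}) \in \Iinterp{\delta \to \tau}_\I$, i.e.\ that for every $N \in \Iinterp{T\delta}_\I$, $N \Bind \lambda x.(M\Subst{\vec V}{\vec x}) \in \Iinterp{\tau}_\I$. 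By definition of $\Iinterp{T\delta}_\I$ there is some $W \in \Iinterp{\delta}_\I$ with $N \Downarrow W$, hence $N \RedStar \Unit W$ by Lemma \ref{lem:ConvRed}. Extending the substitution, the induction hypothesis applied to $\Gamma, x:\delta \der M:\tau$ with $V_1,\ldots,V_k,W$ gives $M\Subst{\vec V}{\vec x}\Subst{W}{x} \in \Iinterp{\tau}_\I$. Using compatibility of $\Red$ and rule $\Betac$ we have
\[ N \Bind \lambda x.(M\Subst{\vec V}{\vec x}) \RedStar \Unit W \Bind \lambda x.(M\Subst{\vec V}{\vec x}) \Red M\Subst{\vec V}{\vec x}\Subst{W}{x}, \]
so saturation (Lemma \ref{lem:saturated}), iterated along this reduction, yields $N \Bind \lambda x.(M\Subst{\vec V}{\vec x}) \in \Iinterp{\tau}_\I$.

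The main obstacle is the $\IntrArr$ case, and more specifically a small bookkeeping point inside it: we must ensure that $W$ is closed (so that the substituted term is indeed a closed computation in the relevant set) and that $W$ does not clash with the bound variable $x$; both are routine since $N \in \ComTerm^0$ forces $W \in \ValTerm^0$, and we may assume $x \notin \Dom(\Gamma)$ by $\alpha$-renaming. Everything else is a direct unfolding of definitions, and the argument parallels the familiar computability proof for BCD, with the $\BindRight$ and $\BindLeft$ axioms playing no role in the inductive step since they are only invoked implicitly through saturation of $\Iinterp{\cdot}_\I$ under $\Red$.
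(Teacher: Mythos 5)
Your proposal is correct and follows essentially the same route as the paper's proof: the same strengthening to a simultaneous statement for values, the same dispatch of the structural rules, the same use of saturation (Lemma \ref{lem:saturated}) together with the reduction $N \Bind \lambda x.M'' \RedStar \Unit W \Bind \lambda x.M'' \Red M''\Subst{W}{x}$ in the $\IntrArr$ case, and the same direct appeal to the definition of $\Iinterp{\delta\to\tau}_\I$ in the $\ElimArr$ case. The bookkeeping points you flag (closedness of $W$ and freshness of the bound variable) are exactly the ones the paper handles by assuming $y\notin\vec{x}$.
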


\Proof We strength the thesis by adding that if $\Gamma \der W:\delta$ for $W \in \ValTerm$, then $W\Subst{V_1}{x_1}\cdots\Subst{V_k}{x_k} \in \Iinterp{\delta}_\I$ 
under the same hypotheses.
Then we reason by simultaneous induction over the derivations of $\Gamma \der M:\tau$ and $\Gamma \der W:\delta$. 
The cases of $(Ax)$ and $(\omega)$ are straightforward; cases $\IntrUnit$ and $\IntrInter$ are immediate by induction; case $(\leq)$ follows by induction
and Lemma \ref{lem:leq-Iinterp}. Let us abbreviate $M\Subst{\vec{V}}{\vec{x}} \equiv M\Subst{V_1}{x_1}\cdots\Subst{V_k}{x_k}$ and
similarly for $W\Subst{\vec{V}}{\vec{x}}$.

\begin{description}
\item Case $\IntrArr$: then the derivation ends by:
	\[\prooftree
		\Gamma, y:\delta' \der M':\tau'
	\justifies
	\Gamma \der \lambda y.M': \delta' \to \tau'
		\using  \IntrArr
	\endprooftree\]
	where $W \equiv  \lambda y.M'$ and $\delta \equiv \delta' \to \tau'$. Let $M'' \equiv M'\Subst{\vec{V}}{\vec{x}}$ and assume that
	$y\not\in \vec{x}$; to prove that $(\lambda y.M')\Subst{\vec{V}}{\vec{x}} \equiv \lambda y.M'' \in \Iinterp{\delta' \to \tau'}_\I$
	we have to show
	that $N \Bind \lambda y.M'' \in \Iinterp{\tau'}_\I$ for all $N \in \Iinterp{T\delta'}_\I$.
	
	Now if $N \in \Iinterp{T\delta'}_\I$ then there exists $V' \in \Iinterp{\delta'}_\I$ such that
	$N \Downarrow V'$. This implies that the hypothesis that $V_i \in \Iinterp{\delta_i}_\I$ for all $x_i:\delta_i \in \Gamma$ 
	now holds for the larger basis $\Gamma, y:\delta'$ so that by induction we have $M''\Subst{V'}{y} \in \Iinterp{\tau'}_I$. 	
	But
	\[N \Bind \lambda y.M'' \RedStar (\Unit V') \Bind \lambda y.M'' \Red M''\Subst{V'}{y}\]
	and the thesis follows since $\Iinterp{\tau'}_\I$ is saturated by Lemma \ref{lem:saturated}.

\item Case $\ElimArr$: then the derivation ends by:
	\[\prooftree
		\Gamma \der M': T\delta \quad \Gamma \der W':\delta\to \tau
	\justifies
		\Gamma \der M' \Bind W': \tau 
	\endprooftree\]
	where $M \equiv M' \Bind W'$. Let $M'' \equiv M'\Subst{\vec{V}}{\vec{x}}$ and $W''  \equiv W'\Subst{\vec{V}}{\vec{x}}$, so that
	$(M' \Bind W')\Subst{\vec{V}}{\vec{x}} \equiv M'' \Bind W''$. By induction $M'' \in \Iinterp{T\delta}_\I$
	and $W'' \in  \Iinterp{\delta\to \tau}_\I$ and the thesis follows by definition of the set $\Iinterp{\delta\to \tau}_\I$.
\end{description}
\QED

\noindent {\bf Proof of Theorem \ref{thr:Conv-nonTrivial}.} By Lemma \ref{lem:Conv-nonTrivial} it remains to show that if $\der M:\tau$ for
some non trivial $\tau$ then $M\!\!\Downarrow$. Since $M\in \ComTerm^0$ and the basis is empty, the hypothesis of Lemma \ref{lem:adequacy}
are vacuously true, so that we have $M \in \Iinterp{\tau}_\I$ for all $\I$. On the other hand, by Corollary \ref{cor:leq-Iinterp}, 
we know that $\tau \leq_\Comp T\omega_\Val$ since $\tau$ is non trivial. By Lemma \ref{lem:leq-Iinterp} it follows that
$M \in \Iinterp{\tau}_\I \subseteq \Iinterp{T\omega_\Val}_\I = \Set{N \in \ComTerm^0 \mid N\!\!\Downarrow}$ and we conclude.
\QED

In the next corollary we write $\Sem{M}^{TD_*}$ for $M \in \ComTerm^0$ omitting the term environment $\rho$ which is irrelevant.

\begin{corollary}[Computational Adequacy]
In the model $D_*$ we have that for any $M \in \ComTerm^0$:
\[M\!\Downarrow~\Iff~ \Sem{M}^{TD_*} \neq \bot_{TD_*}\]
\end{corollary}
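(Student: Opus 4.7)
The plan is to reduce the equivalence to the existence of a non trivial typing of $M$ and then invoke the Characterization Theorem (Theorem \ref{thr:Conv-nonTrivial}). Both directions will be essentially packaging of results already proved, so no new machinery is needed.

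First, I would identify $\bot_{TD_*}$ in the filter domain $TD_* = \Filt(\Th_\Val^T)$: the least filter is $\Up\omega_\Comp = \Set{\tau \in \ComType \mid \omega_\Comp \leq_\Comp \tau}$, namely the equivalence class of $\omega_\Comp$. Since $\Sem{M}^{TD_*}$ is always a filter containing this bottom, the condition $\Sem{M}^{TD_*} \neq \bot_{TD_*}$ is equivalent to the existence of some $\tau \in \Sem{M}^{TD_*}$ with $\tau \neq_\Comp \omega_\Comp$, that is, a non trivial $\tau$ belonging to $\Sem{M}^{TD_*}$.

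Next, I would translate membership into typability via the Type Semantics Theorem (Lemma \ref{lem:type-semantics}). Because $M$ is closed, $\Sem{M}^{TD_*}_\rho$ is independent of $\rho$, so I may fix any $\rho \in \Env_{D_*}$. The lemma then gives $\tau \in \Sem{M}^{TD_*}_\rho$ if and only if there exists $\Gamma$ with $\rho,\xi_0 \models \Gamma$ and $\Gamma \der M:\tau$. Since $\FV(M) = \emptyset$, a routine strengthening on the derivation (erasing the assumptions about variables that never appear in $M$, exactly as in the final step of the proof of Theorem \ref{thr:completeness}) yields $\der M:\tau$; conversely, if $\der M:\tau$ then the empty basis is trivially compatible with $\rho,\xi_0$, so $\tau \in \Sem{M}^{TD_*}_\rho$ by the same lemma (or directly by soundness).

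Combining the two observations, $\Sem{M}^{TD_*} \neq \bot_{TD_*}$ is equivalent to the existence of a non trivial $\tau$ with $\der M:\tau$, which by Theorem \ref{thr:Conv-nonTrivial} is equivalent to $M\!\Downarrow$, completing the proof. I do not foresee a real obstacle here: the heavy lifting (monadic type interpretation, limit construction of $D_*$, Tait-style realizability behind the Characterization Theorem) has already been done; the only delicate bookkeeping is the identification of $\bot_{TD_*}$ with $\Up\omega_\Comp$ and the passage from a typing with arbitrary basis to one with the empty basis, both of which are immediate for closed terms.
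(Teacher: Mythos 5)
Your proposal is correct and follows essentially the same route as the paper: both identify $\bot_{TD_*}$ with $\Up\omega_\Comp$, use the Type Semantics Theorem (Lemma \ref{lem:type-semantics}) to equate $\Sem{M}^{TD_*}$ with the set of types derivable for the closed term $M$, and conclude via the Characterization Theorem \ref{thr:Conv-nonTrivial}. Your extra care about passing from an arbitrary basis to the empty one is a detail the paper leaves implicit, but the argument is the same.
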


\Proof By Lemma \ref{lem:type-semantics}, $\Sem{M}^{TD_*}  = \Set{\tau \in \ComType \mid \;\der M:\tau}$. By Theorem \ref{thr:Conv-nonTrivial}
$M\!\Downarrow$ if and only if $\der M:\tau$ for some non trivial $\tau$; but
\[\bot_{TD_*} = \!\!\Up \omega_\Comp \subset \!\!\Up\tau \subseteq  \Sem{M}^{TD_*}\]
where the inclusion $\Up \omega_\Comp \subset \!\!\Up\tau$ is strict since $\tau$ is non trivial.
\QED


\newcommand{\MoggilambdaComp}{\lambda_{\Comp}}
\newcommand{\MoggiRed}{>}
\newcommand{\Betav}{\beta_v}
\newcommand{\IdRed}{\textit{id}}
\newcommand{\CompRed}{\textit{comp}}
\newcommand{\Trad}[1]{\ulcorner #1\urcorner}
\newcommand{\MvTrad}[1]{\llcorner #1\lrcorner^{\Val}}
\newcommand{\McTrad}[1]{\llcorner #1\lrcorner^{\Comp}}
\newcommand{\MTrad}[1]{\llcorner #1\lrcorner}

\section{The untyped computational $\lambda$-calculus: $\TFlambdaComp$ vs. Moggi's $\MoggilambdaComp$ calculus}\label{sec:MoggiTrasl}

\begin{definition}[Values and computations]\label{def:MoggiTerms}	
	The {\em Moggi's computational $\lambda$-calculus}, shortly $\MoggilambdaComp$, is a calculus of two sorts of expressions:
	\[\begin{array}{r@{\hspace{0.7cm}}rll@{\hspace{1cm}}l}
	
	\mbox{Terms}: & e, e' & ::= & v \mid n & \\ [1mm]
	\mbox{Values}: & v & ::= & x \mid \lambda x.e & \\ [1mm]
	\mbox{NonValues}: & n & ::= & \Let{x}{e}{e'} \mid ee' & \mbox{(computations)}
	\end{array}\]

\end{definition}

\begin{definition}[Reduction]\label{def:MoggiReduction}
	The reduction relation $\MoggiRed\; \subseteq\,\Term \times \Term$ is defined as follows:
	\[\begin{array}{l@{\hspace{0.4cm}}rll@{\hspace{0.4cm}}l}
	\Betav & (\lambda x.e)v & \MoggiRed & e\Subst{v}{x} \\ [1mm]
	\eta_v &  \lambda x. vx & \MoggiRed & v \\ [1mm]
	\IdRed &  \Let{x}{e}{x} & \MoggiRed & e \\ [1mm]
	\CompRed &  (\Let{x_2}{(\Let{x_1}{e_1}{e_2})}{e}) & \MoggiRed &  \\ [1mm]
	&  (\Let{x_1}{e_1}{(\Let{x_2}{e_2}{e})}) &  &  \\ [1mm]
	let_v &  (\Let{x}{v}{e}) & \MoggiRed & e\Subst{v}{x} \\ [1mm]
	let_{.1}&  ne & \MoggiRed & (\Let{x}{n}{xe}) \\ [1mm]
	let_{.2} &  vn & \MoggiRed & (\Let{x}{n}{vx} \\ [1mm]
	\end{array}\]
	where $e\Subst{v}{x}$ denotes the capture avoiding substitution of $v$ for all free occurrences of $x$ in $e$. 
\end{definition}

\begin{definition}[Compatible Closure]
	Let $M,N,P\in \Term$ and suppose $M\MoggiRed N$:
	\[\begin{array}{l@{\hspace{0.4cm}}rll@{\hspace{0.4cm}}l}
	 & \lambda x.M & \MoggiRed & \lambda x.N \\ [1mm]
	 & MP\MoggiRed NP &  & PM\MoggiRed PN \\ [1mm]
	 & \Let{x}{M}{P} & \MoggiRed & \Let{x}{N}{P} \\ [1mm]
	 & \Let{x}{P}{M} & \MoggiRed & \Let{x}{P}{N} \\ [1mm]
	\end{array}\]
\end{definition}

\subsection{Traslation of $\TFlambdaComp$ into $\MoggilambdaComp$}
Define a function from $(\TFlambdaComp)$-$\Term$ to $(\MoggilambdaComp)$ -$\Term$ as follows:\\
$\Trad{\cdot}:(\TFlambdaComp)-\Term \rightarrow (\MoggilambdaComp) -\Term$
\[\begin{array}{l@{\hspace{0.4cm}}rll@{\hspace{0.4cm}}l}
\Trad{x}\equiv x &  & \Trad{\lambda x.M}\equiv \lambda x.\Trad{M}  &  \\ [1mm]
\Trad{\Unit V}\equiv \Trad{V}&  &\Trad{M\Bind V}\equiv \Let{x}{\Trad{M}}{\Trad{V}x} \mbox{ (for $x$ fresh)}  & 
\end{array}\]

\begin{lemma}[Substitution lemma for $\Trad{\cdot}$]\label{SubLemTrad}
	Let $M,V,W \in (\TFlambdaComp)$-$\Term$
	\begin{enumerate}[label=(\roman*)]
		\item $\Trad{V\Subst{W}{x}}\equiv \Trad{V}\Subst{\Trad{W}}{x}$
		\item $\Trad{M\Subst{W}{x}}\equiv \Trad{M}\Subst{\Trad{W}}{x}$
	\end{enumerate}
\end{lemma}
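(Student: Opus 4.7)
The plan is to prove both clauses simultaneously by mutual induction on the structure of $V \in \ValTerm$ and $M \in \ComTerm$, exactly matching the mutually recursive grammar of $\TFlambdaComp$-terms. The proof is essentially routine, but some care is required about bound variables, since the translation of $M \Bind V$ introduces a fresh name $y$; throughout we adopt the standard Barendregt variable convention so that every bound variable differs from $x$ and is not free in $W$ or $\Trad{W}$.

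For clause (i), I would split on the form of $V$. If $V \equiv y$ is a variable, then either $y \equiv x$ (both sides reduce to $\Trad{W}$, using $\Trad{x} \equiv x$) or $y \not\equiv x$ (both sides equal $y$). If $V \equiv \lambda y.M$ with $y$ chosen disjoint from $x$ and $\FV(W)$, then
\[\Trad{(\lambda y.M)\Subst{W}{x}} \equiv \Trad{\lambda y.\,M\Subst{W}{x}} \equiv \lambda y.\,\Trad{M\Subst{W}{x}},\]
to which the induction hypothesis for clause (ii) applies, yielding $\lambda y.\,\Trad{M}\Subst{\Trad{W}}{x} \equiv (\lambda y.\Trad{M})\Subst{\Trad{W}}{x} \equiv \Trad{\lambda y.M}\Subst{\Trad{W}}{x}$, where the second identity uses $y \neq x$ and $y \notin \FV(\Trad{W})$.

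For clause (ii), I would split on the form of $M$. If $M \equiv \Unit V$, then $\Trad{(\Unit V)\Subst{W}{x}} \equiv \Trad{\Unit(V\Subst{W}{x})} \equiv \Trad{V\Subst{W}{x}}$, and the induction hypothesis of clause (i) gives $\Trad{V}\Subst{\Trad{W}}{x} \equiv \Trad{\Unit V}\Subst{\Trad{W}}{x}$. If $M \equiv N \Bind V$, then picking the fresh variable $y$ in the definition of the translation to satisfy $y \neq x$ and $y \notin \FV(\Trad{W})$, we compute
\[\Trad{(N \Bind V)\Subst{W}{x}} \equiv \Let{y}{\Trad{N\Subst{W}{x}}}{\Trad{V\Subst{W}{x}}\,y},\]
and applying the induction hypotheses on $N$ (clause (ii)) and $V$ (clause (i)) this rewrites to $\Let{y}{\Trad{N}\Subst{\Trad{W}}{x}}{(\Trad{V}\Subst{\Trad{W}}{x})\,y}$, which by the freshness of $y$ coincides with $(\Let{y}{\Trad{N}}{\Trad{V}\,y})\Subst{\Trad{W}}{x} \equiv \Trad{N\Bind V}\Subst{\Trad{W}}{x}$.

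The only genuine point of vigilance — the ``main obstacle'', although very mild — is this freshness management: one must ensure that the fresh variable $y$ introduced when translating a bind, as well as the bound variable $y$ of any $\lambda y.M$, is chosen so as not to clash with $x$ nor to be captured by $\Trad{W}$. Once the Barendregt convention is adopted uniformly on both sides of the equation, every step reduces to unfolding the definition of $\Trad{\cdot}$ and of capture-avoiding substitution, so no further difficulty arises.
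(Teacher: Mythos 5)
Your proof is correct and follows essentially the same route as the paper's: a mutual structural induction on values and computations, unfolding the definition of $\Trad{\cdot}$ and of capture-avoiding substitution in each case. The only difference is that you make the freshness bookkeeping for the let-bound variable explicit, which the paper leaves implicit.
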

\Proof
By induction on the complexity of $(\TFlambdaComp)$-$\Term$:

	\[\begin{array}{l@{\hspace{0.4cm}}rll@{\hspace{0.4cm}}l}
& \Trad{x\Subst{W}{x}} & \equiv & \Trad{W}\equiv \Trad{x}\Subst{\Trad{W}}{x} \\ [2mm]
& \Trad{x\Subst{W}{x}} & \equiv & y\equiv \Trad{y}\Subst{\Trad{W}}{x} & \text{ where } y\not\equiv x \\ [2mm]
& \Trad{(\lambda z.M)\Subst{W}{x}} & \equiv & \Trad{\lambda z.M\Subst{W}{x}} \\ [1mm]
&  & \equiv & \lambda z.\Trad{M\Subst{W}{x}} \\ [1mm]
&  & \equiv & \lambda z.\Trad{M}\Subst{\Trad{W}}{x} & by i.h. \\ [1mm]
&  & \equiv & \Trad{\lambda z.M}\Subst{\Trad{W}}{x}
\end{array}\]

	\[\begin{array}{l@{\hspace{0.4cm}}rll@{\hspace{0.4cm}}l}
& \Trad{(\Unit V)\Subst{W}{x}} & \equiv & \Trad{\Unit V\Subst{W}{x}} \\ [1mm]
&  & \equiv & \Trad{V\Subst{W}{x}} & \text{ by i.h.} \\ [1mm]
&  & \equiv & \Trad{V}\Subst{\Trad{W}}{x} \\ [1mm]
&  & \equiv & \Trad{\Unit V}\Subst{\Trad{W}}{x} \\ [2mm]
& \Trad{(M\Bind V)\Subst{W}{x}} & \equiv & \Trad{M\Subst{W}{x}\Bind V\Subst{W}{x}}  \\ [1mm]
&  & \equiv & \Let{z}{\Trad{M\Subst{W}{x}}}{\Trad{V\Subst{W}{x}}z} \\ [1mm]
&  & \equiv & \Let{z}{\Trad{M}\Subst{\Trad{W}}{x}}{\Trad{V}\Subst{\Trad{W}}{x}z} & \mbox{ by i.h.}\\ [1mm]
&  & \equiv & (\Let{z}{\Trad{M}}{\Trad{V}z})\Subst{\Trad{W}}{x} \\ [1mm]
&  & \equiv & (\Trad{M\Bind V})\Subst{\Trad{W}}{x}
\end{array}\]
\QED

\begin{lemma}
	Let $M,N\in (\TFlambdaComp)$-$\Term$. If $M\Red N$, then $\Trad{M}=\Trad{N}$, where $=$ stands for the convertibility relation induced by $\MoggiRed$.
\end{lemma}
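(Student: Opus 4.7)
The plan is to proceed by induction on the derivation of $M \Red N$, first handling the three base rules of $\MonRed$ (namely $\Betac$, $\BindRight$, $\BindLeft$) and then the compatible closure. For the compatible closure I rely on the fact that convertibility $=$ in $\MoggilambdaComp$ is by definition a congruence (closed under the contextual formation rules already given in the Compatible Closure definition) together with the easy observation that $\Trad{\cdot}$ commutes with each $\TFlambdaComp$-context, sending $\ValCtx$-holes and $\CompCtx$-holes into $\MoggilambdaComp$-contexts of the same shape. Hence if $M \Red N$ happens under a context, the induction hypothesis gives $\Trad{M'} = \Trad{N'}$ for the redex/contractum, and congruence lifts this to the enclosing translated context.

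For the three base cases I compute the translations directly and rewrite in $\MoggilambdaComp$. The $\Betac$ case: $\Trad{\Unit V \Bind \lambda x.M} \equiv \Let{z}{\Trad V}{(\lambda x.\Trad M)z}$, and since $\Trad V$ is always a value (immediate from the clauses of $\Trad\cdot$), one step of $\mathit{let}_v$ plus one step of $\Betav$ yields $\Trad M\Subst{\Trad V}{x}$, which equals $\Trad{M\Subst{V}{x}}$ by the Substitution Lemma \ref{SubLemTrad}. The $\BindRight$ case: $\Trad{M \Bind \lambda x.\Unit x} \equiv \Let{z}{\Trad M}{(\lambda x.x)z}$ reduces by $\Betav$ inside the let to $\Let{z}{\Trad M}{z}$, and then by $\IdRed$ to $\Trad M$.

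The $\BindLeft$ case is the main obstacle and is where the side condition $x \notin \FV(N)$ is needed. Unfolding both translations I get
\[
\Trad{(L \Bind \lambda x.M) \Bind \lambda y.N} \equiv \Let{w}{(\Let{z}{\Trad L}{(\lambda x.\Trad M)z})}{(\lambda y.\Trad N)w}
\]
and
\[
\Trad{L \Bind \lambda x.(M \Bind \lambda y.N)} \equiv \Let{z}{\Trad L}{(\lambda x.\Let{w}{\Trad M}{(\lambda y.\Trad N)w})z}.
\]
The first reduces, via $\CompRed$ followed by an inner $\Betav$, to $\Let{z}{\Trad L}{\Let{w}{\Trad M\Subst{z}{x}}{(\lambda y.\Trad N)w}}$, while the second reduces by one $\Betav$ step at the head of the body to $\Let{z}{\Trad L}{(\Let{w}{\Trad M}{(\lambda y.\Trad N)w})\Subst{z}{x}}$; the hypothesis $x \notin \FV(N)$, together with $\alpha$-renaming to keep $w$ fresh, makes this substitution commute with the let and produce the same term as from the first side. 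Hence both are convertible in $\MoggilambdaComp$, and we conclude.

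The only delicate points are the freshness hygiene when choosing the let-bound variables introduced by $\Trad\cdot$ (handled by implicit $\alpha$-conversion, as is standard) and the correct invocation of Lemma \ref{SubLemTrad} to commute the translation past substitution in the $\Betac$ and $\BindLeft$ cases.
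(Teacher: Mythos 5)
Your proposal is correct and follows essentially the same route as the paper: induction on the generation of $M \Red N$, direct computation of the translations of the three base rules using $\mathit{let}_v$, $\Betav$, $\IdRed$ and $\CompRed$ together with the substitution lemma for $\Trad{\cdot}$, with the $\BindLeft$ case closed by joining both sides on a common reduct (the paper writes this as a chain containing one reverse $\MoggiRed$ step, which is the same convertibility argument). Your explicit treatment of the compatible-closure case via congruence of $=$ is a point the paper leaves implicit, but it is the intended reading.
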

\Proof
Proof by induction on the generation of $M\Red N$.
\begin{description}
	\item[$\Betac$] $(\Unit V)\Bind (\lambda x. M) \Red M\Subst{V}{x}$\\
	\[\begin{array}{l@{\hspace{0.4cm}}rll@{\hspace{0.4cm}}l}
	&\Trad{(\Unit V)\Bind (\lambda x. M)}&\equiv& \Let{z}{\Trad{\Unit V}}{\Trad{(\lambda x.M)}z}\\[1mm]
	& & \equiv & \Let{z}{\Trad{ V}}{(\lambda x.\Trad{M})z}
	\end{array}\]
	Since $\Trad{V}\in \textit{Values}$, one has: 
	\[\begin{array}{lrlll}
	\Let{z}{\Trad{ V}}{(\lambda x.\Trad{M})z}&\MoggiRed & (\lambda x. \Trad{M})\Trad{V} & \mbox{ by \textit{$let_v$} and } z\not\in \FV (\Trad{M})\\[1mm]
	 & \MoggiRed & \Trad{M}\Subst{\Trad{V}}{x} &\mbox{ by $\Betav$}\\[1mm]
	 &\equiv & \Trad{M\Subst{V}{x}}& \mbox{ by substitution lemma \ref{SubLemTrad}}
	\end{array}\]
	
	\item[$(id)$] $M\Bind \lambda x. \Unit x \Red M$
	\[\begin{array}{l@{\hspace{0.4cm}}rll@{\hspace{0.4cm}}l}
	&\Trad{M\Bind \lambda x. \Unit x}&\equiv& \Let{z}{\Trad{M}}{\Trad{(\lambda x.\Unit x)}z}\\[1mm]
	& & \equiv & \Let{z}{\Trad{M}}{(\lambda x.x)z}\\[1mm]
	& &\MoggiRed &  \Let{z}{\Trad{M}}{z} \MoggiRed \Trad{M} \mbox{ by $\MoggilambdaComp$-\IdRed}
	\end{array}\]
	
	\item[$(ass)$] $(L\Bind \lambda x. M)\Bind (\lambda y. N)\Red L\Bind \lambda x. (M\Bind \lambda y. N)$ where $x\not\in\FV (N)$
	\[\begin{array}{lrlll}
	&\Trad{(L\Bind \lambda x. M)\Bind (\lambda y. N)}&\equiv& \Let{z_2}{\Trad{(L\Bind \lambda x. M)}}{\Trad{(\lambda y. N)}z_2}\\[1mm]
	& & \equiv & \Let{z_2}{\Let{z_1}{\Trad{L}}{\Trad{\lambda x. M}z_1}}{\Trad{(\lambda y. N)}z_2}\\[1mm]
	& & \MoggiRed & \Let{z_1}{\Trad{L}}{(\Let{z_2}{(\lambda x. \Trad{M}z_1)}{\Trad{(\lambda y.N)}z_2})} & \mbox{ by $\MoggilambdaComp$-comp}\\[1mm]
	& & \equiv & \Let{z_1}{\Trad{L}}{(\Let{z_2}{(\lambda x.\Trad{M})z_1}{\Trad{(\lambda y.N)}z_2})}\\[1mm]
	& & \equiv & \Let{z_1}{\Trad{L}}{(\Let{z_2}{\Trad{M\Subst{z_1}{x}}}{\Trad{(\lambda y.N)}z_2})}\\[1mm]
	& & < & \Let{z_1}{\Trad{L}}{(\lambda x.\Let{z_2}{\Trad{M}}{\Trad{(\lambda y.N)}z_2})z_1}\\[1mm]
	& & \equiv & \Trad{L \Bind \lambda x. (M\Bind \lambda y.N)}
	\end{array}\]
	
	Since $z\not\equiv z_2$, $x\not\in\FV (N)$ and then $x\not\in\FV (\Trad{N})$. From this consideration, one has:\\
	$(\Let{z_2}{\Trad{M}}{(\Trad{\lambda y.N})z_2})\Subst{z_1}{x} \equiv \Let{z_2}{\Trad{M\Subst{z_2}{x}}}{\Trad{\lambda y.N}z_2}$
\end{description} 
\QED

\begin{remark}
	Reduction $\Red$ over $(\TFlambdaComp)$-$\Term$ has been defined devoid of $\eta$ nor $\xi$ rules. Both can be added by extending reduction relation over values:
	\[\begin{array}{l@{\hspace{0.4cm}}rll@{\hspace{0.4cm}}l}
	\xi: & 
	\prooftree
	M \Red N
	\justifies
	\lambda x.M \Red \lambda x.N
	\endprooftree
	& \qquad \eta: &\lambda x. \Unit x\Bind V\Red V &\mbox{ if } x\not\in \FV (V)
	\end{array}\]
	With respect to previous lemma, concerning $\xi$-rule: $\Trad{M}=\Trad{N}$ by induction hypothesis and $\Trad{\lambda x.M}=\lambda x.\Trad{M}=\lambda x.\Trad{N}=\Trad{\lambda x.N}$.\\
	$\eta$:
	\[\begin{array}{l@{\hspace{0.4cm}}rll@{\hspace{0.4cm}}l}
	&\Trad{\lambda x. \Unit x\Bind V}&\equiv &\lambda x. \Let{z}{x}{\Trad{V}z}\\[1mm]
	& & \MoggiRed & \lambda x. \Trad{V}z &\mbox{ by $let_v$}\\[1mm]
	& & \MoggiRed & \Trad{V} &\mbox{ by $eta_c$}
	\end{array}\]
\end{remark}

\subsection{Traslation of $\MoggilambdaComp$ into  $\TFlambdaComp$ }

Define a function from $(\MoggilambdaComp)$ -$\Term$ to $(\TFlambdaComp)$-$\Term$ as follows:\\
$\MvTrad{\cdot}: (\MoggilambdaComp) -Values \rightarrow \ValTerm$
$\qquad \McTrad{\cdot}: (\MoggilambdaComp) -NonValues \rightarrow \ComTerm$

\[\begin{array}{l@{\hspace{0.4cm}}rll@{\hspace{0.4cm}}l}
& \MvTrad{x}= x &  \\ [1mm]
\MvTrad{\lambda x.v}= \lambda x.\Unit \MvTrad{v}  &  & \MvTrad{\lambda x.n}= \lambda x.\McTrad{n}   
\end{array}\]

\[\begin{array}{l@{\hspace{0.2cm}}rll@{\hspace{0.4cm}}l}
\McTrad{vn} =\McTrad{n} \Bind \MvTrad{v}   \\ [1mm]
\McTrad{v v'}= \Unit \MvTrad{v'}\Bind \MvTrad{v}\\[1mm]
\McTrad{nv}= \McTrad{n}\Bind (\lambda x.\Unit\MvTrad{v}\Bind x) & \mbox{ for fresh }x\\[1mm]
\McTrad{n n'}= \McTrad{n}\Bind (\lambda x.\McTrad{n'}\Bind x) & \mbox{ for fresh }x\\[1mm]
\McTrad{\Let{x}{n}{n'}}=\McTrad{n}\Bind \lambda x.\McTrad{n'} \\[1mm]
\McTrad{\Let{x}{v}{n'}}=\Unit\MvTrad{v}\Bind \lambda x.\McTrad{n'} \\[1mm]
\McTrad{\Let{x}{n}{v}}= \McTrad{n}\Bind \lambda x.\Unit\MvTrad{v} \\[1mm]
\McTrad{\Let{x}{v}{v'}}=\Unit\MvTrad{v}\Bind \lambda x.\Unit\MvTrad{v'}
\end{array}\]

\begin{lemma}[Substitution lemma for $\MTrad{\cdot}$]\label{SubLemMTrad}
	Let $w,v \in Values$ and $n\in NonValues$;
	\begin{enumerate}[label=(\roman*)]
		\item $\MvTrad{w\Subst{v}{x}}\equiv \MvTrad{w}\Subst{\MvTrad{v}}{x}$
		\item $\McTrad{n\Subst{Wv}{x}}\equiv \McTrad{n}\Subst{\MvTrad{v}}{x}$
	\end{enumerate}
\end{lemma}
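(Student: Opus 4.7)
\textbf{Proof plan for Lemma \ref{SubLemMTrad}.} The argument mirrors the one for Lemma \ref{SubLemTrad} but must account for the fact that the translation $\MTrad{\cdot}$ is defined by cases on the syntactic sort (value vs.\ non-value) of each subterm. Accordingly, I would proceed by simultaneous induction on the structures of $w \in \textit{Values}$ and $n \in \textit{NonValues}$, proving clauses (i) and (ii) together.

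The plan is first to dispatch the base cases for (i): if $w \equiv x$ then $\MvTrad{x\Subst{v}{x}} \equiv \MvTrad{v}$, which is syntactically identical to $\MvTrad{x}\Subst{\MvTrad{v}}{x} \equiv x\Subst{\MvTrad{v}}{x}$; if $w \equiv y$ with $y \not\equiv x$, both sides reduce to $y$. For the abstraction case $w \equiv \lambda y.e$ one must split on whether $e$ is a value $v'$ or a non-value $n'$: in the first subcase $\MvTrad{\lambda y. v'\Subst{v}{x}} \equiv \lambda y.\Unit\MvTrad{v'\Subst{v}{x}}$, which by the inductive hypothesis (i) equals $\lambda y.\Unit(\MvTrad{v'}\Subst{\MvTrad{v}}{x}) \equiv (\lambda y.\Unit\MvTrad{v'})\Subst{\MvTrad{v}}{x}$; the second subcase uses hypothesis (ii) analogously. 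Here one uses tacitly that $y\not\equiv x$ and $y\not\in\FV(\MvTrad{v})$, the latter via a standard $\alpha$-renaming convention, together with the easily verified fact that $\FV(\MvTrad{v}) = \FV(v)$ and $\FV(\McTrad{n}) = \FV(n)$.

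For clause (ii), one inspects each of the eight defining clauses of $\McTrad{\cdot}$, corresponding to the value/non-value dichotomy in both positions of an application and in both sides of a $\BLet{y}{\cdot}{\cdot}$. In every case the translation is a $\Bind$-combination of the translations of the immediate subterms (with an $\Unit$ coercion where a value appears in a computation position), so the substitution pushes through by applying the induction hypotheses to each subterm and then recombining. A representative case is $n \equiv \BLet{y}{n_1}{n_2}$: one computes
\[
\McTrad{(\BLet{y}{n_1}{n_2})\Subst{v}{x}} \equiv \McTrad{\BLet{y}{n_1\Subst{v}{x}}{n_2\Subst{v}{x}}} \equiv \McTrad{n_1\Subst{v}{x}} \Bind \lambda y.\McTrad{n_2\Subst{v}{x}},
\]
and by (ii) this equals $(\McTrad{n_1}\Bind \lambda y.\McTrad{n_2})\Subst{\MvTrad{v}}{x}$; the other seven cases, differing only in the $\Unit$-coercion pattern, are verified identically.

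I do not expect any real obstacle in this proof; the difficulty is purely bookkeeping, stemming from the combinatorial explosion of subcases induced by the sort-sensitive translation. The one point requiring a moment of care is making sure that the freshly chosen variable $x$ in the clauses $\McTrad{nv}$ and $\McTrad{nn'}$ is chosen disjoint from $\FV(\MvTrad{v})$ in the substituted expression, which is granted by the usual Barendregt convention on bound variables.
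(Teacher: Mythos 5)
Your proposal is correct and follows essentially the same route as the paper: a simultaneous structural induction on $w$ and $n$, with the abstraction case split on whether the body is a value or a non-value, and clause (ii) handled by checking each of the sort-sensitive defining clauses of $\McTrad{\cdot}$ (the paper writes out the four application cases and one let case, leaving the rest as "similar"). The freshness/$\alpha$-convention point you flag is indeed the only delicate spot, and the paper handles it the same tacit way.
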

\Proof

\textit{(i)}

$w\equiv x:$
\[\begin{array}{l}
\MvTrad{x\Subst{v}{x}}=\MvTrad{v}=\MvTrad{x}\Subst{\MvTrad{v}}{x}
\end{array}\]

$w\equiv y\not\equiv x:$
\[\begin{array}{l}
\MvTrad{y\Subst{v}{x}}=\MvTrad{y}=y=\MvTrad{y}\Subst{\MvTrad{v}}{x}
\end{array}\]

$w\equiv \lambda y.w$, where $y\not\equiv x$ and $y\not\in \FV (w)$:
\[\begin{array}{lll}
\MvTrad{\lambda y.w\Subst{v}{x}}&=\MvTrad{\lambda y.(w\Subst{v}{x})}\\[1mm]
&=\lambda y. \Unit \MvTrad{w\Subst{v}{x}}\\[1mm]
&=\lambda y. \Unit (\MvTrad{w}\Subst{\MvTrad{v}}{x}) & \mbox{ by i.h.}\\[1mm]
&=(\lambda y. \Unit \MvTrad{w})\Subst{\MvTrad{v}}{x} \\[1mm]
&=\MvTrad{\lambda y. w}\Subst{\MvTrad{v}}{x}
\end{array}\]

$w\equiv \lambda y.n:$

\[\begin{array}{lll}
\MvTrad{(\lambda y.n)\Subst{v}{x}}&=\MvTrad{\lambda y.(n\Subst{v}{x})}\\[1mm]
&=\lambda y. \McTrad{n\Subst{v}{x}}	\\[1mm]
&=\lambda y.\McTrad{n}\Subst{\MvTrad{v}}{x}& \mbox{ by i.h.}\\[1mm]
&=(\lambda y.\McTrad{n})\Subst{\MvTrad{v}}{x}\\[1mm]
&=\MvTrad(\lambda y.{n)}\Subst{\MvTrad{v}}{x}
\end{array}\]

\textit{(ii)}

$n\equiv wn':$
\[\begin{array}{lll}
\McTrad{(wn')\Subst{v}{x}}&=\McTrad{(w\Subst{v}{x})(n'\Subst{v}{x})}\\[1mm]
&=\McTrad{n'\Subst{v}{x}}\Bind \MvTrad{w\Sub{v}{n}}\\[1mm]
&=\McTrad{n'}\Subst{\MvTrad{v}}{x}\Bind \MvTrad{w}\Sub{\MvTrad{v}}{x}& \mbox{ by i.h.}\\[1mm]
&=(\McTrad{n'}\Bind  \MvTrad{w})\Sub{\MvTrad{v}}{x}\\[1mm]
&= \McTrad{wn'}\Sub{\MvTrad{v}}{x}
\end{array}\]

$n\equiv ww'$
\[\begin{array}{lll}
\MvTrad{(ww')\Subst{v}{x}}&=\MvTrad{(w\Subst{v}{x})(w'\Subst{v}{x})}\\[1mm]
&=\Unit \MvTrad{w'\Subst{v}{x}}\Bind \MvTrad{w\Sub{v}{n}}\\[1mm]
&=\Unit \MvTrad{w'}\Subst{\MvTrad{v}}{x}\Bind \MvTrad{w}\Sub{\MvTrad{v}}{x}& \mbox{ by i.h.}\\[1mm]
&=(\MvTrad{w'}\Bind  \MvTrad{w})\Sub{\MvTrad{v}}{x}\\[1mm]
&= \McTrad{ww'}\Sub{\MvTrad{v}}{x}
\end{array}\]

$n\equiv n'w$
\[\begin{array}{lll}
\McTrad{(n'w)\Subst{v}{x}}&=\McTrad{(n'\Subst{v}{x})(w\Subst{v}{x})}\\[1mm]
&= \McTrad{n'\Subst{v}{x}}\Bind (\lambda y.\Unit \MvTrad{w\Sub{v}{n}} \Bind y)\\[1mm]
&= \McTrad{n'}\Subst{\MvTrad{v}}{x}\Bind (\lambda y.\Unit \MvTrad{w}\Sub{\MvTrad{v}}{n} \Bind y)\\[1mm]
&= (\McTrad{n'}\Bind (\lambda y.\Unit \MvTrad{w}\Bind y))\Subst{\MvTrad{v}}{x}\\[1mm]
&=\McTrad{n'w}\Subst{\MvTrad{v}}{x}
\end{array}\]

$n\equiv n'm$
\[\begin{array}{lll}
\McTrad{(n'm)\Subst{v}{x}}&=\McTrad{(n'\Subst{v}{x})(m\Subst{v}{x})}\\[1mm]
&= \McTrad{n'\Subst{v}{x}}\Bind (\lambda y.\McTrad{m\Sub{v}{n}} \Bind y)\\[1mm]
&= \McTrad{n'}\Subst{\MvTrad{v}}{x}\Bind (\lambda y.\McTrad{m}\Sub{\MvTrad{v}}{n} \Bind y)\\[1mm]
&= (\McTrad{n'}\Bind (\lambda y.\McTrad{m} \Bind y)\Sub{\MvTrad{v}}{n} \\[1mm]
&=\McTrad{n'm}{\MvTrad{v}}{n}
\end{array}\]

$n\equiv (\Let{y}{m}{n'})$
\[\begin{array}{lll}
\McTrad{(\Let{y}{m}{n'})\Subst{v}{x}}&=\McTrad{(\Let{y}{m\Subst{v}{x}}{n'\Subst{v}{x}})}\\[1mm]
&=\McTrad{m\Subst{v}{x}}\Bind \lambda y.\McTrad{n'\Subst{v}{x}}\\[1mm]
&=\McTrad{m}\Subst{\MvTrad{v}}{x}\Bind \lambda y.\McTrad{n'}\Subst{\MvTrad{v}}{x}& \mbox{ by i.h.}\\[1mm]
&=(\McTrad{m}\Bind \lambda y.\McTrad{n'})\Subst{\MvTrad{v}}{x}\\[1mm]
&=\McTrad{\Let{y}{m}{n'}}\Subst{\MvTrad{v}}{x}
\end{array}\]

Similarly, the statement is proved for remaining cases.
\QED

\begin{definition}\label{def:MTrad}
	Let's define a function $\MTrad{\cdot}$ that maps $\MoggilambdaComp$-$\Term$ into $(\TFlambdaComp)$-$\ComTerm$, as follows:\\
	For all $n\in NonValues$ and $v\in Values$
	\[\begin{array}{ll}
	\MTrad{n}=\McTrad{n} & \MTrad{v}=\MvTrad{v}
	\end{array}\]
\end{definition}

\begin{corollary}
	$\MTrad{e\Subst{v}{x}}=\MTrad{e}\Subst{\MvTrad{v}}{x}$
\end{corollary}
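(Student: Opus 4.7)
The plan is to prove this by a straightforward case analysis on the syntactic category of $e$, leveraging Definition \ref{def:MTrad} which splits $\MTrad{\cdot}$ into $\MvTrad{\cdot}$ or $\McTrad{\cdot}$ depending on whether its argument is a value or a nonvalue.

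First I would observe that the operation $e\Subst{v}{x}$ preserves the syntactic category: if $e$ is a value then $e\Subst{v}{x}$ is a value (the grammar for values is closed under substitution of values for variables, since $x$ and $\lambda y.e'$ only generate values), and if $e$ is a nonvalue (i.e.\ either $e_1 e_2$ or $\Let{y}{e_1}{e_2}$) then $e\Subst{v}{x}$ is again of one of these two shapes, hence a nonvalue. This preservation is immediate from the grammar in Definition \ref{def:MoggiTerms}.

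Next I would split into the two cases. If $e \equiv w$ is a value, then by Definition \ref{def:MTrad} we have $\MTrad{w} = \MvTrad{w}$ and $\MTrad{w\Subst{v}{x}} = \MvTrad{w\Subst{v}{x}}$, so the equation reduces to clause (i) of Lemma \ref{SubLemMTrad}. If $e \equiv n$ is a nonvalue, then similarly $\MTrad{n} = \McTrad{n}$ and $\MTrad{n\Subst{v}{x}} = \McTrad{n\Subst{v}{x}}$, and the equation reduces to clause (ii) of Lemma \ref{SubLemMTrad}. In both cases the substitution on the right-hand side, $\MTrad{e}\Subst{\MvTrad{v}}{x}$, unfolds via Definition \ref{def:MTrad} in the same way.

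Since both clauses of Lemma \ref{SubLemMTrad} have already been established, no further induction is required — this is genuinely a corollary. There is no real obstacle here; the only minor point to double-check is that we have indeed listed all the ways a value (respectively a nonvalue) can be formed, so that the case analysis is exhaustive, which is immediate from Definition \ref{def:MoggiTerms}.
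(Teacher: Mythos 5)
Your proof is correct and matches the paper's own argument: the paper likewise splits on whether $e$ is a value or a nonvalue and reduces each case to clause (i) or (ii) of Lemma \ref{SubLemMTrad}. Your explicit remark that substitution preserves the value/nonvalue distinction is a point the paper leaves implicit, and your observation that no genuine induction is needed is accurate.
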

\Proof By induction on the structure of $e\in\MoggilambdaComp$-$\Term$:\\
$e\equiv w\in Values$, then $\MTrad{w\Subst{v}{x}}=\Unit \MvTrad{w\Subst{v}{x}}=\Unit (\MvTrad{w}\Subst{\MvTrad{v}}{x})$by part \textit{(i)} of \ref{SubLemMTrad}. This is equivalent to $(\Unit \MvTrad{w})\Subst{\MvTrad{v}}{x}$, that is equivalent to $\MTrad{w}\Subst{\MvTrad{v}}{x}$ hence the thesis by definition \ref{def:MTrad}.\\
If $e\equiv n\in NonValues$:
\[\begin{array}{lll}
\MTrad{n\Subst{v}{x}}&=\McTrad{n\Subst{v}{x}}\\[1mm]
&=\McTrad{n}\Subst{\MvTrad{v}}{x} & \mbox{ by part \textit{(ii)} of \ref{SubLemMTrad}}\\[1mm]
&=\MTrad{n}\Subst{\MvTrad{v}}{x}
\end{array}\]
\QED

\begin{lemma}
	$e\MoggiRed e' \Rightarrow \MTrad{e}\xrightarrow{*}\MTrad{e'}$
\end{lemma}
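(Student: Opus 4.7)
The plan is to proceed by induction on the derivation of $e \MoggiRed e'$, treating each of the seven reduction axioms of Definition \ref{def:MoggiReduction} in turn and then the compatibility closure clauses. In each axiomatic case I would unfold both $\MTrad{e}$ and $\MTrad{e'}$ using the defining clauses of $\MvTrad{\cdot}$ and $\McTrad{\cdot}$, and then exhibit an explicit reduction sequence in $\TFlambdaComp$. The substitution lemma \ref{SubLemMTrad} is the crucial algebraic ingredient whenever a substitution appears on either side of a reduction step.

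The correspondences between the axioms of $\MoggilambdaComp$ and those of $\TFlambdaComp$ come out as follows. Both $\Betav$ and $let_v$ translate through $\Betac$, since $\McTrad{(\lambda x.e)v}$ and $\McTrad{\Let{x}{v}{e}}$ both unfold to terms of the shape $\Unit \MvTrad{v} \Bind \lambda x.(\cdots)$ whose contractum coincides with $\MTrad{e\Subst{v}{x}}$ by \ref{SubLemMTrad}. The rule $\IdRed$ is handled by $\BindRight$, since $\McTrad{\Let{x}{e}{x}}$ ends in $\Bind \lambda x.\Unit x$. The rule $\CompRed$ is a direct image of $\BindLeft$, with a small case split according to whether $e_1,e_2,e$ are values or non-values. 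The rule $\eta_v$ uses the extensional axiom $\eta_c$ from the remark at the end of Section \ref{subsec:reduction}, since $\MvTrad{\lambda x.vx}$ equals $\lambda x.(\Unit x \Bind \MvTrad{v})$. The rule $let_{.1}$ turns out to be an identity on translations (a direct calculation shows that $\McTrad{ne}$ and $\McTrad{\Let{x}{n}{xe}}$ are literally the same term, modulo $\alpha$-renaming of the fresh bound variable), while $let_{.2}$ likewise reduces to an $\eta_c$-equivalence between $\MvTrad{v}$ and $\lambda x.(\Unit x \Bind \MvTrad{v})$. For the compatibility closure clauses, the induction hypothesis gives $\MTrad{M} \xrightarrow{*} \MTrad{N}$, and it suffices to observe that each binding constructor of $\MoggilambdaComp$ translates into a $\TFlambdaComp$-context in the sense of (\ref{eq:compatibleClosure}), so that the compatible-closure rule propagates the reduction through $\lambda$-abstraction, application, and let-binding.

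The principal obstacle is the $let_{.2}$ case: the relevant $\eta_c$-step naturally runs from the right-hand translation to the left, hence in the opposite direction to what the statement demands. Resolving this either forces the lemma to be phrased using convertibility in place of $\xrightarrow{*}$, or requires augmenting $\TFlambdaComp$ with an $\eta_c$-expansion at the value level; in either reading the remaining cases go through as sketched. A secondary and purely bookkeeping difficulty is the case analysis required by $\CompRed$ and, to a lesser extent, by the compatibility closure under $\Let{x}{-}{-}$, since $\McTrad{\cdot}$ is defined separately on $\Let{x}{n}{n'}$, $\Let{x}{v}{n'}$, $\Let{x}{n}{v}$, and $\Let{x}{v}{v'}$, producing a mild combinatorial enumeration that nonetheless follows the same template as the uniform case.
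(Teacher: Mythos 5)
Your proposal follows essentially the same route as the paper's proof: induction on the derivation of $e \MoggiRed e'$, unfolding the translation case by case, invoking the substitution lemma \ref{SubLemMTrad}, and matching $\Betav$ and $let_v$ to $\Betac$, $\IdRed$ to $\BindRight$, $\CompRed$ to $\BindLeft$, $\eta_v$ to $\eta_c$, and $let_{.1}$ to syntactic identity of the translations. Your observation about $let_{.2}$ is correct and well taken: since $\McTrad{vn} = \McTrad{n} \Bind \MvTrad{v}$ whereas $\McTrad{\Let{x}{n}{vx}} = \McTrad{n} \Bind \lambda x.(\Unit x \Bind \MvTrad{v})$, the required step is an $\eta_c$-\emph{expansion}, which is not available as a reduction in $\TFlambdaComp$ even after $\eta_c$ is adjoined; the paper's own proof silently omits this case, so your diagnosis (weaken the conclusion to convertibility, or add an expansion at the value level) points to a genuine gap in the lemma as stated rather than a defect in your argument.
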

\Proof
By induction on the definition of $\MoggiRed$:
\[\begin{array}{lll}
\MTrad{(\lambda x.w)v}&=\McTrad{(\lambda x.w)v}\\[1mm]
=\Unit \MvTrad{v}\Bind \lambda x.\Unit\MvTrad{w}&\Betac (\Unit \MvTrad{w})\Subst{\MvTrad{v}}{x}\\[1mm]
&=\MTrad{w} \Subst{\MvTrad{v}}{x} & \mbox{ by definition \ref{def:MTrad}}\\[1mm]
&=\MTrad{w\Subst{v}{x}}& \mbox{ by the previous corollary}
\end{array}\]

\[\begin{array}{lll}
\MTrad{(\lambda x.n)v}&=\McTrad{(\lambda x.n)v}\\[1mm]
=\Unit \MvTrad{v}\Bind \lambda x.\McTrad{n}&\Betac \McTrad{n}\Subst{\MvTrad{v}}{x}\\[1mm]
&=\McTrad{n\Subst{\MvTrad{v}}{x}} & \mbox{ by part \textit{(ii)} of \ref{SubLemMTrad}}\\[1mm]
&=\MTrad{n\Subst{v}{x}}& \mbox{ by definition \ref{def:MTrad}}
\end{array}\]

$\eta_C$: $\MTrad{\lambda x.vx}=\lambda x. \McTrad{vx} =\lambda x. \Unit x \Bind \MvTrad{v}\Red_{\eta_C}\MvTrad{v}$\\
\textit{(id).1}: $(\Let{x}{v}{x})\MoggiRed v$
\[\begin{array}{lll}
\MTrad{\Let{x}{v}{x}}&=\McTrad{\Let{x}{v}{x}}\\[1mm]
&=\Unit \MvTrad{v}\Bind \lambda x.\Unit \MvTrad{x}\\[1mm]
&=\Unit \MvTrad{v}\Bind \lambda x.\Unit x\\[1mm]
\Red_{(id)}\Unit \MvTrad{v}=\MvTrad{v}
\end{array}\]

Regarding rule \textit{(comp)}, the proof is done after proving 8 different cases, here we will show just two of them. 
\[\begin{array}{lll}
\MTrad{\Let{y}{\Let{x}{m}{n}}{n'}}&= \McTrad{\Let{y}{\Let{x}{m}{n}}{n'}}\\[1mm]
&=(\McTrad{m}\Bind \lambda x.\McTrad{n})\Bind \lambda y.\McTrad{n'}\\[1mm]
&\Red_{(ass)} \McTrad{m}\Bind \lambda x. (\McTrad{n}\Bind \lambda y. \McTrad{n'})\\[1mm]
&=\MTrad{\Let{x}{m}{\Let{y}{n}{n'}}}
\end{array}\]

\[\begin{array}{lll}
\MTrad{\Let{y}{\Let{x}{v}{n}}{n'}}&= (\Unit\MvTrad{v}\Bind\lambda x.\McTrad{n})\Bind \lambda y.\McTrad{n'}\\[1mm]
&\Red_{(ass)} \Unit\MvTrad{v}\Bind\lambda x.(\McTrad{n})\Bind \lambda y.\McTrad{n'})\\[1mm]
&=\Let{x}{\Unit\MvTrad{v}}{\Let{y}{\McTrad{n}}{\McTrad{n'}}}\\[1mm]
&=\MTrad{\Let{x}{v}{\Let{y}{n}{n'}}}
\end{array}\]

$let_v$ case splits in two different sub-cases:
\[\begin{array}{lll}
\McTrad{\Let{x}{v}{n}}&=\Unit \MvTrad{v}\Bind\lambda x.\McTrad{n}\\[1mm]
&\Red_{\beta_v} \McTrad{n}\Subst{\MvTrad{v}}{n}\\[1mm]
&=\MTrad{n\Subst{v}{x}}& \mbox{  by part \textit{(ii)} of \ref{SubLemMTrad}}
\end{array}\]

\[\begin{array}{lll}
\McTrad{\Let{x}{v}{w}}&=\Unit \MvTrad{v}\Bind\lambda x.\Unit \MvTrad{w}\\[1mm]
&\Red_{\beta_v} (\Unit \MvTrad{w})\Subst{\MvTrad{v}}{n}\\[1mm]
&=\MTrad{w}\Subst{\MvTrad{v}}{n}& \mbox{ by definition of }\MvTrad{\cdot}\\[1mm]
&=\MTrad{w\Subst{v}{x}}& \mbox{  by the previous corollary}
\end{array}\]

$let_1$ case splits in two different sub-cases:
\[\begin{array}{lll}
\McTrad{nv}=\McTrad{n}\Bind \lambda x.\Unit \MvTrad{v}\Bind x\\[1mm]
\McTrad{\Let{x}{n}{xv}}=\McTrad{n}\Bind\lambda x. \McTrad{xv}=\McTrad{n}\Bind(\lambda x. \Unit\MvTrad{v}\Bind x)
\end{array}\]
Since the above equalities, one has proved that $\McTrad{nv}= \McTrad{\Let{x}{n}{xv}}$. Then, a fortiori $\McTrad{nv}\xrightarrow{*} \McTrad{\Let{x}{n}{xv}}$, since $\xrightarrow{*}$ is also a reflexive closure of $\Red$.

\[\begin{array}{lll}
\McTrad{nm}=\McTrad{n}\Bind \lambda x.\McTrad{m}\Bind x\\[1mm]
\McTrad{\Let{x}{n}{xm}}=\McTrad{n}\Bind\lambda x. \McTrad{xm}=\McTrad{n}\Bind(\lambda x. \McTrad{m}\Bind x)
\end{array}\]
The conclusion follows analogous considerations stated in the previous subcase.  
\QED

\begin{theorem}
	There exists an interpretation $\MTrad{\cdot}$ from $\MoggilambdaComp$ into $\TFlambdaComp$ that preserves reductions.\\
	There exists an interpretation $\Trad{\cdot}$ from   $\TFlambdaComp$ into $\MoggilambdaComp$ that preserves the convertibility relation.

\end{theorem}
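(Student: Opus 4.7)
The plan is to observe that both statements follow essentially by assembling the translation definitions and the reduction-preservation lemmas that have already been established in Subsections 8.1 and 8.2. The actual content has already been done: what remains is to close under reflexive, transitive (and, in the second case, symmetric) closure.

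For the first statement, I would argue as follows. The map $\MTrad{\cdot}$ is well defined by the clauses in Definition \ref{def:MTrad}, sending values to values and non-values to computations of $\TFlambdaComp$. The preceding lemma gives, for each basic Moggi-reduction step $e \MoggiRed e'$, that $\MTrad{e} \RedStar \MTrad{e'}$. To extend this to compatible closure, a straightforward induction on contexts of $\MoggilambdaComp$ suffices, using that $\MTrad{\cdot}$ is compositional (each clause plugs the translation of a subterm into a $\TFlambdaComp$ context, and $\RedStar$ is closed under contexts by rule (\ref{eq:compatibleClosure})). Finally, reflexive-transitive closure is immediate since $\RedStar$ is itself reflexive and transitive; hence $e \MoggiRed^* e'$ implies $\MTrad{e} \RedStar \MTrad{e'}$, which is the sense in which reductions are preserved.

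For the second statement, I would appeal to the lemma proved just before, stating that $M \Red N$ implies $\Trad{M} = \Trad{N}$, where $=$ is the convertibility relation generated by $\MoggiRed$. Since $=$ is by definition reflexive, symmetric and transitive, and $\Red$ generates the convertibility of $\TFlambdaComp$ by reflexive-symmetric-transitive closure, the implication lifts without further work to the whole convertibility: $M =_{\TFlambdaComp} N$ implies $\Trad{M} =_{\MoggilambdaComp} \Trad{N}$. Compositionality of $\Trad{\cdot}$ (checked clause by clause, and already used implicitly in Lemma \ref{SubLemTrad}) ensures that the argument lifts through contexts, since each $\MoggilambdaComp$ context appearing in the image of a $\TFlambdaComp$ context is itself a context in which $=$ is a congruence.

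The only non-obvious point—and the one I would flag as the main subtlety—is the asymmetry in strength between the two preservation statements: $\MTrad{\cdot}$ preserves reduction (every Moggi step is simulated by a sequence of $\TFlambdaComp$ reduction steps), while $\Trad{\cdot}$ only preserves convertibility rather than reduction. This asymmetry is genuine and stems from the clause $\Trad{M \Bind V} \equiv \Let{x}{\Trad{M}}{\Trad{V}x}$: a $\BindLeft$-step on the source side is simulated on the target side only modulo $\MoggilambdaComp$-conversion, as one can already observe in the proof of the $(ass)$ case of the preceding lemma, where a $<$ (i.e.\ backward reduction) step is needed. Apart from remarking on this, the proof of the theorem is nothing more than taking the closure of the two lemmas under the appropriate relational operations, and no further calculation is required.
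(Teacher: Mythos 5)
Your proposal is correct and takes essentially the same route as the paper, which offers no separate proof of this theorem and treats it as a direct assembly of the two preceding simulation lemmas closed under the appropriate relational operations; your diagnosis of the asymmetry (the backward $\MoggiRed$ step in the $\BindLeft$ case forcing $\Trad{\cdot}$ to preserve only convertibility) is exactly the point the paper itself makes in Section~\ref{sec:related}. The one caveat you omit --- which the theorem statement also glosses over and the paper concedes only in Section~\ref{sec:related} --- is that simulating Moggi's $\eta_v$ step under $\MTrad{\cdot}$ uses the rule $\eta_c$, which is not part of $\MonRed$ as fixed in Definition~\ref{def:reduction}, so ``preserves reductions'' holds only for the reduction extended with $\eta_c$, at the cost of confluence.
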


\section{Related and future works}\label{sec:related}
The main inspiration for the present work has been \cite{LagoGL17}, not because we develop the co-algebraic
approach to equality treated there, but since the considered calculus is essentially type free, and also
because of the idea of investigating reasoning principles that hold for any monad in general and its algebras,
originating from \cite{PlotkinP03} and related works.

Since Moggi's seminal papers \cite{Moggi'89,Moggi'91}, a substantial body of research has been carried out about
the computational $\lambda$-calculus and the concept of monad, both in theory and in practice of functional programming languages.
Here, because of the large bibliography on the subject, we shall refer to the closest related works.

The calculus $\TFlambdaComp$ in subsection \ref{subsec:reduction} is a subcalculus of Wadler's one in \cite{Wadler-Monads}, 
but it includes terms like a minimal version of self-application $\Unit x \Bind x$, that has no simple type. 
The untyped calculus which is closest to $\TFlambdaComp$ is Moggi's type free calculus in \cite{Moggi'89} \S 6, called $\lambda_c$. 
However, the latter is much richer than ours, as translation in section \ref{sec:MoggiTrasl} show.
The translation $\MTrad{\cdot}$ preserves reduction, but we have to add $\eta_c$ to $\MonRed$, hence loosing its confluence.

By this, both the confluence proof of $\lambda_c$ in \cite{MaraistOTW99} \S 8.3, where it is called {\sc COMP}, and the implicit proof
in \cite{Hamana18}, cannot be used in our case, and we prefer to prove confluence of $\MonRed$ from scratch, although following a
standard pattern: see e.g. \cite{AriolaFMOW95}. Confluence of  {\sc COMP} is established in \cite{MaraistOTW99} via a translation from a 
call-by-need linear $\lambda$-calculus, but
without $\eta$, facing a similar difficulty as we mention here at the end of subsection \ref{subsec:reduction}.

Intersection types have been used in
\cite{Davies-Pfenning'00} in a $\lambda$-calculus with side effects and reference types. In their work a problem appears, since left distributivity of the arrow over
intersection (a rule in \cite{BarendregtDS2013} as well as is an axiom of the theory $\Th_\Val$ in Definition \ref{def:type-theories-Th_V-Th_C} above) 
is unsound. This is remedied by restricting intersection introduction to values. However Davies and Pfenning's work is not concerned with monads,
so that value and non-value types are of the same sort. On the contrary by working in a system like in Definition \ref{def:typeAssignment} 
these types are distinct, and actual definitions of unit and bind for the state monad are reflected into their typings: it is then worthy to investigate
such a system, that should avoid unsound typings without ad hoc restrictions.

The convergence relation in section \ref{sec:adequacy} is the adaptation of a similar concept introduced in \cite{Abramsky'90,Abramski-Ong'93}, 
which represent the observable
property at the basis of of the definition of the applicative bisimulation. It shares some similarity with the convergence relation considered in \cite{LagoGL17},
where Abramsky's idea is extended to a computational $\lambda$-calculus very similar to $\TFlambdaComp$. However, differently than
in Abramsky's work and the relation in Definition \ref{def:convergence}, 
the authors define their predicate as a relation among syntax and semantics, co-inductively defining the interpretation of terms.

Theorem \ref{thr:Conv-nonTrivial}, characterizing convergent terms by non trivial typings, is evidence of the expressive power of our system.
However, since convergence is undecidable, typability in the system is also undecidable. If the system should be useful in practice, say as
a method for abstract interpretation and static analysis or program syntesis, then restricted subsystems should be considered, like bounded intersection type systems
recently proposed in \cite{DudderMRU12,DudenhefnerR17-a,DudenhefnerR17-b}.

Another research line is to investigate how inductive and co-inductive program properties can be
handled in our framework by exploiting the relational properties of domains and of the inverse limit construction studied in \cite{Pitts96}, that generalize the concept of admissible set we used to
define monadic type interpretations.

\section{Conclusion}\label{sec:conclusion}

Starting with the general domain equation of the type-free call-by-value computational $\lambda$-calculus
we have tailored term and type syntax and defined a type theory and an intersection type
assignment system that is sound and complete w.r.t. the interpretation of terms and types in a class of models.
This class is parametrically defined w.r.t. the monad at hand, as well as the system itself that induce
a family of filter models. If the monad is the trivial one, originating from the identity functor, then
our system collapses to BCD. The filter model we have obtained is for a generic monad; we conjecture that
the model is initial in a suitable category, to which all its instances belong. We leave this question to further
research.

\bibliographystyle{alpha}
\bibliography{references}

\appendix
\section{Proofs}\label{app:proofs}

\subsection{Models of $\lambda^u_c$}
\begin{lemma}\label{applem:subtsInterpretation}
	In any $T$-model $D$ we have $\Sem{M\Subst{V}{x}}^{TD}_\rho = \Sem{M}^D_{\rho[x\mapsto \Sem{V}^D_\rho]}$.
\end{lemma}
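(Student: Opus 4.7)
The plan is to prove the substitution lemma by simultaneous structural induction over the grammars of values $W \in \ValTerm$ and computations $M \in \ComTerm$. That is, I would prove both clauses of Lemma~\ref{lem:subtsInterpretation} jointly:
\[
\Sem{W\Subst{V}{x}}^{D}_\rho = \Sem{W}^{D}_{\rho[x\mapsto \Sem{V}^D_\rho]}
\quad\text{and}\quad
\Sem{M\Subst{V}{x}}^{TD}_\rho = \Sem{M}^{TD}_{\rho[x\mapsto \Sem{V}^D_\rho]},
\]
since the inductive steps for $\lambda$-abstraction and for $M \Bind W$ cross the two sorts.

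The base cases are routine. If $W \equiv x$ then $x\Subst{V}{x} \equiv V$, and both sides compute to $\Sem{V}^D_\rho$ by the clause $\Sem{x}^D_\rho = \rho(x)$ from Definition~\ref{def:interpretation} and the definition of $\rho[x\mapsto d]$. If $W \equiv y \not\equiv x$, then $y\Subst{V}{x} \equiv y$ and $\rho[x\mapsto \Sem{V}^D_\rho](y) = \rho(y)$, so both sides equal $\rho(y)$. For the $\Unit$ case, $\Sem{\Unit W \Subst{V}{x}}^{TD}_\rho = \Sem{\Unit (W\Subst{V}{x})}^{TD}_\rho = \Unit \Sem{W\Subst{V}{x}}^{D}_\rho$, which reduces to the induction hypothesis on $W$. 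The $N \Bind W$ case is analogous, using that $\Sem{\cdot \Bind \cdot}$ is compositional in both arguments and invoking the induction hypothesis on $N$ and on $W$.

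The main obstacle, as usual for substitution lemmas, is the $\lambda$-abstraction case. Assuming $W \equiv \lambda y.M$, by the standard Barendregt convention I would assume $y \not\equiv x$ and $y \notin \FV(V)$, so that $(\lambda y.M)\Subst{V}{x} \equiv \lambda y.(M\Subst{V}{x})$. Then I need to check that for every $d \in D$,
\[
\Sem{M\Subst{V}{x}}^{TD}_{\rho[y\mapsto d]} = \Sem{M}^{TD}_{\rho[y\mapsto d][x \mapsto \Sem{V}^D_{\rho[y\mapsto d]}]}.
\]
Here the key auxiliary fact is that $\Sem{V}^D_{\rho[y\mapsto d]} = \Sem{V}^D_\rho$, which holds because $y \notin \FV(V)$ (a standard lemma on coincidence of interpretations on free variables, provable by a prior induction on the structure of terms). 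Combined with commutation of environment updates on distinct variables, $\rho[y\mapsto d][x\mapsto e] = \rho[x\mapsto e][y\mapsto d]$, applying the induction hypothesis on $M$ with the environment $\rho[y\mapsto d]$ yields the desired equality. Abstracting over $d$ via the clause $\Sem{\lambda y.M}^D_\rho = \metalambda d.\Sem{M}^{TD}_{\rho[y\mapsto d]}$ then closes the case.
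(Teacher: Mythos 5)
Your proposal is correct and follows essentially the same route as the paper, which likewise strengthens the statement with the companion clause for values $\Sem{W\Subst{V}{x}}^{D}_\rho = \Sem{W}^{D}_{\rho[x\mapsto \Sem{V}^D_\rho]}$ and proceeds by simultaneous induction on $W$ and $M$. The paper leaves the cases implicit, while you correctly fill in the details, in particular the $\lambda$-abstraction case via the Barendregt convention, the coincidence of $\Sem{V}^D$ on environments agreeing on $\FV(V)$, and commutation of updates on distinct variables.
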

\Proof
	In order to prove the lemma, one should also prove the equivalent statement reformulate for values.
	\begin{center}
	$\Sem{W\Subst{V}{x}}^{D}_\rho = \Sem{W}^D_{\rho[x\mapsto \Sem{V}^D_\rho]}$
	\end{center}
	Then the proof easily follows by simultaneous induction on $W$ and $M$.
\QED

\begin{proposition}\label{appprop:soundness-of-interpretation}
	If $M \Red N$ then $\Sem{M}^{TD}_\rho = \Sem{N}^{TD}_\rho$ for any $T$-model $D$ and $\rho \in \Env_D$. Therefore,
	if $=$ is the convertibility relation of $\Red$, that is the symmetric closure of $\RedStar$, then $M = N$
	implies $\Sem{M}^{TD}_\rho = \Sem{N}^{TD}_\rho$.
\end{proposition}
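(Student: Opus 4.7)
The plan is to prove the first statement by induction on the derivation of $M \Red N$, which by Definition 2.4 breaks into three base cases (the axioms $\Betac$, $\BindRight$, $\BindLeft$) together with the compatible closure. The second statement then follows by a straightforward induction on the number of steps in $\RedStar$ together with symmetry of equality.

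For each base axiom the proof reduces to computing both sides of the interpretation using Definition 5.3 and appealing to exactly one of the three monad laws of Definition 3.1. Specifically, for $\Betac$ I compute $\Sem{\Unit V \Bind \lambda x.M}^{TD}_\rho = (\Unit \Sem{V}^D_\rho) \Bind (\metalambda d.\Sem{M}^{TD}_{\rho[x\mapsto d]})$, which by the left unit law equals $\Sem{M}^{TD}_{\rho[x\mapsto \Sem{V}^D_\rho]}$, and this is $\Sem{M\Subst{V}{x}}^{TD}_\rho$ by Lemma 5.4. For $\BindRight$ I observe that $\Sem{\lambda x.\Unit x}^D_\rho = \metalambda d.\,\Unit d = \Unit$, and then the right unit law $a \Bind \Unit = a$ closes the case. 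For $\BindLeft$ the left-hand side unfolds, via associativity, to $\Sem{L}^{TD}_\rho \Bind \metalambda d.(\Sem{M}^{TD}_{\rho[x\mapsto d]} \Bind \metalambda e. \Sem{N}^{TD}_{\rho[y\mapsto e]})$, and the right-hand side unfolds to the same expression provided $\Sem{\lambda y.N}^D_{\rho[x\mapsto d]} = \Sem{\lambda y.N}^D_\rho$; this equality is where the side condition $x\notin\FV(N)$ is used, and it follows by a routine lemma stating that $\Sem{P}^{(T)D}_\rho$ depends only on $\rho \upharpoonright \FV(P)$.

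For the compatible closure, I would establish a contextual invariance lemma: if $\Sem{M}^{TD}_\rho = \Sem{M'}^{TD}_\rho$ for every $\rho$, then $\Sem{\CompCtx\Hole{M}{}}^{TD}_\rho = \Sem{\CompCtx\Hole{M'}{}}^{TD}_\rho$ and similarly for value contexts, both for every $\rho$. This is proved by a straightforward induction on the typed contexts $\ValCtx$ and $\CompCtx$ of Section 2, using that each interpretation clause in Definition 5.3 is a composition of continuous operations applied to the interpretations of immediate subterms, so replacing equal denotations by equal denotations preserves the overall value. The only subtlety is the $\lambda x.\CompCtx$ case, where one must quantify over all $\rho[x \mapsto d]$; but since the hypothesis is for every $\rho$, this causes no trouble.

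The main obstacle I anticipate is not any single case but bookkeeping: one must be careful about the interaction between syntactic substitution and semantic environment update in the $\Betac$ case (handled by Lemma 5.4), and about the free-variable side condition in the $\BindLeft$ case. Once these are in place, the extension from $\Red$ to $=$ is automatic: $\RedStar$ preserves $\Sem{\cdot}^{TD}_\rho$ by induction on reduction length, and since equality is symmetric, the symmetric-transitive closure $=$ does so as well.
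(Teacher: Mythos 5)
Your proof is correct and follows essentially the same route as the paper's: induction on the derivation of $M \Red N$, with each axiom discharged by the corresponding monad law of Definition \ref{def:monad} (the $\Betac$ case additionally using the substitution lemma to reconcile syntactic substitution with environment update), and the compatible closure and the passage to convertibility handled by routine inductions. The paper merely sketches this argument, so your write-up is a faithful, more detailed elaboration of the same proof rather than a different one.
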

\Proof
	By induction on the definition of $M\Red N$.\\
	In particular, case $\Betac: \Unit V \Bind (\lambda x.M) \rightarrow M\Subst{V}{x}$ follows by \ref{applem:subtsInterpretation}.
\QED

\subsection{Intersection type assignment system for $\TFlambdaComp$}

\begin{lemma}[Generation lemma] \label{app:genLemma}
	Assume that $\delta \neq \omega_\Val$ and $\tau \neq \omega_\Comp$, then:
	\begin{enumerate}
		\item $\Gamma \der x: \delta \Then \Gamma(x) \leq_\Val \delta$
		\item $\Gamma \der \lambda x.M: \delta \Then
		\exists I, \delta_i, \tau_i.~  \forall i \in I.\; \Gamma ,x:\delta_i \der M:\tau_i \And  \bigwedge_{i\in I}\delta_i \to \tau_i \leq_\Val \delta$
		\item $\Gamma \der \Unit V : \tau \Then \exists \delta.\; \Gamma \der V:\delta \And T\delta \leq_\Comp \tau$
		\item $\Gamma \der M\Bind V : \tau \Then$ \\
		$\exists I, \delta_i, \tau_i.~  \forall i \in I.\; \Gamma \der M:T\delta_i \And \Gamma\der V:\delta_i\to\tau_i \And
		\bigwedge_{i\in I}\tau_i \leq_\Comp \tau$
	\end{enumerate}
\end{lemma}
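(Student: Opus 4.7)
The plan is to prove all four clauses simultaneously by induction on the height of the derivation, with case analysis on the last rule applied. Throughout I read ``$\delta \neq \omega_\Val$'' and ``$\tau \neq \omega_\Comp$'' as non-equivalence modulo $=_\Val$ respectively $=_\Comp$, since this is the only reading consistent with the $(\leq)$ rule. Rule $(\omega)$ is excluded outright, because its conclusion type would be $\omega_\Val$ (resp.\ $\omega_\Comp$), contradicting the hypothesis. When the last rule is the specific rule matching the shape of the term---$(Ax)$ for (1), $\IntrArr$ for (2), $\IntrUnit$ for (3), $\ElimArr$ for (4)---the conclusion is read off immediately, taking $I$ a singleton for clauses (2) and (4) and invoking reflexivity of $\leq_\Val$ or $\leq_\Comp$.

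For rule $(\leq)$, with premise $\Gamma \der P : \sigma$ and side condition $\sigma \leq \delta$ (resp.\ $\sigma \leq \tau$): if $\sigma$ is non-$\omega$, apply the inductive hypothesis to the premise and compose the resulting subtype inequality with $\sigma \leq \delta$ by transitivity; if instead $\sigma =_\Val \omega_\Val$ (resp.\ $=_\Comp \omega_\Comp$), then $\omega$ being top forces $\delta =_\Val \omega_\Val$ (resp.\ $\tau =_\Comp \omega_\Comp$), contradicting the hypothesis.

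For rule $\IntrInter$, with $\delta \equiv \sigma \Inter \sigma'$ (analogously for $\tau$) derived from two premises typing $P$ with $\sigma$ and $\sigma'$, at least one of $\sigma, \sigma'$ must be non-$\omega$, otherwise $\delta$ would be $\omega$ too. For clauses (2) and (4), apply the inductive hypothesis to each non-$\omega$ premise, obtaining index sets $I_1, I_2$; their union witnesses the claim since $\bigwedge_{i \in I_1 \cup I_2} (\delta_i \to \tau_i) \leq_\Val \sigma \Inter \sigma' = \delta$ by monotonicity of $\Inter$, and analogously for $\tau_i$ in (4). For clause (3) the induction yields $\Gamma \der V : \delta_1$ and $\Gamma \der V : \delta_2$ with $T\delta_1 \leq_\Comp \sigma$ and $T\delta_2 \leq_\Comp \sigma'$; applying the axiom $T\delta_1 \Inter T\delta_2 \leq_\Comp T(\delta_1 \Inter \delta_2)$ we take $\delta := \delta_1 \Inter \delta_2$ and conclude $\Gamma \der V : \delta$ by $\IntrInter$, with $T\delta \leq_\Comp \sigma \Inter \sigma' = \tau$.

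The main obstacle is precisely the $\IntrInter$ case for the structured clauses (2), (3), (4), where combining two independent sub-derivations while preserving the uniform shape of the witness (a family of arrow types, a single $T\delta$, or a family of $T\delta_i, \delta_i \to \tau_i$) relies essentially on the distribution axioms of Definition \ref{def:type-theories-Th_V-Th_C}, namely $(\delta \to \tau) \Inter (\delta \to \tau') \leq_\Val \delta \to (\tau \Inter \tau')$ and $T\delta \Inter T\delta' \leq_\Comp T(\delta \Inter \delta')$. Without these the induction would fail to close into a single family of the required shape. Everything else is routine bookkeeping, and Lemma \ref{lem:tau-neq-omega} is available if at any point we need to rewrite a non-trivial $\tau$ as a $T\delta$.
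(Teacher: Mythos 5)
Your proof is correct and follows essentially the same route as the paper's own: induction on the derivation with case analysis on the last rule, excluding $(\omega)$ outright, composing inequalities by transitivity in the $(\leq)$ case, and merging index sets (resp.\ intersecting the two witnesses $\delta_1,\delta_2$) in the $\IntrInter$ case. One small nitpick: in clause (3) the inequality you actually need is $T(\delta_1\Inter\delta_2)\leq_\Comp T\delta_1\Inter T\delta_2$, the converse of the axiom you cite; it follows from the covariance rule for $T$ (and Remark \ref{rem:type-theories-Th_V-Th_C} notes the two sides are $=_\Comp$), so nothing is lost.
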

\Proof
	By induction on terms and their derivations. 
	\newline \newline
	\textit{1.} By induction on derivations. In this case only axiom \textit{(Ax)}, and rules $\IntrInter$, \textit{($\leq$)} could have been applied.\\
	In the first case $\Gamma (x)=\delta$.\\
	If $\IntrInter$ or \textit{($\leq$)} have been applied, the implication is proved by applying induction hypothesis.\newline \newline
	\textit{2.} By induction on derivations. The most interesting case is when $\delta = \delta_1 \Inter \delta_2$ and the last rule that is applied is $\IntrInter$, that is:
	\begin{center}
		$\prooftree
		\Gamma \der \lambda x.M:\delta_1 \quad \Gamma \der \lambda x.M:\delta_2
		\justifies
		\Gamma \der \lambda x.M:\delta_1 \Inter \delta_2
		\endprooftree$
	\end{center}
	by the induction hypothesis there exist $I',J'\subset \mathbb{N},\delta'_i, \tau'_i, \delta'_j, \tau'_j$ with $i\in I'$ and $j\in J'$, such that:
	\[
	\begin{array}{c}
		\Gamma , x:\delta'_i\der M:\tau'_i, \quad \bigwedge_{i\in I'} \delta'_i \to \tau'_i \leq \delta_1\\
		\Gamma , x:\delta'_j\der M:\tau'_j, \quad \bigwedge_{j\in J'} \delta'_j \to \tau'_j \leq \delta_2
	\end{array}
	\]
	The result follows, as $ (\bigwedge_{i\in I'} \delta'_i \to \tau'_i)\Inter (\bigwedge_{j\in J'} \delta'_j \to \tau'_j)\leq \delta$ and $I:=I'\cup J'$.
	\newline \newline
	\textit{3.} By induction on derivations.\\
	In this case possible applied rules are $\IntrInter$, $\IntrUnit$ or \textit{($\leq$)}.\\
	In the first case one has $\Gamma \der V:\delta$ and $\tau=T\delta$, the other two cases follows by the induction hypothesis. 
	\newline \newline
	\textit{4.} By induction on derivations.\\
	By assumption on $\tau$ and on the shape of $M\Bind V$, the final applied rules are $\ElimArr$, $\IntrInter$, \textit{($\leq$)}.\\
	In detail, the first case is
	\begin{center}
	$\prooftree
	\Gamma \der M:T\delta \quad \Gamma \der V:\delta \to \tau
	\justifies
	\Gamma \der M\Bind V:\tau
	\endprooftree $
	\end{center}
	One can take $I=\{ 1\}$, $\tau_1=\tau$, and $\delta_1=\delta$.\\
	In the second case, the final applied rule is \textit{($\leq$)} and the conclusion follows easily by the induction hypothesis.\\
	Finally, the third case is when $\IntrInter$ is the final rule applied. Thus
	\[
	\begin{array}{rcl}
	\tau=\tau_1 \Inter \tau_2 & \mbox{ and } & \prooftree
	\Gamma \der M\Bind V:\tau_1 \quad \Gamma \der M\Bind V:\tau_2
	\justifies
	\Gamma \der M\Bind V:\tau_1\Inter \tau_2
	\endprooftree
	\end{array}
	\]
	By the induction hypothesis there exist $I',J'\subset \mathbb{N},\delta'_i, \tau'_i, \delta'_j, \tau'_j$ with $i\in I'$ and $j\in J'$, such that:
	\[
	\begin{array}{rcl}
	\Gamma \der M:T\delta'_i, &\quad & \Gamma \der V : \delta'_i\to \tau'_i\\
	\Gamma \der M:T\delta'_j, &\quad &\Gamma \der V : \delta'_j\to \tau'_j\\
	\bigwedge_{i\in I'} \tau'_i\leq \tau_1 &\quad & \bigwedge_{j\in J'} \tau'_j\leq \tau_2
	\end{array}
	\]
	Hence the conclusion, as $ (\bigwedge_{i\in I'} \tau'_i )\Inter (\bigwedge_{j\in J'} \tau'_j)\leq \delta$ and $I:=I'\cup J'$.\\
\QED

\begin{lemma}[Substitution lemma] \label{app:SubstLemma} 
	If $\Gamma, x:\delta \der M:\tau$ and $\Gamma \der V:\delta$ then $\Gamma \der M\Subst{V}{x}: \tau$.
\end{lemma}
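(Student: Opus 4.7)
The plan is to prove the statement by simultaneous induction on the derivations of $\Gamma, x:\delta \der M:\tau$ together with the analogous companion statement for values, namely that if $\Gamma, x:\delta \der W:\delta'$ and $\Gamma \der V:\delta$ then $\Gamma \der W\Subst{V}{x}:\delta'$. Simultaneous induction is forced on us because computations contain values (via $\Unit W$ and $M \Bind W$) while values contain computations (via $\lambda y.M'$), so the two substitution operations must be analyzed in tandem.

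The non-syntax-directed rules $(\omega)$, $\IntrInter$ and $(\leq)$ are handled uniformly: since they do not inspect the term, substitution commutes with them, and the induction hypothesis followed by reapplication of the same rule yields the conclusion. For the axiom $(\text{Ax})$ applied to a variable $y$ with assigned type $\delta'$, there are two subcases. If $y \equiv x$, then the axiom can only assign precisely the type $\delta$ to $x$ (any strengthening must come from $(\leq)$, a case already handled), and since $x\Subst{V}{x} \equiv V$ the conclusion coincides with the hypothesis $\Gamma \der V:\delta$. If $y \not\equiv x$, then $y\Subst{V}{x} \equiv y$ and $y:\delta' \in \Gamma$, so the same axiom rule gives $\Gamma \der y:\delta'$.

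For $\IntrArr$ we have $W \equiv \lambda y.M'$ with premise $\Gamma, x:\delta, y:\delta'' \der M':\tau''$; up to $\alpha$-conversion we may assume $y \not\equiv x$ and $y \notin \FV(V)$, so that $(\lambda y.M')\Subst{V}{x} \equiv \lambda y.\,M'\Subst{V}{x}$. The induction hypothesis for computations yields $\Gamma, y:\delta'' \der M'\Subst{V}{x}:\tau''$, and reapplying $\IntrArr$ produces the required derivation. The rules $\IntrUnit$ and $\ElimArr$ are analogous: substitution commutes syntactically with $\Unit$ and $\Bind$, so applying the induction hypothesis to each premise and then reapplying the same rule yields the thesis.

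The main obstacle is bookkeeping rather than a deep conceptual difficulty: one must enforce the $\alpha$-renaming convention in the $\IntrArr$ case to avoid capture of variables free in $V$, and one must organize the induction so that the value and computation cases feed into each other cleanly. No appeal to the Generation Lemma is needed, because working directly by induction on derivations absorbs the non-syntax-directed rules as ordinary inductive cases.
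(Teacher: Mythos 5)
Your proof is correct, but it takes a genuinely different route from the paper's. You argue by induction on the typing derivation, so the non-syntax-directed rules $(\omega)$, $\IntrInter$ and $(\leq)$ become ordinary inductive cases in which substitution trivially commutes with the last rule, and no inversion principle is needed. The paper instead proceeds by simultaneous induction on the structure of the terms $W$ and $M$, and therefore must invoke the Generation Lemma (Lemma~\ref{lem:genLemma}) to recover, from a judgment such as $\Gamma, x:\delta \der \lambda y.M':\delta'$, a family of premises $\Gamma, x:\delta, y:\delta'_i \der M':\tau_i$ with $\bigwedge_{i}(\delta'_i\to\tau_i) \leq_\Val \delta'$, after which it reassembles the derivation using $\IntrArr$, $\IntrInter$ and $(\leq)$. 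Your approach is more elementary and self-contained; the paper's reuses the Generation Lemma, which it needs anyway for Subject Reduction, and keeps the case analysis driven by the shape of the term rather than by the last rule applied. One small point common to both arguments: in the abstraction case the induction hypothesis is applied under the enlarged basis $\Gamma, y:\delta''$, which tacitly requires a weakening property (from $\Gamma \der V:\delta$ infer $\Gamma, y:\delta'' \der V:\delta$ for $y \notin \Dom(\Gamma)$); this is routine for the system at hand but worth stating explicitly.
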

\Proof
	In order to prove the lemma, one should also strengthen it by adding its counterpart for values.
	\begin{center}
		$\Gamma, x:\delta \der W:\delta'$ and $\Gamma \der V:\delta \Rightarrow \Gamma \der W\Subst{V}{x}: \delta'$
	\end{center}
	Then the proof is by simultaneous induction on $W$ and $M$.\newline \newline
	We only give the proof for values:\\
	\textbf{Base case: } $W\equiv y$, trivial in both cases in which $x \not\equiv y$ or $x \equiv y$.\\
	\textbf{Inductive step: } $W\equiv \lambda y.M'$\\
	By Generation lemma \ref{app:genLemma}, if $\Gamma ,x:\delta \der \lambda y.M':\delta'$ then $\exists I, \delta'_i, \tau . \forall i\in I	\, \Gamma, x:\delta ,\delta'_i \der M:\tau_i \mbox{ and } \bigwedge_{i\in I}\	\delta_i \to \tau_i \leq \delta'$\\
	Thus 
	\[ \begin{array}{rl}
	\Gamma, y:\delta'_i \der M\Subst{V}{x} :\tau_i &\qquad \mbox{by IH}\\
	\Rightarrow \Gamma \der \lambda x.(M\Subst{V}{x}):\delta'_i \to \tau_i & \\
	\Rightarrow \Gamma \der (\lambda x. M)\Subst{V}{x}: \delta'_i \to \tau_i & \qquad \mbox{ by definition of substitution} \\
	\Rightarrow \Gamma \der (\lambda x. M)\Subst{V}{x}: \delta' & \qquad \mbox{ by } (\leq),\IntrInter \mbox{ rules }
	\end{array}\]
	Proof by induction on $M$ is quite specular.
\QED

\begin{theorem}[Subject reduction]\label{app:subjectReduction}
	If $\Gamma \der M:\tau$ and $M \Red N$ then $\Gamma \der N:\tau$.
\end{theorem}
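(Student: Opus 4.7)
The proof proceeds by induction on the definition of $M \Red N$, which is the compatible closure of $\MonRed = \Betac \cup \BindRight \cup \BindLeft$. I split the task in two stages: first I establish the property for each of the three basic reduction schemes applied at the root of $M$, and then I lift the result through computation contexts by a straightforward induction. The $\Betac$ case is already sketched in the excerpt and relies on the Generation Lemma \ref{lem:genLemma} and the Substitution Lemma \ref{lem:SubstLemma}.

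For the $\BindRight$ case, assume $\Gamma \der M \Bind (\lambda x.\Unit x) : \tau$, $\tau \neq_\Comp \omega_\Comp$. Lemma \ref{lem:genLemma}(4) yields $\delta_i, \tau_i$ with $\Gamma \der M : T\delta_i$, $\Gamma \der \lambda x.\Unit x : \delta_i\to\tau_i$, and $\bigwedge_i \tau_i \leq_\Comp \tau$. From Lemma \ref{lem:genLemma}(2) and (3), the derivations of $\lambda x.\Unit x$ reduce (modulo $\IntrInter$ and $(\leq)$) to $\Gamma, x:\delta_i \der x:\delta'_i$ with $T\delta'_i \leq_\Comp \tau_i$, and Lemma \ref{lem:genLemma}(1) gives $\delta_i \leq_\Val \delta'_i$, whence $T\delta_i \leq_\Comp T\delta'_i \leq_\Comp \tau_i$. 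Rule $(\leq)$ then turns $\Gamma \der M:T\delta_i$ into $\Gamma \der M:\tau_i$, and $\IntrInter$ followed by $(\leq)$ concludes $\Gamma \der M : \tau$.

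The $\BindLeft$ case is the main obstacle. Assume $\Gamma \der (L \Bind \lambda x.M) \Bind \lambda y.N : \tau$ with $\tau \neq_\Comp \omega_\Comp$, so by Lemma \ref{lem:tau-neq-omega} one may take $\tau =_\Comp T\delta^*$. Applying Lemma \ref{lem:genLemma}(4) twice, then (2) and (3), I obtain, for each index, value types $\sigma$ and computation types $T\delta$ such that $\Gamma \der L : T\sigma$, $\Gamma, x:\sigma \der M : T\delta$, and $\Gamma, y:\delta \der N : \tau_i$ for suitable $\tau_i$ whose intersection refines $\tau$. The side condition $x\not\in\FV(N)$ allows me to weaken the basis of $N$ to include $x:\sigma$ without interfering with the derivation, and covariance of typing in the basis (a routine admissible rule, since variables are typed by $\Gamma(x)\leq_\Val\delta$) lets me pass from $y:\delta_i$ to $y:\delta$ when $\delta\leq_\Val\delta_i$. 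Rule $\IntrArr$ produces $\Gamma, x:\sigma \der \lambda y.N : \delta\to\tau_i$, then $\ElimArr$ gives $\Gamma, x:\sigma \der M\Bind\lambda y.N : \tau_i$. A second $\IntrArr$ and $\ElimArr$ then yield $\Gamma \der L \Bind \lambda x.(M\Bind\lambda y.N): \tau_i$ for each $i$, and $\IntrInter$ together with $(\leq)$ recovers the ambient type $\tau$. The delicate point is bookkeeping the intersections arising from multiple decompositions via Lemma \ref{lem:genLemma}(4), which is handled uniformly by using the axiom $T\delta\Inter T\delta' \leq_\Comp T(\delta\Inter\delta')$ to collapse intersections of $T$-types.

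For the contextual closure, I proceed by induction on the computation context $\CompCtx$ in which the redex occurs. For the non-trivial shapes $\Unit \ValCtx$, $\CompCtx' \Bind V$ and $M \Bind \ValCtx$, I apply Lemma \ref{lem:genLemma} to isolate the typings of the subterm that is reduced, invoke the induction hypothesis to preserve its typing, and reassemble with the rules $\IntrUnit$ or $\ElimArr$, closing under $\IntrInter$ and $(\leq)$ when necessary. Reduction under a $\lambda$-abstraction nested in a computation context is handled in the same style via $\IntrArr$. This completes the subject reduction proof.
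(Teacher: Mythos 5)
Your proposal is correct and follows essentially the same route as the paper: induction on the derivation of $M \Red N$, treating the three root reductions via the Generation Lemma (plus the Substitution Lemma for $\Betac$) and reassembling with $\IntrInter$ and $(\leq)$, then lifting through contexts. The only presentational difference is that the paper makes explicit the BCD-style inversion of $\bigwedge_i(\delta_i\to\tau_i)\leq_\Val\delta\to\tau$ (its Lemma 2.4(ii) citation) where you only allude to the intersection bookkeeping, and conversely you spell out the contextual-closure step that the paper leaves implicit.
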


\Proof
	By induction on derivations.\\
	Firstly, we consider the case $M\equiv \Unit V \Bind (\lambda x. M')$ and $N\equiv M'\Subst{V}{x}$, that is $M\Red_{\Betac}N$.\\
	Since $\Gamma \der M:\tau $, by Gen. Lemma \ref{app:genLemma}:
	\begin{equation*}
	\Rightarrow \exists I, \delta_i, \tau_i. \ \forall i\in I \ \Gamma \der \Unit V: T\delta_i \mbox{ and } \Gamma \der \lambda x.M':\delta_i \to \tau_i \mbox{ and } \bigwedge_{i\in I} \tau_i \leq \tau 
	\end{equation*}
	And by re applying twice Gen. Lemma
	\begin{equation*}
	\Rightarrow \exists I, \delta_i, \tau_i. \ \forall i\in I\   \exists \delta'_i\  \Gamma \der V: \delta'_i \mbox{ and } T\delta'_i\leq T\delta_i \mbox{ and }  \Gamma , x:\delta_i  \der M':\tau_i \mbox{ and } \bigwedge_{i\in I} \tau_i \leq \tau 
	\end{equation*}
	By $\IntrInter$ and $(\leq)$ rules, $\Gamma , x:\delta \der M':\tau$, and hence by Substitution lemma \ref{app:SubstLemma} $\Gamma \der M'\Subst{V}{x} : \tau$.
	\newline \newline
	Second case: $M\equiv M' \Bind \lambda x.\Unit x$ and $N=M'$ is proved in a very similar way to the first case, by applying twice Gen. Lemma and appealing $(\leq)$ rule.
	\newline \newline
	Third case: 
	$\bar{M}\equiv (L\Bind \lambda x.M)\Bind \lambda y.N \Red L \Bind \lambda x.(M\Bind \lambda y.N)\equiv \bar{N}$
	by applying Gen. Lemma \ref{app:genLemma} on $\bar{M}$:
	\[\begin{array}{lll}
	\exists I, \delta_i , \tau_i . \forall i\in I \, \Gamma \der L\Bind \lambda x. M: T\delta_i  \mbox{ and } \Gamma \der \lambda y.N: \delta_i \rightarrow \tau_i  \mbox{ and } \bigwedge_{i\in I} \tau_i \leq \tau
	\end{array}\] 
	by applying Gen. Lemma \ref{app:genLemma} on $L\Bind \lambda x. M$:
	\[\begin{array}{lll}
	\exists I_i, \delta_{ij} , \tau_{ij} . \forall i,j \, \Gamma \der L:T\delta_{ij}  \mbox{ and } \Gamma \der \lambda x.M: \delta_{ij} \rightarrow T\delta_i  \mbox{ and } \bigwedge_{ij} T\delta_{ij} \leq T\delta_i
	\end{array}\] 
	by applying Gen. Lemma \ref{app:genLemma} on $\lambda y. N$:
	\[\begin{array}{lll}
	\exists I_k, \delta_{ik} , \tau_{ik} . \forall i,k \, \Gamma,\ y:\delta_{ik} \der N:\tau_{ik}  \mbox{ and }  \bigwedge_{ik} (\delta_{ik}\rightarrow \tau_{ik}) \leq \delta_i\rightarrow \tau_i
	\end{array}\] 
	by applying Gen. Lemma \ref{app:genLemma} on $\lambda x.M$:
	\[\begin{array}{lll}
	\exists I_h, \delta_{ijh} , \tau_{ijh} . \forall i,j,h \, \Gamma,\ x:\delta_{ijh} \der M:\tau_{ijh}  \mbox{ and }  \bigwedge_{ijh} (\delta_{ijh}\rightarrow \tau_{ijh}) \leq \delta_{ij}\rightarrow T\delta_i
	\end{array}\] 
	Before moving on with the rest of the proof, we properly restate a lemma proved in \cite{BCD'83} (lemma 2.4 (ii)):
	\begin{lemma}\label{app:BCD83}
		if $\bigwedge_{i=1}^n\delta_i\rightarrow \tau_i\leq \bar{\delta}\rightarrow \bar{\tau}$ and $\bar{\tau}\neq \omega_\Comp$, then there exist $i_1,\ldots ,i_k\in\{1,\ldots ,n\}$ such that $\bigwedge_{j=1}^k\delta_{i_j}\geq \bar{\delta}$ and $\bigwedge_{j=1}^k\tau_{i_j}\leq \tau$.
	\end{lemma}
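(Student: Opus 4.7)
The plan is to mimic the classical proof of beta-soundness for intersection types due to Barendregt, Coppo and Dezani, adapted to the two-sorted theory $\Th_\Val$. Rather than working directly with an intersection of arrows, I will prove by induction on the derivation of $\leq_\Val$ the following strengthened statement: for every $\sigma \in \ValType$ and every arrow $\bar\delta \to \bar\tau$ with $\bar\tau \neq_\Comp \omega_\Comp$, if $\sigma \leq_\Val \bar\delta \to \bar\tau$ then there exist finitely many arrow types $\delta'_1 \to \tau'_1,\ldots,\delta'_k \to \tau'_k$ occurring as top-level conjuncts of $\sigma$ (up to the equivalences imposed by $\Th_\Val$, so that the distributivity axiom lets us identify $(\delta\to\tau)\Inter(\delta\to\tau')$ with $\delta\to(\tau\Inter\tau')$) such that $\bar\delta \leq_\Val \bigwedge_j \delta'_j$ and $\bigwedge_j \tau'_j \leq_\Comp \bar\tau$. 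Specialising to $\sigma = \bigwedge_{i=1}^n (\delta_i \to \tau_i)$ then yields the lemma exactly as stated, since in that case the admissible conjuncts are precisely the $\delta_i \to \tau_i$.

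The induction treats each axiom and closure rule of $\Th_\Val$ in turn. The degenerate axioms $\delta \leq_\Val \omega_\Val$ and $\omega_\Val \leq_\Val \omega_\Val \to \omega_\Comp$ are incompatible with the assumption $\bar\tau \neq_\Comp \omega_\Comp$ (using Lemma \ref{lem:tau-neq-omega} to see that no non-trivial computation type is dominated by $\omega_\Comp$). Reflexivity on an arrow target and the arrow-monotonicity rule are immediate, with $k=1$ and the singleton factor witnessing both conditions. The distributivity axiom $(\delta\to\tau)\Inter(\delta\to\tau')\leq_\Val \delta\to(\tau\Inter\tau')$ picks both arrow conjuncts on the left as factors, whose codomains intersect to match $\tau\Inter\tau'$ and whose common domain is $\delta = \bar\delta$. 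Intersection elimination $\sigma_1 \Inter \sigma_2 \leq_\Val \sigma_i$, composed with an arrow target via transitivity, restricts the selection to factors coming from the relevant $\sigma_i$. Intersection introduction is vacuous because the target is arrow-shaped, not an intersection.

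The main obstacle is the transitivity case: given $\sigma \leq_\Val \sigma' \leq_\Val \bar\delta\to\bar\tau$, the induction hypothesis on the right step produces arrow factors $\delta'_j\to\tau'_j$ of $\sigma'$ with $\bar\delta \leq_\Val \bigwedge_j \delta'_j$ and $\bigwedge_j \tau'_j \leq_\Comp \bar\tau$, but this gives information about $\sigma'$, not $\sigma$. To bridge the gap, I discard the indices $j$ with $\tau'_j =_\Comp \omega_\Comp$ (they contribute nothing to the lower bound $\bigwedge_j \tau'_j \leq_\Comp \bar\tau$, and removing them preserves the two inequalities because $\bar\tau \neq_\Comp \omega_\Comp$ guarantees that not all $j$ are discarded) and then, for each remaining $j$, derive $\sigma \leq_\Val \delta'_j \to \tau'_j$ by transitivity from $\sigma \leq_\Val \sigma'$ composed with the selection of $\delta'_j\to\tau'_j$ out of $\sigma'$ via intersection elimination and the distributivity/monotonicity axioms; applying the induction hypothesis to each of these smaller-target derivations yields arrow factors of $\sigma$ whose unions (together with the distributivity axiom used to merge arrows with equal domain) form the required global selection. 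Once this case is handled, conjoining the resulting factors across $j$ and simplifying via the rules of $\Th_\Val$ and $\Th_\Comp$ gives the desired indices, completing the induction and hence the lemma.
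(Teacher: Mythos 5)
The paper does not actually prove this statement: it is imported verbatim from \cite{BCD'83} (Lemma 2.4(ii)), so your attempt has to stand entirely on its own. Your overall shape --- induction on the derivation of $\leq_\Val$, with every axiom either vacuous (targets that are not arrows, or targets with codomain $\omega_\Comp$), immediate (reflexivity, arrow monotonicity, intersection elimination), or handled by selecting both conjuncts (distributivity) --- is the classical route, and your bookkeeping at the end (discarding the indices $j$ with $\tau'_j =_\Comp \omega_\Comp$ and noting that $\bar\tau \neq_\Comp \omega_\Comp$ forces at least one index to survive) is correct.

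The gap is in the transitivity case, which you rightly identify as the crux. You declare the induction to be on the derivation of $\sigma \leq_\Val \bar\delta\to\bar\tau$, but after extracting the arrow factors $\delta'_j\to\tau'_j$ of $\sigma'$ from the right premise, you manufacture \emph{new} derivations of $\sigma \leq_\Val \delta'_j\to\tau'_j$ by composing the left premise $\sigma\leq_\Val\sigma'$ with fresh intersection-elimination and distributivity steps, and then invoke the induction hypothesis on them. These are not subderivations of the derivation you started from --- each contains the whole left premise plus additional material --- and they are not smaller under any measure you have set up; nor is $\delta'_j\to\tau'_j$ structurally smaller than $\bar\delta\to\bar\tau$, since $\delta'_j$ and $\tau'_j$ come from $\sigma'$ and can be arbitrarily large. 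So the induction is not well-founded and the argument does not close. The standard repair is to strengthen the statement so that the right-hand side is an arbitrary $\sigma'$ with a designated arrow conjunct: if $\sigma\leq_\Val\sigma'$ and $\delta\to\tau$ is a conjunct of $\sigma'$ with $\tau\neq_\Comp\omega_\Comp$, then arrow conjuncts of $\sigma$ cover it in your sense. With that formulation the transitivity case applies the induction hypothesis first to the right premise (yielding arrow conjuncts of $\sigma'$) and then, for each surviving $j$, to the \emph{left} premise $\sigma\leq_\Val\sigma'$ with $\delta'_j\to\tau'_j$ as the designated conjunct of its right-hand side; both are genuine subderivations, and the two families of inequalities compose exactly as in your final sentence. (Alternatively, one can first prove a transitivity-elimination result for $\Th_\Val$ and induct on cut-free derivations.)
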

	By this lemma we extract useful inequalities:
	$\forall i\in I\, \exists \bar{I_k}\subset I_k$ and $\bar{I_h}\subset I_h$ such that:
	\[\begin{array}{lll}
	\delta_i\leq\delta_{ik}\leq\delta_{ijh} & \tau_i\geq\tau_{ik}
	\end{array}\]
	By these inequalities one can build the following derivation tree that prove this third part of the statement, the use of enforcement of basis and $(\leq)$-rule are hidden for sake of readability:
	\[\begin{array}{lll}
	
	\prooftree
	\Gamma \der L:T\delta_{ij}\leq T\delta_i \qquad 
	
	\prooftree
	
	\prooftree
	\Gamma , ,x:\delta_i\der M:T\delta_i \qquad 
	
	\prooftree
	\Gamma , ,x:\delta_i, y:\delta_i\der N:\tau_{ik}\leq \tau_i
	\justifies
	\Gamma , ,x:\delta_i \der \lambda y.N:\delta_i\rightarrow \tau_i
	\endprooftree
	\justifies
	\Gamma ,x:\delta_i\der M\Bind \lambda y.N:\tau_i
	\endprooftree
	\justifies
	\Gamma \der \lambda x.(M\Bind \lambda y.N): \delta_i\rightarrow \tau_i
	\endprooftree
	\justifies
	\Gamma \der L\Bind \lambda x.(M\Bind \lambda y.N):\tau_i\leq \tau 
	\endprooftree
	
	\end{array}\]

\QED

\subsection{The filter model construction}
\begin{lemma}\label{app:filtMonad}
	Let $\TT: |\Dcat| \to |\Dcat|$ be as above. Define $\UnitSub{D}^\Filt: \Filt_D \to \Filt_{TD}$ and 
	$\Bind_{D,E}^\Filt: \Filt_{TD} \times \Filt_{D \to \TT E} \, \to \Filt_{\TT E}$ such that:
	\[\UnitSub{D}^\Filt \ d = \Up \Set{T\delta \in \Types_{\TT D} \mid \delta \in d} \qquad
	t \Bind_{D,E}^\Filt e = \Up\Set{\tau \in \Types_{\TT E} \mid \exists\ \delta\to\tau \in e. \; T\delta \in t}\]
	Then $(\TT, \Unit^\Filt, \Bind^\Filt)$ is a monad over $\Dcat$.
\end{lemma}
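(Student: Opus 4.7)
\Proof[Proof plan]
The plan is to verify (i) that $\UnitSub{D}^\Filt$ and $\Bind^\Filt_{D,E}$ are Scott-continuous morphisms in $\Dcat$, and (ii) the three monad equations of Definition~\ref{def:monad}.

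For well-definedness, I would first check that $\UnitSub{D}^\Filt d$ is a filter: it is upward closed by construction, and closure under $\Inter$ follows from the axiom $T\delta \Inter T\delta' \leq^T T(\delta \Inter \delta')$ in $\Th_\Val^T$ together with the fact that $d$ is closed under $\Inter$. Analogously, $t \Bind^\Filt e$ is a filter: if $\delta_1 \to \tau_1, \delta_2 \to \tau_2 \in e$ with $T\delta_i \in t$, then $(\delta_1 \Inter \delta_2) \to (\tau_1 \Inter \tau_2) \in e$ (using the distributivity axiom of $\Th_\Val$ and the filter property of $e$) and $T(\delta_1 \Inter \delta_2) \in t$ (by the same $T$-axiom applied in $t$), yielding $\tau_1 \Inter \tau_2$ in the set. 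Continuity of both maps then follows from the fact that their defining clauses involve only existentials over finitely many elements of the input filters, so directed sups (i.e.\ unions of filters) commute with the definitions.

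Next I would verify the three equations by unfolding the definitions. For the \textbf{left unit} law,
\[
(\UnitSub{D}^\Filt d) \Bind^\Filt e \;=\; \Up\Set{\tau \mid \exists\,\delta\to\tau \in e.\; T\delta \in \UnitSub{D}^\Filt d},
\]
and $T\delta \in \UnitSub{D}^\Filt d$ reduces via the axioms of $\Th_\Val^T$ to $\delta \in d$, so we obtain exactly $e \cdot d = \Phi^\Filt(e)(d)$, i.e.\ application of $e$ viewed as a continuous function (see Proposition~\ref{prop:eats}), which is the correct interpretation of $f\,d$ at the filter level. For the \textbf{right unit} law, I first compute $\Psi^\Filt(\UnitSub{D}^\Filt)$ via (\ref{eq:PhiPsi}): since $\UnitSub{D}^\Filt(\Up\delta_i) = \Up T\delta_i$, this filter is the upward closure of finite intersections $\bigwedge_i \delta_i \to T\delta_i$. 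Then $a \Bind^\Filt \Psi^\Filt(\UnitSub{D}^\Filt)$ unfolds to $\Up\{\tau \mid \exists\,\delta\to\tau \in \Psi^\Filt(\UnitSub{D}^\Filt),\; T\delta \in a\}$, and using the arrow axioms of $\Th_\Val$ one shows $\delta\to\tau$ lies in $\Psi^\Filt(\UnitSub{D}^\Filt)$ iff $T\delta \leq^T \tau$; combined with upward closure and $\Inter$-closure of $a$ this yields exactly $a$.

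For \textbf{associativity}, I would show $(a \Bind^\Filt f) \Bind^\Filt g = a \Bind^\Filt \Psi^\Filt(\metalambda d.(f\,d \Bind^\Filt g))$ by expanding both sides into a double existential over types and rearranging using the arrow and $T$-axioms. The left side gives $\Up\{\sigma \mid \exists \tau'\to\sigma \in g.\; \exists \delta\to\tau \in f.\; T\delta \in a \text{ and } T\tau' \leq^T \tau\}$; the right side, after computing $\Psi^\Filt$ of the pointwise bind, gives $\Up\{\sigma \mid \exists \delta \to \sigma \in \Psi^\Filt(\metalambda d.(f d \Bind^\Filt g)).\; T\delta \in a\}$, and unfolding membership in that $\Psi^\Filt$ yields the same double existential after using distributivity of arrow over $\Inter$. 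This rearrangement is where the main technical work lies, because one has to juggle the interplay between the filter closure under $\Inter$ and the axioms of $\Th_\Val^T$; the key lemma is that a finite intersection of arrows $\bigwedge_i (\delta_i \to \tau_i)$ lies above $\delta \to \tau$ iff $\bigwedge_{\delta\leq\delta_i}\tau_i \leq \tau$ (the continuity condition (\ref{eq:continuousEats}), which holds for $\Th_\Val$). With this in hand the two sides match. Finally, functoriality of $\TT$ and naturality of $\Unit^\Filt$ and $\Bind^\Filt$ follow from the same type-theoretic manipulations, completing the proof that $(\TT, \Unit^\Filt, \Bind^\Filt)$ is a monad.
\QED
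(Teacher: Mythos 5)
Your proposal is correct and follows essentially the same route as the paper's proof: unfold the definitions of $\Unit^\Filt$ and $\Bind^\Filt$ and verify the three monad equations of Definition~\ref{def:monad} using the axioms of $\Th_\Val$ and $\Th^T$ together with the filter properties (the paper's appendix proof does exactly this for the two unit laws and dismisses associativity as ``direct, but tedious, calculations'', which you sketch in more detail via the continuity condition~(\ref{eq:continuousEats}), and you additionally check well-definedness and continuity, which the paper omits). The only caveat is a reversed inequality in your associativity sketch --- membership of $T\tau'$ in an up-closure $\Up X$ requires some generator $\tau \in X$ with $\tau \leq^T T\tau'$, not $T\tau' \leq^T \tau$ --- but this is a slip of notation rather than a flaw in the approach.
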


\Proof
	In order to prove that  $(\TT, \Unit^\Filt, \Bind^\Filt)$ is a monad over $\Dcat$, it suffices to show that the triple respects axioms in definition \ref{def:monad}, namely:
	\begin{enumerate}
		\item $(\UnitSub{D}^\Filt \ d) \Bind_{D,E}^\Filt f = f\,d$
		\item $a \Bind_{D,D}^\Filt \UnitSub{D}^\Filt = a$
		\item $(a \Bind_{D,E}^\Filt f) \Bind_{D,E,F}^\Filt g = a \Bind_{D,E,F}^\Filt \metalambda d. (f\,d \Bind_{E,F}^\Filt g)$.
	\end{enumerate}
	\textbf{Case 1:} Let $d\in \Filt_D$ and $f\in \Filt_{D \to \TT E}$, with $D, E\in |\Dcat|$.\\
	\[
	\begin{array}{rcll}
	(\UnitSub{D}^\Filt \ d) \Bind_{D,E}^\Filt f & = & \Up \Set{\tau \in \Types_{\TT E} \mid \exists\ \delta\to\tau \in f. \; T\delta \in \UnitSub{D}^\Filt \ d}  & \mbox{by definition}\\ [1mm]
	& = & \Up \Set{\tau \in \Types_{\TT E} \mid \exists\ \delta\to\tau \in f. \; T\delta \in \Up \Set{T\delta' \in \Types_{\TT D} \mid \delta' \in d}} & \mbox{by definition}\\[1mm]
	& = & \Up \Set{\tau \in \Types_{\TT E} \mid \exists\ \delta\to\tau \in f. \; \delta \in d}\\[1mm]
	\end{array}
	\] 
	Set $\Set{\tau \in \Types_{\TT E} \mid \exists\ \delta\to\tau \in f. \; \delta \in d}=:A$.\\
	Since \[f \cdot d = \Set{\tau \in \Types_E \mid \exists\, \delta \to \tau \in f. \;\delta \in d} \in \Filt_E\], one has to prove that $\Up A =A$, and, in particular, to show that  if $\tau \in A$ then $\tau' \in A \forall \tau'\geq \tau$.\\
	In fact, if $\tau'\geq_{\TT E} \tau$ then $\delta\to\tau\leq_{D\to \TT E}\delta \to \tau'$, and if $\delta \to \tau\ in f$ then also $\delta\to\tau'\in f$, because $f$ is a filter.\\
	\newline
	\textbf{Case 2:} Let $a\in \Filt_{\TT D}$ and $\UnitSub{D}^\Filt\in \Filt_{D \to \TT D}$. Actually, $\UnitSub{D}^\Filt \in \Filt_D\to\Filt_{\TT D}$ that is isomorphic to $\Filt_{D \to \TT D}$ by Proposition \ref{prop:eats}.
	\[
	\begin{array}{rcll}
	a \Bind_{D,D}^\Filt \UnitSub{D}^\Filt & = & \Up \Set{\tau \in \Types_{\TT D} \mid \exists\ \delta\to\tau \in \UnitSub{D}^\Filt. \; T\delta \in a}  & \mbox{by definition}\\ [1mm]
	\end{array}
	\]
	But if $\delta \to\tau\in \UnitSub{D}^\Filt$ then $\tau=_{\Comp}T\delta$
	\[
	\begin{array}{rcll}
	a \Bind_{D,D}^\Filt \UnitSub{D}^\Filt & = & \Up \Set{\tau \in \Types_{\TT D} \mid T\delta \in a} & = a\\[1mm]
	\end{array}
	\]
	\textbf{Case 3:} By direct, but tedious, calculations.
\QED

\subsection{Soundness and completeness of the type system}

\begin{lemma}\label{app:inclusion}
	Let $D$ be a $T$-model,
	$\Sem{\cdot}^{TD}$ be a monadic type interpretation, and assume $\xi \in \TypeEnv_D$.\\
	The couple $(D,\xi)$ \emph{preserves} $\leq_\Val$ and $\leq_\Comp$, that is: for all $\delta, \delta' \in \ValType$ and for all $\tau, \tau'\in \ComType$, one has:
	\[ \delta \leq_\Val \delta' ~ \Then ~ \Sem{\delta}^D_\xi \subseteq \Sem{\delta'}^D_\xi 
	\quad \mbox{and} \quad
	\tau \leq_\Comp \tau' ~ \Then ~ \Sem{\tau}^{TD}_\xi \subseteq \Sem{\tau'}^{TD}_\xi 
	\]
\end{lemma}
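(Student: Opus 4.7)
My plan is to prove the two inclusions simultaneously by induction on the derivation of $\delta \leq_\Val \delta'$ and $\tau \leq_\Comp \tau'$ in the type theories $\Th_\Val$ and $\Th_\Comp$ of Definition \ref{def:type-theories-Th_V-Th_C}. The generic structural rules (reflexivity, transitivity, the commutative/idempotent/monotone laws for $\Inter$, and the rule that infers $\sigma \leq \sigma'\Inter\sigma''$ from $\sigma \leq \sigma'$ and $\sigma \leq \sigma''$) all reduce to elementary set-theoretic manipulations with $\cap$ and $\subseteq$, using that $\Sem{\sigma\Inter\sigma'}=\Sem{\sigma}\cap\Sem{\sigma'}$ in both sorts.

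For the value-type axioms, I would unfold the type interpretation in each case. The axiom $\delta \leq_\Val \omega_\Val$ is immediate because $\Sem{\omega_\Val}^D_\xi = D$; the axiom $\omega_\Val \leq_\Val \omega_\Val \to \omega_\Comp$ unpacks to $D \subseteq D \Rightarrow TD$, which is trivially true since $\Sem{\omega_\Comp}^{TD}_\xi = TD$ and so the condition $d\cdot d' \in TD$ holds vacuously. For the distributivity axiom $(\delta\to\tau) \Inter (\delta\to\tau') \leq_\Val \delta\to(\tau\Inter\tau')$, if $d$ lies in both $\Sem{\delta\to\tau}^D_\xi$ and $\Sem{\delta\to\tau'}^D_\xi$ and $d'\in \Sem{\delta}^D_\xi$, then $d\cdot d'\in \Sem{\tau}^{TD}_\xi\cap\Sem{\tau'}^{TD}_\xi = \Sem{\tau\Inter\tau'}^{TD}_\xi$. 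The covariant/contravariant arrow rule follows directly by pointwise application of the inductive hypotheses to premises and range.

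The delicate cases are the two computation-type axioms that genuinely involve $T$, namely the monotonicity rule $\delta \leq_\Val \delta' \Then T\delta \leq_\Comp T\delta'$ and the commutation axiom $T\delta \Inter T\delta' \leq_\Comp T(\delta\Inter\delta')$. Here Definition \ref{def:interpretationtypes} on its own only offers the sufficient clause $d\in \Sem{\delta}^D_\xi \Then \Unit d \in \Sem{T\delta}^{TD}_\xi$, so the extra power must come from the hypothesis that the interpretation is monadic in the sense of Definition \ref{def:monadicTypeInterp}. My plan is to trace an arbitrary element of $\Sem{T\delta}^{TD}_\xi$ back to one of the two generating clauses — either a $\Unit d$ with $d\in \Sem{\delta}^D_\xi$, in which case the inductive hypothesis on value types pushes $d$ into the larger value type and the unit clause reinserts it, or a bind expression $a\Bind e$ whose witnessing $\delta''\to T\delta$ and $T\delta''$ can be adjusted by the inductive hypothesis and the monadic clause applied again. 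This is precisely the pattern followed in Theorem \ref{thr:adequateTypeInterp}, which we can instantiate directly.

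The main obstacle is the intersection axiom $T\delta \Inter T\delta' \leq_\Comp T(\delta\Inter\delta')$: unlike the value-type case, the equality $\Sem{\tau\Inter\tau'}^{TD}_\xi = \Sem{\tau}^{TD}_\xi \cap \Sem{\tau'}^{TD}_\xi$ does not by itself give a decomposition into unit/bind generators that match on both sides. To handle this cleanly I would appeal to the stratified construction of Definition \ref{def:approximateTypeInterp}, in which each $\Sem{T\delta}^{TD_n}_\xi$ is \emph{defined} as the union of the unit and bind clauses over strictly smaller approximants; there a simple induction on $n$ lets any $a\in \Sem{T\delta}^{TD_n}_\xi\cap \Sem{T\delta'}^{TD_n}_\xi$ be analysed twice by the same generating clause, forcing the two underlying value-type witnesses to coincide (using confluence and uniqueness of unit representatives) and so to lie in $\Sem{\delta\Inter\delta'}$. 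Combining this approximant analysis with the inverse-limit definition of $\Sem{T\delta}^{TD_\infty}_\xi$ in Definition \ref{def:limitsem} yields the required inclusion, completing the induction.
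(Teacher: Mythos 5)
Your overall strategy coincides with the paper's: both proceed by induction on the derivation of the preorder, dispatching the structural rules and the value-type axioms by unfolding Definition \ref{def:interpretationtypes}, and your treatment of $\delta\leq_\Val\omega_\Val$, $\omega_\Val\leq_\Val\omega_\Val\to\omega_\Comp$, the distributivity axiom and the arrow rule matches the paper's case analysis essentially verbatim. The divergence, and the problem, is in the two computation-type axioms. There is a genuine gap in your handling of $T\delta\Inter T\delta'\leq_\Comp T(\delta\Inter\delta')$: you propose to force the two value-type witnesses to coincide ``using confluence and uniqueness of unit representatives,'' but confluence (Theorem \ref{thm:confluence}) is a property of the reduction relation on \emph{terms}, and says nothing about elements of the domain $TD_n$. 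What you would actually need is (a) injectivity of $\UnitSub{D_n}$, which does not follow from the monad laws of Definition \ref{def:monad} for an arbitrary $T$, and (b) that an element of $\Sem{T\delta}^{TD_{n+1}}_\xi\cap\Sem{T\delta'}^{TD_{n+1}}_\xi$ is generated by the \emph{same} clause (unit or bind) on both sides, with matching witnesses. Neither holds in general: by the left unit law $\Unit d\Bind f=f\,d$, a single element can be simultaneously of unit form for one type and of bind form for the other, and your case analysis does not close in that mixed situation. The paper itself disposes of this axiom with a one-line appeal to the ``analytic description'' of the interpretations, so it offers no rescue here; but your more explicit argument, as written, would fail at exactly this point.

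A secondary issue of scope: the lemma is stated for an arbitrary $T$-model $D$ equipped with any monadic type interpretation in the sense of Definitions \ref{def:interpretationtypes} and \ref{def:monadicTypeInterp}, whereas your argument for both $T$-axioms depends on decomposing every element of $\Sem{T\delta}^{TD}_\xi$ into unit or bind generators — a decomposition that is available only for the concrete stratified interpretation of Definitions \ref{def:approximateTypeInterp} and \ref{def:limitsem}, since for a general monadic interpretation $\Sem{T\delta}^{TD}_\xi$ is merely required to \emph{contain} the images of unit and to be closed under the bind condition, not to be generated by them. For the monotonicity rule $\delta\leq_\Val\delta'\Then T\delta\leq_\Comp T\delta'$ this means your proof, like the paper's terse appeal to ``monadicity,'' really only establishes the claim for the limit interpretation; you should either restrict the statement you are proving to that interpretation or add an explicit covariance condition on $\Sem{T(\cdot)}$ to the definition of type interpretation. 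Within the stratified setting your induction for the bind clause also needs to be organised as a primary induction on the approximation index $n$ (so that the hypothesis $\Sem{T\delta}^{TD_n}_\xi\subseteq\Sem{T\delta'}^{TD_n}_\xi$ is available when adjusting the witness $e\in\Sem{\delta''\to T\delta}^{D_{n+1}}_\xi$ to $\Sem{\delta''\to T\delta'}^{D_{n+1}}_\xi$); you say this for the intersection case but it should govern the whole $T$-fragment of the proof.
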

\Proof
	The statement is proved once every axiom of (pre-)order defined in \ref{def:type-theories-Th_V-Th_C} on type theories $\Th_\Val$ and $\Th_\Comp$ is preserved by the interpretation assignment.
	\[ 
	\delta \leq_\Val \omega_\Val \qquad \tau \leq_\Comp \omega_\Comp
	\]
	In these cases, the statement is easily proved, as $\Sem{\omega_\Val}^D_\xi=D$ and $\Sem{\omega_\Comp}^{TD}_\xi=TD$.
	\[\omega_\Val \leq_\Val \omega_\Val \to \omega_\Comp
	\]
	The interpretations are, respectively, $\Sem{\omega_\Val}^D_\xi=D$ and $\Sem{\omega_\Val \to \omega_\Comp}^D_\xi =\Set{d\in D\mid \forall d' \in \Sem{\omega_\Val}^D_\xi d\cdot d'\in \Sem{\omega_\Comp}^{TD}_\xi}=\Set{d\in D\mid \forall d'\in D\; d\cdot d'\in TD}$. By these analytic descriptions one deduces that $\Sem{\omega_\Val}^D_\xi = \Sem{\omega_\Val \to \omega_\Comp}^D_\xi$ and the statement is \textit{a fortiori} true.
	\[
	(\delta \to \tau) \Inter (\delta \to \tau') \leq_\Val \delta \to (\tau \Inter \tau')
	\]
	By the definition of type interpretation, one has $\Sem{(\delta \to \tau) \Inter (\delta \to \tau')}^D_\xi=\Sem{\delta \to \tau}^D_\xi \cap \Sem{\delta \to \tau'}^D_\xi = \Set{d\in D\mid \forall d'\in \Sem{\delta}^D_\xi \; d\cdot d'\in \Sem{\tau}^{TD}_\xi \mbox{ and } d\cdot d'\in \Sem{\tau'}^{TD}_\xi}$.
	So, also in this case, the two interpretations are equal.\\
	\[
	\prooftree
	\delta' \leq_\Val \delta \quad \tau \leq_\Comp \tau'
	\justifies
	\delta \to \tau \leq_\Val \delta' \to \tau'
	\endprooftree 
	\]
	By induction hypothesis $\Sem{\delta'}^D_\xi \subseteq \Sem{\delta}^D_\xi$ and $\Sem{\tau}^{TD}_\xi \subseteq \Sem{\tau'}^{TD}_\xi$. By definition, $\Sem{\delta \to \tau}^D_\xi = \Set{d\in D\mid \forall d'\in \Sem{\delta}^D_\xi\; d\cdot d'\in \Sem{\tau}^{TD}_\xi}$. Thus, if $\bar{d} \in \Sem{\delta \to \tau}^D_\xi$ this means that $\forall d'\in \Sem{\delta}^D_\xi\; \bar{d}\cdot d'\in \Sem{\tau}^{TD}_\xi$ and, since $\Sem{\delta'}^D_\xi \subseteq \Sem{\delta}^D_\xi$,  we have $\forall d'\in \Sem{\delta'}^D_\xi\; \bar{d}\cdot d'\in \Sem{\tau}^{TD}_\xi$. Finally, since $\Sem{\tau}^{TD}_\xi \subseteq \Sem{\tau'}^{TD}_\xi$, $d\in \Sem{\delta' \to \tau'}^D_\xi$.
	\[ 
	\prooftree
	\delta \leq_\Val \delta'
	\justifies
	T\delta \leq_\Comp T\delta'
	\endprooftree\\ 
	\]
	We know by inductive hypothesis that $	\delta \leq_\Val \delta'$ implies that $\Sem{\delta}^D_\xi \subseteq \Sem{\delta'}^D_\xi$
	and hence the thesis follows by monadicity.
	\\ For what concerns the remaining case $	T\delta \Inter T\delta' \leq_\Comp T(\delta \Inter \delta') $, by explicitly writing the analytic description of each interpretation, one finds out that the two sets are equal. 
\QED

\end{document}